\newcommand{\MDD}{MDDlog\xspace}
\newcommand{\FGDD}{frontier-guarded DDlog\xspace}
\newcommand{\DD}{DDlog\xspace}
\newcommand{\adom}{\ensuremath{\mathsf{adom}}}
\newcommand{\coNP}{\textnormal{\sc coNP}\xspace}
\newcommand{\forb}{\mn{Forb}}
 \newtheorem{definition}{Definition}
 \newtheorem{theorem}{Theorem}
 \newtheorem{example}{Example}
 \newtheorem{lemma}{Lemma}
 \newtheorem{proposition}{Proposition}
\begin{document}

\conferenceinfo{PODS'13,} {June 22--27, 2013, New York, New York, USA.}
\CopyrightYear{2013}
\crdata{978-1-4503-2066-5/13/06}
\clubpenalty=10000
\widowpenalty = 10000

\title{Ontology-based Data Access:\\ A Study through Disjunctive
  Datalog, CSP, and MMSNP}

\numberofauthors{4} 
\addtolength{\auwidth}{-6mm}

\author{
\alignauthor
Meghyn Bienvenu\\
       \affaddr{\mbox{CNRS \& Universit\'{e} Paris Sud}}\\
       \affaddr{Orsay, France}
\alignauthor
Balder ten Cate\\
       \affaddr{UC Santa Cruz}\\
       \affaddr{Santa Cruz, CA, USA}
\alignauthor
Carsten Lutz\\
       \affaddr{University of Bremen}\\
       \affaddr{Bremen, Germany}
\alignauthor Frank Wolter\\
       \affaddr{University of Liverpool}\\
             \affaddr{Liverpool, UK} 
}       
       
 \maketitle   

\begin{abstract}
  \emph{Ontology-based data access} is concerned with querying
  incomplete data sources in the presence of domain-specific knowledge
  provided by an ontology.  A central notion in this setting is that
  of an \emph{ontology-mediated query}, which is a database query
  coupled with an ontology.
  In this paper, we study 
  several classes of
  ontology-mediated queries, where the database queries are given as
  some form of conjunctive query  and the
  ontologies are formulated in description logics or other relevant fragments
  of first-order logic, such as the guarded fragment and the
  unary-negation fragment.
 The contributions of the paper are three-fold. First, we
 characterize the expressive power of ontology-mediated queries in terms
 of fragments of disjunctive datalog.  
 Second, we establish intimate
 connections between ontology-mediated queries and constraint
 satisfaction problems (CSPs) and their logical generalization,
 MMSNP formulas. Third, we exploit these connections to obtain new
 results regarding (i) first-order rewritability and
 datalog-rewritability of ontology-mediated queries, (ii) P/NP
 dichotomies for ontology-mediated queries, and (iii) the query
 containment problem for ontology-mediated queries.
 
\category{H.2.3}{Database Management}{Languages}[Query languages]
\category{H.2.5}{Database Management}{Heterogeneous Databases}

\keywords{Ontology-Based Data Access; Query Answering; Query Rewriting}

\end{abstract}

\section{Introduction}

\noindent
%
Ontologies are logical theories that formalize domain-specific
knowledge, thereby making it available for machine processing.  Recent
years have seen an increasing interest in using ontologies in
data-intensive applications, especially in the context of intelligent
systems, the semantic web, and in data integration.  A much studied
scenario is that of answering queries over 
an incomplete database under the open world semantics,
taking into account
knowledge provided by an ontology
\cite{DBLP:conf/pods/CalvaneseGL98,Romans,DBLP:journals/ai/CaliGP12}. We
refer to this
as \emph{ontology-based data access (OBDA)}.

There are several important use cases for
OBDA. 
A classical one is to enrich an incomplete data source with background
knowledge, in order to obtain a more complete set
of 
answers to a query. For example, if a medical patient database
contains the facts that patient1 has finding Erythema Migrans and
patient2 has finding Lyme disease, and the ontology provides the
background knowledge that a finding of Erythema Migrans is sufficient
for diagnosing Lyme disease, then both patient1 and patient2 can be
returned when querying for patients that have the diagnosis Lyme
disease.  This use of ontologies is also central to query answering in
the semantic web.
OBDA can also be used to enrich the data schema (that is, the relation
symbols used in the presentation of the data) with additional symbols
to be used in a query. 
For example,
a patient database may contain facts such as patient1 has diagnosis
Lyme disease and patient2 has diagnosis Listeriosis, and an ontology
could add the knowledge that Lyme disease and Listeriosis are both
bacterial infections, thus enabling queries such as ``return all
patients with a bacterial infection'' despite the fact
that 
%
the data schema does not include a relation or attribute explicitly referring to bacterial infections.
Especially in the bio-medical domain, applications of this kind are
fueled by the availability of comprehensive
professional ontologies such as {\sc
  Snomed CT} and {\sc FMA}.
%
A third prominent application of OBDA is in data integration, where an
ontology can be used to provide a uniform view on multiple data
sources \cite{DBLP:journals/jods/PoggiLCGLR08}. 
This typically 
involves mappings from the source schemas to the schema 
of the ontology, 
which we will not explicitly consider here.

We may
view 
the actual database query and the ontology as two components of one
composite query, which we call an \emph{ontology-mediated query}.
OBDA can then be described as the problem of answering
ontology-mediated queries. The database queries used in OBDA are
typically unions of conjunctive queries, while the ontologies are
typically specified in an ontology language that is either a
description logic, or, more generally, a suitable fragment of
first-order logic. For popular choices of ontology languages, the data
complexity of ontology-mediated queries can be \coNP-complete, which
has resulted in extensive research on finding tractable classes of
ontology-mediated queries, as well as on finding classes of
ontology-mediated queries that are amenable to efficient query
answering techniques
\cite{DBLP:conf/kr/CalvaneseGLLR06,DBLP:journals/jar/HustadtMS07,DBLP:conf/dlog/KrisnadhiL07}.
In particular, relevant classes of ontology-mediated queries have been
identified that admit an FO-rewriting 
(i.e., that are equivalent to a first-order query),
or, alternatively, admit a datalog-rewriting.  FO-rewritings make
it possible to answer ontology-based queries using traditional
database management systems. This is considered one of the most
promising approaches for OBDA, and is currently the subject of
significant research activity, see for example
\cite{Romans,DBLP:conf/kr/GottlobS12,DBLP:conf/icalp/KikotKPZ12,DBLP:conf/kr/KontchakovLTWZ10,PleaseCiteMyPaper}. 
%
%

The main aims of this paper are (i)~to characterize the expressive
power of ontology-mediated queries, both in terms of more traditional
database query languages and from a descriptive complexity
perspective and (ii)~to make progress towards complete and decidable classifications of
ontology-mediated queries, with respect to their data complexity, as well as
with respect to FO-rewritability and datalog-rewritability.

We take an ontology-mediated query to be a triple $(\Sbf,\Omc,q)$
where \Sbf is a \emph{data schema}, \Omc an ontology
, and $q$ a query. Here, the data schema \Sbf fixes the set of
relation symbols than can occur in the data and the ontology $\Omc$ is
a logical theory that may use the relation symbols from \Sbf as well
as additional symbols.  The query $q$ can use any relation symbol that
occurs in \Sbf or \Omc.  As ontology languages, we consider a range of
standard description logics (DLs) and several fragments of first-order
logic that embed ontology languages such as Datalog$^\pm$
\cite{Cali2009}, namely the guarded fragment (GF), the unary negation
fragment (UNFO), and the guarded negation fragment (GNFO).
As query languages for $q$, we focus on unions of conjunctive queries
(UCQs) and unary atomic queries (AQs). The latter are of the form
$A(x)$, with $A$ a unary relation symbol, and correspond to what are
traditionally called \emph{instance queries} in the OBDA literature
(note that $A$ may be a relation symbol from $\Omc$ that is not part
of the data schema). These two query languages are among the most used
query languages in OBDA.
In the
following, we use $(\Lmc,\Qmc)$ to denote the query language that
consists of all ontology-mediated queries $(\Sbf,\Omc,q)$ with \Omc
specified in the ontology language \Lmc and $q$ specified in the query language
\Qmc. For example, (GF,UCQ) refers to ontology-mediated queries in
which \Omc is a GF-ontology and $q$ is a UCQ. We refer to such query
languages $(\Lmc,\Qmc)$ as \emph{ontology-mediated query languages
  (or, OBDA languages)}. \looseness=-1

In Section~\ref{sect:alcUCQ}, we characterize
the expressive power of 
OBDA languages in terms of natural fragments of (negation-free) disjunctive datalog.
We first consider the basic description logic \ALC. 
We show that (\ALC,UCQ) has the
same expressive power as monadic disjunctive datalog (abbreviated
\MDD) and that (\ALC,AQ) has the same expressive power as unary
queries defined in a syntactic fragment of \MDD that we call
connected simple {\MDD}. 
Similar results hold for various description logics that extend \ALC
with, for example, 
inverse roles, role hierarchies, and the universal role, all of which
are standard operators included in the W3C-standardized ontology
language OWL2 DL.
%
%
Turning to other
fragments of first-order logic, we then show that (UNFO,UCQ) also has the same
expressive power as \MDD, while (GF,UCQ) and (GNFO,UCQ) are strictly
more expressive and coincide in expressive power with frontier-guarded
disjunctive datalog, which is the \DD\ fragment  given by programs in 
which, for every atom
$\alpha$ in the head of a rule, there is an atom $\beta$ in the rule
body that contains all variables from~$\alpha$.

In Sections~\ref{sect:obdammsnp} and~\ref{sect:OBDAtoCSP}, we study
ontology-mediated queries from a \emph{descriptive complexity}
perspective. In particular, we establish an intimate connection
between OBDA query languages, constraint satisfaction problems, and
MMSNP. Recall that constraint satisfaction problems (CSPs) form a
subclass of the complexity class \NP that, although it contains
\NP-hard problems, is in certain ways more computationally
well-behaved. The widely known Feder-Vardi conjecture
\cite{FederVardi} states that there is a dichotomy between \PTime and
\NP for the class of all CSPs, that is, each CSP is either in \PTime
or \NP-hard. In other words, the conjecture asserts that there
are no CSPs which are \NP-intermediate in the sense of Ladner's
theorem. Monotone monadic strict \NP without inequality (abbreviated
MMSNP) was introduced by Feder and Vardi as a logical generalization
of CSP that enjoys similar computational properties
\cite{FederVardi}. In particular, it was shown in
\cite{FederVardi,kun-derand} that there is a dichotomy between \PTime
and \NP for MMSNP sentences if and only if the Feder-Vardi conjecture
holds.

In Section~\ref{sect:obdammsnp}, we observe that (\ALC,UCQ) and many
other OBDA languages based on UCQs have the same expressive power as
the query language coMMSNP,
consisting of all queries whose complement
is definable by an MMSNP formula with free variables.
In the spirit of descriptive complexity theory, we say that (\ALC,UCQ)
\emph{captures} coMMSNP. In fact, this result is a consequence of the
results in Section~\ref{sect:alcUCQ} and the observation that \MDD has
the same expressive power as coMMSNP.  It has fundamental consequences
regarding the data complexity of ontology-mediated queries
and the containment problem for such queries, which we describe next. 

 First, we obtain that there is a
dichotomy between \PTime and \coNP for 
ontology-mediated queries from (\ALC,UCQ) if and only if the
Feder-Vardi conjecture holds, and similarly for many other OBDA
languages based on UCQs. To appreciate this result, recall that 
considerable effort has been directed towards identifying tractable
classes of ontology-mediated queries.
%
%
Ideally, one would like to 
classify 
the data complexity of every ontology-mediated query within a
given OBDA language such as (\ALC,UCQ). 
Our aforementioned result ties this task to
proving the Feder-Vardi conjecture.
Significant progress has been made in understanding the complexity of
CSPs and MMSNPs
\cite{DBLP:conf/csr/Bulatov11,DBLP:journals/siamdm/BodirskyCF12,DBLP:journals/ejc/KunN08},
and the connection established in this paper facilitates the transfer
of techniques and results from CSP and MMSNP in order to analyze the data
complexity of query evaluation in (\ALC,UCQ).  We also consider the
standard extension \ALCF of \ALC with functional roles and note that,
for query evaluation in (\ALCF,AQ), there is no dichotomy between
\PTime and \coNP unless \PTime = \NP.  

To establish a counterpart of (GF,UCQ) and (GNFO,UCQ) in the
MMSNP world, we introduce guarded monotone strict NP (abbreviated
GMSNP) as a generalization of MMSNP; specifically, GMSNP is obtained
from MMSNP by allowing guarded second-order quantification in the
place of monadic second-order quantification,
similarly as in the transition from \MDD to frontier-guarded
disjunctive datalog. The resulting query language coGMSNP has the same
expressive power as frontier-guarded disjunctive datalog, and
therefore, in particular, (GF,UCQ) and (GNFO,UCQ) capture coGMSNP.  We
observe that GMSNP has the same expressive power as the extension
MMSNP$_2$ of MMSNP proposed in \cite{Madeleine}.  It follows from our
results in Section~\ref{sect:alcUCQ} that GMSNP (and thus MMSNP$_2$)
is strictly more expressive than MMSNP, closing an open problem from
\cite{Madeleine}. We leave it as an open problem whether GMSNP is
computationally as well-behaved as MMSNP, that is, whether there is a
dichotomy between \PTime and \NP if the Feder-Vardi conjecture holds.

The second application of the connection between OBDA  and MMSNP
concerns query containment.
It was shown in \cite{FederVardi} that containment between MMSNP sentences is
decidable. We use this result to prove that query containment is
decidable for many OBDA languages based on UCQs, including (\ALC,UCQ)
and (GF,UCQ). Note that this refers to a very general form of query
containment in OBDA, as recently introduced and studied in
\cite{KR12-cont}.  For (\ALCF,AQ), this problem (and every other decision problem
discussed below) turns out to be undecidable.

In Section~\ref{sect:OBDAtoCSP}, we consider OBDA languages based on
atomic queries and establish a tight connection to (certain
generalizations of) CSPs. This connection is most easily stated for
\emph{Boolean} atomic queries (BAQs): we prove that ($\ALC$,BAQ)
captures the query language that consists of all Boolean queries
definable as the complement of a CSP.  Similarly ($\ALC$,AQ) extended
with the universal role captures the query language that consists of
all unary queries definable as the complement of a \emph{generalized
  CSP}, which is given by a finite collection of structures enriched
with a constant symbol.  We then proceed to transfer results from the
CSP literature to the ontology-mediated query languages ($\ALC$, BAQ)
and ($\ALC$, AQ).  First we immediately obtain that the existence of a
\PTime/\coNP dichotomy for these ontology-mediated query languages is
equivalent to the Feder-Vardi conjecture.
Then we show that 
query containment is not only decidable (as we could already conclude from 
the connection with coMMSNP described in
Section~\ref{sect:obdammsnp}), but, in fact, \NExpTime-complete.
Finally, taking advantage of recent results for CSPs \cite{DBLP:journals/lmcs/LaroseLT07,FreeseEtAl,BulatovWidth}, we are
able to show that FO-rewritability and datalog-rewritability, as
properties of ontology-mediated queries, are decidable and \NExpTime-complete
for (\ALC, AQ) and (\ALC,BAQ). 

The results in Sections~\ref{sect:obdammsnp} and~\ref{sect:OBDAtoCSP}
just summarized are actually proved not only for \ALC, but also for several
of its extensions. This relies on the equivalences
%
between DL-based OBDA-languages established in
Section~\ref{sect:alcUCQ}.
%

\medskip\par\noindent\textbf{Related Work} A connection between query
answering in DLs and the negation-free fragment of disjunctive datalog
was first discovered and utilized in the influential
\cite{DBLP:phd/de/Motik2006,DBLP:journals/jar/HustadtMS07}, see also
\cite{DBLP:journals/corr/abs-1202-0914}.  This research is
concerned with answer-preserving translations of ontology-mediated
queries into disjunctive datalog. In contrast to the current paper,
it does not consider the expressive power of ontology-mediated
queries, nor their descriptive complexity.  A connection between DL-based OBDA
 and CSPs was first found and exploited in
\cite{KR12-csp}, in a setup that is different from the one studied in
this paper. In particular, instead of focusing on ontology-mediated
queries that consist of a data schema, an ontology, and a database
query, \cite{KR12-csp} concentrates on ontologies while quantifying
universally over all database queries and without fixing a data
schema.  It establishes links to the Feder-Vardi conjecture that
are incomparable to the ones found in this paper, and does not
consider the expressive power and descriptive complexity of queries
used in OBDA. \looseness=-1

\section{Preliminaries}
\label{sect:prelims}

\noindent\textbf{Schemas, Instances, and Queries.}  A \emph{schema} is a finite
collection $\Sbf=(S_1,\dots,S_k)$ of relation symbols with associated
arity. A \emph{fact}
over $\Sbf$ is an expression of the form $S(a_1,\ldots, a_n)$ where
$S\in\Sbf$ is an $n$-ary relation symbol, and $a_1, \ldots, a_n$ are
elements of some fixed, countably infinite set $\mn{const}$ of
\emph{constants}.  An \emph{instance} $\Dmf$ over $\Sbf$ is a finite
set of facts over $\Sbf$.  The \emph{active domain} $\adom(\Dmf)$
of~\Dmf is the set of all constants that occur in the facts
of~\Dmf.  
We will frequently use boldface notation for tuples, such as in $\abf=(a_1,
\ldots, a_n)$, and we denote by $()$ the empty tuple.

A \emph{query over} \Sbf is semantically defined as a mapping $q$ that
associates with every instance \Dmf over \Sbf a set of \emph{answers}
$q(\Dmf) \subseteq \adom(\Dmf)^{n}$, where $n\geq 0$ is the
\emph{arity} of $q$. If $n=0$, then we say that $q$ is a \emph{Boolean
  query}, and we write $q(\Dmf)=1$ if $()\in q(\Dmf)$ and
\mbox{$q(\Dmf)=0$} otherwise.


A prominent way of specifying queries is by means of first-order logic
(FO). Specifically, each schema $\Sbf$ and domain-independent
FO-formula $\vp(x_1,\dots,x_n)$ that uses only relation names from
$\Sbf$ (and, possibly, equality) give rise to the $n$-ary query
$q_{\varphi,\Sbf}$, defined by setting for all $\Sbf$-instances~$\Dmf$,
$$
q_{\varphi,\Sbf}(\Dmf) = \{ (a_{1},\ldots,a_{n})\in \adom(\Dmf)^{n} \mid \Dmf\models \varphi[a_{1},\ldots,a_{n}]\}.
$$
To simplify exposition, we assume that FO-queries do not contain
constants.  We use FOQ to denote the set of all first-order queries,
as defined above. Similarly, we use CQ and UCQ to refer to the class
of conjunctive queries and unions of conjunctive queries, defined as
usual and allowing the use of equality.  AQ denotes the set of
\emph{atomic queries}, which are of the form $A(x)$ with $A$ a unary
relation symbol. Each of these is called a \emph{query language},
which is defined abstractly as a set of queries.
Besides FOQ, CQ, UCQ, and AQ, we consider various other query
languages introduced later, including ontology-mediated ones and
variants of datalog.

Two queries $q_1$ and $q_2$ over \Sbf are \emph{equivalent}, written
$q_1 \equiv q_2$, if for every $\Sbf$-instance \Dmf, we have
$q_1(\Dmf)=q_2(\Dmf)$.  We say that query language $\Qmc_2$ is
\emph{at least as expressive as} query language $\Qmc_1$, written
$\Qmc_1 \preceq \Qmc_2$, if for every query $q_1 \in \Qmc_1$ over some
schema \Sbf, there is a query $q_2 \in \Qmc_2$ over \Sbf with $q_1
\equiv q_2$.  $\Qmc_1$ and $\Qmc_2$ \emph{have the same expressive
  power} if $\Qmc_1 \preceq \Qmc_2 \preceq \Qmc_1$.
%


%

\smallskip\noindent\textbf{Ontology-Mediated Queries.}  We introduce
the fundamentals of ontology-based data access.  An \emph{ontology
  language}~\Lmc is a fragment of first-order logic (i.e., a set of FO
sentences), and an \emph{\Lmc-ontology} \Omc is a finite set of
sentences from 
~\Lmc.  We introduce various ontology
languages throughout the paper, including descriptions logics 
and the guarded fragment.

An \emph{ontology-mediated query} over a schema $\Sbf$ is a triple
$(\Sbf,\Omc,q)$, where \Omc is an ontology and $q$ a query over $\Sbf
\cup \mn{sig}(\Omc)$, with $\mn{sig}(\Omc)$ the set of relation symbols
used in \Omc. Here, we call $\Sbf$ the \emph{data schema}.  Note that
the ontology can introduce symbols that are not in the data schema.
As explained in the introduction, this allows the ontology to enrich
the schema of the query $q$.  Of course, we do not require
that every relation of the data schema needs to occur in the ontology.
%
%
 We have explicitly included $\Sbf$
in the specification of the ontology-mediated query to emphasize that the
ontology-mediated query is 
interpreted as a query over~$\Sbf$.


The semantics of an ontology-mediated query is given in terms of
\emph{certain answers}, defined next. A \emph{finite relational
  structure} over a schema \Sbf is a pair $\Bmf=(\mn{dom},\Dmf)$ where
$\mn{dom}$ is a non-empty finite set called the \emph{domain} of \Bmf and \Dmf
is an instance over \Sbf with $\mn{adom}(\Dmf) \subseteq \mn{dom}$.
When \Sbf is understood, we use $\mn{Mod}(\Omc)$ to denote the set of
all finite relational structures $\Bmf$ over $\Sbf \cup
\mn{sig}(\Omc)$ such that $\Bmf \models\Omc$.
Let $(\Sbf,\Omc,q)$ be an ontology-mediated query with $q$ of arity~$n$.
The \emph{certain answers to $q$ on an \Sbf-instance \Dmf given} \Omc
is the set $\mn{cert}_{q,\Omc}(\Dmf)$ of  tuples
 $\abf \in
\mn{adom}(\Dmf)^n$ such that for all $(\mn{dom},\Dmf') \in
\mn{Mod}(\Omc)$ with $\Dmf \subseteq \Dmf'$ (that is, all models of
\Omc that extend \Dmf), we have $\abf \in q(\Dmf')$.
%
%
%
%

Note that all ontology languages considered in this paper enjoy finite
controllability, meaning that finite relational structures can be replaced
with unrestricted ones without changing the certain answers 
to unions of conjunctive queries
\cite{DBLP:conf/lics/BaranyGO10,DBLP:journals/pvldb/BaranyCO12}. 

Every ontology-mediated query $Q=(\Sbf,\Omc, q)$ can
be semantically interpreted as a query $q_Q$ over $\Sbf$ by setting
$q_Q(\Dmf) = \mn{cert}_{q,\Omc}(\Dmf)$ for all $\Sbf$-instances~\Dmf.
Taking this view 
one step further, each choice of an ontology
language \Lmc and query language \Qmc gives rise to a
 query language,
denoted  $(\Lmc,\Qmc)$, defined as the set of queries
%
$
 q_{(\Sbf,\Omc, q)}$ with \Sbf a schema,  $\Omc$ an $\Lmc$-ontology,
 and $q \in \Qmc$ a  query over $\Sbf\cup\mn{sig}(\Omc)$.
%
We refer to such query languages
$(\Lmc,\Qmc)$ as \emph{ontology-mediated query languages} (or, \emph{OBDA
  languages}). 
%
%
%
\begin{table*}[t]\small
  \centering
  \begin{boxedminipage}{\textwidth}
    \vspace*{-2mm}
\[\begin{array}{lr@{\ }c@{\ }l}
  \forall x ( \  \exists y (\mn{finding}(x,y) \wedge \mn{ErythemaMigrans}(y)) &  \exists \mn{finding} . \mn{ErythemaMigrans} & \sqsubseteq &  \exists \mn{diagnosis}
  . \mn{LymeDisease} \\[.5mm]
  \qquad\qquad\qquad \rightarrow  \exists y (\mn{diagnosis}(x,y) \wedge \mn{LymeDisease}(y))\ )
 \\[1mm]
\forall x (\ (\mn{LymeDisease}(x) \vee \mn{Listeriosis}(x))\rightarrow\mn{BacterialInfection}(x)\ ) &
\mn{LymeDisease} \sqcup \mn{Listeriosis} &\sqsubseteq& \mn{BacterialInfection} \\[1mm]
\forall x(\ \exists y .(\mn{HereditaryDisposition}(y) \wedge \mn{parent}(x,y))\to  \mn{HereditaryDisposition}(y)) \ ) & 
\exists \mn{parent} . \mn{HereditaryDisposition} &\sqsubseteq& 
 \mn{HereditaryDisposition} 
%
\end{array}\]
\end{boxedminipage}
\caption{Example ontology, presented in (the guarded fragment of) first-order logic and the DL \ALC}
\label{tab:ontology}
\end{table*}
\vspace*{-2mm}
\begin{example}
\label{ex:first}
  The left-hand side of Table~\ref{tab:ontology} shows an ontology~\Omc that is formulated in the guarded fragment of FO.
  Consider the ontology-mediated query $(\Sbf,\Omc,q)$ with 
  data schema and query
  $$
  \begin{array}{r@{\ }c@{\ }l}
\Sbf&=&\{
  \mn{ErythemaMigrans}, \mn{LymeDisease}, \\[1mm]
  && \ \ 
\mn{HereditaryPredisposition}, 
\mn{finding}, \mn{diagnosis}, \mn{parent} \}
  \\[1mm]
  q(x) &= &\exists y (\ \mn{diagnosis}(x,y) \wedge \mn{BacterialInfection}(y) \ ).
  \end{array}
  $$
  %
  For the instance \Dmf over \Sbf that consists of the facts
  $$
  \begin{array}{l@{~~~~}l}
  \mn{finding}(\mn{pat1},\mn{jan12find1}) & \mn{ErythemaMigrans}(\mn{jan12find1}) \\[0.5mm]
  \mn{diagnosis}(\mn{pat2},\mn{may7diag2}) &
  \mn{Listeriosis}(\mn{may7diag2})
  \end{array}
  $$
  we have $\mn{cert}_{q,\Omc}(\Dmf) = \{ \mn{pat1}, \mn{pat2} \}$.
\end{example}

\medskip\noindent\textbf{Description Logics for Specifying
  Ontologies.}  In description logic, schemas are generally
restricted to relations of arity one and two, called \emph{concept
  names} and \emph{role names}, respectively. For brevity, we speak of
\emph{binary schemas}. We briefly review 
the basic description
logic \ALC. Relevant extensions of \ALC will be introduced later on in
the paper.

An \emph{\ALC-concept} is formed according to the syntax rule
$$
C,D ::= A \mid \top \mid \bot \mid 
\neg C \mid C \sqcap D \mid C \sqcup D \mid \exists R . C \mid
\forall R .C
$$
where $A$ ranges over concept names and $R$ over role names. An
\emph{\ALC-ontology} \Omc is a finite set of \emph{concept inclusions}
$C \sqsubseteq D$, with $C$ and $D$ \ALC-concepts.  We define the
semantics of \ALC-concepts by translation to FO-formulas with one free
variable, as shown in Table~\ref{fig:ALCsem}.
\begin{table}[t] \small
  \centering
  \begin{boxedminipage}{\columnwidth}
    \vspace*{-2mm}
  $$
    \begin{array}{@{}r@{\;}c@{\;}llr@{\;}c@{\;}l}
    \top^*(x) &=& \top &&   (C \sqcap D)^*(x) &=& C^*(x) \wedge D^*(x)
    \\[1mm]
    \bot^*(x) &=& \bot &&   (C \sqcup D)^*(x) &=& C^*(x) \vee D^*(x) 
    \\[1mm]
    A^*(x) &=& A(x) && 
        (\exists R . C)^*(x) &=& \exists y \, R(x,y) \wedge C^*(y)  
    \\[1mm]
    (\neg C)^*(x) &=& \neg C^*(x) &&
    (\forall R . C)^*(x) &=& \forall y \, R(x,y) \rightarrow C^*(y) 
  \end{array}
  $$
  \end{boxedminipage}
  \caption{First-order translation of \ALC-concepts}
  \label{fig:ALCsem}
\end{table}
An \ALC-ontology \Omc then translates into the set of FO-sentences
$\Omc^* = \{ \forall x . (C^*(x) \rightarrow D^*(x))\mid C \sqsubseteq
D \in \Omc\}$.  On the right-hand side of Table~\ref{tab:ontology}, we
show the \ALC-version of the guarded fragment ontology displayed on
the left-hand side. Note that, although the translation is equivalence-preserving
in this case, in general, the guarded fragment is a more
expressive ontology language than \ALC. Throughout the paper, we do
not explicitly distinguish between a DL ontology and its translation
into FO.

We remark that, from a DL perspective, the above definitions of
instances and certain answers correspond to making the \emph{standard
  name assumption (SNA)} in ABoxes, which in particular implies the
\emph{unique name assumption}. We make the SNA only to facilitate uniform
presentation; the SNA is inessential for the results presented in this
paper.

%
\begin{example}
\label{ex:second}
  Let \Omc and \Sbf be as in Example~\ref{ex:first}. For $q_1(x) =
  \mn{BacterialInfection}(x)$, the ontology-mediated query
  $(\Sbf,\Omc,q_1)$ is equivalent to the union of conjunctive queries $\mn{LymeDisease}(x) \vee
  \mn{Listeriosis}(x)$. For $q_2(x)=\mn{HereditaryDisposition}(x)$, the ontology-mediated
  query $(\Sbf,\Omc,q_2)$ is equivalent to the query defined by the datalog program
  $$
  \begin{array}{rclcrcl}
    P(x) &\leftarrow& \mn{HereditaryDisposition}(x) & &   \mn{goal}(x) &\leftarrow& P(x)
 \\[0.5mm]
    P(x) &\leftarrow& \mn{parent}(y,x) \wedge P(y)
  \end{array}
  $$
  but not to any first-order query.
\end{example}

\section{OBDA and Disjunctive Datalog}
\label{sect:alcUCQ}

We show that for many OBDA languages, there is a natural fragment of
disjunctive datalog with 
exactly the same expressive power.

%
%
A \emph{disjunctive datalog rule} $\rho$ has the form
\[S_1(\xbf_1) \vee \cdots \vee S_m(\xbf_m) \leftarrow R_1(\ybf_1)\land
\cdots\land R_n(\ybf_n)\] with $m\geq 0$ and $n > 0$.
We refer to
$S_1(\xbf_1) \vee \cdots \vee S_m(\xbf_m)$ as the \emph{head} of
$\rho$, and to $R_1(\ybf_1) \wedge \ldots \wedge R_n(\ybf_n)$ as the \emph{body}
of~$\rho$. Every variable that occurs in the head of a rule $\rho$ is
required to also occur in the body of $\rho$. Empty rule heads are
denoted~$\bot$.  A \emph{disjunctive datalog (\DD) program} $\Pi$ is a
finite set of disjunctive datalog rules with a selected \emph{goal
  predicate} \mn{goal} that does not occur in rule bodies and only in
\emph{goal rules} of the form $\mn{goal}(\xbf) \leftarrow R_1(\xbf_1) \wedge \cdots \wedge
R_n(\xbf_n)$.  The \emph{arity of $\Pi$}
is the arity of the 
\mn{goal} relation.  Relation symbols that occur in the head of at
least one rule of $\Pi$ are \emph{intensional (IDB) predicates} of
$\Pi$, and all remaining relation symbols in $\Pi$ are \emph{extensional
  (EDB) predicates}.


Every \DD program $\Pi$ of arity $n$ naturally defines an $n$-ary
query $q_\Pi$ over the schema \Sbf that consists of the EDB predicates
of~$\Pi$: for every instance \Dmf over \Sbf, we have
%
$$
\begin{array}{@{}r@{\ }c@{\ }l@{\;}l}
q_\Pi(\Dmf) &=&
\{\abf\in \adom(\Dmf)^{n}\mid & \mn{goal}(\abf)\in
\Dmf' \\[1mm]
&&&\text{for all } \Dmf'\in \mn{Mod}(\Pi) \text{ with } \Dmf\subseteq \Dmf'\}.
\end{array}
$$ 
Here, $\mn{Mod}(\Pi)$ denotes the set of all instances over $\Sbf'$
that satisfy all rules in $\Pi$, with $\Sbf'$ the set of all IDB and
EDB predicates in $\Pi$. Note that the \DD programs considered in this
paper are negation-free.  Restricted to this fragment, there is no
difference between the different semantics of \DD studied e.g.\ in
\cite{EiterGottlob}.

We use $\mn{adom}(x)$ in rule bodies as a shorthand for ``$x$ is in
the active domain of the EDB predicates''. Specifically, whenever we
use $\mn{adom}$ in a rule of a \DD program $\Pi$, we assume that $\mn{adom}$
is an IDB predicate and that the program $\Pi$ includes all rules of the form 
$\mn{adom}(x) \leftarrow R(\xbf)$ where $R$ is an EDB predicate of
$\Pi$ and $\xbf$ is a tuple of distinct variables that includes $x$.


%

A \emph{monadic disjunctive datalog (\MDD) program} is a \DD program in which all
IDB predicates with the possible exception of \mn{goal} are monadic.
We use \MDD to denote the query language that consists of all queries
defined by an \MDD program.

\subsection{Ontologies Specified in Description Logics}\label{subs:obdaDL}
We show that (\ALC,UCQ) has the same expressive power as \MDD and
identify a fragment of \MDD that has the same expressive power as
(\ALC,AQ). In addition, we consider the extensions of \ALC with
inverse roles, role hierarchies, transitive roles, and the universal
role, which we also relate to \MDD and its fragments. To match the
syntax of \ALC and its extensions, we generally assume schemas to be
binary throughout this section.\footnote{In fact, this assumption is
  inessential for Theorems~\ref{thm:ALCtoMDD} and~\ref{thm:otherDLUCQ}
  (which speak about UCQs), but required for
  Theorems~\ref{thm:ALCMDD1}, \ref{thm:ALCMDD2}, and \ref{thm:ALCMDD3}
  (which speak about AQs) to hold.}

\smallskip\noindent\textbf{($\bm{\mathcal{ALC}}$,UCQ) and \MDD.}
The first main result of this section is Theorem~\ref{thm:ALCtoMDD}
below,
which relates 
(\ALC,UCQ) and \MDD. 

\begin{theorem}\label{thm:ALCtoMDD}
(\ALC,UCQ) and \MDD have the same expressive power.
\end{theorem}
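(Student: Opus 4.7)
The plan is to establish both inclusions $(\ALC,\text{UCQ}) \preceq \MDD$ and $\MDD \preceq (\ALC,\text{UCQ})$ via explicit translations that preserve certain answers.

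For $(\ALC,\text{UCQ}) \preceq \MDD$: Given an OMQ $Q = (\Sbf,\Omc,q)$, first bring $\Omc$ into a normal form in which every concept inclusion has one of the shapes $A_1 \sqcap \cdots \sqcap A_k \sqsubseteq B_1 \sqcup \cdots \sqcup B_\ell$, $A \sqsubseteq \exists R.B$, or $\exists R.A \sqsubseteq B$, introducing fresh concept names as needed. Next, invoke the forest-model property of $\ALC$: every model of $\Omc$ extending an instance $\Dmf$ can be replaced by one of forest shape, where each active-domain constant $a$ carries an anonymous subtree whose possible shapes depend only on the \emph{type} (maximal consistent subset of concept names) realized at $a$; there are only finitely many such types. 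Introduce a monadic IDB predicate $P_\tau$ for each consistent type $\tau$. The MDDlog program $\Pi_Q$ then contains: (i)~a disjunctive rule $\bigvee_\tau P_\tau(x) \leftarrow \mn{adom}(x)$ that forces a type assignment, (ii)~non-disjunctive rules propagating the CI-induced constraints along EDB facts (e.g., if two active-domain elements are $R$-related, their types must be compatible with every CI of $\Omc$), and (iii)~for each disjunct $p$ of $q$, a collection of goal rules obtained by enumerating all finitely many ways $p$ can be split into an ``active-domain part'' plus tree fragments that are forced to exist by the types at their ABox entry points.

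For $\MDD \preceq (\ALC,\text{UCQ})$: Given an MDDlog program $\Pi$ with monadic IDBs $P_1,\ldots,P_k$, treat each $P_i$ as a concept name in the target ontology signature. Normalize $\Pi$ so that every non-goal rule has a body that is tree-shaped around a single root variable that coincides with the (single) head variable; such a rule $\bigvee_i Q_i(x) \leftarrow \beta(x)$ then translates directly to an $\ALC$ concept inclusion $C_\beta \sqsubseteq Q_1 \sqcup \cdots \sqcup Q_m$, where $C_\beta$ is the $\ALC$-concept read off from the tree body $\beta$. Rules whose body is not tree-shaped, or whose head involves several variables, are handled by introducing fresh monadic IDBs that ``name'' the participating elements and by delegating the non-tree pattern detection to the UCQ~$q$, which is allowed to be an arbitrary union of CQs and can therefore match any fixed pattern in the data. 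The goal rules of $\Pi$ are then collected into a single UCQ. Correctness follows by arguing, using the forest-model property again, that every minimal model of $\Pi$ over $\Dmf$ corresponds to an $\Omc$-model extending $\Dmf$ and conversely, so that the derived $\mn{goal}$ facts coincide with $\mn{cert}_{q,\Omc}(\Dmf)$.

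The main obstacle is Direction~2: MDDlog permits arbitrary CQ bodies (including cycles and rules with multi-variable heads), while $\ALC$ can only express tree-shaped constraints. The technical core is showing that any ``non-local'' behaviour of $\Pi$ can be absorbed into the UCQ side of the OMQ, leaving only tree-shaped type propagation on the $\ALC$ side. One must verify carefully that the normalization used to separate the local and non-local parts of $\Pi$ can always be carried out while preserving the certain-answer semantics, and that the auxiliary concept names introduced do not alter the set of certain answers computed on the original EDB schema.
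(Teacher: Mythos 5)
Your overall architecture matches the paper's (a type-guessing \MDD program in one direction, a translation back via an ontology plus a UCQ in the other), but both directions as written have genuine gaps.

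\textbf{Direction $(\ALC,\text{UCQ}) \preceq \MDD$.} You take types to be maximal consistent sets of \emph{concept names} and let goal rules fire only on query splittings whose tree fragments are \emph{forced} by the types at their entry points. This is exactly the construction that works for atomic queries (the paper's Theorem~\ref{thm:ALCMDD1}) but it is incomplete for UCQs: because of disjunction in \ALC, the anonymous tree attached at an element may be forced to satisfy \emph{one of several} query fragments without any single fragment being forced. Concretely, take $\Sbf=\{A\}$, $\Omc=\{A \sqsubseteq \exists R.(B \sqcup C)\}$, $q = \exists x \exists y\,(R(x,y)\wedge B(y)) \;\vee\; \exists x \exists y\,(R(x,y)\wedge C(y))$ and $\Dmf=\{A(a)\}$: the certain answer is ``yes'', yet neither fragment is forced by the (concept-level) type of $a$, so none of your goal rules fires. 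The paper avoids this by putting the candidate non-core CQ fragments \emph{into the types themselves} (types are subsets of $\mn{cl}(\Omc,q)$, which contains all CQs over the query signature with at most one free variable and at most $|q|$ atoms); the disjunctive guess then also decides which fragments hold in the anonymous parts, unrealizable combinations (such as ``$A$ but neither fragment'') are killed by $\bot$-rules over diagrams, and goal rules fire when a diagram of guessed types implies $q$. Without enriching the types this way, your program computes a strict under-approximation of the certain answers.

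\textbf{Direction $\MDD \preceq (\ALC,\text{UCQ})$.} The step you defer --- handling rules with cyclic bodies or multi-variable heads by ``fresh monadic IDBs that name the participating elements'' and ``delegating the pattern detection to the UCQ'' --- is where the actual work lies, and as described it does not go through: monadic predicates in \ALC cannot name individuals, and a UCQ evaluated under certain-answer semantics can only detect \emph{positive} patterns, so it cannot by itself express that a rule body holds \emph{while its disjunctive head fails}. The paper's proof needs no rule-to-CI translation and no tree-shaping at all: it axiomatizes, for every monadic IDB $A$, a complement $\bar A$ via $\top \sqsubseteq (A \sqcup \bar A) \sqcap \neg(A \sqcap \bar A)$, and takes as the UCQ the union of all goal-rule bodies together with, for every non-goal rule, the CQ consisting of its body conjoined with $\bar A_i(\xbf_i)$ for each head atom $A_i(\xbf_i)$. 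The certain-answer quantification over models then ranges over all $A/\bar A$ guesses, and a tuple is an answer iff every guess either violates some rule (caught by a complemented disjunct) or satisfies a goal body. Your proposal is missing this complementation device, and also relies on a normalization of arbitrary (possibly cyclic) rule bodies into tree-shaped ones that cannot be carried out in general while preserving semantics.
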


\begin{proof} (sketch) We start with giving some intuitions about
  answering (\ALC,UCQ) queries which guide our translation of
  such queries into \MDD programs.
Recall that the definition of certain answers to an
ontology-mediated query on an instance $\Dmf$ involves a
quantification over all models of $\Omc$ which 
extend $\Dmf$.  It
turns out that in the case of (\ALC,UCQ) queries (and, as we will see
later, more generally for 
(UNFO,UCQ) queries), it suffices
to consider a particular type of extensions of
$\Dmf$ that we term \emph{pointwise extensions}.  Intuitively, such an
extension of $\Dmf$ 
corresponds to 
attaching domain-disjoint structures to the elements of $\Dmf$. 
Formally, for instances $\Dmf\subseteq \Dmf'$, we call 
$\Dmf'$ a pointwise extension of $\Dmf$ if $\Dmf'\setminus\Dmf$ is
the 
 union of 
 instances $\{\Dmf'_a\mid a\in
\mn{adom}(\Dmf)\}$ such that $\mn{adom}(\Dmf'_a)\cap \mn{adom}(\Dmf)
\subseteq \{a\}$ and $\mn{adom}(\Dmf'_a) \cap \mn{adom}(\Dmf'_b)=\emptyset$ for $a \neq b$.  
The fact that we need only consider models of $\Omc$ which are pointwise extensions of $\Dmf$
 is helpful because it
constrains the ways in which a CQ can be satisfied. 
%
%
Specifically, every homomorphism $h$ from $q$ to $\Dmf'$ gives rise to
a query $q'$ obtained from $q$ by identifying all variables that $h$
sends to the same element, and to a decomposition of $q'$ into a
collection of components $q'_0,\dots,q'_k$ where the `core component'
$q'_0$ comprises all atoms of $q'$ whose variables $h$ sends to
elements of \Dmf and for each $\Dmf'_a$ in the image of $h$, there is
a `non-core component' $q'_i$, $1 \leq i \leq k$, such that $q'_i$
comprises all atoms of $q'$ whose variables $h$ sends to elements
of $\Dmf'_a$. Note that the non-core components are pairwise
variable-disjoint and share at most one variable with the core
component.

 We now detail the translation from an ontology-mediated query $(\Sbf,\Omc,q) \in
  \text{(\ALC,UCQ)}$ into an equivalent \MDD program. 
  Let $\mn{sub}(\Omc)$ be the set of
  subconcepts (that is, syntactic subexpressions) of concepts that
  occur in \Omc, and let $\mn{cl}(\Omc,q)$ denote the union of $\mn{sub}(\Omc)$
  and the set of all CQs that have at most one free variable, use only
  symbols from $q$, and whose number of atoms is bounded by the
  number of atoms of $q$.
%
  A \emph{type} (for \Omc and $q$) is a subset of
  $\mn{cl}(\Omc,q)$. The CQs present in $\mn{cl}(\Omc,q)$
  include all potential `non-core components' from the intuitive
  explanation above. The free variable of a CQ in $\mn{cl}(\Omc,q)$
  (if any) represents the overlap between the core component and the
  non-core component.

  We introduce a fresh unary relation symbol $P_\tau$ for every type
  $\tau$, and we denote by $\Sbf'$ the schema that extends $\Sbf$ with
  these additional symbols.  In the \MDD program that we aim to
  construct, the relation symbols $P_\tau$ will be used as IDB
  relations, and the symbols from $\Sbf$
   will be the EBD relations. 
   
   We will say that  a
  relational structure $\Bmf$ over $\Sbf'\cup\mn{sig}(\Omc)$ is
  \emph{type-coherent} if $P_{\tau}(d)\in \Bmf$ just in the case that
  %
  $$
  \begin{array}{rcl}
    \tau&=&\{q' \in \mn{cl}(\Omc,q)\mid q' \text{ Boolean }, \Bmf
    \models q'\} \; \cup \\[1mm]
        &&\{C\in\mn{cl}(\Omc,q)\mid C \text{ unary}, \Bmf \models C[d]\}.
  \end{array}
  $$
  Set $k$ equal to the maximum of $2$ and the width of $q$,
  that is, the number of variables that occur in $q$. 
  By a \emph{diagram}, we mean a conjunction $\delta(x_1, \ldots, x_n)$ of
  atomic formulas over the schema $\Sbf'$, with $n\leq k$ variables.
  A diagram $\delta(\textbf{x})$ is \emph{realizable} if there exists
  a type-coherent $\Bmf\in \mn{Mod}(\Omc)$ that satisfies
  $\exists\textbf{x}\delta(\textbf{x})$.  A diagram
  $\delta(\textbf{x})$ \emph{implies} $q(\textbf{x}')$, with
  $\textbf{x}'$ a sequence of variables from $\textbf{x}$, if every
  type-coherent $\Bmf\in \mn{Mod}(\Omc)$ that satisfies
  $\delta(\textbf{x})$ under some variable assignment, satisfies
  $q(\textbf{x}')$ under the same assignment. 

  The desired \MDD program $\Pi$ consists of the
  following collections of rules:
  $$
  \begin{array}{r@{\;}c@{\;}l@{\!\!\!\!\!\!\!\!\!\!}l}
\displaystyle\bigvee_{\tau\subseteq \mn{cl}(\Omc,q)} \!\!\!\!\!\!
P_\tau(x) &\leftarrow& \mn{adom}(x)
 \\[-2.6mm]
\bot &\leftarrow&\delta(\textbf{x}) & \text{ for all non-realizable
 diagrams $\delta(\textbf{x})$} \\[1mm]
\mn{goal}(\textbf{x}') &\leftarrow& \delta(\textbf{x}) & \text{ for all 
  diagrams $\delta(\textbf{x})$ that imply $q(\textbf{x}')$}
  \end{array}
$$
Intuitively, these rules `guess' a pointwise extension $\Dmf'$ of
\Dmf.  Specifically, the types $P_\tau$ guessed in the first line
determine which subconcepts of \Omc are made true at each element of
$\Dmf'$. Since \MDD does not support existential quantifiers, the
$\Dmf'_a$ parts of $\Dmf'$ cannot be guessed explicitly. Instead, the
CQs included in the guessed types determine those non-core component
queries that matched in the $\Dmf'_a$ parts.  The second line
ensures coherence of the guesses and the last line guarantees that $q$
has the required match in $\Dmf'$.
It is proved in the full version of this paper that the \MDD query
$q_\Pi$ is indeed equivalent to $(\Sbf,\Omc,q)$.

For the converse direction, let $\Pi$ be an \MDD program.  For each
unary IDB relation $A$ of $\Pi$, we introduce two fresh unary
relations, denoted by $A$ and $\bar{A}$. The ontology $\Omc$ enforces
that $\bar{A}$ represents the complement of $A$, that is, it consists
of all inclusions of the form
$$\top \sqsubseteq (A\sqcup \bar{A})\sqcap\neg (A\sqcap \bar{A}).
$$
Let $q$ be the union of (i) all conjunctive queries that constitute
the body of a goal rule, as well as (ii) all conjunctive queries
obtained from a non-goal rule of the form
\[A_1(\xbf_1) \vee \cdots
\vee A_m(\xbf_m) \leftarrow R_1(\ybf_1)\land \cdots\land
R_n(\ybf_n)\] 
by taking the conjunctive query
$$\bar{A}_1(\xbf_1) \land \cdots
\land \bar{A}_m(\xbf_m) \land R_1(\ybf_1)\land \cdots\land
R_n(\ybf_n).$$
It can  be shown that the ontology-mediated query
$(\Sbf,\Omc,q)$, where $\Sbf$ is the schema that consists of 
the EDB relations of $\Pi$, is equivalent to the query defined by $\Pi$. 
\end{proof}

\smallskip\noindent\textbf{$\bm{\mathcal{ALC}}$ with Atomic Queries.}
We characterize ($\mathcal{ALC}$,AQ) by a 
fragment of \MDD.
%
%
This query language has the same expressive power as the OBDA language
($\ALC$,ConQ), where ConQ denotes the set of all \emph{\ALC-concept
  queries}, that is, queries $C(x)$ with $C$ a (possibly compound)
\ALC-concept. Specifically, each query $(\Sbf,\Omc,q) \in
(\mathcal{ALC},\text{ConQ})$ with $q=C(x)$ can be expressed as a query
$(\Sbf,\Omc',A(x)) \in (\mathcal{ALC},\text{AQ})$ where $A$ is a fresh
concept name (that is, it does not occur in $\Sbf \cup
\mn{sig}(\Omc)$) and $\Omc'=\Omc \cup \{ C \sqsubseteq A \}$.
As a consequence, (\ALC,AQ) also has the same expressive power as
($\ALC,\text{TCQ}$), where TCQ is the set of all CQs that take the
form of a directed tree with a single answer variable at the root.

Each disjunctive datalog rule can be associated with an undirected
graph whose nodes are the variables that occur in the rule and whose
edges reflect co-occurrence of two variables in an atom in the rule
body.  We say that a rule is \emph{connected} if its graph is
connected, and that a \DD program is connected if all its rules are
connected.  An \MDD program is \emph{simple} if each rule contains at
most one atom $R(\xbf)$ with $R$ an EDB relation; additionally, we
require that, in this atom, every variable occurs at most once.

%
%
\begin{theorem}\label{thm:ALCMDD1}
($\mathcal{ALC}$,AQ) has the same expressive power as 
unary connected simple \MDD.
\end{theorem}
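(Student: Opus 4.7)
The plan is to prove each direction of the expressive-power equivalence by specializing the two constructions from the proof of Theorem~\ref{thm:ALCtoMDD}. The key observation is that for an atomic query $A(x)$ the width is $1$, so the width parameter $k$ in that proof collapses to $2$, and because \ALC enjoys a tree-like model-theoretic behaviour that is fully captured by pairwise compatibility of types, every rule the construction produces can be reshaped into simple connected form.

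For $(\ALC,\text{AQ}) \preceq$ unary connected simple \MDD, let $(\Sbf,\Omc,A(x))$ be given. Define $\mn{cl}(\Omc,A(x)) = \mn{sub}(\Omc) \cup \{A\}$ and take types to be subsets of $\mn{cl}(\Omc,A(x))$. Introduce a fresh unary IDB $P_\tau$ for each type~$\tau$. The program contains: (i)~a type-guess rule $\bigvee_\tau P_\tau(x) \leftarrow \adom(x)$; (ii)~single-variable contradiction rules $\bot \leftarrow P_\tau(x)$ for each $\tau$ that is not realizable at any element in a model of $\Omc$; (iii)~two-variable contradiction rules $\bot \leftarrow P_\tau(x) \wedge R(x,y) \wedge P_\sigma(y)$ for every triple $(\tau,R,\sigma)$ not jointly realizable across an $R$-edge in a model of $\Omc$; and (iv)~the goal rule $\mn{goal}(x) \leftarrow P_\tau(x)$ whenever $A \in \tau$. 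Each rule contains at most one EDB atom (which, if binary, uses distinct variables) and is connected, so the program is unary connected simple \MDD. Correctness reduces to showing that any type assignment on $\adom(\Dmf)$ satisfying the contradiction rules extends to a model of $\Omc$: attach, at each $a \in \adom(\Dmf)$, a domain-disjoint fresh tree $\Dmf'_a$ realizing any existential requirements of $\tau_a$ not already fulfilled inside $\Dmf$, exactly as in the pointwise-extension argument of Theorem~\ref{thm:ALCtoMDD}.

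For the converse, given a unary connected simple \MDD program $\Pi$, I follow the complement-based translation from the second half of the proof of Theorem~\ref{thm:ALCtoMDD}. For every non-goal unary IDB $B$ of $\Pi$, introduce a fresh concept name $\bar B$ and add $\top \sqsubseteq (B \sqcup \bar B) \sqcap \neg(B \sqcap \bar B)$. By the connected simple assumption, each non-goal rule has either a single-variable body $\alpha(x)$ of unary atoms (with at most one unary EDB atom) or a body $\alpha(x) \wedge R(x,y) \wedge \beta(y)$ with a single binary EDB atom linking the two variables. Flipping the disjunctive head via the complement concepts and reading the body off the tree rooted at the source of $R$, each rule translates to an \ALC inclusion of the form $\alpha'(x) \sqcap \exists R.\beta'(y) \sqsubseteq \bot$ (or its one-variable variant), where $\alpha'$ and $\beta'$ denote the body unaries conjoined with the complements of any head atoms at the respective variable. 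Goal rules $\mn{goal}(x) \leftarrow \gamma(x)$ are translated analogously to inclusions $[\,\gamma \text{ as an \ALC concept rooted at } x\,] \sqsubseteq G$ for a fresh concept name $G$, and the resulting ontology-mediated query is $(\Sbf, \Omc, G(x))$.

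The step I expect to be the main obstacle is the soundness half of the forward direction: verifying that the purely pairwise compatibility checks in rules~(ii) and~(iii) suffice to extend a guessed type assignment on $\adom(\Dmf)$ to a genuine model of $\Omc$. This is a mosaic/tree-model argument tailored to \ALC, with the added subtlety that the types must track not only subconcepts of $\Omc$ but also the query atom $A$ itself, so that the goal rule (iv) captures the certain-answer condition precisely; once this is in place, the pointwise-extension framework from Theorem~\ref{thm:ALCtoMDD} furnishes the global model by adjoining the fresh trees $\Dmf'_a$. The backward direction, by contrast, is largely syntactic once one notices that a connected simple \DD rule already mirrors the two-variable, tree-shaped syntax of \ALC concept inclusions.
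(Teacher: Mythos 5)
Your overall strategy is the same as the paper's (specialize the type/mosaic construction of Theorem~\ref{thm:ALCtoMDD} for the forward direction, and rewrite rules into concept inclusions for the converse), but the forward direction as you state it has a genuine gap: your program never ties the guessed types to the \emph{unary EDB facts} of the instance. The paper's program contains, in addition to your rules (ii) and (iii), contradiction rules for non-realizable diagrams of the form $P_\tau(x)\wedge B(x)$ with $B$ a unary predicate of the data schema, and these are essential. Concretely, take $\Sbf=\{B\}$, $\Omc=\{B\sqsubseteq A\}$, query $A(x)$, and $\Dmf=\{B(a)\}$: the certain answer is $a$, but your program admits the model $\Dmf\cup\{P_\emptyset(a)\}$ of all its rules (the empty type is realizable, there are no role atoms, and no rule relates $P_\emptyset$ to the fact $B(a)$), so $\mn{goal}(a)$ is not derived and the program is not equivalent to the ontology-mediated query. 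This is exactly where your correctness sketch breaks: the claimed extension of a "compatible" type assignment to a model of $\Omc$ must contain the fact $B(a)$, which the chosen type denies. Adding the missing family of rules (as the paper does) repairs it, and then your pointwise-extension argument goes through as in the appendix; incidentally, the type-uniqueness rules $\bot\leftarrow P_{\tau_1}(x)\wedge P_{\tau_2}(x)$ that the paper also lists can indeed be omitted without harm, so that part of your streamlining is fine.

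For the converse your translation is essentially the paper's; the complement concepts $\bar B$ are redundant (since \ALC has full negation one can write $\neg P$ directly, as in the paper's example $A\sqcap\exists R.(B\sqcap\neg P_2)\sqsubseteq P_1$), but harmless. One case your uniform "root at the source of $R$" recipe does not cover: a goal rule whose answer variable is the \emph{target} of the role atom, e.g. $\mn{goal}(x)\leftarrow R(y,x)$; reading the body as a concept rooted at $x$ would require an inverse role, which \ALC lacks. Such rules must instead be rendered with a value restriction, e.g. $\top\sqsubseteq\forall R.\mn{goal}$, or in general $\gamma_1\sqsubseteq\forall R.(\neg\gamma_2\sqcup G)$. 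This is a small, easily repaired oversight, unlike the forward-direction omission above.
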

\begin{proof} (sketch) The translation from (\ALC,AQ) to unary
  connected simple \MDD queries is a modified version of the
  translation given in the proof of Theorem~\ref{thm:ALCtoMDD}. Assume
  that $(\Sbf,\Omc,q)$ with $q=A(x)$ is given. 
  We now take types to be subsets of $\mn{sub}(\Omc)$ and then define
  diagrams exactly as before (with $k=2$).
The \MDD program $\Pi$
  consists of the following rules:
  $$
  \begin{array}{r@{\;}c@{\;}l@{\!\!\!\!\!\!\!\!\!\!}l}
\displaystyle\bigvee_{\tau\subseteq \mn{sub}(\Omc)} \!\!\!\!\!\!
P_\tau(x) &\leftarrow& \mn{adom}(x)
 \\[-2.5mm]
\bot &\leftarrow&\delta(\textbf{x}) & \text{ for all non-realizable
 diagrams $\delta(\textbf{x})$} \\[0.5mm]
  && & \text{ of
  the form $P_{\tau_1}(x) \wedge P_{\tau_2}(x)$, } \\[0.5mm]
    && & \text{ $P_\tau(x) \wedge A(x)$, or } \\[0.5mm]
 && & \text{ $P_{\tau_1}(x_1)\land S(x_1,x_2)
   \land P_{\tau_2}(x_2)$} \\[1mm]
\mn{goal}(x) &\leftarrow& P_\tau(x)  & \quad \text{ for all }P_\tau \text{ with } A \in P_\tau
  \end{array}
$$
%
Clearly, $\Pi$ is unary, connected, and simple. Equivalence of the queries $(\Sbf,\Omc,q)$
and $q_\Pi$ is proved  in the full version of this paper.

Conversely, let $\Pi$ be a unary connected simple \MDD program.  It is
easy to rewrite each rule of $\Pi$ into an equivalent \ALC-concept
inclusion, where $\mn{goal}$ is now regarded as a concept name. For
example, $\mn{goal}(x)\leftarrow R(x,y)$ is rewritten into $\exists
R.\top \sqsubseteq \mn{goal}$ and $P_{1}(x) \vee P_{2}(y) \leftarrow
R(x,y) \wedge A(x) \wedge B(y)$ is rewritten into $A \sqcap \exists
R.(B \sqcap \neg P_{2}) \sqsubseteq P_{1}$.  Let $\Omc$ be the
resulting ontology and let $q=\mn{goal}(x)$. Then the query $q_\Pi$ is
equivalent to the query $(\Sbf,\Omc,q)$, where $\Sbf$ consists of the
EDB relations in $\Pi$.
\end{proof}
Note that the connectedness condition is required since one cannot
express \MDD rules such as $\mn{goal}(x)\leftarrow \adom(x) \wedge
A(y)$ with $y\not=x$ in ($\mathcal{ALC}$,AQ).  Multiple variable
occurrences in EDB relations have to be excluded because programs such
as $\mn{goal}(x) \leftarrow A(x), \ \bot \leftarrow R(x,x)$ (return
all elements in $A$ if the instance contains no reflexive $R$-edge,
and return the active domain otherwise) also cannot be expressed in
($\mathcal{ALC}$,AQ).

\smallskip\noindent\textbf{Extensions of $\bm{\mathcal{ALC}}$.}  We
identify several standard extensions of ($\mathcal{ALC}$,UCQ) and
(\ALC,AQ) that have the same expressive power, and some that do not.
We introduce the relevant extensions only briefly and refer to
\cite{handbook} for more details.

  $\mathcal{ALCI}$ is the extension
of $\mathcal{ALC}$ in which one can state that a role name $R$ is the
\emph{inverse} of a role name $S$, that is, $\forall xy(R(x,y)
\leftrightarrow S(y,x))$; $\mathcal{ALCH}$ is the extension 
in which one can state that a role name $R$ is
\emph{included} in a role name $S$, that is, $\forall xy(R(x,y)
\rightarrow S(x,y))$; $\mathcal{S}$ is the extension 
 of $\mathcal{ALC}$ 
in which one can require some roles names to be
interpreted as \emph{transitive relations}; $\mathcal{ALCF}$ is the
extension 
in which one can state that some role names are interpreted as
\emph{partial functions}; and $\mathcal{ALCU}$ is the extension with
the \emph{universal role} $U$, interpreted as $\dom\times \dom$ in any
relational structure \Bmf with domain \dom.  Note that $U$ should be
regarded as a logical symbol and is not a member of any schema.
All these means of expressivity are included in the OWL2 DL
profile of the W3C-standardized ontology language OWL2 \cite{owl}. 

We use the usual naming scheme to denote combinations
of these extensions, for example $\mathcal{ALCHI}$ for the union of
$\mathcal{ALCH}$ and $\mathcal{ALCI}$ and $\mathcal{SHI}$ for the
union of $\mathcal{S}$ and $\mathcal{ALCHI}$.
The following result summarizes the expressive power of extensions of $\mathcal{ALC}$.
%
\begin{theorem}
\label{thm:otherDLUCQ}
~\\[-5mm]
\begin{enumerate}

\item ($\mathcal{ALCHIU}$,UCQ) has the same expressive power as \MDD and as
($\mathcal{ALC}$,UCQ).

\vspace*{-2mm}

\item ($\mathcal{S}$,UCQ) and ($\mathcal{ALCF}$,UCQ) are strictly more expressive than
($\mathcal{ALC}$,UCQ).

\vspace*{1mm}

\end{enumerate}
\end{theorem}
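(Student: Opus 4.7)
The plan for Part~1 is to combine Theorem~\ref{thm:ALCtoMDD} with an extension of its translation that handles the three additional features of $\mathcal{ALCHIU}$ over $\mathcal{ALC}$. The inclusion $(\mathcal{ALC},\text{UCQ})\preceq(\mathcal{ALCHIU},\text{UCQ})$ is trivial, and Theorem~\ref{thm:ALCtoMDD} already provides $\MDD\preceq(\mathcal{ALC},\text{UCQ})$; so the only remaining direction is $(\mathcal{ALCHIU},\text{UCQ})\preceq\MDD$, which I would establish by adapting the construction in the proof of Theorem~\ref{thm:ALCtoMDD}. Concretely: (i)~close the set $\mn{cl}(\Omc,q)$ under forming $\exists R^-.C$ so that types record both direct and inverse existentials, handling inverse roles; (ii)~absorb role inclusions $R\sqsubseteq S$ into the realizability test on diagrams of the form $P_{\tau_1}(x)\land R(x,y)\land P_{\tau_2}(y)$, with no new IDBs, handling role hierarchies; and (iii)~for each relevant concept $C$, introduce a fresh monadic IDB $G_C$ that is forced to hold uniformly on the active domain and is linked by coherence rules to any type containing $\exists U.C$, handling the universal role. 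The realizability check must be updated accordingly, but it remains a local check at the level of at most two elements and one role atom.

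For Part~2, I would argue strictness by exhibiting, in each of $(\mathcal{S},\text{UCQ})$ and $(\mathcal{ALCF},\text{UCQ})$, a query that cannot be expressed in $\MDD$. The cleanest route is complexity-theoretic: every $\MDD$-query has data complexity in \coNP, so any query with provably higher data complexity (under a standard assumption such as $\coNP\subsetneq\Pi_2^p$) witnesses strictness. Standard encodings in the description-logic literature furnish such queries, using functional roles to pin down deterministic chains that enable succinct encodings beyond \coNP, and using transitive roles to propagate information along unbounded paths to the same effect.

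The main technical obstacle is in Part~1, namely re-establishing the pointwise-extension property for $\mathcal{ALCHIU}$: with inverse roles, non-core components attached at some $a$ may contain atoms of $q$ pointing back from their interiors to $a$, and with the universal role, long-range dependencies through $U$ appear to threaten per-element reasoning. The key observation is that these dependencies nevertheless factor through the local type assigned to $a$ together with a fixed, polynomially-bounded set of global flags, all of which remain monadic. This is precisely what keeps the translation inside $\MDD$ rather than forcing general disjunctive datalog, and it is also exactly the factoring that breaks in Part~2, where transitive or functional roles propagate information along unbounded or uniquely-determined paths.
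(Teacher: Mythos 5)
For Part~1 your route differs from the paper's: you re-do the type/diagram translation of Theorem~\ref{thm:ALCtoMDD} directly for $\mathcal{ALCHIU}$, whereas the paper observes that $\mathcal{ALCIU}$ is a fragment of UNFO, invokes Theorem~\ref{thm:UNFO}, and then eliminates role inclusions syntactically (adding $\exists R.C\sqsubseteq\exists S.C$ for all $C\in\mn{sub}(\Omc)$ and replacing $S(x,y)$ in the UCQ by $R(x,y)\vee S(x,y)$). Your direct construction is viable in principle, but as sketched it is loose exactly where the universal role bites: $\exists U.C$ and $\forall U.C$ are global, so the types must record sentence-level information and the program needs \emph{non-connected} coherence rules of the form $\bot\leftarrow P_{\tau_1}(x)\wedge P_{\tau_2}(y)$ forcing all elements to agree on the $U$-concepts; your claim that the realizability check ``remains a local check at the level of at most two elements and one role atom'' is not right for the UCQ case, where diagrams have up to $\max(2,|q|)$ variables and arbitrary atoms (the two-element form you describe is the AQ situation of Theorem~\ref{thm:ALCMDD1}). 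This is essentially the machinery the paper gets for free from the UNFO proof, where sentences are members of types and cross-element agreement is enforced through realizability of binary diagrams.

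Part~2 is where the proposal genuinely fails. Your plan is complexity-theoretic: find queries in $(\mathcal{S},\text{UCQ})$ or $(\mathcal{ALCF},\text{UCQ})$ whose data complexity provably exceeds \coNP. No such queries exist: certain answering of UCQs under $\mathcal{S}$ and $\mathcal{ALCF}$ (indeed under much larger DLs such as $\mathcal{SHIQ}$) is \coNP-complete in data complexity, i.e.\ exactly the same upper bound as \MDD, and both sides contain \coNP-complete queries, so no separation can be extracted from data complexity under any standard assumption. The separation is an expressiveness phenomenon, not a computational one, and the paper's witnesses make this vivid: for $\mathcal{ALCF}$ the query is $(\Sbf,\Omc,\exists x\,A(x))$ with $\Omc$ asserting functionality of $S$, whose certain-answer query is even \emph{FO-definable} (it fires iff the instance contains an $A$-fact or an $S$-branching, i.e.\ is inconsistent), yet it is not in \MDD; for $\mathcal{S}$ the witness is $\exists xy(R(x,y)\wedge S(x,y))$ with $R,S$ transitive. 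Non-expressibility is proved via the colored forbidden-patterns characterization of \MDD{} coming from the MMSNP connection (Lemma~\ref{lem:crit} in the appendix): one exhibits, for all $m,n$, instances $\Dmf_1$ (answer $1$) and $\Dmf_0$ (answer $0$) such that every $m$-coloring of $\Dmf_0$ can be matched by an $m$-coloring of $\Dmf_1$ all of whose $n$-element substructures map homomorphically into the colored $\Dmf_0$. Some argument of this structural kind (or another closure-property argument specific to \MDD/MMSNP) is unavoidable; the complexity route cannot be repaired.
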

%
\begin{proof} (sketch) In Point~1, we start with ($\mathcal{ALCIU}$,UCQ),
  for which the result follows from Theorem~\ref{thm:UNFO} in Section \ref{sect:FOOntologies}
  since $\mathcal{ALCIU}$ is a fragment of UNFO.  Role inclusions $\forall
  xy(R(x,y) \rightarrow S(x,y))$ do not add expressive power since they
  can be simulated by adding to the ontology the inclusions $\exists
  R.C \sqsubseteq \exists S.C$ for all $C\in \mn{sub}(\Omc)$, and 
  replacing every atom $S(x,y)$ in the UCQ by $R(x,y)\vee S(x,y)$.

  For Point~2, we separate ($\mathcal{S}$,UCQ) from
  ($\mathcal{ALC}$,UCQ) by showing that the following
  ontology-mediated query $(\Sbf_{1},\Omc_{1},q_{1})$ cannot be
  expressed in ($\mathcal{ALC}$,UCQ): $\Sbf_{1}$ consists of two role
  names $R$ and $S$, $\Omc_{1}$ states that these role names are both
  transitive, and $q_{1}=\exists xy(R(x,y)\wedge S(x,y))$. For
  ($\mathcal{ALCF}$,UCQ), we show that $(\Sbf_{2},\Omc_{2},q_{2})$
  cannot be expressed in ($\mathcal{ALC}$,UCQ), where $\Sbf_2$
  consists of role name $R$ and concept name $A$, $\Omc_2$ states that
  $R$ is functional, and $q_{2}=A(x)$. Detailed proofs are provided in
  the full version of this paper. They rely on a characterization of
  ($\mathcal{ALC}$,UCQ) in terms of colored forbidden patterns
  \cite{DBLP:journals/siamcomp/MadelaineS07}, which is a by-product of
  the connection between ($\mathcal{ALC}$,UCQ) and MMSNP that will be
  established in Section~\ref{sect:obdammsnp}.
  %
\end{proof}
The next 
result is interesting when contrasted with Point~2 of
Theorem~\ref{thm:otherDLUCQ}: when (\ALC,UCQ) is replaced with
(\ALC,AQ), then the addition of transitive roles no longer
increases the expressive power. 
\begin{theorem}\label{thm:ALCMDD2}
($\mathcal{ALC}$,AQ) has the same expressive power as ($\mathcal{SHI}$,AQ).
\end{theorem}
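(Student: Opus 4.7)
The direction $(\mathcal{ALC},\text{AQ}) \preceq (\mathcal{SHI},\text{AQ})$ is immediate, since $\mathcal{ALC}$ is a syntactic fragment of $\mathcal{SHI}$. For the converse, the plan is to invoke Theorem~\ref{thm:ALCMDD1} and show that every query $(\Sbf,\Omc,A(x)) \in (\mathcal{SHI},\text{AQ})$ is equivalent to some unary, connected, simple \MDD program $\Pi$. The construction would follow the template of the proof of Theorem~\ref{thm:ALCMDD1}: IDB symbols are unary type predicates $P_\tau$, one per subset $\tau$ of $\mn{sub}(\Omc)$; the program nondeterministically guesses a type at each active-domain element, rules out non-realizable atomic and two-variable diagrams, and derives $\mn{goal}(x)$ whenever $A \in \tau$. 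The output is visibly unary, connected, and simple because every non-goal rule involves at most one EDB atom $S(x_1,x_2)$ over at most two variables; hence Theorem~\ref{thm:ALCMDD1} yields an equivalent $(\mathcal{ALC},\text{AQ})$ query.

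The real content is in redefining the notion of \emph{realizability} for the restricted diagram shapes $P_{\tau_1}(x)\wedge P_{\tau_2}(x)$, $P_\tau(x)\wedge A(x)$, and $P_{\tau_1}(x_1)\wedge S(x_1,x_2)\wedge P_{\tau_2}(x_2)$ so as to absorb the $\mathcal{SHI}$ features. Role hierarchies $R \sqsubseteq S$ are handled by saturating each binary type pair under the role inclusions of $\Omc$, i.e.\ by declaring $P_{\tau_1}(x_1)\wedge R(x_1,x_2)\wedge P_{\tau_2}(x_2)$ realizable only when $P_{\tau_1}(x_1)\wedge S(x_1,x_2)\wedge P_{\tau_2}(x_2)$ is. Inverse roles cause no trouble since realizability is already a property of a symmetric two-element structure, so each $\exists R^{-}.C$ or $\forall R^{-}.C$ simply constrains the type at the opposite endpoint.

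The main obstacle is transitive roles. The plan is to extend the closure condition on types with the standard tableau propagation rule for $\mathcal{SHI}$: whenever $R$ is transitive and $\forall R.C \in \tau$, require $\forall R.C \in \tau'$ for every $\tau'$ that appears as the second type in a realizable diagram $P_\tau(x_1)\wedge R(x_1,x_2)\wedge P_{\tau'}(x_2)$, and symmetrically for inverses of transitive roles. This guarantees that purely local, two-element coherence already enforces the global effect of transitivity. Correctness then splits into the two usual directions: from any $\mathcal{SHI}$-model of $\Omc$ extending~$\Dmf$ one reads off a type assignment that witnesses the diagrammatic constraints, while conversely, given any consistent type assignment on $\Dmf$, one constructs a $\mathcal{SHI}$-model by forest-unravelling above $\Dmf$ so that each non-root vertex realizes a type witnessed by a realizable diagram at its parent, and finally taking the transitive closure of each transitive role. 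Showing that this forest construction really is a model of $\Omc$---in particular, that transitive closure introduces no new unintended $\exists R.C$ obligations---is the technical crux and is exactly where the propagation closure on types earns its keep.
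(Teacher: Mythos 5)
Your argument is correct in outline, but it takes a genuinely different route from the paper. The paper proves the theorem in two short steps: it first observes that the translation in the proof of Theorem~\ref{thm:ALCMDD1} already works verbatim for $\mathcal{ALCI}$ (the two-element diagrams absorb inverse roles for free), so $(\mathcal{ALC},\text{AQ})$ and $(\mathcal{ALCI},\text{AQ})$ coincide, and it then invokes the folklore elimination of transitive roles (and role inclusions) inside $\mathcal{ALCI}$ by adding concept inclusions, in the style of \cite{DBLP:phd/de/Motik2006,DBLP:conf/dlog/Simancik12}, which preserves certain answers to atomic queries; the \MDD machinery is never revisited. You instead extend the \MDD translation directly to $\mathcal{SHI}$, folding the transitivity axioms into the type-coherence/realizability conditions and proving correctness via forest unravelling followed by transitive closure, after which Theorem~\ref{thm:ALCMDD1} closes the loop. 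This is more self-contained but makes you re-prove, at the level of the datalog translation, exactly the folklore fact the paper cites: your propagation closure ($\forall R.C\in\tau$ forces $\forall R.C\in\tau'$ along transitive $R$-edges) is precisely the concept-inclusion encoding of transitivity, and verifying that the transitive closure of the forest model violates no $\forall$-constraints (and preserves the type invariant for $\exists R.C$, which propagates backwards along transitive edges) is where all the work sits --- you correctly identify this as the crux but leave it open. Two points to make explicit if you carry it out: (i) if realizability of a diagram is defined with respect to arbitrary models of the $\mathcal{SHI}$ ontology $\Omc$ (transitivity and role inclusions included), your propagation conditions are automatically entailed rather than needing to be postulated, and are then available for the model construction; (ii) the standard $\mathcal{SHI}$ propagation rule also covers the interaction of transitivity with the role hierarchy ($\forall R.C$ pushes $\forall S.C$ along transitive subroles $S$ of $R$), so the relevant concept closure must be enlarged beyond $\mn{sub}(\Omc)$ to contain these $\forall S.C$, which your sketch does not mention. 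In short, the paper buys brevity by reducing $\mathcal{SHI}$ to $\mathcal{ALCI}$ at the ontology level, while your approach buys a uniform, citation-free construction at the cost of redoing that reduction inside the \MDD correctness proof.
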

\begin{proof} (sketch)
The proof of Theorem~\ref{thm:ALCMDD1} given above actually shows that
unary connected simple \MDD is at least as expressive as ($\mathcal{ALCI}$,AQ). Thus,
($\mathcal{ALC}$,AQ) has the same expressive power as ($\mathcal{ALCI}$,AQ).
Now it is folklore that in $\mathcal{ALCI}$ transitive
roles can be replaced by certain concept inclusions without changing the certain answers to
atomic queries. This can be done similarly to the elimination of role inclusions in the 
proof above, see \cite{DBLP:phd/de/Motik2006,DBLP:conf/dlog/Simancik12}. 
Thus ($\mathcal{ALCI}$,AQ) has the same expressive power as ($\mathcal{SHI}$,AQ), and the 
result follows.
\end{proof}
It follows from \cite{DBLP:conf/dlog/Simancik12} that this observation
can be extended to all complex role inclusions that are admitted in
the description logic $\mathcal{SROIQ}$. In contrast, the addition of the universal role on
the side of the OBDA query language extends the expressive power of
($\mathcal{ALC}$,AQ). Namely, it corresponds,
on the \MDD side, to dropping the requirement that rule bodies must be
connected.  For example, the \MDD query $\mn{goal}(x) \leftarrow
\adom(x) \wedge A(y)$ can then be expressed using the ontology
$\Omc=\{\exists U.A \sqsubseteq \mn{goal}\}$ and the AQ
$\mn{goal}(x)$.
%
\begin{theorem}\label{thm:ALCMDD3}
($\mathcal{ALCU}$,AQ) and
($\mathcal{SHIU}$,AQ) both have the same expressive power as
unary simple \MDD.
\end{theorem}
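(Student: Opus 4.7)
(sketch)
The plan is to adapt the two translations in the proof of Theorem~\ref{thm:ALCMDD1} to the presence of the universal role $U$, and then to deduce the $\mathcal{SHIU}$ case by re-running the elimination arguments from the proof of Theorem~\ref{thm:ALCMDD2}. The guiding slogan is that adding $U$ on the DL side corresponds precisely to dropping the connectedness condition on the \MDD side.

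For the direction from unary simple \MDD to $(\mathcal{ALCU},\text{AQ})$, I extend the rule-by-rule rewriting of Theorem~\ref{thm:ALCMDD1}. The body of a simple rule, viewed as a graph, consists of a single ``main'' component (containing the at most one EDB atom together with the unary IDB atoms on its variables), plus possibly several ``singleton'' components of the form $P_1(z)\wedge\cdots\wedge P_k(z)$ for some variable $z$ not appearing in the EDB atom. I pick an anchor variable in some component, translate the main component exactly as in Theorem~\ref{thm:ALCMDD1}, and absorb each singleton component into a conjunct $\exists U.(P_1\sqcap\cdots\sqcap P_k)$ at the anchor; for a head variable $x_i$ different from the anchor, the factor $\neg P_i$ is pushed inside the corresponding $\exists R.(\cdots)$ or $\exists U.(\cdots)$ exactly as in Theorem~\ref{thm:ALCMDD1}. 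For example, $\mn{goal}(x)\leftarrow\adom(x)\wedge A(y)$ becomes $\exists U.A\sqsubseteq\mn{goal}$, and $P_1(x_1)\vee P_2(x_2)\leftarrow R(x_1,x_2)\wedge A(x_1)\wedge B(x_2)\wedge C(y)$ becomes $A\sqcap\exists R.(B\sqcap\neg P_2)\sqcap\exists U.C\sqsubseteq P_1$.

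For the direction from $(\mathcal{ALCU},\text{AQ})$ to unary simple \MDD, I extend the type-based construction of Theorem~\ref{thm:ALCMDD1}, now letting types $\tau\subseteq\mn{sub}(\Omc)$ also contain subconcepts of the form $\exists U.C$. The program retains the guess rule $\bigvee_\tau P_\tau(x)\leftarrow\adom(x)$, the local-realizability rules, the edge-consistency rules, and the goal rule $\mn{goal}(x)\leftarrow P_\tau(x)$ for $A\in\tau$. To handle $U$, I add \emph{global consistency} rules $\bot\leftarrow P_\tau(x)\wedge P_{\tau'}(y)$ for every pair of types $\tau,\tau'$ that disagree on some $\exists U.C$; these rules are disconnected but still simple, since their bodies contain only unary IDB atoms. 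Crucially, no rule asserting the \emph{existence} of a witness for a globally claimed $\exists U.C$ is needed: any such witness may be placed outside $\adom(\Dmf)$ in the eventual model.

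The main obstacle is the soundness of this second translation, i.e., showing that any assignment of types to $\adom(\Dmf)$ satisfying the local, edge, and global-consistency rules is realized by some model of $\Omc$ extending $\Dmf$. I will use a canonical tree-model construction in the style of Theorem~\ref{thm:ALCMDD1}: to each $a\in\adom(\Dmf)$ I attach a tree realizing the $\exists R.C$ requirements in $\tau_a$, and in addition I attach a single separate ``global witness cluster'' outside $\adom(\Dmf)$ that realizes every $\exists U.C$ claimed in the uniform global part of the types (such witnesses exist by the realizability of the chosen types in an unconstrained model of $\Omc$). Completeness is routine: any model $\mathcal{I}\models\Omc$ extending $\Dmf$ induces the type assignment $\tau_a=\{C\in\mn{sub}(\Omc):\mathcal{I}\models C[a]\}$, which satisfies all the rules. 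The version for $(\mathcal{SHIU},\text{AQ})$ then follows as in Theorem~\ref{thm:ALCMDD2}: the type construction already accommodates inverse roles, and transitive roles together with (complex) role inclusions can be compiled away into additional concept inclusions without affecting $U$ or the certain answers to atomic queries.
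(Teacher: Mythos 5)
Your proposal is correct and follows essentially the same route as the paper: the universal role is exchanged for disconnectedness (singleton body components via $\exists U$ in one direction, disconnected $\bot$-rules over pairs of guessed types in the other), and the $\mathcal{SHIU}$ case is reduced to $\mathcal{ALCIU}$ by the folklore compilation, exactly as in the paper's proof. Your syntactic pair rules (``disagree on some $\exists U.C$'') match the paper's semantically stated rules (``non-realizable diagrams $P_{\tau_1}(x_1)\wedge P_{\tau_2}(x_2)$'') via the standard disjoint-union argument, and your canonical-model sketch is a minor variant of the paper's assembly of full models realizing the guessed types.
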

We close this section with a brief remark about \emph{Boolean atomic
  queries} (BAQs), that is, queries of the form $\exists x.A(x)$,
where $A$ is a unary relation symbol.  Such queries will be considered
in Section~\ref{sect:OBDAtoCSP}. 
It is possible to
establish modified versions of Theorems~\ref{thm:ALCMDD1} to
Theorem~\ref{thm:ALCMDD3} above in which AQs are replaced by BAQs and
unary goal predicates by $0$-ary goal-predicate, respectively. 


\subsection{Ontologies Specified in First-Order Logic}
\label{sect:FOOntologies}

Ontologies formulated in description logic are not able to speak about
relation symbols of arity greater than two.\footnote{There are
  actually a few DLs that can handle relations of unrestricted arity,
  such as those presented in \cite{DBLP:conf/pods/CalvaneseGL98}. We
  do not consider such DLs in this paper, but remark that large fragments
  of them can be translated into UNFO.} To
overcome this restriction, we consider the guarded fragment of
first-order logic and the unary-negation fragment of first-order logic
\cite{DBLP:conf/lics/BaranyGO10,UNFO}.  Both generalize the
description logic $\mathcal{ALC}$ in different ways.  We also consider
their natural common generalization, the guarded negation fragment of
first-order logic \cite{DBLP:journals/pvldb/BaranyCO12}.  Our results
from the previous subsection turn out to generalize to all these
fragments.
%
%
%
%
%
We start by considering the unary negation fragment.

The \emph{unary-negation fragment of first-order logic} (UNFO)
\cite{UNFO} is the fragment of first-order logic that consists of
those formulas that are generated from atomic formulas, including
equality, using conjunction, disjunction, existential quantification,
and \emph{unary negation}, that is, negation applied to a formula with
at most one free variable. Thus, for example, $\neg\exists xy R(x,y)$
belongs to UNFO, whereas $\exists xy \neg R(x,y)$ does not.  It is
easy to show that every \ALC-TBox is equivalent to a UNFO sentence.
%

\begin{theorem}
\label{thm:UNFO}
(UNFO,UCQ) has the same expressive power as \MDD.
\end{theorem}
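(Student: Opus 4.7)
The plan is to extend the strategy of Theorem~\ref{thm:ALCtoMDD} to the richer ontology language UNFO. One direction, $\MDD \preceq (\text{UNFO},\text{UCQ})$, is immediate: by Theorem~\ref{thm:ALCtoMDD} every \MDD query is expressible in $(\ALC,\text{UCQ})$, and every \ALC-ontology is a UNFO-ontology. So the work lies entirely in showing $(\text{UNFO},\text{UCQ}) \preceq \MDD$.

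The essential preparatory step is to establish that (UNFO,UCQ) enjoys the \emph{pointwise extension property}: for every $(\Sbf,\Omc,q) \in (\text{UNFO},\text{UCQ})$, every $\Sbf$-instance $\Dmf$, and every tuple $\abf \in \adom(\Dmf)^n$, $\abf \in \mn{cert}_{q,\Omc}(\Dmf)$ iff $\abf \in q(\Dmf')$ for all pointwise extensions $\Dmf' \supseteq \Dmf$ with $\Dmf' \models \Omc$. The nontrivial direction is ``if''. The approach is to show that any arbitrary model $\Dmf^* \supseteq \Dmf$ of $\Omc$ can be replaced by a pointwise extension $\Dmf'$ of $\Dmf$ that still models $\Omc$ and admits a homomorphism $h\colon \Dmf' \to \Dmf^*$ which is the identity on $\adom(\Dmf)$. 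Since UCQs are preserved under homomorphisms, a counterexample $\abf \notin q(\Dmf^*)$ then yields a counterexample $\abf \notin q(\Dmf')$. The construction is an unraveling rooted at the constants of $\Dmf$: start with $\Dmf$ and, for each $a \in \adom(\Dmf)$, attach a domain-disjoint tree-like model fragment that witnesses all existential consequences of $\Omc$ at $a$. The model-theoretic fact enabling this is the tree-like model property of UNFO and its invariance under the corresponding bisimulation, as developed in~\cite{UNFO}.

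Given the pointwise extension property, the translation proceeds in complete analogy with the proof of Theorem~\ref{thm:ALCtoMDD}. Redefine $\mn{cl}(\Omc,q)$ as the union of (i)~all UNFO-subformulas of sentences in~$\Omc$ having at most one free variable, and (ii)~all CQs with at most one free variable, using only symbols from $q$, and with at most as many atoms as $q$. A \emph{type} is a subset of $\mn{cl}(\Omc,q)$, and a fresh unary IDB symbol $P_\tau$ is introduced per type. Diagrams, type-coherence, realizability, and implication are defined mutatis mutandis (with UNFO in place of \ALC); each is decidable by reduction to UNFO satisfiability, which is decidable by~\cite{UNFO}. The resulting \MDD program $\Pi$ has exactly the three blocks of rules as before, and its equivalence with $(\Sbf,\Omc,q)$ is proved along the same lines, using the pointwise extension property to reduce reasoning to a single pointwise extension.

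The main obstacle is the pointwise extension property. Unlike \ALC, UNFO allows relations of arbitrary arity and existential formulas that link several variables simultaneously, so the usual forest unraveling must be carried out so that every newly introduced fact in $\Dmf' \setminus \Dmf$ involves at most one element of $\adom(\Dmf)$. Carefully routing each existential witness for $\Omc$ through the subtree rooted at a single constant, while preserving satisfaction of the UNFO ontology, is the technically delicate point; once it is in hand, the rest is routine adaptation of the \ALC argument.
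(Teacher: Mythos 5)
Your easy direction ($\MDD \preceq$ (UNFO,UCQ)) is fine and agrees with the paper, but the converse as you describe it has a genuine gap: it is not a ``routine adaptation'' of the \ALC translation. Your closure --- unary subformulas of $\Omc$ plus CQs over the symbols of $q$ with at most $|q|$ atoms --- and the \ALC diagram width $\max(2,\text{width of }q)$ are too weak for UNFO. The paper's proof sets $k$ to the maximum of the number of variables in $\Omc$ and in $q$, lets diagrams use up to $k$ variables, and defines $\mn{cl}_k(\Omc)$ to contain all existential formulas with at most $k$ variables whose conjuncts are relational atoms or unary subformulas of $\Omc$ (``mixed'' formulas). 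These mixed formulas are precisely what a type must record: when an existential subformula of $\Omc$ is satisfied in the assembled model by a match that straddles the data and several anonymous parts, each part contributes a restriction of that subformula (after identifying variables) which is in general neither a subformula of $\Omc$ nor a CQ of $q$; the correctness argument decomposes the match into a core diagram plus these per-part formulas and needs them to be members of the types so that realizability and ``implies $q$'' reasoning can be applied. With your closure this bookkeeping is impossible, and with the \ALC diagram width the program is simply incorrect: e.g.\ for $\Omc=\neg\exists xyz\,(R(x,y)\wedge S(y,z)\wedge T(z,x))$ and a unary query, width-2 diagrams and unary types cannot detect a forbidden triangle in the data, so the program misses answers that are certain by inconsistency.

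The second gap is your key lemma, the pointwise extension property, which you acknowledge is the delicate point but do not prove. Citing the tree-like model property and UN-bisimulation invariance from the UNFO paper does not directly yield it: the structure you build (the instance $\Dmf$ kept intact, with unravelings of an arbitrary model $\Dmf^*$ attached at each constant) is in general \emph{not} UN-bisimilar to $\Dmf^*$, since $\Dmf^*$ may contain extra facts among the constants of $\Dmf$ that your structure deliberately omits; showing that the pointwise structure still satisfies $\Omc$ requires an induction over UNFO formulas of essentially the same nature (and difficulty) as the paper's induction $(\ast)$ on $\mn{cl}_k(\Omc)$. Moreover, the lemma would not carry the weight you assign to it: in the paper's proof the direction ``program countermodel $\Rightarrow$ not a certain answer'' is handled by assembling a model of $\Omc$ from the guessed types and verifying $\Omc$ globally --- the assembled model happens to be a pointwise extension, so the property is a by-product rather than an ingredient, while the other direction reads types off an arbitrary countermodel and needs no pointwise structure at all. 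So even granting the lemma, you are not spared the assembly argument, for which the richer closure and wider diagrams are indispensable.
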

\begin{proof}  (sketch)
%
%
  The translation from \MDD to (UNFO,UCQ) is given by
  Theorem~\ref{thm:ALCtoMDD}.  Here, we provide the translation from
  (UNFO,UCQ) to \MDD.  Let $Q=(\Sbf,\Omc,q) \in \text{(UNFO,UCQ)}$ be
  given.  We assume that $\Omc$ is a single UNFO sentence that is in
  the normal form generated by the following grammar:
  \[ \varphi(x) ::= \top ~\mid~ \neg\varphi(x) ~\mid~
  \exists\textbf{y}(\psi_1(x,\textbf{y})\land\cdots\land\psi_n(x,\textbf{y}))\]
  where each $\psi_i$ is either a relational atom or a formula with at
  most one free variable generated by the same grammar, and the free variables
  in $\psi_i$ are among $x,\ybf$.  Note that no
  equality is used and that all generated formulas have at most one
  free variable. Easy syntactic manipulations show that every
  UNFO-formula with at most one free variable is equivalent to a
  disjunction of formulas generated by the above grammar.  In the case
  of $\Omc$, we may furthermore assume that it is a \emph{single} such
  sentence, rather than a disjunction, because
  $\mn{cert}_{q,\Omc_1\lor\Omc_2}(\Dmf)$ is the intersection of
  $\mn{cert}_{q,\Omc_1}(\Dmf)$ and $\mn{cert}_{q,\Omc_2}(\Dmf)$, and
  \MDD is closed under taking intersections of queries.

  Let $\mn{sub}(\Omc)$ be the set of all subformulas
  of $\Omc$ with at most one free variable $z$ (we apply a one-to-one
  renaming of variables as needed to ensure that each formula in
  $\mn{sub}(\Omc)$ with a free variable has the same free
  variable~$z$). 
  Let $k$ be the maximum of the
  number of variables in $\Omc$ and the number of variables in $q$. We
  denote by $\mn{cl}_k(\Omc)$ the set of all formulas $\varphi(x)$ of the
  form
  \[ \exists \textbf{y}(\psi_1(x,
  \textbf{y})\land\cdots\land\psi_n(x,\textbf{y}))\] with
  $\textbf{y}=(y_1, \ldots, y_m)$, $m\leq k$, where each $\psi_i$ is
  either a relational atom that uses a symbol from $q$ or is of the
  form $\chi(x)$ or $\chi(y_i)$, for $\chi(z)\in \mn{sub}(\Omc)$. Note
  that, as in the proof of Theorem~\ref{thm:ALCtoMDD},
  $\mn{cl}_k(\Omc)$ contains all CQs that use only symbols from $q$
  and whose size is bounded by the size of $q$. A \emph{type} $\tau$
  is a subset of $\mn{cl}_k(\Omc)$; the set of all types is denoted
  $\mn{type}(\Omc)$.


   We introduce a fresh unary relation symbol $P_\tau$ for each type
   $\tau$, and we denote by $\Sbf'$ the schema that extends $\Sbf$
   with these additional relations.
   As before, we call a structure $\Bmf$ over
   $\Sbf'\cup\mn{sig}(\Omc)$ type-coherent if for all types
   $\tau$ and elements $d$ in the domain of $\Bmf$, we have
   $P_{\tau}(d)\in \Bmf$ just in the case that $\tau$ is the 
   (unique) type realized at $d$ in $\Bmf$. 
  Diagrams, realizability, and ``implying $q$'' are defined as in the proof of
  Theorem~\ref{thm:ALCtoMDD}.
   It follows
   from \cite{UNFO} that it is decidable whether
   a diagram implies a query, and whether a diagram is realizable.
   The \MDD program $\Pi$ is defined as in the proof of
   Theorem~\ref{thm:ALCtoMDD}, except that now in the first rule, 
   $\tau$ ranges over types in $\mn{type}(\Omc)$.
%
   In the full version of this paper, we prove that the resulting \MDD
   query $q_\Pi$ 
 is equivalent
   to $Q$.  
\end{proof}
Next, we consider the \emph{guarded fragment of first-order logic}
(GF). It comprises all formulas built up from atomic formulas using
the Boolean connectives and guarded quantification of the form
$\exists\textbf{x}(\alpha\land\varphi)$ and
$\forall\textbf{x}(\alpha\to\varphi)$, where, in both cases, $\alpha$ is
an atomic formula (a ``guard'') that contains all free variables of
$\varphi$.  To simplify the presentation of the results, we  consider
here the equality-free version of the guarded fragment.
We do allow one special
case of equality, namely the use of trivial equalities of the form
$x=x$ as guards, which is equivalent to allowing unguarded quantifiers
applied to formulas with at most one free variable.  This restricted
form of equality  is sufficient to translate every \ALC TBox into an
equivalent sentence of GF. 
%

It turns out that the OBDA language (GF, UCQ) is strictly more expressive than \MDD. 
\begin{proposition}
\label{prop:gfucqmdd}
  The Boolean query 
\begin{itemize}

\item[($\dagger$)] there are $a_1,\dots,a_n,b$, for some $n \geq 2$, such that
  $A(a_1)$, $B(a_n)$, and $P(a_i,b,a_{i+1})$ for all $1 \leq i < n$

\end{itemize}
is definable in (GF,UCQ)
  and not in \MDD.
\end{proposition}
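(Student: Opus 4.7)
The plan is to handle the two claims separately: membership in (GF,UCQ) is a direct construction, while non-definability in \MDD is a pumping argument.

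For the (GF,UCQ) part, I would introduce a fresh binary predicate $R$ with intended meaning ``$R(x,y)$ iff $x$ is reachable from some $A$-element by a $P$-chain whose middle position is always $y$''. The ontology $\Omc$ consists of the two guarded Horn axioms
\[
\forall xyz\,\bigl(P(x,y,z)\to(A(x)\to R(z,y))\bigr),\qquad
\forall xyz\,\bigl(P(x,y,z)\to(R(x,y)\to R(z,y))\bigr),
\]
both in GF because $P(x,y,z)$ serves as an atomic guard containing all three universally-quantified variables. Let $\Sbf=\{A,B,P\}$ and take the UCQ $q=\exists xy\,(R(x,y)\wedge B(x))$. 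An induction on the length of the chain shows that the certain answers of $(\Sbf,\Omc,q)$ on any $\Sbf$-instance $\Dmf$ coincide with the truth value of $(\dagger)$ on $\Dmf$.

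For non-definability, suppose for contradiction that an \MDD program $\Pi$ with monadic IDB predicates $X_1,\dots,X_r$ and maximum rule width $k$ computes $(\dagger)$. For $n\ge 2$, consider the two $\Sbf$-instances
\[
\Dmf_n=\{A(a_0),B(a_n)\}\cup\{P(a_{i-1},b,a_i):1\le i\le n\},
\]
\[
\Dmf'_n=\{A(a_0),B(a_n)\}\cup\{P(a_{i-1},b_i,a_i):1\le i\le n\},
\]
so that $\Dmf_n\models(\dagger)$ and $\Dmf'_n\not\models(\dagger)$. Since $q_\Pi(\Dmf'_n)=0$, there is a certifying coloring $\gamma'$ of $\Dmf'_n$, i.e.\ an assignment of the $X_i$'s to elements that satisfies every rule of $\Pi$ and does not force $\mn{goal}$. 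The aim is to build, for $n$ sufficiently large, a corresponding certifying coloring $\gamma$ of $\Dmf_n$, contradicting $q_\Pi(\Dmf_n)=1$.

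The central obstruction is that $\Dmf_n$ admits more body matches than $\Dmf'_n$ for any rule of $\Pi$ whose body contains several $P$-atoms $P(u_j,y,v_j)$ sharing the middle variable~$y$: in $\Dmf_n$ the variable $y$ may be sent to $b$ while the pairs $(u_j,v_j)$ range over distinct edges, whereas in $\Dmf'_n$ this forces a common $b_i$ and hence a single edge. Consequently the naive transfer $\gamma(a_i)=\gamma'(a_i)$, $\gamma(b)=\tau_0$ faces ``new'' matches that $\gamma'$ was never required to certify. To neutralize them, I would take $n$ of size $2^{\Omega(rk)}$ so that by pigeonhole $\gamma'$ exhibits a window of more than $k$ consecutive indices on which both $\gamma'(a_i)$ and $\gamma'(b_i)$ are constant (say equal to $\tau_a$ and $\tau_b$), and set $\tau_0=\tau_b$. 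Every new match of a rule body in $\Dmf_n$ uses at most $k$ elements and can be \emph{folded} into the homogeneous window of $\Dmf'_n$ by collapsing all of its $a$-endpoints onto a single pair $(a_{j-1},a_j)$; the uniformity of types inside the window ensures that every EDB and monadic IDB atom of the body retains its truth value under this folding, so the match is in fact certified by $\gamma'$, yielding the required contradiction. A more conceptual alternative, paralleling the technique cited in the proof sketch of Theorem~\ref{thm:otherDLUCQ}, is to invoke the Madelaine--Stewart representation of \MDD ($=$~coMMSNP) via finite families of colored forbidden patterns of bounded size, and to show by pumping the chain that any such finite family must fail to distinguish $\Dmf_n$ from a suitable instance not satisfying $(\dagger)$.
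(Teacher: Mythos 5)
Your (GF,UCQ) half is correct and is essentially the paper's own construction: two guarded Horn rules propagating a reachability predicate along $P$-edges with a fixed middle element, the only cosmetic difference being that you query $\exists xy\,(R(x,y)\wedge B(x))$ directly, where the paper adds a third rule introducing a predicate $U$ and queries $\exists x\,U(x)$; the omitted verification is the standard least-model argument for a Horn ontology and is routine.

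The non-definability half has a genuine gap, and it sits in the choice of the ``no''-instance $\Dmf'_n$ together with the folding step. Your two families are already separated by the single Boolean CQ $\exists xyzu\,\bigl(P(x,u,y)\wedge P(y,u,z)\bigr)$, which is itself an \MDD query: it is true on every $\Dmf_n$ with $n\ge 2$ (consecutive edges share the middle $b$) and false on every $\Dmf'_n$ (no two edges, and no edge with itself, share a middle and chain through an endpoint). Hence no transfer of certifying colorings from $\Dmf'_n$ to $\Dmf_n$ can be valid if it uses only bounds on the number of IDB predicates and the rule width: applied to the one-rule program $\mn{goal}\leftarrow P(x,u,y)\wedge P(y,u,z)$ (no IDBs, width $4$) it would produce a countermodel for $\Dmf_n$ that does not exist. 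Concretely, your folding breaks exactly there: if a rule body contains two $P$-atoms sharing the middle variable and chained through a common endpoint variable, its match in $\Dmf_n$ at consecutive edges $P(a_{i-1},b,a_i),P(a_i,b,a_{i+1})$ cannot be collapsed onto a single pair $(a_{j-1},a_j)$, since $a_i$ would have to go simultaneously to $a_j$ and to $a_{j-1}$; indeed such a body has no match in $\Dmf'_n$ at all, homogeneous window or not, so $\gamma'$ certifies nothing about it.

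The paper's proof keeps your yes-instance (a chain with one shared middle) but uses a very different no-instance: a chain $d_1,\dots,d_k$ together with $k$ candidate middles $e_1,\dots,e_k$, where $e_j$ is the middle of \emph{every} edge except the $j$-th. Then $(\dagger)$ fails, yet every bounded piece of the single-middle chain embeds into it once the single missing edge is shifted out of the way; the pigeonhole over colorings is used to choose the index $i$ (and hence the color given to the single middle $e$) so that every substructure of the colored yes-instance with at most $n$ elements maps homomorphically to the colored no-instance, which is exactly the criterion of Lemma~\ref{lem:crit}. Your ``more conceptual alternative'' via colored forbidden patterns is precisely the paper's route, but the entire difficulty is the construction of such a suitable no-instance, which your sketch leaves open; with $\Dmf'_n$ in its place the argument cannot be repaired.
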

\begin{proof}
 Let $\Sbf$ consist of unary predicates $A,B$ and a ternary predicate $P$,
 and let $Q$ be the $\Sbf$-query defined by $(\dagger)$. 
 It is easy to check that $Q$ can be expressed by the (GF,UCQ) query
 $q_{\Sbf, \Omc,\exists x U(x)}$ where 
 $$
 \begin{array}{rcl}
    \Omc &=& \forall xyz \; (P(x,z,y) \rightarrow ( A(x) \rightarrow
    R(z,x) )) \; \wedge \\[1mm]
    && \forall xyz \; (P(x,z,y) \rightarrow ( R(z,x) \rightarrow
    R(z,y) )) \; \wedge \\[1mm]
    && \forall xyz \; (R(x,y) \rightarrow ( B(y) \rightarrow
    U(y) ))
 \end{array}
 $$
 %
We show in the full version of this paper that $Q$ is not expressible in
 \MDD\ using the colored forbidden patterns characterization 
mentioned in the proof sketch of Theorem~\ref{thm:otherDLUCQ}.
 \end{proof}

As fragments of first-order logic, the unary-negation fragment and the
guarded fragment are incomparable in expressive power.  They have a
common generalization, which is known as the guarded-negation fragment
(GNFO) \cite{GNFO}. This fragment is defined in the same way as UNFO,
except that, besides unary negation, we allow \emph{guarded negation}
of the form $\alpha\land\neg\varphi$, where the guard $\alpha$ is an
atomic formula that contains all the variables of $\varphi$.  Again, for
simplicity, we
consider here
the equality-free version of the language, except that we allow the
use
of trivial equalities of the form $x=x$ as guards.
As we will see,
for the purpose of OBDA, GNFO is no more powerful than
GF. Specifically, (GF, UCQ) and (GNFO, UCQ) are expressively
equivalent to a natural generalization of \MDD, namely \emph{\FGDD.}
Recall that a datalog rule is \emph{guarded} if its body includes an
atom that contains all variables which occur in the rule
\cite{DBLP:journals/tocl/GottlobGV02}. A weaker notion of guardedness,
which we  call here \emph{frontier-guardedness}, inspired by
\cite{baget,DBLP:journals/pvldb/BaranyCO12}, requires that,
for each atom $\alpha$ in the head of the rule, there is an atom
$\beta$ in the rule body such that all variables that occur in
$\alpha$ occur also in $\beta$.  We define a \FGDD
query to be a query defined by a \DD program in which every rule is
frontier-guarded.  Observe that \FGDD subsumes \MDD.

\begin{theorem}
  \label{GFUCQfrontier}
  (GF,UCQ) and (GNFO,UCQ) have the same expressive power as \FGDD.
\end{theorem}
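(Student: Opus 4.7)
The plan is to prove the three-way equivalence via the chain $\FGDD \preceq \text{(GF,UCQ)} \preceq \text{(GNFO,UCQ)} \preceq \FGDD$. The middle inclusion is immediate, since every GF sentence is already a GNFO sentence and every UCQ is available on both sides. The two remaining directions generalize the arguments used for Theorems~\ref{thm:ALCtoMDD} and~\ref{thm:UNFO}: wherever those proofs use a unary IDB, or a ``type at an element'', one has to permit IDBs of arbitrary arity and types anchored at a tuple that is frontier-guarded by a body atom.

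For $\FGDD \preceq \text{(GF,UCQ)}$, I would adapt the converse construction in the proof of Theorem~\ref{thm:ALCtoMDD}. Given a \FGDD program $\Pi$ with IDBs $R_1,\ldots,R_\ell$, I introduce, for every $R_i$ of arity $k_i$, a fresh $k_i$-ary relation $\bar R_i$. The GF ontology $\Omc$ enumerates, for each relation symbol $S$ appearing in a rule body of $\Pi$ and each choice of variables $\xbf$ among those of $S$ that matches the arity of some IDB $R_i$, the two sentences $\forall \ybf\,(S(\ybf)\to R_i(\xbf)\vee\bar R_i(\xbf))$ and $\forall \ybf\,(S(\ybf)\to\neg(R_i(\xbf)\wedge\bar R_i(\xbf)))$; both are guarded by $S(\ybf)$. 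The UCQ $q$ is then the union of (a) the bodies of the goal rules of $\Pi$ and (b) for every non-goal rule $R_{i_1}(\xbf_1)\vee\cdots\vee R_{i_m}(\xbf_m)\leftarrow\varphi$, the CQ $\bar R_{i_1}(\xbf_1)\wedge\cdots\wedge\bar R_{i_m}(\xbf_m)\wedge\varphi$, which witnesses a violation of the rule. Frontier-guardedness guarantees that each head variable is already covered by a body atom, so the guard-based complementarity enforced by $\Omc$ is enough to make these CQs behave correctly on all tuples of interest.

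For $\text{(GNFO,UCQ)} \preceq \FGDD$, I would merge the pointwise-extension idea of Theorem~\ref{thm:ALCtoMDD} with the type-diagram construction of Theorem~\ref{thm:UNFO}. Using the known normal form of GNFO (every formula is built from atoms by conjunction, disjunction, existential quantification and guarded or unary negation of subformulas), I define $\mn{cl}_k(\Omc,q)$ as the set of relevant subformulas of~$\Omc$ together with all small CQs in the query signature whose size is bounded by $|q|$. A \emph{guarded type} is a pair $(\alpha,\tau)$ consisting of a potential guard-atom shape $\alpha(\xbf)$ and a subset $\tau\subseteq\mn{cl}_k(\Omc,q)$ whose free variables line up with $\xbf$. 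For each such $(\alpha,\tau)$ I introduce an IDB $P_{\alpha,\tau}$ of the same arity as $\alpha$. The \FGDD program $\Pi$ then consists of (i) disjunctive guessing rules $\bigvee_{\tau} P_{\alpha,\tau}(\xbf)\leftarrow\alpha(\xbf)$, which are frontier-guarded by $\alpha(\xbf)$ itself; (ii) exclusion rules $\bot\leftarrow\delta(\xbf)$ for every non-realizable diagram $\delta$; and (iii) goal rules $\mn{goal}(\xbf')\leftarrow\delta(\xbf)$ whenever $\delta$ implies $q(\xbf')$. Realizability and implication are decidable because GNFO has decidable satisfiability and entailment.

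The hard part will be the $\text{(GNFO,UCQ)}\preceq\FGDD$ direction. One needs (a) a normal form for GNFO that makes guarded quantifiers syntactically visible so that the relevant subformulas can be enumerated, (b) the right definition of guarded type, rich enough to record all ways a CQ can match across a pointwise extension but still anchored at a single guarding tuple, and (c) a tree-like-model argument showing that every model of the \FGDD program can be expanded into a model of $\Omc$ extending the original instance and realizing the chosen types. Because guarded second-order quantification acts on tuples of arbitrary arity rather than on single elements, the combinatorics are substantially more delicate than in the \MDD case: one must verify that the guarded types chosen on overlapping guards are mutually compatible, and that CQ matches spanning several guarded tuples are faithfully detected by the type machinery.
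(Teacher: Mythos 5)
Your proposal is correct in outline, and it closes the three-way equivalence by a genuinely different route in the \FGDD-to-OMQ direction. The paper translates an \FGDD program straight into (GNFO,UCQ): because every head atom has a frontier guard in the body, writing out a non-goal rule with conjunction and negation already yields a GNFO sentence (the frontier guard serves as the guard of each negated head atom), so the ontology is simply the set of non-goal rules and the UCQ is the union of the goal-rule bodies; membership of \FGDD in (GF,UCQ) is then obtained indirectly, by invoking a result of B\'ar\'any, ten Cate and Segoufin to translate (GNFO,UCQ) into (GF,UCQ). You instead prove $\FGDD \preceq$ (GF,UCQ) directly, generalizing the complementation trick from the converse direction of Theorem~\ref{thm:ALCtoMDD}: complement predicates $\bar R_i$ for the IDBs, complementarity enforced only on tuples covered by a body atom (which frontier-guardedness makes sufficient, since every head-atom instantiation is such a tuple, and in the converse model construction one may interpret $\bar R_i$ as the full complement), and a UCQ combining goal bodies with rule-violation CQs; this buys you independence from the GNFO-to-GF translation result, at the price of a bulkier ontology and query. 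For the hard direction, (GNFO,UCQ) $\preceq \FGDD$, your sketch is essentially the paper's construction: your pairs $(\alpha,\tau)$ are the paper's guarded $\ell$-types (which are required to contain an atom over all their variables), your guessing rules guarded by $\alpha$ are its first rule, and the verification points you list---compatibility of types on overlapping guards and CQ matches spanning several attached models---are exactly what the paper's non-realizable-diagram rules and its invariant $(\ast)$ handle. Two details to tidy up on your side: the rule-violation disjuncts of your UCQ do not contain the answer variables, so they must be padded (e.g.\ with trivial equalities, which the paper's UCQs permit), just as in the converse direction of Theorem~\ref{thm:ALCtoMDD}; and the ``choice of variables'' in your complementarity axioms must range over arbitrary tuples, with repetitions, of the guard's variables, since head atoms may repeat variables. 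Finally, decidability of realizability and implication is only needed for effectiveness of the translation, not for the expressive-power statement itself.
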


Theorem~\ref{GFUCQfrontier} is proved in the full version of this paper via translations
from (GNFO,UCQ) to \FGDD and back that are along the same lines as the
translations from (UNFO,UCQ) to \MDD and back. In addition, we use a
result from \cite{GNFO} to obtain a translation from (GNFO,UCQ) to
(GF,UCQ).

\section{OBDA and MMSNP}
\label{sect:obdammsnp}

We show that \MDD captures coMMSNP and thus, by the results obtained
in the previous section, the same is true for many OBDA languages
based on UCQs. We then use this connection to transfer results from
MMSNP to OBDA languages with UCQs, linking the data complexity of
these languages to the Feder-Vardi conjecture and establishing
decidability of query containment.  We also propose GMSNP, an
extension of MMSNP inspired by frontier guarded \DD, and show
that (GF,UCQ) and (GNFO,UCQ) capture coGMSNP, and that GMSNP has
the same expressive power as a previously proposed extension of MMSNP
called MMSNP$_2$.


An \emph{MMSNP formula} over schema $\Sbf$ has the form
$
\exists X_1 \cdots \exists X_n \forall x_1 \cdots \forall x_m \vp
$
with $X_1,\dots,X_n$ monadic second-order (SO) variables,
$x_1,\dots,x_m$ FO-variables, and $\vp$ a conjunction of
quantifier-free formulas of the form
$$
\psi = \alpha_1 \wedge \cdots
\wedge \alpha_n \rightarrow \beta_1 \vee \cdots \vee \beta_m \mbox{ with $n,m \geq 0$},
$$
where each $\alpha_i$ is of the form $X_i(\xbf)$,
$R(\xbf)$ (with $R \in \Sbf$), or $x=y$, and each $\beta_i$ is of the
form $X_i(\xbf)$. In order to use MMSNP as a query language, and in
contrast to the standard definition, we admit free FO-variables and
speak of \emph{sentences} to refer to MMSNP formulas without free
variables. To connect with the query languages studied thus far, we
are interested in queries obtained by the complements of MMSNP
formulas: each  MMSNP formula $\Phi$ over schema \Sbf with $n$ free
variables gives rise to a query
$$
q_{\Phi,\Sbf}(\Dmf) = \{ \abf \in \mn{adom}(\Dmf)^n \mid (\mn{adom}(\Dmf),\Dmf) \not\models \Phi[\abf]\}
$$
where we set $(\mn{adom}(\Dmf),\Dmf) \models \Phi$ to true when \Dmf
is the empty instance (that is, $\mn{adom}(\Dmf)=\emptyset$) and
$\Phi$ is a sentence.  We observe that the resulting query
language \emph{coMMSNP} has the same expressive power as \MDD. 

%
%
\begin{proposition}
  \label{prop:mmsnpmdd}
  coMMSNP and \MDD have the same expressive power.
\end{proposition}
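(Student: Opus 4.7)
The plan is to exhibit direct syntactic translations in both directions between \MDD\ programs and MMSNP formulas (with free FO-variables), and then verify that the translations preserve query semantics.

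For $\MDD\preceq\text{coMMSNP}$: given an \MDD\ program $\Pi$ whose monadic IDB predicates are $X_1,\dots,X_k$ and whose $\mn{goal}$-predicate has arity $n$, I will construct the MMSNP formula $\Phi(\bar u) = \exists X_1\cdots\exists X_k\, \forall \bar y\, \psi$ with $|\bar u|=n$ and $\psi$ a conjunction containing one implication per rule of $\Pi$: each non-goal rule $\beta_1\vee\cdots\vee\beta_m \leftarrow \alpha_1\wedge\cdots\wedge\alpha_n$ of $\Pi$ becomes the MMSNP clause $\alpha_1\wedge\cdots\wedge\alpha_n \to \beta_1\vee\cdots\vee\beta_m$ (which fits the MMSNP format because the $\beta_i$ are atoms of monadic IDBs, hence of SO-variables), while each goal rule $\mn{goal}(\bar v)\leftarrow \alpha_1\wedge\cdots\wedge\alpha_n$ becomes an empty-head constraint $\alpha_1\wedge\cdots\wedge\alpha_n \to \bot$, with the variables in $\bar v$ renamed to the free FO-variables $\bar u$ of $\Phi$. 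Correctness follows since ``every model of $\Pi$ extending $\Dmf$ contains $\mn{goal}(\bar a)$'' translates to ``no assignment of the $X_i$ on $\adom(\Dmf)$ both satisfies the clauses coming from non-goal rules and avoids firing any goal-rule constraint at $\bar u = \bar a$'', which is exactly $\Dmf\not\models\Phi[\bar a]$.

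For the converse direction, I would start with an MMSNP formula $\Phi(\bar u)=\exists X_1\cdots\exists X_k\,\forall\bar x\,\bigwedge_j C_j$, eliminate the equality atoms by identifying variables, and construct an \MDD\ program $\Pi$ with EDB schema~$\Sbf$, monadic IDBs $X_1,\dots,X_k$, and a $\mn{goal}$-predicate of arity $|\bar u|$: each clause $C_j$ with nonempty head becomes the non-goal rule $\beta_{j,1}\vee\cdots\vee\beta_{j,m_j}\leftarrow \alpha_{j,1}\wedge\cdots\wedge\alpha_{j,n_j}$, and each clause $C_j$ with empty head becomes the goal rule $\mn{goal}(\bar u)\leftarrow \alpha_{j,1}\wedge\cdots\wedge\alpha_{j,n_j}$, padding the body with $\adom(u_i)$-atoms for every component $u_i$ of $\bar u$ that does not already occur there. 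Since negation-free \DD\ imposes no minimality condition, the models of $\Pi$ extending $\Dmf$ correspond bijectively to the assignments of the $X_i$ on $\adom(\Dmf)$, and the equivalence of the previous paragraph runs in reverse.

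The hard part will be the bookkeeping that matches the free FO-variables of the MMSNP formula with the arguments of the $\mn{goal}$-predicate, and that reconciles the implicit universal quantification in MMSNP over $\adom(\Dmf)$ with the implicit active-domain restriction of MDDlog rule variables (which in \MDD\ is induced by the EDB atoms occurring in rule bodies). The $\adom$ shorthand from the preliminaries handles precisely the case in which some component of $\bar u$ does not occur in the body of an MMSNP constraint being translated; modulo these subtleties, the translations are essentially symbol-for-symbol, and the remaining work is a routine semantic verification.
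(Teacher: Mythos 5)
Your proposal matches the paper's own proof almost symbol for symbol: non-goal rules correspond to MMSNP implications, goal rules to empty-head constraints whose goal arguments are matched with the free FO-variables, and the active-domain issues are absorbed by padding (the paper normalizes MMSNP so that bodies are nonempty and every head variable occurs in a body atom, adds $\adom(y_i)$ atoms to the goal rules, and uses equalities $y_i=y_j$ for repeated variables in goal heads, which is exactly the bookkeeping you flag). So the proposal is correct and takes essentially the same route; the remaining verification you defer is the same routine semantic check the paper also leaves implicit.
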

\begin{proof}
  Let $\Phi=\exists X_1
  \cdots \exists X_n \forall x_1 \cdots \forall x_m \vp$ be an MMSNP formula with free variables $y_1,\dots,y_k$, 
  and let $q_{\Phi,\Sbf} \in \text{coMMSNP}$ be the corresponding query. We can assume w.l.o.g.\
  that all implications $\psi = \alpha_1 \wedge \cdots \wedge \alpha_n
  \rightarrow \beta_1 \vee \cdots \vee \beta_m$ in $\vp$ satisfy the
  following properties: (i)~$n > 0$ and, (ii)~each variable that occurs
  in a $\beta_i$ atom also occurs in an $\alpha_i$ atom.  In fact, we
  can achieve both (i) and (ii) by replacing violating implications
  $\psi$ with the set of implications $\psi'$ that can be obtained
  from $\psi$ by adding, for each variable $x$ that occurs only in the
  head of $\psi$, an atom $S(\xbf)$ where $S$ is a predicate that
  occurs in $\Phi$ and $\xbf$ is a tuple of variables that contains
  $x$ once and otherwise only fresh variables that do not occur in
  $\Phi$. 
  Define an \MDD program $\Pi_\Phi$ that consists of
  all implications in $\vp$ whose head is not $\bot$ plus a rule
  $$\mn{goal}(y_1,\dots,y_k)\leftarrow\vartheta\land\adom(y_1)\land\cdots\land\adom(y_k)$$ for each
  implication $\vartheta \rightarrow \bot$ in $\vp$. It can be proved
  that $q_{\Phi,\Sbf}=q_{\Pi_\Phi,\Sbf}$ for all schemas \Sbf. Finally, it
  is straightforward to remove the equalities from the rule bodies in
  $\Pi_\Phi$. 


  Conversely, let $\Pi$ be a $k$-ary \MDD program and assume w.l.o.g.\
  that each rule uses a disjoint set of variables. Reserve fresh
  variables $y_1,\dots,y_k$ as free variables for the desired MMSNP
  formula, and let $X_1,\dots,X_n$ be the IDB predicates in $\Pi$ and
  $x_1,\dots,x_m$ the FO-variables in $\Pi$ that do not occur in the
  goal predicate.  Set $\Phi_\Pi = \exists X_1 \cdots \exists X_n
  \forall x_1 \cdots \forall x_m \vp$ where $\vp$ is the conjunction
  of all non-goal rules in $\Pi$ plus the implication $\vartheta'
  \rightarrow \bot$ for each rule $\mn{goal}(\xbf)\leftarrow \vartheta$ in $\Pi$. Here, $\vartheta'$ is obtained from
  $\vartheta$ by replacing each variable $x \in \xbf$ whose left-most
  occurrence in the rule head is in the $i$-th position with $y_i$,
  and then conjunctively adding $y_i=y_j$ whenever the $i$-th and
  $j$-th position in the rule head have the same variable. It can be
  proved that $q_{\Pi,\Sbf}=q_{\Phi_\Pi,\Sbf}$ for all schemas \Sbf.
\end{proof}

Thus, the characterizations of OBDA languages in terms of \MDD
provided in Section~\ref{sect:alcUCQ} also establish the descriptive
complexity of these languages by identifying them with (the complement
of) MMSNP. Furthermore, Proposition~\ref{prop:mmsnpmdd} allow us to
transfer results from MMSNP to OBDA. We start by considering the
data complexity of the query evaluation problem: for a query $q$, the
\emph{evaluation problem} is to decide, given an instance \Dmf and a
tuple $\abf$ of elements from \Dmf, whether $\abf \in q(\Dmf)$. Our
first result is that the Feder-Vardi dichotomy conjecture for CSPs is
true if and only if there is a dichotomy between \PTime and \coNP for
query evaluation in (\ALC,UCQ), and the same is true for several other
OBDA languages.  For brevity, we say that a query language \emph{has a
  dichotomy between \PTime and} \coNP, referring only implicitly to
the evaluation problem.

%
%
%
The proof of the following theorem relies on
Proposition~\ref{prop:mmsnpmdd} and
Theorems~\ref{thm:ALCtoMDD},~\ref{thm:otherDLUCQ}, and~\ref{thm:UNFO}.
It also exploits the fact that the Feder-Vardi dichotomy conjecture
can equivalently be stated for MMSNP sentences
\cite{FederVardi,kun-derand}.  Some technical development is needed to
deal with the presence of free variables. Details are in the full version of this paper.
%
%
\begin{theorem}\label{dich-ucq}
  (\ALC,UCQ) has a dichotomy between \PTime and \coNP iff the
  Feder-Vardi conjecture holds. The same is true for
  ($\mathcal{ALCHIU}$,UCQ) and (UNFO,UCQ).
\end{theorem}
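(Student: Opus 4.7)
The plan is to chain the expressive-power equivalences from Section~\ref{sect:alcUCQ} to reduce the dichotomy question for each of the three OBDA languages to the corresponding dichotomy for MMSNP \emph{sentences}, and then invoke the known equivalence between the latter and the Feder-Vardi conjecture. By Theorems~\ref{thm:ALCtoMDD}, \ref{thm:otherDLUCQ}(1), and~\ref{thm:UNFO}, each of (\ALC,UCQ), ($\mathcal{ALCHIU}$,UCQ), and (UNFO,UCQ) has the same expressive power as \MDD, hence by Proposition~\ref{prop:mmsnpmdd} also as coMMSNP. The underlying translations are effective, preserve the data schema, and apply a fixed ontology/program-level transformation independent of the input instance, so they yield \PTime reductions between the respective evaluation problems. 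Consequently, a \PTime/\coNP dichotomy holds for any of these OBDA languages if and only if it holds for coMMSNP; passing to complements, if and only if a \PTime/\NP dichotomy holds for MMSNP \emph{with free variables}.

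The next step is to show that a \PTime/\NP dichotomy for MMSNP formulas is equivalent to one for MMSNP sentences. The non-trivial direction is from sentences to formulas with free variables. Given $\Phi(y_1,\ldots,y_k)$ over schema $\Sbf$, I would eliminate the free variables via fresh unary EDB ``marker'' predicates $P_1,\ldots,P_k$, producing an MMSNP sentence $\Phi^*$ over $\Sbf\cup\{P_1,\ldots,P_k\}$ that agrees with $\Phi(\mathbf{a})$ on every instance where each $P_i$ is the singleton $\{a_i\}$, with malformed instances (empty or multi-valued $P_i$) handled uniformly so that the transformation stays inside the MMSNP syntax (in particular, without appealing to inequality). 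Combined with the obvious reduction in the opposite direction (sentences are a special case of formulas), this yields that each dichotomy transfers to the other. Finally, I would invoke the classical result of \cite{FederVardi,kun-derand} stating that a \PTime/\NP dichotomy for MMSNP sentences is equivalent to the Feder-Vardi conjecture for CSPs; composing with the reductions above gives the theorem uniformly for all three OBDA languages, in both directions of the biconditional.

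The main obstacle is the free-variable elimination. The marker-predicate encoding must respect two constraints simultaneously: it must fall inside MMSNP, where negation is available only inside the quantifier-free implication matrix and inequality is forbidden, so ``exactly one'' conditions on $P_i$ cannot be asserted directly and must be emulated by the shape of the reduction together with uniform handling of malformed inputs; and both directions of the reduction must preserve not only tractability but also any putative \NP-intermediate behaviour, so that a dichotomy for one formalism genuinely implies a dichotomy for the other. This is the technical heart of the proof and is what the excerpt refers to when it says that some technical development is needed to deal with the presence of free variables.
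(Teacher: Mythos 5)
Your overall route is the paper's: chain the expressive-power equivalences of Theorems~\ref{thm:ALCtoMDD}, \ref{thm:otherDLUCQ}(1), and~\ref{thm:UNFO} with Proposition~\ref{prop:mmsnpmdd} to reduce everything to coMMSNP, eliminate free variables in favour of fresh unary marker predicates, and then invoke the result of \cite{FederVardi,kun-derand} for MMSNP sentences. The gap is that the step you yourself isolate as ``the technical heart'' is exactly the step you do not supply. Saying that malformed inputs (empty or multi-valued $P_i$) are ``handled uniformly'' is not yet an argument: MMSNP has no inequality and allows negation only in the sense that implication heads are disjunctions of SO atoms, so a sentence can neither detect that some $P_i$ contains two distinct elements nor force it to be a singleton. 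Moreover, for the dichotomy to transfer you need the two evaluation problems to be polynomially interreducible in \emph{both} directions, and the delicate direction is the one you gloss over: mapping an arbitrary structure over $\Sbf\cup\{P_1,\dots,P_k\}$, in which the $P_i$ may be empty, overlapping, or multi-valued, back to an instance and answer tuple of the original query problem so that the answers agree.

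The paper resolves this (Theorem~\ref{lift-commsnp} in the appendix) not by rejecting ill-formed inputs but by an explicit correspondence. It first replaces free variables by constant symbols (MMSNP$_c$), reduces evaluation and containment over arbitrary structures to active-domain structures, passes to the equivalent forbidden-patterns formulation, and then relates forbidden patterns with constants to constant-free forbidden patterns via a collapse/anti-collapse construction: a structure with multi-valued $P_i$'s is quotiented so that all $P_i$-elements become a single point, while on the pattern side one takes all \emph{uniform} colorings of the anti-collapsed patterns together with additional patterns of the form $\{P_i(d),P_i(e),T_j(d),T_\ell(e)\}$ forcing every element of $P_i$ to receive the same colour (Lemma~\ref{eval-fppc}), and verifies that homomorphisms transfer back and forth along the canonical quotient map. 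This is precisely the machinery your sketch presupposes; without it, the existence of a sentence $\Phi^*$ whose evaluation problem is polynomially equivalent to that of $\Phi$ is an assertion rather than a proof. So: right approach, correctly located difficulty, but the central construction is missing.
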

Recall that (\ALCF,UCQ) and (\Smc,UCQ) are two extensions of
(\ALC,UCQ) that were identified in Section~\ref{sect:alcUCQ} to be
more expressive than (\ALC,UCQ) itself.  It was already proved in
\cite{KR12-csp} (Theorem~27) that, compared to ontology-mediated
queries based on \ALC, the functional roles of \ALCF dramatically
increase the computational power. This is true even for atomic queries.
%
%
\begin{theorem}[\cite{KR12-csp}]\label{thm:ALCF1}  
  For every \NP-Turing machine $M$, there is a query $q$ in (\ALCF,AQ)
  such that the complement of the word problem of $M$ has the same
  complexity as evaluating $q$, up to polynomial-time reductions.
  Consequently, (\ALCF,AQ) does not have a dichotomy between \PTime
  and \coNP (unless \PTime = \NP).
\end{theorem}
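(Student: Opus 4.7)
The plan is to prove the theorem by reducing the word problem of an arbitrary NP Turing machine $M$ to the complement of query evaluation in $(\mathcal{ALCF},\text{AQ})$, in a polynomial-time manner, and then combine this with Ladner's theorem. Given such a reduction, fix $M$ so that the word problem of $M$ is NP-intermediate (exists under the assumption $\PTime \neq \NP$); its complement is then $\coNP$-intermediate, hence neither in $\PTime$ nor $\coNP$-hard, yielding failure of any $\PTime/\coNP$ dichotomy for $(\mathcal{ALCF},\text{AQ})$. Conversely, one should also verify that evaluating any fixed $(\mathcal{ALCF},\text{AQ})$ query is in $\coNP$, so that the reduction indeed puts the complement of the word problem of $M$ on the same polynomial-time degree as evaluation of the constructed query $q$.

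For the reduction itself, I would construct, uniformly for each NP-TM $M$, a single ontology $\mathcal{O}_M$ and a fixed query $q = A(x_0)$ (where $A$ is a distinguished unary relation and $x_0$ is thought of as a marker element), so that for every input word $w$ there is an instance $\Dmf_w$ of size $\mathrm{poly}(|w|)$ with the property that $x_0 \in \mn{cert}_{q,\mathcal{O}_M}(\Dmf_w)$ iff $M$ rejects $w$ on every nondeterministic branch. The construction encodes the input on a short chain of constants in $\Dmf_w$, uses two functional roles $r_t$ (time successor) and $r_s$ (space successor), and relies on the fact that functionality of roles, together with \ALC axioms, can force a unique ``grid'' unfolding out of the input: axioms of the form $C \sqsubseteq \exists r_t . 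C$ together with functionality of $r_t$ generate a unique computation-time chain, and analogous axioms generate the tape. Concept names encode tape symbols, head position, and state; the usual local transition rules of $M$ are then expressed by \ALC concept inclusions relating the contents of adjacent grid cells, with nondeterminism of $M$ captured by the \emph{freedom} of the model (different models correspond to different nondeterministic branches, which is the classical reason why query answering with open-world semantics has a $\coNP$ flavor). Finally, I would add inclusions forcing $A(x_0)$ to hold exactly when the encoded computation rejects.

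The main technical obstacle will be ensuring that the grid is genuinely functional in both dimensions so that transitions propagate coherently: one must prevent the two chains $r_t$ and $r_s$ from being confused and must enforce the commuting diagram $r_t \circ r_s = r_s \circ r_t$, which is not directly expressible in \ALCF. The standard workaround, which I would adapt, is to use auxiliary concept names to label ``corners'' of cells and to enforce the grid identifications only locally using functionality, following the well-known trick for encoding tilings/TM computations in description logics with functional or number-restricted roles. Once the grid is correctly set up, the remaining ingredients (encoding transitions, accepting/rejecting states, and the initial configuration from $\Dmf_w$) are routine \ALC axioms. Correctness of the reduction, namely that $x_0 \in \mn{cert}_{q,\mathcal{O}_M}(\Dmf_w)$ iff no model of $\mathcal{O}_M \cup \Dmf_w$ represents an accepting run of $M$ on $w$, then follows by the usual soundness/completeness argument: accepting runs yield models in which $A(x_0)$ fails, while if $M$ rejects on all branches, every model is forced to contain $A(x_0)$ by the rejection axiom.
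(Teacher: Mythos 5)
Your plan establishes at most one half of what the theorem asserts, and the missing half is exactly the hard part. The statement is that evaluating $q$ and the complement of the word problem of the \emph{given} machine $M$ have the same complexity \emph{up to polynomial-time reductions}, i.e., there are reductions in both directions. Your construction gives $\overline{L(M)} \leq_p \text{eval}(q)$ (map $w$ to $\Dmf_w$), and you then argue that \coNP membership of evaluation ``puts the complement of the word problem on the same polynomial-time degree as evaluating $q$.'' That inference is invalid: \coNP membership only yields a reduction from $\text{eval}(q)$ to a \coNP-complete problem, not to $\overline{L(M)}$; and when $M$ is chosen NP-intermediate via Ladner, $\overline{L(M)}$ is precisely \emph{not} \coNP-hard, so no such reduction can be inferred. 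Without the backward reduction $\text{eval}(q) \leq_p \overline{L(M)}$ you cannot rule out that your query is simply \coNP-complete --- which is in fact the typical outcome of a naive Turing-machine encoding, since arbitrary \Sbf-instances (not of the form $\Dmf_w$) can encode arbitrary constraints and make evaluation \coNP-hard independently of $M$ --- and then the query is no counterexample to the dichotomy at all. The real technical work, which the paper delegates to Theorem~27 of \cite{KR12-csp}, is to design $\Omc_M$ so that evaluation on an \emph{arbitrary} instance reduces back to $M$: one must be able, in polynomial time, to recognize and dispose of ill-formed instances (deciding their certain answers outright) and, for well-formed ones, extract an input word to hand to $M$. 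Your proposal never addresses this robustness requirement.

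A second, related problem is the encoding itself. Generating the computation grid in the anonymous part via axioms like $C \sqsubseteq \exists r_t.C$ plus functionality is both unnecessary and unsound here: since $M$ is an \NP machine, the entire computation table has size polynomial in $|w|$ and can (and, for the backward reduction, must) be laid out explicitly in the data instance $\Dmf_w$, with functionality and the certain-answer semantics used only to force each model to commit to one nondeterministic branch on that data. Moreover, \ALCF (no inverse roles, no number restrictions beyond functionality) cannot enforce commutation of two successor roles over anonymous elements; the ``corner labeling'' workaround you invoke does not apply in this setting --- note that even the confluence trick used elsewhere in this paper for \ALCF (the universally quantified Boolean combinations $\Bmc_c$ in the undecidability proof) only enforces confluence on named individuals of the instance, not on a TBox-generated grid. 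So you should drop the anonymous grid, encode the table in the data, and concentrate the proof effort on the two-way polynomial-time equivalence with $\overline{L(M)}$; as it stands, the argument does not yield the theorem.
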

We leave it as an open problem to analyze the computational power of (\Smc,UCQ).

\smallskip

There are other interesting results that can be transferred from MMSNP
to OBDA. Here, we consider query containment.  Specifically, the
following general containment problem was proposed in \cite{KR12-cont}
as a powerful tool for OBDA: given ontology-mediated queries
$(\Sbf,\Omc_i,q_i)$, $i \in \{1,2\}$, decide whether for all
\Sbf-instances \Dmf, we have $\mn{cert}_{q_1,\Omc_1}(\Dmf) \subseteq
\mn{cert}_{q_2,\Omc_2}(\Dmf)$.\footnote{\label{footnote:containment}In fact, this definition is
  slightly different from the one used in \cite{KR12-cont}. There,
  containment is defined only over instances \Dmf that are consistent
  w.r.t.\ $\Omc_1$ and $\Omc_2$, i.e., where there is at least one
  finite \Sbf-structure $(\mn{dom},\Dmf')$ such that $\Dmf \subseteq
  \Dmf'$ and $\Dmf' \in \mn{Mod}(\Omc_i)$.}  Applications include the
optimization of ontology-mediated queries and managing the effects on
query answering of replacing an ontology with a new, updated
version. In terms of OBDA languages such as (\ALC,UCQ), the above
problem corresponds to query containment in the standard sense: an
\Sbf-query $q_1$ is \emph{contained in} an \Sbf-query $q_2$, written
$q_1 \subseteq q_2$, if for every \Sbf-instance~\Dmf, we have
$q_1(\Dmf) \subseteq q_2(\Dmf)$. Note that there are also less general
(and computationally simpler) notions of query containment in OBDA
that do not fix the data schema \cite{DBLP:conf/pods/CalvaneseGL98}.

It was proved in \cite{FederVardi} that containment of MMSNP sentences is decidable. 
We thus obtain the following result for OBDA languages. 
\begin{theorem}
\label{thm:containment1}
Query containment is decidable for 
the OBDA languages (\ALC,UCQ), ($\mathcal{ALCHIU}$,UCQ), and (UNFO,UCQ).
\end{theorem}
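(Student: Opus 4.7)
The strategy is to convert each ontology-mediated query into an MMSNP formula via the intermediate detour through \MDD, and then invoke Feder and Vardi's decidability result for MMSNP sentence containment \cite{FederVardi}.

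First, given $(\Sbf,\Omc_1,q_1)$ and $(\Sbf,\Omc_2,q_2)$ drawn from one of the three listed OBDA languages, I would use the effective translations in the proofs of Theorems~\ref{thm:ALCtoMDD}, \ref{thm:otherDLUCQ}, and~\ref{thm:UNFO} to build equivalent \MDD programs $\Pi_1,\Pi_2$ over the shared data schema $\Sbf$. These translations are explicit syntax-driven constructions based on bounded closures such as $\mn{cl}(\Omc,q)$, and are therefore computable. The construction of Proposition~\ref{prop:mmsnpmdd} then turns each $\Pi_i$ into an MMSNP formula $\Phi_i$ with $k$ free FO-variables (where $k$ is the arity of $q_i$) satisfying $q_{\Pi_i,\Sbf}=q_{\Phi_i,\Sbf}$. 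Because $q_{\Phi,\Sbf}$ returns the \emph{complement} of the satisfying assignments of $\Phi$, query containment $q_1\subseteq q_2$ over finite $\Sbf$-instances is equivalent to the reverse entailment $\Phi_2 \models \Phi_1$ over all finite $\Sbf$-structures and all assignments to the free variables $\ybf$.

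Second, I would reduce entailment of MMSNP formulas with free FO-variables to entailment of MMSNP sentences. The natural device is to introduce a fresh $k$-ary EDB predicate $T$ and, for each $\Phi_i = \exists \bar X\,\forall \bar x\,\vp_i$, form the MMSNP sentence
\[
\Phi_i^T \;=\; \exists \bar X\,\forall \bar x\,\forall \ybf\; \vp_i^T,
\]
where $\vp_i^T$ arises from $\vp_i$ by adding $T(\ybf)$ to the antecedent of every implication (this stays inside MMSNP since $T$ is EDB). On a structure $\Bmf$ whose $T$-interpretation is the singleton $\{\abf\}$, one has $\Bmf \models \Phi_i^T$ iff $\Bmf \models \Phi_i[\abf]$, so sentence entailment $\Phi_2^T \models \Phi_1^T$ implies formula entailment $\Phi_2 \models \Phi_1$. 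For the converse direction, one must show that whenever $\Phi_1$ holds at every tuple listed by $T$, a \emph{simultaneous} existential SO-witness can be chosen. This common-witness property is the crux of the reduction; one route is to restrict attention to singleton $T$ throughout the reduction, which turns $k$-ary containment into Boolean containment over a pointed schema, and then appeal to the MMSNP sentence-containment procedure of \cite{FederVardi}. Once the reduction is in place, the decidability result of \cite{FederVardi} finishes the argument for all three OBDA languages in the statement.

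The main obstacle I expect lies precisely in this second step: the asymmetry between ``one SO-witness per free-variable assignment'' (in formula semantics) and ``one common SO-witness for all tuples in $T$'' (in sentence semantics) must be handled carefully, either via the singleton-$T$ reduction sketched above or by a more elaborate enumeration over the finitely many ``types'' of free-variable tuples. A secondary point, which I expect to be routine, is verifying that every step of the pipeline (OBDA $\to$ \MDD $\to$ MMSNP) is genuinely effective; this follows by inspection of the bounded-closure constructions used in the proofs cited from Section~\ref{sect:alcUCQ} and in Proposition~\ref{prop:mmsnpmdd}.
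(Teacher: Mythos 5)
Your first step matches the paper exactly: the paper also proves this theorem by composing Theorems~\ref{thm:ALCtoMDD}, \ref{thm:otherDLUCQ}, \ref{thm:UNFO} with Proposition~\ref{prop:mmsnpmdd}, reducing OBDA query containment to containment of coMMSNP queries with free variables, and ultimately to \cite{FederVardi}. The gap is in your second step, which is precisely where the paper's real work lies (Theorem~\ref{lift-commsnp} and the whole of Appendix~B.1). Your relativization via a fresh $k$-ary EDB predicate $T$ gives, as you note yourself, only one direction: $\Phi_2^T\models\Phi_1^T$ implies $\Phi_2\models\Phi_1$, but the converse fails in general because a structure may interpret $T$ by many tuples, each of which admits a \emph{different} monadic SO-witness for $\Phi_1$, with no common witness. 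So deciding $\Phi_2^T\models\Phi_1^T$ with the Feder--Vardi procedure is a sound but incomplete test and does not decide containment. Your proposed fix --- ``restrict attention to singleton $T$ and then appeal to the MMSNP sentence-containment procedure'' --- does not close the gap: once you restrict to structures in which $T$ is a singleton (equivalently, work over a schema with a constant symbol per free variable), you are asking a containment question relative to a \emph{restricted class} of structures, and the decidability result of \cite{FederVardi} applies only to containment of MMSNP sentences over \emph{all} finite structures of the schema. You cannot simply ``appeal'' to it; you must reduce the constant-/pointed-structure version back to the plain version.

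That reduction is exactly what the paper supplies and what is missing from your sketch. The paper introduces MMSNP$_c$ (MMSNP with constant symbols replacing the free variables), shows the coMMSNP containment problem reduces to MMSNP$_c$ containment (Lemmas~\ref{eval-activedomain}--\ref{meghynlem2}), recasts MMSNP$_c$ as forbidden-pattern problems with constants, and then eliminates the constants: each constant $c_i$ is replaced by a unary predicate $P_i$, the forbidden patterns are ``anti-collapsed'' (each constant is blown up into a set of fresh elements, using the construction and properties (1a), (1b), (2b) from \cite{DBLP:journals/tods/AlexeCKT11}), and extra \emph{uniformity} patterns are added forcing all $P_i$-elements to carry the same color, so that arbitrary interpretations of the $P_i$ behave like a single point (Lemmas~\ref{eval-fppc} and~\ref{fppc-cont}). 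This machinery is what makes containment over pointed structures equivalent to an ordinary MMSNP/FPP containment question, to which \cite{FederVardi} then applies. Your second route (``enumeration over types of free-variable tuples'') is too vague to substitute for this argument. In short: the pipeline OBDA $\to$ \MDD{} $\to$ coMMSNP is right and effective, but the passage from MMSNP formulas with free variables to MMSNP \emph{sentences} is the crux, and your proposal identifies it without actually providing a correct reduction.
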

Note that this result is considerably stronger than those in
\cite{KR12-cont}, which considered only containment of
ontology-mediated queries $(\Sbf,\Omc,q)$ with $q$ an atomic
query since already this basic case turned out to be technically
intricate. The treatment of CQs and UCQs was left open, including all cases
stated in Theorem~\ref{thm:containment1}.

\smallskip

We now consider OBDA languages based on the guarded fragment and GNFO.
By Proposition~\ref{prop:gfucqmdd}, (GF,UCQ) and (GNFO,UCQ) are
strictly more expressive than \MDD and we cannot use
Proposition~\ref{prop:mmsnpmdd} to relate these query languages to the
Feder-Vardi conjecture. Theorem~\ref{GFUCQfrontier} suggests that it
would be useful to have a generalization of MMSNP that is equivalent
to frontier-guarded \DD. Such a generalization is introduced next.

A formula of 
\emph{guarded monotone strict NP (abbreviated GMSNP)}
has the form $ \exists X_1 \cdots \exists X_n \forall x_1 \cdots
\forall x_m \vp $ with $X_1,\dots,X_n$ SO variables of any arity,
$x_1,\dots,x_n$ FO-variables, and $\vp$ a conjunction of formulas
$$
\psi = \alpha_1 \wedge \cdots \wedge \alpha_n \rightarrow \beta_1
\vee \cdots \vee \beta_m \mbox{ with $n,m \geq 0$},
$$
where each $\alpha_i$ is of the form $X_i(\xbf)$, $R(\xbf)$ (with $R
\in \Sbf$), or $x=y$, and each $\beta_i$ is of the form
$X_i(\xbf)$. Additionally, we require that for every head atom
$\beta_i$, there is a body atom $\alpha_j$ such that $\alpha_j$
contains all variables from $\beta_i$.  GMSNP gives rise to a query
language coGMSNP in analogy with the definition of coMMSNP.  It can be
shown by a straightforward syntactic transformation that every MMSNP
formula is equivalent to some GMSNP formula.  Together with 
Proposition~\ref{prop:gfucqmdd} and Theorem~\ref{GFUCQfrontier}, this yields the second statement
of the following lemma; the first statement can be proved similarly to Proposition \ref{prop:mmsnpmdd}. 

\begin{theorem}
  \label{lem:gmsnpfgdd}
  coGMSNP has the same expressive power as frontier-guarded \DD and
  is  strictly more expressive than coMMSNP.
\end{theorem}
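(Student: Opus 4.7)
(sketch)
The plan is to establish the two statements separately, leveraging the machinery developed earlier in this section.

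For the expressive equivalence of coGMSNP and frontier-guarded \DD, I would mimic the proof of Proposition~\ref{prop:mmsnpmdd}. Given a GMSNP formula $\Phi = \exists X_1 \cdots \exists X_n \forall x_1 \cdots \forall x_m \, \psi$ with free FO-variables $y_1,\dots,y_k$, first normalize each implication in $\psi$ so that it has a non-empty body and so that every variable occurring in the head occurs also in the body; then construct a \DD program $\Pi_\Phi$ whose IDB predicates are exactly $X_1,\dots,X_n$. Each implication with non-$\bot$ head becomes a rule, while each implication whose head is $\bot$ becomes a goal rule with the free variables substituted into $y_1,\dots,y_k$ and guarded by $\adom$-atoms, exactly as in Proposition~\ref{prop:mmsnpmdd}. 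The crucial observation is that the GMSNP guard condition---that for every head atom $\beta_i$ some body atom $\alpha_j$ contains all variables of $\beta_i$---translates verbatim into frontier-guardedness of each resulting rule. The converse direction, turning a frontier-guarded \DD program into an equivalent GMSNP formula, is entirely symmetric: IDB predicates of arbitrary arity become second-order variables, and the frontier-guard atom of each rule serves as the guard atom required by the GMSNP syntax.

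For the strictness claim, I would first observe that every MMSNP formula is (after the normalization used in Proposition~\ref{prop:mmsnpmdd}) already a GMSNP formula. Indeed, every head atom of an MMSNP implication has the form $X_i(x)$ with a single variable $x$, and after normalization $x$ must appear in some body atom, which then trivially guards $X_i(x)$. Hence coMMSNP $\preceq$ coGMSNP. To separate the two languages, I would invoke the Boolean query $(\dagger)$ from Proposition~\ref{prop:gfucqmdd}: that proposition shows that $(\dagger)$ is definable in (GF,UCQ) but not in \MDD. By Theorem~\ref{GFUCQfrontier}, (GF,UCQ) has the same expressive power as \FGDD, which by the first statement coincides with coGMSNP; hence $(\dagger)$ is definable in coGMSNP. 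By Proposition~\ref{prop:mmsnpmdd}, \MDD has the same expressive power as coMMSNP; hence $(\dagger)$ is not definable in coMMSNP. This yields the strict inclusion.

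The only step requiring genuine care is the normalization in the first statement: one must verify that the rewriting that forces every head variable to appear in the body does not accidentally destroy the frontier-guard property on other head atoms of the same implication or rule, and that the auxiliary atoms introduced in passing from GMSNP to \FGDD (and back) preserve equivalence over arbitrary \Sbf-instances. The remainder is routine syntactic bookkeeping in the style of Proposition~\ref{prop:mmsnpmdd}.
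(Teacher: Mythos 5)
Your proposal is correct and follows essentially the same route as the paper: the equivalence with frontier-guarded \DD is obtained by the same back-and-forth translation modelled on Proposition~\ref{prop:mmsnpmdd} (with the GMSNP guard condition matching frontier-guardedness), and strictness is derived exactly as in the paper by combining the inclusion coMMSNP $\preceq$ coGMSNP with the (GF,UCQ) query of Proposition~\ref{prop:gfucqmdd}, Theorem~\ref{GFUCQfrontier}, and Proposition~\ref{prop:mmsnpmdd}.
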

%
%
Although defined in a different way, GMSNP is essentially the same
logic as MMSNP$_2$, which is studied in
\cite{Madeleine}. Specifically, MMSNP$_2$ is the extension of MMSNP in
which monadic SO-variables range over sets of domain elements
\emph{and facts}, and where atoms of the form $X(R(\xbf))$ are allowed
in place of atoms $X(x)$ with $X$ an SO-variable and $R$ from the
data schema \Sbf. Additionally, a guardedness condition is imposed,
requiring that whenever an atom $X(R(\xbf))$ occurs in a rule head,
then the atom $R(\xbf)$ must occur in the rule body.  Formally, the
SO-variables $X_{i}$ are interpreted in an instance $\mathfrak{D}$ as
sets $\pi(X_{i})\subseteq \adom(\Dmf)\cup \Dmf$ and
$\Dmf\models_{\pi}X(R(x_1, \ldots, x_n))$ 
if $R(\pi(x_{1}),\dots,\pi(x_{n}))\in \pi(X)$. We observe the following.
\begin{proposition}
\label{prop:gmmm2}
  GMSNP and MMSNP$_2$ have the same expressive power.
\end{proposition}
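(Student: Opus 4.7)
The plan is to give mutually inverse syntactic translations between GMSNP and MMSNP$_2$ that preserve the set of free variables (and hence the defined query). The guiding intuition is that a monadic MMSNP$_2$ variable, which ranges over $\adom(\Dmf)\cup\Dmf$, can be split along the relation symbols in the data schema into a unary GMSNP variable (for its domain-element part) plus one higher-arity GMSNP variable per relation symbol (for its fact-part); conversely, a $k$-ary GMSNP variable can be encoded by a family of monadic MMSNP$_2$ variables that color facts according to all possible ``extraction patterns'' of $k$-tuples of components from the fact.

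For the direction $\text{MMSNP}_2 \preceq \text{GMSNP}$, I would replace each monadic MMSNP$_2$ variable $Y$ by a unary GMSNP variable $Y^{\mn{dom}}$ together with, for each relation symbol $R$ of arity $\ell$ in the data schema, an $\ell$-ary GMSNP variable $Y^R$. Every atom $Y(x)$ is rewritten to $Y^{\mn{dom}}(x)$, and every atom $Y(R(\mathbf{x}))$ is rewritten to $Y^R(\mathbf{x})$. MMSNP$_2$'s guardedness condition---namely, that whenever $Y(R(\mathbf{x}))$ appears in the head of a rule, the atom $R(\mathbf{x})$ must also appear in the body---immediately yields GMSNP's guardedness for the translated rule, since $R(\mathbf{x})$ contains all the variables of $Y^R(\mathbf{x})$. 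Preservation of the defined query follows from the fact that the chosen interpretations of $Y^{\mn{dom}}$ and the $Y^R$ precisely encode the chosen interpretation of $Y$.

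For the direction $\text{GMSNP} \preceq \text{MMSNP}_2$, I would first normalize the given GMSNP formula so that every head atom is guarded by an atom over the data schema, not by a second-order atom; this can be accomplished by unfolding SO guards through the (disjunctively headed) rules that produce them, exploiting the monotonicity of the logic. Then, for each $k$-ary GMSNP variable $X$, each relation symbol $R$ of arity $\ell$, and each map $\sigma:\{1,\ldots,k\}\to\{1,\ldots,\ell\}$, I would introduce a monadic MMSNP$_2$ variable $Y_{X,R,\sigma}$ intended to color the fact $R(a_1,\ldots,a_\ell)$ precisely when $(a_{\sigma(1)},\ldots,a_{\sigma(k)}) \in X$. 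A head atom $X(\mathbf{y})$ that is guarded by $R(\mathbf{z})$ with extraction $\sigma$ becomes the MMSNP$_2$ head atom $Y_{X,R,\sigma}(R(\mathbf{z}))$, which satisfies MMSNP$_2$'s guardedness by construction; a body atom $X(\mathbf{y})$ becomes a disjunction across all guards that could witness $\mathbf{y}$. I would then add consistency rules so that whenever two colored facts share a $\sigma$-projection, the respective $Y_{X,R,\sigma}$ variables agree, ensuring that the family of monadic colorings coherently encodes one $k$-ary relation.

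The main obstacle is justifying the soundness and completeness of the second translation, which reduces to a monotonicity argument showing that one may assume without loss of generality that each GMSNP SO variable $X_i$ contains only tuples whose components all occur inside some single input fact. In the presence of disjunctive rule heads, this requires a careful inductive analysis of how tuples can be forced into $X_i$ by chains of guarded rules, but it is enabled by the positive, guarded nature of GMSNP heads. Once this restriction is established, the family of monadic MMSNP$_2$ variables $Y_{X,R,\sigma}$ together with the consistency rules faithfully encodes the higher-arity GMSNP relations, yielding an MMSNP$_2$ formula equivalent to the original GMSNP formula.
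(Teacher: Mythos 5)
Your two translations are, in essence, the ones the paper uses: the MMSNP$_2$-to-GMSNP direction is literally the paper's construction (split each fact/domain variable $X$ into a unary variable plus one variable of arity $n$ per $n$-ary input relation), and your encoding of a $k$-ary GMSNP variable $X$ by monadic fact-colouring variables $Y_{X,R,\sigma}$, with head atoms rewritten through their guards and body atoms expanded over all possible witnessing guards (compiled into several implications), is the same device as the paper's variables $X_A$, which are indexed by head-atom occurrences $A$ together with a selected input-relation guard; your additional consistency rules are harmless bookkeeping that the paper avoids by decoding $X$ as the union over all witnesses and quantifying over all possible sources of a body atom.

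The one step of your plan that would not go through as described is the one you yourself flag as the main obstacle: justifying that one may assume each tuple of an SO relation lies inside a single input fact, and hence that every head atom can be given a guard over the data schema. You propose to obtain this by ``unfolding SO guards through the rules that produce them'' and by ``a careful inductive analysis of how tuples can be forced into $X_i$ by chains of guarded rules''. This is the wrong mechanism: the second-order variables of a GMSNP sentence are existentially quantified relations, not least fixpoints of the rule set, so no rule ``produces'' a tuple, and with disjunctive heads there is no well-founded notion of derivation to induct on. What works, and is all that is needed, is a direct restriction argument: given any satisfying assignment $\pi$, let $\pi'(X)$ consist of those tuples of $\pi(X)$ whose components occur together in some input fact. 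Bodies are positive in the SO atoms, so a body true under $\pi'$ is true under $\pi$; the witnessing head atom is guarded by some body atom, which under $\pi'$ is either an input atom or an SO atom holding with a fact-guarded tuple, so the head tuple is itself fact-guarded and survives the restriction. Hence $\pi'$ still satisfies the sentence, no induction required. Once this is in place, input-relation guards can be added to rule bodies (splitting each implication over all ways of placing the head variables inside an atom of the data schema), which is precisely the normal form from which the paper's translation, and yours, proceeds.
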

Details for the proofs of both Theorem~\ref{lem:gmsnpfgdd} and
Lemma~\ref{prop:gmmm2} are in the full version of this paper. In \cite{Madeleine},
it was left as an open question whether MMSNP$_2$ is more expressive
than MMSNP, which is resolved by the results above. 

\smallskip We leave it as an interesting open question whether
Theorem~\ref{dich-ucq} can be extended to (GF,UCQ) and (GNFO,UCQ),
that is, whether GMSNP (equivalently: MMSNP$_2$) has a dichotomy
between \PTime and \NP if the Feder-Vardi conjecture holds. While this
question is implicit already in \cite{Madeleine}, the results
established in this paper underline its significance from a different
perspective.

\section{OBDA and CSP}
\label{sect:OBDAtoCSP}%
We show that OBDA languages based on AQs capture CSPs (and
generalizations thereof), and we transfer results from CSPs to OBDA
languages. In comparison to the previous section, we obtain a richer set
of results, and often even worst-case optimal decision procedures.
Recall that each finite relational structure $\Bmf$ over a schema
$\Sbf$
gives rise to a \emph{constraint satisfaction
  problem} 
which is to decide, given a finite relational structure~$\Amf$ over
\Sbf, whether there is a homomorphism from $\Amf$ to $\Bmf$ (written
$\Amf \rightarrow \Bmf$). In this context, the relational structure
$\Bmf$ is also called the \emph{template} of the CSP.

\smallskip

CSPs give rise to a query language coCSP in the spirit of the query
language coMMSNP introduced in the previous section.  In its basic version,
this language is Boolean and turns out to have exactly the same
expressive power as ($\ALC$,BAQ), where BAQ is the class of
\emph{Boolean} atomic queries. To also cover non-Boolean AQs,
we consider two natural generalizations of  CSPs. First, a
\emph{generalized CSP} is defined by a finite \emph{set \Fmc of templates}, 
rather than only a single one
\cite{DBLP:journals/ejc/FoniokNT08}. The problem then consists in
deciding, given an input structure \Amf, whether there is a template
$\Bmf \in \Fmc$ such that $\Amf \rightarrow \Bmf$.
%
Second, in a \emph{(generalized) CSP with constant symbols}, both the
template(s) and the input structure are endowed with constant symbols
\cite{DBLP:journals/tcs/FederMS04,DBLP:journals/tods/AlexeCKT11}. To
be more precise, let $\Sbf$ be a schema and $\cbf= c_{1}, \ldots,
c_{m}$ a finite sequence of distinct constant symbols. A \emph{finite
  relational structure over} $\Sbf\cup \cbf$ has the form
$(\Amf,d_{1},\ldots,d_{m})$ with \Amf a finite relational structure
over $\Amf$ that, in addition, interprets the constant symbols $c_{i}$
by elements $d_{i}$ of the domain $\dom$ of $\Amf$, for $1\leq i \leq
m$.
Let $(\Amf,\abf)$ and $(\Bmf,\bbf)$ be finite relational structures
over $\Sbf\cup \cbf$. A mapping $h$ is a \emph{homomorphism} from
$(\Amf,\abf)$ to $(\Bmf,\bbf)$, written $(\Amf,\abf) \rightarrow
(\Bmf,\bbf)$, if it is a homomorphism from $\Amf$ to $\Bmf$ and
$h(a_{i})=b_{i}$ for $1\leq i \leq m$.
A (generalized) CSP with constant symbols is then defined like a
(generalized) CSP, based on this extended notion of homomorphism.

We now introduce the query languages obtained from the different versions
of CSPs, where generalized CSPs with constant symbols constitute the
most general case. 
Specifically, each finite set of templates $\Fmc$ over $\Sbf\cup\cbf$
with $\cbf=c_1,\ldots,c_m$ gives rise to an $m$-ary query
coCSP$(\Fmc)$ that maps every $\Sbf$-instance $\Dmf$ to
$$
\{ \dbf \in \adom(\Dmf)^{m}\!\mid\! \forall (\Bmf,\bbf)\!\in \mathcal{F}:\! 
(\Dmf,\dbf) \not\rightarrow (\Bmf,\bbf)\!\},
$$
where we view $(\Dmf,\dbf)$ as a finite relational structure whose
domain is $\mn{adom}(\Dmf)$.  The query language that consists of all
such queries is called \emph{generalized coCSP with constant symbols}.
The fragment of this query language that is obtained by admitting only
sets of templates $\mathcal{F}$ without constant symbols is called
\emph{generalized coCSP}, and the fragment induced by singleton sets
$\mathcal{F}$ without constant symbols is called \emph{coCSP}.
%
\begin{example}
  Selecting an illustrative fragment of Examples~\ref{ex:first}
  and~\ref{ex:second}, let 
  $$
  \begin{array}{r@{\ }c@{\ }l}
  \Omc &=& \{ \exists \mn{parent} . \mn{HereditaryDisposition} \sqsubseteq
   \mn{HereditaryDisposition} \} \\[1mm]
  \Sbf &=& \{
  \mn{HereditaryDisposition},\mn{parent} \}
  \end{array}
  $$
  Moreover, let $q_2(x)=\mn{HereditaryDisposition}(x)$ be the query from
  Example~\ref{ex:second}. To identify a query in coCSP with constant
  symbols that is equivalent to the ontology-mediated query
  $(\Sbf,\Omc,q_2)$, let \Bmc be the following template:
  %
  \begin{center}
\begin{picture}(0,0)%
\epsfig{file=template.pdftex}%
\end{picture}%
\setlength{\unitlength}{2901sp}%
\begingroup\makeatletter\ifx\SetFigFont\undefined%
\gdef\SetFigFont#1#2#3#4#5{%
  \reset@font\fontsize{#1}{#2pt}%
  \fontfamily{#3}\fontseries{#4}\fontshape{#5}%
  \selectfont}%
\fi\endgroup%
\begin{picture}(3433,1024)(4133,-3344)
\put(5048,-2528){\makebox(0,0)[lb]{\smash{{\SetFigFont{8}{9.6}{\familydefault}{\mddefault}{\updefault}{\color[rgb]{0,0,0}$\mn{parent}$}%
}}}}
\put(4366,-2716){\makebox(0,0)[lb]{\smash{{\SetFigFont{8}{9.6}{\familydefault}{\mddefault}{\updefault}{\color[rgb]{0,0,0}$a$}%
}}}}
\put(6159,-2716){\makebox(0,0)[lb]{\smash{{\SetFigFont{8}{9.6}{\familydefault}{\mddefault}{\updefault}{\color[rgb]{0,0,0}$b$}%
}}}}
\put(6151,-2476){\makebox(0,0)[lb]{\smash{{\SetFigFont{8}{9.6}{\familydefault}{\mddefault}{\updefault}{\color[rgb]{0,0,0}$\mn{HereditaryDisposition}$}%
}}}}
\put(4133,-3279){\makebox(0,0)[lb]{\smash{{\SetFigFont{8}{9.6}{\familydefault}{\mddefault}{\updefault}{\color[rgb]{0,0,0}$\mn{parent}$}%
}}}}
\put(5949,-3286){\makebox(0,0)[lb]{\smash{{\SetFigFont{8}{9.6}{\familydefault}{\mddefault}{\updefault}{\color[rgb]{0,0,0}$\mn{parent}$}%
}}}}
\end{picture}%

  \end{center}
  %
  It can be shown that for all instances \Dmf over \Sbf and for all $d
  \in \mn{adom}(\Dmf)$, we have $d\in \mn{cert}_{q_2,\Omc}(\Dmf)$ iff
  $(\Dmf,d) \not\rightarrow (\Bmf,a)$ and thus the query coCSP$(\Bmf)$ is
  as required.
\end{example}
The following theorem summarizes the connections between OBDA
languages with (Boolean) atomic queries, \MDD, and CSPs.  Note that we
consider binary schemas only.
%
%
%
\begin{theorem}\label{ALCUtoCSP} 
  The following are lists of query languages that have the same expressive power:
\begin{enumerate}

\vspace*{-1mm}

\item $(\mathcal{ALCU}$,AQ), ($\mathcal{SHIU}$,AQ), unary simple \MDD, and 
generalized coCSP with one constant symbol;

\vspace*{-1mm}

\item ($\mathcal{ALC}$,AQ), ($\mathcal{SHI}$,AQ), unary connected simple \MDD, and
generalized coCSPs with one constant symbol such that all templates are identical except 
for the interpretation of the constant symbol;

\vspace*{-1mm}
\item ($\mathcal{ALCU}$,BAQ), ($\mathcal{SHIU}$,BAQ), Boolean simple \MDD, and generalized 
coCSP;

\vspace*{-1mm}
\item ($\mathcal{ALC}$,BAQ), ($\mathcal{SHI}$,BAQ), Boolean connected simple \MDD, and coCSP.

\vspace*{-1mm}

\end{enumerate}
Moreover, given the ontology-mediated query or monadic datalog program,
the correponding CSP template is of at most exponential size and can be 
constructed in time polynomial in the size of the template.
\end{theorem}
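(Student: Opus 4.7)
(sketch)
The equivalences between the ontology-mediated query languages and the \MDD fragments in Points~1--4 are already established by Theorems~\ref{thm:ALCMDD1}, \ref{thm:ALCMDD2}, and~\ref{thm:ALCMDD3}, together with the remark at the end of Section~\ref{subs:obdaDL} transferring them to BAQs. What remains is to match each of the four \MDD fragments with the corresponding coCSP variant via mutual translations, and to verify the claimed size bound.

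For the \MDD-to-CSP direction, the plan is to start from Point~4. Let $\Pi$ be a Boolean connected simple \MDD program with unary IDB predicates $P_1,\ldots,P_k$. I would take the template $\Bmf$ whose domain consists of those \emph{types} $\tau \subseteq \{P_1,\ldots,P_k\}$ that are not excluded by any rule with IDB-only body on a single variable, and set, for each unary EDB $A$, $\tau \in A^{\Bmf}$ iff the configuration ``$A(x)$ with $x$ of type $\tau$'' neither violates a non-goal rule nor triggers a goal or $\bot$ rule; analogously, $(\tau_1,\tau_2) \in R^{\Bmf}$ iff the edge $R(x,y)$ with $x,y$ of types $\tau_1,\tau_2$ is consistent in the same sense. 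A homomorphism $h\colon \Dmf \to \Bmf$ is then exactly an assignment of types to $\mn{adom}(\Dmf)$ that extends $\Dmf$ to a model of $\Pi$ in which goal does not fire, yielding $\Dmf \not\to \Bmf$ iff $q_\Pi(\Dmf)=1$. For Point~2, I would treat the unary goal as an ordinary unary IDB included in the types and interpret the constant symbol as any type that does not contain goal; since only the constant's interpretation varies, this produces the ``identical templates except for the constant'' pattern. For Points~1 and~3 (the non-connected fragments), rule bodies spanning multiple components cannot be captured by a single template, and I would instead enumerate the finitely many ways the disjunctive choices can be coordinated across disconnected parts, producing one template per coordination and hence a generalized coCSP.

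In the CSP-to-\MDD direction, given a finite set of templates $\Fmc$ over $\Sbf \cup \cbf$, I would introduce a unary IDB $P_b$ for every element $b$ of every template, together with rules $\bigvee_b P_b(x) \leftarrow \mn{adom}(x)$ selecting a type at each element, $\bot \leftarrow R(x,y) \wedge P_b(x) \wedge P_{b'}(y)$ for every tuple $(b,b') \notin R^{\Bmf}$, and the analogous unary constraints; for an AQ query the constant symbol is encoded by goal rules $\mn{goal}(x) \leftarrow P_c(x)$ for every admissible constant interpretation $c$. A single template without constant symbols then yields a connected simple \MDD program (Point~4), while the presence of several distinct templates in $\Fmc$ in general requires indexing IDBs by template, which breaks connectedness and lands us in the non-connected fragments of Points~1 and~3. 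The size bound is immediate: $|\Bmf|$ is at most $2^{O(k)}$ in the number $k$ of IDBs, and the predicates of $\Bmf$ are computed by iterating over type tuples and checking each against the polynomially many rules of $\Pi$, giving time polynomial in $|\Bmf|$. The main technical obstacle is Point~2, where one must verify that the ``identical except for the constant'' condition on the templates corresponds tightly to the connectedness restriction on \MDD rules; this is where the most delicate coordination between goal rules, $\bot$-rules, and the single constant symbol takes place.
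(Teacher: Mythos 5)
Your overall plan (quote the Section~\ref{sect:alcUCQ} equivalences with the \MDD fragments, then translate between the \MDD fragments and the CSP variants) differs from the paper, which instead translates directly between the DL languages and the CSPs using canonical type structures $\Bmf_T$; for Points~2 and~4 your type-based template is essentially that construction transposed to the \MDD level and is fine in outline. However, your CSP-to-\MDD encoding of the constant symbol is inverted. With rules $\bot\leftarrow R(x,y)\wedge P_b(x)\wedge P_{b'}(y)$ for non-edges plus $\bigvee_b P_b(x)\leftarrow\adom(x)$, the models of your program correspond to homomorphisms $h\colon\Dmf\rightarrow\Bmf$, and adding $\mn{goal}(x)\leftarrow P_c(x)$ for the \emph{admissible constant interpretations} $c$ makes $d$ a certain answer iff \emph{every} homomorphism satisfies $h(d)\in C$ (with $C$ the set of constant interpretations). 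But coCSP$(\Fmc)$ declares $d$ an answer iff \emph{no} homomorphism satisfies $h(d)\in C$; these queries differ already when there is a single homomorphism mapping $d$ to the constant. The fix is to fire the goal on the complementary elements, i.e.\ $\mn{goal}(x)\leftarrow P_e(x)$ for every template element $e\notin C$ -- this is exactly the role of $\neg A_b\sqsubseteq A$ (resp.\ $\bigsqcap_{(\Bmf,b)\in\Fmc}\neg A_b\sqsubseteq A$) in the paper's ontologies.

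The second, more substantial gap is the \MDD-to-CSP direction for the non-connected fragments (Points~1 and~3). Saying you would ``enumerate the finitely many ways the disjunctive choices can be coordinated across disconnected parts'' names the difficulty without resolving it: you must say what a coordination is (in effect, a globally coherent set of types against which the disconnected rule bodies are relativized), prove that for each such set the remaining constraints really are unary/binary so that a genuine template results, prove that restricting to suitable (e.g.\ maximal) sets yields completeness, and -- crucially for the ``Moreover'' clause and for the \NExpTime applications that follow -- bound the size of the resulting \emph{family} of templates. The paper's proof does precisely this work, but on the DL side: it takes the sets $T$ of types realizable in a $Q$-countermodel that are maximal, shows via a disjoint-union argument on the $\exists U.D$-concepts that there are only few such $T$, and builds one canonical template $\Bmf_T$ per maximal set. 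Your sketch neither identifies this maximality/disjoint-union argument nor gives any bound on the number of coordinations, so as it stands the exponential-size claim (and hence the later complexity transfers) is unsupported for Points~1 and~3; your size argument only covers a single template built from the IDB types.
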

\begin{proof} The equivalences between OBDA languages and fragments of
  \MDD have been proved in Section~\ref{sect:alcUCQ}.  We give a proof
  of the remaining claim of Point~1, namely that $(\mathcal{ALCU}$,AQ)
  and generalized coCSP with one constant symbol are equally
  expressive.  We extend the notation used in the proof of
  Theorem~\ref{thm:ALCtoMDD}.  For simplicity, throughout this proof
  we regard $\forall R.C$ as an abbreviation for $\neg \exists R.\neg
  C$.

  Let $Q=(\Sbf,\Omc,A(x))$ be an ontology-mediated query formulated in
  $(\mathcal{ALCU}$,AQ).  
 A \emph{type for \Omc} is a set $\tau\subseteq
  \mn{sub}(\Omc)$ and $\mn{tp}(\Omc)$ denotes the set of all types for
  \Omc. We say that $\tau \in \mn{tp}(\Omc)$ is \emph{realizable} if
  there is an $\Amf=(\mn{dom},\Dmf) \in \mn{Mod}(\Omc)$ and a $d \in
  \mn{dom}$ such that $C\in \tau$ iff $\Amf\models C^{\ast}[d]$ for
  all $C\in \mn{sub}(\Omc)$.  A set of types $T \subseteq
  \mn{tp}(\Omc)$ is \emph{realizable in a $Q$-countermodel} if there
  is an $\Amf\in \mn{Mod}(\Omc)$ that realizes exactly the types in
  $T$ and such that $A\not\in \tau$ for at least one $\tau\in T$.

  Let $\mathcal{C}$ be the set of all $T \subseteq \mn{tp}(\Omc)$ that
  are realizable in a $Q$-countermodel and maximal with this
  property. Note that the number of elements of $\mathcal{C}$ is
  bounded by the size of $\Omc$ since for any two distinct
  $T_{1},T_{2}\in \mathcal{C}$, there must be a concept $\exists
  U.D\in \mn{sub}(\Omc)$ such that $\exists U.D\in \tau$ for all
  $\tau\in T_{1}$ and $\exists U.D\not\in \tau$ for all $\tau\in
  T_{2}$ or vice versa; otherwise, we can take the disjoint union of
  any structures $\Amf_1,\Amf_2$ which show that $T_1,T_2$ are
  realizable in a $Q$-countermodel to obtain $Q$-countermodel that
  realizes $T_1 \cup T_2$. For $R\in \Sbf$, we call a pair
  $(\tau_{1},\tau_{2})$ of types \emph{$R$-coherent} if $\exists R.C \in \tau_{1}$
  for every $\exists R.C \in \mn{sub}(\Omc)$ such that  $C\in \tau_{2}$.

With each $T\in \mathcal{C}$, we associate the \emph{canonical \Sbf-structure $\Bmf_{T}$}
with domain $T$ and the following facts:
\begin{itemize}
\item $B(\tau)$ for all $\tau\in T$ and $B\in \Sbf$ such that $B\in \tau$;
\item $R(\tau_{1},\tau_{2})$  for all $\tau_{1},\tau_{2}\in T$ and $R\in \Sbf$ such that 
$(\tau_{1},\tau_{2})$ is $R$-coherent. 
\end{itemize}
Note that the construction of $\Bmf_{T}$ is well-known from the
literature on modal and description logic. For example, $\Bmf_T$ can
be viewed as a finite fragment of a canonical model of a modal logic
that is constructed from maximal consistent sets of formulas
\cite{modalbook}. Alternatively, $\Bmf_T$ can be viewed as the result of a type
elimination procedure \cite{pratt}.
%
%

We obtain the desired set \Fmc of CSP templates by setting
$$
\mathcal{F}= \{ (\Bmf_{T},\tau) \mid T\in \mathcal{C}, \tau\in T,A\not\in \tau\}.
$$
One can show that for every $\Sbf$-instance $\Dmf$ and $d\in
\adom(\Dmf)$, there exists $(\Bmf_{T},\tau)\in \mathcal{F}$ with
$(\Dmf,d) \rightarrow (\Bmf_{T},\tau)$ iff $d\not\in
q_{\Sbf,\Omc,A(x)}(\Dmf)$. Thus, the ontology-mediated query $Q$ is
equivalent to the query defined by $\mathcal{F}$.

\medskip

Conversely, assume that $\mathcal{F}$ is a finite set of $\Sbf$-structures with one constant.
Take some $(\Bmf,b)\in \mathcal{F}$, and for every $d$ in the domain $\dom(\Bmf)$ of $\Bmf$, 
create some fresh concept name $A_{d}$. 
Let $A$ be another fresh concept name, and set
\begin{eqnarray*}
\Omc_{\Bmf,b} & = & \{ A_{d} \sqsubseteq \neg A_{d'} \mid d\not=d'\}\cup\\
     &   & \{ A_{d} \sqcap \exists R.A_{d'} \sqsubseteq \bot \mid R(d,d')\not\in \Bmf,R\in \Sbf\}\cup\\
     &   & \{ A_{d} \sqcap B \sqsubseteq \bot \mid B(d)\not\in \Bmf,B \in \Sbf\}\cup\\
     &   & \{ \top \sqsubseteq \midsqcup_{d\in \dom(\Bmf)}A_{d}, \,\, \neg A_{b} \sqsubseteq A\}
\end{eqnarray*}
Consider the ontology-mediated query
$Q_{\Bmf,b}=(\Sbf,\Omc_{\Bmf,b},A(x)).$ One can show that for every
$\Sbf$-instance $\Dmf$ and $d\in \adom(\Dmf)$, $(\Dmf,d)\rightarrow
(\Bmf,b)$ iff $d\not\in q_{Q_{\Bmf,b}}(\Dmf)$. Thus, $Q_{\Bmf,b}$ is
the desired query if \Fmc is a singleton. For the general case, let
$\Omc$ be the disjunction over all $\Omc_{\Bmf,b}$ with $(\Bmf,b)\in
\Fmc$. Note that $\Omc$ can be expressed in $\mathcal{ALCU}$: first,
rewrite each $\Omc_{\Bmf,b}$ into a single inclusion of the form $\top
\sqsubseteq C_{\Bmf,b}$ and then set
$$
\Omc = \{ \top \sqsubseteq \midsqcup_{(\Bmf,b)\in \mathcal{F}}\forall U.C_{\Bmf,b}\}.
$$
Using the above observation about the
queries $Q_{\Bmf,b}$, it is not hard to show that 
the ($\mathcal{ALCU}$,AQ)-query $Q=(\Sbf,\Omc,A(x))$
is equivalent to
the query coCSP($\Fmc$). 

This completes the proof of Point~1. The proofs of Points~2 to 4 are
similar and given in the full version of this paper.
\end{proof}
%
%
%
%
Theorem~\ref{ALCUtoCSP} allows us to transfer results from the CSP
world to OBDA, which, in light of recent progress on CSPs,
turns out to be very fruitful. We start with data complexity.
\begin{theorem}
\label{thm:CSPFV}
 ($\mathcal{ALC}$,BAQ) has a dichotomy
  between \PTime and \coNP iff the Feder-Vardi conjecture holds.  The
  same is true for ($\mathcal{SHIU}$,AQ), and ($\mathcal{SHIU}$,BAQ).
\end{theorem}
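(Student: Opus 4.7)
The plan is to apply Theorem~\ref{ALCUtoCSP} to each of the listed OBDA languages and to combine this with the well-known equivalence between the Feder-Vardi conjecture for ordinary CSPs, for generalized CSPs, and for generalized CSPs with a constant symbol.

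For ($\mathcal{ALC}$,BAQ), Point~4 of Theorem~\ref{ALCUtoCSP} states that this OBDA language has the same expressive power as coCSP; inspection of the translations (together with the ``moreover'' clause concerning template sizes) shows that they preserve data complexity up to polynomial-time reductions, since for a fixed query $Q$ the corresponding template $\Bmf$ is of constant size, so evaluating $Q$ on an instance $\Dmf$ is polynomial-time equivalent to deciding whether $\Dmf \not\to \Bmf$. Hence a \PTime/\coNP dichotomy for ($\mathcal{ALC}$,BAQ) is equivalent to a \PTime/\coNP dichotomy for coCSP, which by complementation is the Feder-Vardi conjecture. The restriction of Theorem~\ref{ALCUtoCSP} to binary schemas is not an obstacle, since Feder and Vardi's reduction of arbitrary CSPs to digraph CSPs shows that the conjecture may equivalently be stated for binary templates.

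For ($\mathcal{SHIU}$,AQ) and ($\mathcal{SHIU}$,BAQ), I would analogously invoke Points~1 and~3 of Theorem~\ref{ALCUtoCSP} to obtain expressive equivalence with generalized coCSP with one constant symbol and with generalized coCSP, respectively. The argument then reduces the claim to the following known fact: the \PTime/\NP dichotomy conjecture holds for generalized CSPs (with or without a constant symbol) if and only if the Feder-Vardi conjecture holds. The forward direction is trivial since each ordinary CSP is a special case. For the converse, given a generalized CSP with templates $\mathcal{F} = \{(\Bmf_1, b_1), \ldots, (\Bmf_k, b_k)\}$, under FV each individual CSP$(\Bmf_i, b_i)$ is in \PTime or \NP-hard; if all are in \PTime, a disjunctive combination yields a \PTime algorithm for the generalized CSP, and if some component is \NP-hard, one obtains \NP-hardness of the generalized CSP via standard padding/marker constructions that force the input to map into the hard component.

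The main obstacle will be the \NP-hardness transfer in the last step: namely, showing that the generalized CSP is \NP-hard whenever some component CSP is. The reduction requires, for each non-target template $\Bmf_j$, a gadget that can be attached to inputs so as to block homomorphisms to $\Bmf_j$ without interfering with homomorphisms to the hard component $\Bmf_i$. The existence of such gadgets, together with the simulation of constant symbols by fresh unary predicates, is standard in the CSP literature but requires some care; fortunately it is independent of the OBDA setting and can be invoked as a black box.
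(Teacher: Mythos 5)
Your ``only if'' half (a dichotomy for these OBDA languages implies the Feder--Vardi conjecture, via Theorem~\ref{ALCUtoCSP} and the effectiveness of the translation) is exactly the paper's argument and is fine, as is your remark about binary schemas. The gap is in the ``if'' direction, and it sits precisely in the step you propose to invoke as a black box: the claim that the generalized CSP is \NP-hard whenever some component CSP is \NP-hard is false as stated. Concretely, let $\Bmf$ be the disjoint union of a triangle and a single vertex $v$ carrying a self-loop, let $b_1$ be a triangle vertex and $b_2=v$. Then CSP$(\Bmf,b_1)$ is \NP-hard (it amounts to $3$-colorability of the connected component of the constant), but the generalized CSP with template set $\{(\Bmf,b_1),(\Bmf,b_2)\}$ is trivial: every input $(\Amf,a)$ maps to $(\Bmf,b_2)$ by sending everything to the loop. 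The same phenomenon arises without constants (take $\{K_3,\,\text{loop}\}$), so it affects Points~3 and~4 of Theorem~\ref{ALCUtoCSP} as well. Moreover, in this example no gadget of the kind you describe can exist: since $(\Bmf,b_1)\rightarrow(\Bmf,b_2)$, any attachment that blocks homomorphisms to $(\Bmf,b_2)$ necessarily blocks them to $(\Bmf,b_1)$ too.

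The repair is to first replace the template set by a maximal subfamily of pairwise homomorphically incomparable templates (incomparable with respect to \emph{constant-preserving} homomorphisms); this leaves the query unchanged -- the paper records this observation just before Theorem~\ref{thm:definability}. Only after this pruning does the component analysis work: if every surviving CSP$(\Bmf_i,b_i)$ is in \PTime, so is the generalized CSP; if some surviving one is \NP-hard, it reduces to the generalized CSP, but in the constants case the gadget must be a copy of $\Bmf_i$ \emph{glued to the input at the constant} (identify the input constant with $b_i$) -- a disjoint union does not suffice, because incomparability of the pointed templates does not yield incomparability of the unpointed ones (Point~2 of Theorem~\ref{ALCUtoCSP} is exactly the case where all unpointed templates coincide). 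You also silently need that single-template CSPs with a constant fall under Feder--Vardi; your unary-predicate simulation gives this, provided the converse reduction collapses all $P$-elements to one point. With these corrections your CSP-level route does go through, and it is a genuinely different (and more self-contained) argument than the paper's: the paper never proves any dichotomy transfer for generalized CSPs -- its ``if'' direction is inherited from Theorem~\ref{dich-ucq} (i.e., from the MMSNP/coMMSNP machinery, after replacing $\mathcal{SHIU}$ by $\mathcal{ALCU}$ via Theorem~\ref{thm:ALCMDD3}), and only the ``only if'' direction uses Theorem~\ref{ALCUtoCSP}.
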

Since $\mathcal{SHIU}$-ontologies can be replaced by
$\mathcal{ALCU}$-ontologies in ontology-mediated queries due to
Theorem~\ref{thm:ALCMDD3}, the ``if'' direction of (all cases
mentioned in) Theorem~\ref{thm:CSPFV} actually follows from
Theorem~\ref{dich-ucq}. The ``only if'' direction is a consequence of
Theorem~\ref{ALCUtoCSP}.  We now consider further interesting
applications of Theorem~\ref{ALCUtoCSP}, in particular to deciding
query containment, FO-rewritability, and datalog rewritability.

\subsection{Query Containment}

In Section~\ref{sect:obdammsnp}, we have established decidability
results for query containment in OBDA languages based on UCQs.  For
OBDA languages based on AQs and BAQs, we even obtain a tight
complexity bound.
It is easy to see that query containment in coCSP is characterized by
homomorphisms between templates. Consequently, it is straightforward
to show that query containment for generalized coCSP with constant
symbols is \NP-complete.  Thus, Theorem~\ref{ALCUtoCSP} yields the
following \NExpTime upper bound for query containment in OBDA
languages. The corresponding lower bound is proved in the full version of this paper by
a non-trivial reduction of a \NExpTime-complete tiling problem.
\begin{theorem}\label{ContNEXP}
Query containment in ($\mathcal{SHIU}$,AQ$\cup$BQ) is in \NExpTime.
It is \NExpTime-hard already for ($\mathcal{ALC}$,AQ) and for
$(\mathcal{ALC}$,BAQ).
\end{theorem}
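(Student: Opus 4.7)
The plan is to treat the two bounds separately. For the \NExpTime upper bound, I apply Theorem~\ref{ALCUtoCSP} to translate both sides of the containment into equivalent generalized coCSP queries $\mathrm{coCSP}(\mathcal{F}_1)$ and $\mathrm{coCSP}(\mathcal{F}_2)$ (with one constant symbol in the AQ case, without constants in the BAQ case), where the total size of each template set $\mathcal{F}_i$ is at most exponential in the size of the input ontology-mediated query and can be constructed in exponential time. After this translation, the problem reduces to containment of generalized coCSP queries over exponential-size templates.

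The key observation is that containment of generalized coCSP admits a purely combinatorial characterization: $\mathrm{coCSP}(\mathcal{F}_1) \subseteq \mathrm{coCSP}(\mathcal{F}_2)$ holds iff for every $(\mathcal{B},b) \in \mathcal{F}_2$ there is some $(\mathcal{B}',b') \in \mathcal{F}_1$ with $(\mathcal{B},b) \to (\mathcal{B}',b')$. The ``if'' direction follows by composition of homomorphisms, applied to the contrapositive form of containment: any $(\mathcal{D},d)$ that maps into some template of $\mathcal{F}_2$ also maps, via composition, into some template of $\mathcal{F}_1$. The ``only if'' direction follows by plugging $(\mathcal{D},d) = (\mathcal{B},b)$, whose identity homomorphism witnesses $d \notin \mathrm{coCSP}(\mathcal{F}_2)(\mathcal{B})$ and hence forces $d \notin \mathrm{coCSP}(\mathcal{F}_1)(\mathcal{B})$ by assumption. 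This characterization is verifiable in \NP in the total size of the templates: a certificate is a choice, for each $(\mathcal{B},b) \in \mathcal{F}_2$, of a target in $\mathcal{F}_1$ together with an explicit homomorphism, checkable in polynomial time. Running this \NP procedure on exponential-size templates yields the \NExpTime bound for $(\mathcal{SHIU}, \mathrm{AQ}\cup\mathrm{BAQ})$.

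For the \NExpTime lower bound, I reduce from a standard \NExpTime-complete tiling problem, such as whether a given tile system with horizontal and vertical matching constraints tiles the $2^n \times 2^n$ grid with given initial and final rows. The reduction constructs, given a tiling instance, two $(\mathcal{ALC},\mathrm{AQ})$ queries $Q_1 = (\mathbf{S},\mathcal{O}_1,A_1(x))$ and $Q_2 = (\mathbf{S},\mathcal{O}_2,A_2(x))$ over a shared data schema so that $Q_1 \not\subseteq Q_2$ iff the tiling instance is positive, with any witnessing $(\mathcal{D},d)$ encoding a valid tiling. Grid coordinates are represented as $n$-bit binary counters via fresh unary concept names $X_0,\dots,X_{n-1}$, and the ontologies are designed so that the certain-answer behaviours of $Q_1$ and $Q_2$ diverge on an instance exactly when that instance encodes a consistent $2^n\times 2^n$ grid together with a valid tile assignment. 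The $(\mathcal{ALC},\mathrm{BAQ})$ variant is then obtained by a minor modification, replacing the atomic goal $A_i(x)$ with an existential witness concept $\exists U.A_i$-style query (or by preceding the grid by a distinguished ``origin'' element and quantifying over it).

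The main obstacle will be engineering the lower-bound reduction inside pure $\mathcal{ALC}$, which has no inverse roles, no functional roles, and no universal role: the grid geometry, the binary counter successor relations in both dimensions, and the local tile matching constraints must all be expressed through $\mathcal{ALC}$ concept inclusions and propagated through the open-world certain-answer semantics. The standard technique is to use $\mathcal{ALC}$'s disjunctive power to guess counter bits at each element and universally quantified role restrictions of the form $\forall R.C$ to enforce propagation along the two successor roles; $\mathcal{O}_1$ and $\mathcal{O}_2$ must then be tuned so that their disagreement on certain answers precisely isolates the instances that encode valid tilings, while the ontologies themselves remain polynomial in the tiling input.
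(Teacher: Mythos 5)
Your \NExpTime upper bound is essentially the paper's own argument: translate both queries via Theorem~\ref{ALCUtoCSP} into generalized coCSPs (with one constant for AQs, none for BAQs) whose template sets are of exponential size, and then use the homomorphism characterization of coCSP containment, which makes containment of generalized coCSPs an \NP{} guess-and-check on the templates and hence gives \NExpTime{} overall; that part is fine.

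The lower bound, however, has its polarity reversed, and this is a genuine gap. You reduce the \NExpTime-complete question ``does a tiling of the $2^n\times 2^n$ grid exist'' to \emph{non}-containment: positive tiling instances are mapped to pairs with $Q_1\not\subseteq Q_2$, with the counterexample instance encoding a valid tiling. A polynomial reduction of an \NExpTime-hard problem to non-containment shows that non-containment is \NExpTime-hard, i.e.\ that containment is hard for co-\NExpTime{} --- not for \NExpTime, which is what the theorem asserts and what is needed to match the \NExpTime{} upper bound (the two coincide only if \NExpTime{} is closed under complement). The paper orients the reduction the other way, and this is the crux: $\Omc_2$ contains only the grid/counter axioms, $\Omc_1$ additionally contains the tiling constraints plus rules that propagate an error concept $E$ towards the origin whenever some matching condition is violated, and one proves that a tiling \emph{exists} iff containment \emph{holds}. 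If a tiling $f$ exists, every instance consistent with the grid ontology can be expanded to a model of $\Omc_1$ by assigning tiles according to $f$ read off the counter values, so $E$ is nowhere a certain answer and the left-hand query is empty (on inconsistent instances containment is vacuous); if no tiling exists, the explicit exponential-size grid instance is a counterexample, being consistent with $\Omc_2$ (which does not constrain $E$) while every model of $\Omc_1$ extending it must violate some condition and propagate $E$ to $(0,0)$. In particular the non-containment witness does \emph{not} encode a tiling --- it encodes the bare grid, and the certain answer arises precisely from the impossibility of tiling it. So even with all the \ALC{} engineering (binary counters, $\forall R$-propagation) carried out as you sketch, your construction would prove the wrong hardness statement; you must flip the roles so that ``tiling exists'' corresponds to ``containment holds''. (For the BAQ case the paper then simply replaces $E(x)$ by $\exists x. E(x)$ over the same ontologies; no universal role is needed.)
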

It is a consequence of a result in \cite{KR12-cont} that query
containment is undecidable for $\mathcal{ALCF}$. We show in the
full version of this paper how the slight gap pointed out in
Footnote~\ref{footnote:containment} can be bridged.



%
\subsection{FO- and Datalog-Rewritability}

One prominent approach to answering ontology-mediated queries is to
make use of existing relational database systems or datalog engines,
eliminating the ontology by query rewriting
\cite{Romans,DBLP:conf/dlog/EiterOSTX12,Bernardo-IJCAI13}. Specifically,
an ontology-mediated query $(\Sbf,\Omc,q)$ is \emph{FO-rewritable} if
there exists an FO-query over $\Sbf$ that is equivalent to it and
\emph{datalog-rewritable} if there exists a datalog program over \Sbf
that defines it. We observe that every ontology-mediated query that is
FO-rewritable is also datalog-rewritable.
%
%
\begin{proposition}
  \label{prof:FOtoDLog} If $Q = (\Sbf,\Omc,q)$ is an ontology-mediated
  query with $\Omc$ formulated in equality-free FO and $q$ a UCQ, then
  $q_Q$ is preserved by homomorphisms. Consequently, it follows from
  \cite{DBLP:journals/jacm/Rossman08} that if $q_Q$ is FO-rewritable,
  then $q_Q$ is rewritable into a UCQ (thus into datalog).
\end{proposition}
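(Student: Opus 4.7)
The plan is to establish the preservation claim first, and then read off the rewritability consequence from Rossman's theorem, which says that any FO formula preserved under homomorphisms on finite structures is equivalent to an existential positive one, that is, a UCQ (and hence a datalog program).

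For preservation, let $h\colon \Dmf\to\Dmf'$ be a homomorphism of $\Sbf$-instances and suppose $\abf\in q_Q(\Dmf)=\mn{cert}_{q,\Omc}(\Dmf)$. To prove $h(\abf)\in \mn{cert}_{q,\Omc}(\Dmf')$, pick an arbitrary finite model $\Emf$ of $\Omc$ with $\Dmf'\subseteq\Emf$. The key construction is an auxiliary finite structure $\Emf^{\ast}$ over $\Sbf\cup\mn{sig}(\Omc)$ together with a surjective map $g\colon \mn{adom}(\Emf^{\ast})\to\mn{adom}(\Emf)$ that extends $h$ and is a \emph{strong} homomorphism, meaning $R(c_1,\dots,c_n)\in\Emf^{\ast}$ iff $R(g(c_1),\dots,g(c_n))\in\Emf$ for every relation symbol $R$. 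Concretely, I would take the domain of $\Emf^{\ast}$ to be the disjoint union of $\mn{adom}(\Dmf)$ and $\mn{adom}(\Emf)\setminus h(\mn{adom}(\Dmf))$, let $g$ agree with $h$ on $\mn{adom}(\Dmf)$ and with the identity on the remaining elements, and include a fact $R(c_1,\dots,c_n)$ in $\Emf^{\ast}$ exactly when $R(g(c_1),\dots,g(c_n))\in\Emf$. Because $h$ is a homomorphism into $\Dmf'\subseteq\Emf$, this forces $\Dmf\subseteq\Emf^{\ast}$.

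The main technical point is then the following preservation lemma: for every equality-free FO formula $\varphi(\xbf)$, every strong surjective homomorphism $g\colon \Emf^{\ast}\to\Emf$, and every assignment $\abf$ in $\Emf^{\ast}$, $\Emf^{\ast}\models\varphi[\abf]$ iff $\Emf\models\varphi[g(\abf)]$. This is a routine induction on $\varphi$: the atomic case is exactly the strongness of $g$, the Boolean cases are trivial, the existential case uses that $g$ is a homomorphism in one direction and surjective in the other, and the universal case is dual. Applying the lemma to $\Omc$ gives $\Emf^{\ast}\models\Omc$, so $\abf\in q(\Emf^{\ast})$. Since $q$ is a UCQ and is therefore preserved under the homomorphism $g\colon\Emf^{\ast}\to\Emf$ (with $g(\abf)=h(\abf)$), we conclude $h(\abf)\in q(\Emf)$. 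As $\Emf$ was arbitrary, $h(\abf)\in q_Q(\Dmf')$.

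The hard part is spotting the right notion of map: an ordinary homomorphism is not enough, since equality-free FO is not preserved under homomorphisms in general, but a strong surjective homomorphism preserves equality-free FO in both directions, and this is exactly what the construction of $\Emf^{\ast}$ delivers. Once the preservation claim is in hand, the rewritability statement is immediate: if $q_Q$ is equivalent to some FO formula $\psi$, then $\psi$ is preserved under homomorphisms on finite $\Sbf$-instances, so by Rossman's theorem $\psi$ is equivalent over finite structures to an existential positive formula, which can be put in UCQ form by distributivity, and every UCQ is trivially expressible as a datalog program.
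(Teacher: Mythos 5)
Your argument is correct and is essentially the paper's own proof: both pull the chosen model of $\Omc$ back along (an extension of) $h$ to a structure that extends the original instance, use the invariance of equality-free FO under strong surjective homomorphisms (equivalently, under quotients by the congruence given by the kernel of $g$) to conclude that the pullback still satisfies $\Omc$, and then push the UCQ answer forward along $g$; the paper merely phrases this contrapositively via a countermodel and cites the quotient fact where you prove the lemma by induction. One small fix: since models in this paper may have domain elements outside the active domain, your auxiliary structure should be built over the full domain of the chosen model (not just its active domain), so that $g$ is genuinely surjective onto the whole structure --- exactly as the paper keeps the elements of $\mn{dom}_2\setminus\adom(\Dmf_2)$ in its construction.
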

Example~\ref{ex:second} illustrates that ontology-mediated
queries are not always rewritable into an FO-query, and the same holds
for datalog-rewritability. It is 
a central problem to decide,
given an ontology-mediated query, whether it is FO-rewritable and
whether it is datalog-rewritable. 
By leveraging the CSP connection, 
we show that both problems are decidable and
pinpoint their complexities.


On the CSP side, FO-rewritability corresponds to FO-definability, and
datalog-rewritability to datalog-definability. Specifically, an $\Sbf$-query
coCSP$(\Fmc)$ is \emph{FO-definable} if there is an FO-sentence
$\varphi$ over $\Sbf$ such that for all finite relational structures
$\Amf$ over \Sbf, we have $ \Amf \models \varphi$ iff $\Amf
\not\rightarrow \Bmf $ for all \Bmf in~\Fmc.
Similarly, coCSP$(\Fmc)$ is \emph{datalog-definable} if there exists a
datalog program $\Pi$ that defines it.  
%
FO-definability and datalog-definability have been studied extensively
for CSPs, culminating in the following results.
\begin{theorem}\label{thm:BartoKozik}
  Deciding, for a given finite relational structure $\Bmf$ without constant symbols, whether coCSP($\Bmf$) is FO-definable is \NP-complete
\cite{DBLP:journals/lmcs/LaroseLT07}. The same is true
for datalog-definability
\cite{FreeseEtAl}.%
\footnote{An \NP algorithm for datalog-definability is implicit in
  \cite{FreeseEtAl}, based on results from \cite{BartoKozik09}, see
  also \cite{BulatovWidth}.  We thank Benoit Larose and Liber Barto
  for pointing this out.
}
\vspace*{-3mm}
\end{theorem}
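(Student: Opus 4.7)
The plan is to reduce both decision problems to checking algebraic conditions on the polymorphism clone $\mathrm{Pol}(\Bmf)$, and then to exploit the deep structural characterizations of FO- and datalog-definability to obtain an \NP bound. For FO-definability, I would invoke the Nesetril--Tardif / Larose--Loten--Tardif characterization that $\mathrm{coCSP}(\Bmf)$ is FO-definable iff $\mathrm{CSP}(\Bmf)$ has a \emph{finite (tree) duality}, which, for cores, is equivalent to $\mathrm{Pol}(\Bmf)$ containing a prescribed family of near-unanimity-style polymorphisms of arity bounded in $|\Bmf|$. For datalog-definability, the analogous input is the Barto--Kozik \emph{bounded-width theorem}: $\mathrm{CSP}(\Bmf)$ is datalog-definable iff its polymorphism clone omits tame-congruence types~$1$ and~$2$, equivalently iff $\mathrm{Pol}(\Bmf)$ admits weak near-unanimity operations of all sufficiently large arities (witnessed, for instance, by Siggers-like terms).

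For the \NP upper bound I would first compute the core $\Bmf^{*}$ of $\Bmf$: nondeterministically guess an endomorphism of $\Bmf$ and verify that it retracts $\Bmf$ onto a minimal substructure, which is polynomial-time checkable. The problem then reduces to testing the appropriate algebraic condition on $\mathrm{Pol}(\Bmf^{*})$. The crucial input from the structural theorems is that a witnessing polymorphism, if one exists, can be found with arity bounded polynomially in $|\Bmf^{*}|$; the algorithm therefore guesses such an operation $f : B^{k} \to B$ as an explicit table and verifies in polynomial time that (i)~$f$ preserves every relation of $\Bmf$ and (ii)~$f$ satisfies the required identities. For \NP-hardness, I would reduce from a standard \NP-hard problem such as graph $3$-colorability or a retraction/core-testing problem, engineering a target structure $\Bmf$ whose FO- or datalog-definability of $\mathrm{coCSP}(\Bmf)$ encodes the yes/no answer of the input instance.

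The main obstacle is importing the structural characterizations as reliable black boxes: each requires not just an algebraic condition but also a polynomial bound on the arity of a witnessing polymorphism. Without such bounds the naive approach---guessing a finite obstruction set (for FO-definability) or guessing a datalog rewriting (for datalog-definability) directly---offers no obvious size bound and gives no transparent \NP algorithm. Establishing these arity bounds rests on the deep machinery of tame congruence theory and absorption, which I would cite rather than reprove; the novelty in our usage lies only in packaging the resulting decision procedure and in crafting the reductions used for the matching lower bound.
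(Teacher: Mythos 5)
The first thing to note is that the paper does not prove this theorem at all: it is imported as a black box, with the FO-definability part attributed to \cite{DBLP:journals/lmcs/LaroseLT07} and the datalog part to \cite{FreeseEtAl} together with \cite{BartoKozik09} (see the footnote), so there is no in-paper argument to compare yours against. Measured against the cited results, your reconstruction has genuine gaps, the most serious one concerning FO-definability: near-unanimity(-style) polymorphisms do not characterize it. Such operations characterize bounded strict width, which is strictly weaker than finite duality; for instance, the 2-SAT template has a majority polymorphism, yet its CSP is datalog-definable but not FO-definable. The criterion actually used in \cite{DBLP:journals/lmcs/LaroseLT07} is combinatorial rather than algebraic: coCSP($\Bmf$) is FO-definable iff $\Bmf$ is homomorphically equivalent to a structure satisfying a dismantlability condition (the square of the core dismantles onto the diagonal), and this condition is polynomial-time checkable on a guessed retract, so the nondeterminism is spent only on the retract, not on any polymorphism. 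A related slip on the datalog side: a Siggers-type term witnesses only the omission of type~1 (the Taylor condition) and is present, e.g., for affine templates of unbounded width; for bounded width you need omission of types~1 \emph{and}~2, witnessed by the fixed-arity Maltsev conditions of \cite{FreeseEtAl} (equivalently, fixed-arity weak near-unanimity terms).

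Second, your \NP upper bound does not go through as stated. If the witnessing operation is only guaranteed to have arity $k$ ``bounded polynomially in $|\Bmf|$'', its table has $|\Bmf|^{k}$ entries, which is exponential, so it cannot be guessed and verified in polynomial time; what saves the datalog case is precisely that the witnesses have \emph{constant} arity, so the table is of polynomial size. You must invoke a constant-arity condition, not a polynomial arity bound. Relatedly, your plan to verify in polynomial time that the guessed endomorphism retracts $\Bmf$ onto a \emph{minimal} substructure fails: recognizing cores is \coNP-complete. The standard algorithms sidestep this by checking the definability criterion directly on an arbitrary guessed retract, which suffices because the criteria are invariant under homomorphic equivalence (and pass to retracts), so minimality never needs to be certified. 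Your hardness sketch is too unspecific to evaluate, though it is of the right general shape; the actual reductions can be found in \cite{DBLP:journals/lmcs/LaroseLT07}.
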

Combining the preceding theorem with Theorem~\ref{ALCUtoCSP}, we 
obtain \NExpTime upper bounds
for deciding FO-rewritability and datalog-rewritability of queries
from $(\mathcal{SHI}$,BAQ). 

To capture the more important AQs rather than only BAQs,
we show that Theorem~\ref{thm:BartoKozik} can be lifted, in a natural
way, to generalized CSPs with constant symbols. The central step is
provided by Proposition~\ref{prop:eliminating-constants} below.  For
each finite relational structure $\Bmf$ with constant symbols $c_1,
\ldots, c_n$, let us denote by $\Bmf^c$ the corresponding relational
structure without constant symbols over the schema that contains
additional unary relations $P_1, \ldots, P_n$, where each $P_i$
denotes the singleton set that consists of the element denoted by
$c_i$.
\begin{proposition}\label{prop:eliminating-constants}
  For every set of homomorphically incomparable structures $\Bmf_1,
  \ldots, \Bmf_n$ with constant symbols, 
\begin{enumerate}

\vspace*{-1mm}
\item coCSP($\Bmf_1, \ldots, \Bmf_n$) is FO-definable iff
  coCSP($\Bmf_i^c$) is FO-definable for  $1 \leq i\leq n$.

\vspace*{-2mm}

\item coCSP($\Bmf_1, \ldots, \Bmf_n$) is datalog-definable iff
  coCSP($\Bmf_i^c$) is datalog-definable for $1 \leq i\leq n$.

\vspace*{-1mm}
\end{enumerate}
\end{proposition}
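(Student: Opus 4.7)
The plan is to exploit the correspondence between the two views of an instance---one using constant symbols, $(\Dmf,\dbf)$, and the other using singleton unary predicates, $\Dmf^c$ with $P_j=\{d_j\}$---under which $(\Dmf,\dbf)\rightarrow (\Bmf_i,\bbf_i)$ iff $\Dmf^c\rightarrow \Bmf_i^c$. This will reduce both directions of both parts to fairly standard relativization arguments, with the homomorphic incomparability assumption entering in one direction only.

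For $(\Leftarrow)$ of part~1, given FO definitions $\varphi_i$ of each coCSP$(\Bmf_i^c)$, I would define coCSP$(\Bmf_1,\ldots,\Bmf_n)$ by $\bigwedge_{i=1}^n \varphi_i^*(x_1,\ldots,x_m)$, where $\varphi_i^*$ is obtained from $\varphi_i$ by replacing each atom $P_j(y)$ by $y=x_j$. For part~2, the analogous construction for datalog threads $(x_1,\ldots,x_m)$ through each program $\Pi_i$ by adding $m$ extra positions to every IDB predicate, replacing EDB atoms $P_j(y)$ by the equality $y=x_j$, and then combines the resulting programs via a single conjunctive goal rule $\mn{goal}(\xbf)\leftarrow \mn{goal}_1(\xbf)\wedge\cdots\wedge \mn{goal}_n(\xbf)$.

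For $(\Rightarrow)$, given an input $\Amf$ over $\Sbf\cup\{P_1,\ldots,P_m\}$, I would build a structure $\widehat{\Amf}_i$ by first quotienting $\Amf$ to identify all elements of each $P_j^\Amf$ and then gluing in a disjoint copy of $\Bmf_i$ along these $m$ merged points (introducing $b_{i,j}$ as a fresh element when $P_j^\Amf$ is empty). Writing $\widehat{\abf}_i$ for the resulting tuple of identified points, a short argument using the homomorphic incomparability of $(\Bmf_1,\bbf_1),\ldots,(\Bmf_n,\bbf_n)$ will show that $(\widehat{\Amf}_i,\widehat{\abf}_i)\rightarrow (\Bmf_k,\bbf_k)$ iff $k=i$ and $\Amf\rightarrow \Bmf_i^c$; in particular, $\Amf\in$ coCSP$(\Bmf_i^c)$ iff $\widehat{\abf}_i\in$ coCSP$(\Bmf_1,\ldots,\Bmf_n)(\widehat{\Amf}_i)$. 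Since $\widehat{\Amf}_i$ is obtained from $\Amf$ by an FO-definable quotient followed by the addition of a fixed finite structure, the given FO definition $\varphi(x_1,\ldots,x_m)$ of coCSP$(\Bmf_1,\ldots,\Bmf_n)$ pulls back to an FO sentence over $\Sbf\cup\{P_j\}$ by standard relativization: each quantifier splits into one over the quotient of $\Amf$ plus a finite disjunction over the named added elements, and each atom is expanded according to whether its arguments fall in the $\Amf$-part, in a merged $P_j^\Amf$-class, or among the fixed extra elements of $\Bmf_i$.

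The main technical obstacle I anticipate is the datalog case of $(\Rightarrow)$, where quantifiers cannot simply be split. The standard maneuver is to introduce, for each IDB $R$ of the given program and each assignment $s$ of its positions to $\mn{dom}(\Bmf_i)\setminus\{b_{i,1},\ldots,b_{i,m}\}$ or to a placeholder $\star$, a fresh IDB $R_s$ of arity equal to the number of $\star$-positions; each rule is then replicated once per assignment of its variables, with EDB atoms evaluated according to whether their arguments fall in $\Amf$, are merged with some $b_{i,j}$ via $P_j$, or are fixed in $\Bmf_i$. The resulting program has finite size because $\Bmf_i$ is finite, but some bookkeeping is needed to uniformly handle the cases $|P_j^\Amf|=0$, $1$, or larger.
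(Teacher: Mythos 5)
Your proposal is correct, but it takes a genuinely different route from the paper for the interesting direction. Your $(\Leftarrow)$ arguments (conjunction of the FO-definitions with $P_j(y)$ replaced by $y=x_j$, and the parametrized datalog construction threading $x_1,\ldots,x_m$ through the IDBs) are essentially the paper's easy directions, if anything rendered more carefully on the datalog side. For $(\Rightarrow)$, however, the paper does not argue by a direct reduction: it invokes duality characterizations -- FO-definability via finite obstruction sets (Atserias/Rossman) and datalog-definability via obstruction sets of bounded treewidth (Feder--Vardi) -- transfers obstruction sets from templates with constants to their $\Bmf^c$ versions by citing a result of Alexe, ten Cate, Kolaitis and Tan, and isolates single templates from the generalized CSP by a critical-obstruction/disjoint-union argument that is where homomorphic incomparability enters. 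You instead build, from an input $\Amf$ over $\Sbf\cup\{P_1,\ldots,P_m\}$, the glued structure $(\widehat{\Amf}_i,\widehat{\abf}_i)$ and prove the key equivalence $(\widehat{\Amf}_i,\widehat{\abf}_i)\rightarrow(\Bmf_k,\bbf_k)$ iff $k=i$ and $\Amf\rightarrow\Bmf_i^c$ (incomparability is used exactly, and only, to rule out $k\neq i$, via restricting the homomorphism to the glued copy of $\Bmf_i$); definability of coCSP($\Bmf_i^c$) then follows by pulling the given FO formula or datalog program back through this quotient-plus-finitely-many-named-elements interpretation. What your approach buys is self-containedness and uniformity: no homomorphism-duality theorems, no external obstruction-set transfer result, and the same gadget handles the splitting into single templates and the elimination of constants in one step, for both FO and datalog. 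What it costs is exactly the bookkeeping you flag: the datalog simulation must track which constant classes get merged (when several $P_j$'s overlap) using only positive information, e.g.\ zero-ary ``merge'' IDBs derived from rules like $\mathit{eq}_{j,k}\leftarrow P_j(x)\wedge P_k(x)$ plus transitivity, and must allow a merged point to be represented either by a subscripted specialization or by an element of some $P_j$ in a variable position; crucially no emptiness or ``unnamed'' test is ever needed, which is what makes the simulation possible in monotone datalog. One further minor point worth a sentence in a write-up: under the paper's instance/active-domain conventions a constant $b_{i,j}$ that is isolated in $\Bmf_i$ and has empty or isolated $P_j^\Amf$ could leave $\widehat{a}_{i,j}$ outside the active domain; working with structures (as the paper's appendix does for CSPs with constants) makes this a non-issue.
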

A proof of Proposition~\ref{prop:eliminating-constants} is provided in
the full version of this paper. It relies on the characterization of FO-definable CSPs
as those CSPs that have \emph{finite obstruction sets}; this
characterization was given in \cite{Atserias2005} for structures
without constant symbols and follows from results in
\cite{DBLP:journals/jacm/Rossman08} for the case of structures with
constant symbols.

Note that every set of structures $\Bmf_1, \ldots, \Bmf_n$ has a
subset $\Bmf'_1, \ldots, \Bmf'_m$ which consists of homomorphically
incomparable structures such that coCSP($\Bmf_1, \ldots, \Bmf_n$) is
equivalent to coCSP($\Bmf'_1, \ldots, \Bmf'_m$). We use this 
observation to establish the announced lifting of Theorem~\ref{thm:BartoKozik}.
%
\begin{theorem}\label{thm:definability}
  FO-definability and datalog-definability of generalized CSP with
  constant symbols is \NP-complete.
\end{theorem}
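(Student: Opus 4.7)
The plan is to combine Proposition~\ref{prop:eliminating-constants} with Theorem~\ref{thm:BartoKozik}. Hardness follows by trivially embedding the classical constant-free case, for which Theorem~\ref{thm:BartoKozik} already yields \NP-hardness. The work lies in the \NP upper bound. A naive algorithm would guess an incomparable cover of the input templates $\Bmf_1, \ldots, \Bmf_n$ and invoke Theorem~\ref{thm:BartoKozik} on each member, but this gets stuck on the incomparability check: verifying that no homomorphism exists between two retained templates is a coNP condition.

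To avoid this, I would first establish a one-sided strengthening of Proposition~\ref{prop:eliminating-constants}: for \emph{any} $I \subseteq [n]$ such that every $\Bmf_j$ admits a homomorphism into some $\Bmf_i$ with $i \in I$ respecting constants (a ``cover'', not necessarily incomparable), one has the equality of $m$-ary queries coCSP$(\Bmf_1, \ldots, \Bmf_n) = \bigcap_{i \in I}$ coCSP$(\Bmf_i)$. This holds because $\Bmf_j \to \Bmf_i$ implies coCSP$(\Bmf_j) \supseteq$ coCSP$(\Bmf_i)$, so the conjuncts indexed by $j \notin I$ are redundant. Using the schema-level correspondence between FO-definability of coCSP$(\Bmf_i^c)$ and that of the $m$-ary query coCSP$(\Bmf_i)$ (obtained by replacing each atom $P_k(t)$ by $t = x_k$ in the defining sentence), one concludes: if each $\Bmf_i^c$ ($i \in I$) is FO-definable, then each coCSP$(\Bmf_i)$ is FO-definable, and so is their finite intersection. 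Conversely, when coCSP$(\Bmf_1, \ldots, \Bmf_n)$ is FO-definable, Proposition~\ref{prop:eliminating-constants} applied to a maximal antichain of templates (obtained by iteratively removing dominated ones) gives a cover $I$ with each $\Bmf_i^c$ FO-definable. The resulting characterization is: coCSP$(\Bmf_1, \ldots, \Bmf_n)$ is FO-definable iff there exists a cover $I \subseteq [n]$, with no incomparability condition, such that each $\Bmf_i^c$ with $i \in I$ is FO-definable.

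The \NP algorithm is then direct: guess $I \subseteq [n]$, guess for each $j \in [n] \setminus I$ a polynomial-size homomorphism $\Bmf_j \to \Bmf_{\pi(j)}$ with $\pi(j) \in I$ witnessing the cover (and respecting constants), and for each $i \in I$ guess the NP-certificate supplied by Theorem~\ref{thm:BartoKozik} for the FO-definability of coCSP$(\Bmf_i^c)$. All certificates are polynomially sized and polynomial-time verifiable. For datalog-definability the same template applies, using the NP bound for datalog-definability in Theorem~\ref{thm:BartoKozik} together with the fact that finite intersections of datalog queries are datalog-definable (merge the programs, renaming intensional predicates, and add a conjunctive goal rule).

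The main conceptual step is the cover-based characterization, which replaces the coNP incomparability test by the trivially NP-verifiable cover condition; the rest is a routine orchestration of nondeterministic guesses.
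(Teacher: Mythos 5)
Your proposal is correct and follows essentially the same route as the paper: guess a subset of templates satisfying the cover condition (witnessed by guessed homomorphisms, deliberately avoiding any incomparability check), certify FO-/datalog-definability of each retained template's constant-free version via the \NP procedures of Theorem~\ref{thm:BartoKozik}, and justify correctness by combining Proposition~\ref{prop:eliminating-constants} with the observation that one may pass to a homomorphically incomparable subset defining the same coCSP. The only cosmetic difference is that you argue soundness directly via closure of FO (resp.\ datalog) queries under finite intersection together with the easy constant-elimination direction, whereas the paper routes it through the incomparable sub-subset; both arguments are valid and interchangeable.
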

\begin{proof}
  To decide whether a generalized CSP with constant symbols given as a
  set of templates $\mathcal{F}=\{\Bmf_1, \ldots, \Bmf_n\}$ is
  FO-definable, it suffices to first guess a subset
  $\mathcal{F}'\subseteq\mathcal{F}$ and then to verify that
  (i)~coCSP($\Bmf^c$) is FO-definable for each $\Bmf\in\mathcal{F}'$,
  and (ii)~for each $\Bmf\in\mathcal{F}$ there is a
  $\Bmf'\in\mathcal{F}'$ such that $\Bmf\to\Bmf'$. By
  Theorem~\ref{thm:BartoKozik}, this can be done in \NP. Correctness
  follows from Proposition~\ref{prop:eliminating-constants} and the
  fact that whenever there is a subset $\Fmc'$ satisfying (i) and
  (ii), then by the observation above there must be a subset $\Fmc''
  \subseteq \Fmc'$ of homomorphically incomparable structures such
  that coCSP($\Fmc''$) is equivalent to coCSP($\Fmc'$), which by~(ii)
  is equivalent to coCSP($\Fmc$).  Datalog-definability can be decided
  analogously.
\end{proof}
From Theorems~\ref{ALCUtoCSP} and \ref{thm:definability}, we obtain a
\NExpTime upper bound for deciding FO-rewritability and
datalog-rewritability of ontology-mediated queries based on DLs and (B)AQs.
The
corresponding lower bounds are proved in the full version of this paper using a
reduction from a \NExpTime-hard tiling problem (in fact, the same
problem as in the lower bound for query containment).
\begin{theorem}\label{FONExp}
It is in \NExpTime to decide FO-rewritability and datalog-rewritability 
of queries in ($\mathcal{SHIU}$,AQ$\cup$BAQ). Both problems are \NExpTime-hard
for ($\mathcal{ALC}$,AQ) and ($\mathcal{ALC}$, BAQ).
\end{theorem}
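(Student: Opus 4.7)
The plan is to combine Theorem~\ref{ALCUtoCSP} with Theorem~\ref{thm:definability}. Given a query $Q=(\Sbf,\Omc,q)\in(\mathcal{SHIU},\text{AQ}\cup\text{BAQ})$, I would first invoke Theorem~\ref{ALCUtoCSP} to obtain an equivalent generalized coCSP~$\mathcal{F}$ (with a single constant symbol in the AQ case and with no constant symbols in the BAQ case). The theorem guarantees that $\mathcal{F}$ has size at most exponential in $|Q|$ and can be produced in time polynomial in its own size, hence in time $2^{O(|Q|)}$ overall. Since FO-rewritability (resp.\ datalog-rewritability) of $Q$ is equivalent to FO-definability (resp.\ datalog-definability) of coCSP$(\mathcal{F})$, it remains to apply Theorem~\ref{thm:definability} to the exponentially sized input $\mathcal{F}$: its \NP\ procedure then runs in nondeterministic time $2^{O(|Q|)}$, yielding the desired \NExpTime\ upper bound for both problems and for the union $\text{AQ}\cup\text{BAQ}$.

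\textbf{Plan for the lower bound.} For \NExpTime-hardness of FO-rewritability and datalog-rewritability in $(\mathcal{ALC},\text{AQ})$ and $(\mathcal{ALC},\text{BAQ})$, I would give a polynomial-time reduction from an \NExpTime-complete tiling problem (e.g.\ tiling an exponentially large torus under a given set of tile types and horizontal/vertical constraints), following the template already used for the \NExpTime\ lower bound in Theorem~\ref{ContNEXP}. Given a tiling instance $I$, I would build an $\mathcal{ALC}$-ontology $\Omc_I$ whose models encode candidate tilings on a grid of exponential size, using role hierarchies of concepts to enforce adjacency constraints and using disjunctions to guess tile assignments. The query $q_I$ (either atomic or Boolean atomic) would be designed so that the associated coCSP template set $\mathcal{F}_{Q_I}$ constructed via Theorem~\ref{ALCUtoCSP} has the following property: $\mathcal{F}_{Q_I}$ admits a finite homomorphism-obstruction set (equivalently, coCSP$(\mathcal{F}_{Q_I})$ is FO-definable) precisely when $I$ is \emph{not} tileable. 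The same construction should simultaneously yield non-datalog-definability when $I$ is tileable, since a successful tiling provides arbitrarily large ``critical'' obstructions that cannot be captured either by a fixed FO-sentence or by a fixed datalog program.

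\textbf{The main obstacle.} The genuinely hard step is the lower-bound reduction: one must engineer $\Omc_I$ and $q_I$ so that (i)~the exponential grid coordinates can be represented inside an $\mathcal{ALC}$ ontology of polynomial size, (ii)~the induced CSP template faithfully captures the tiling, and (iii)~failure of FO-rewritability (resp.\ datalog-rewritability) is exactly equivalent to tileability of $I$. Re-using the construction behind Theorem~\ref{ContNEXP} is natural because query containment and FO/datalog-rewritability both boil down to combinatorial properties of the underlying CSP template; however, one must verify that the \emph{same} template simultaneously witnesses hardness of both rewritability problems and works inside the fragments $(\mathcal{ALC},\text{AQ})$ and $(\mathcal{ALC},\text{BAQ})$ (which corresponds, via Theorem~\ref{ALCUtoCSP}, to templates with one constant symbol all isomorphic up to its interpretation, and to templates without constants, respectively). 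Bridging the gap between the single-constant and Boolean settings, and ensuring that both FO-rewritability and datalog-rewritability are captured by the same reduction, is where the technical work concentrates, with the intricate but routine verification that the encoding preserves the required finite-duality/bounded-width dichotomy being the principal technical burden.
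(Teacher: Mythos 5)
Your upper-bound argument is exactly the paper's: translate the $(\mathcal{SHIU},\text{AQ}\cup\text{BAQ})$ query into an exponential-size (generalized, possibly one-constant) coCSP via Theorem~\ref{ALCUtoCSP} and run the \NP procedure of Theorem~\ref{thm:definability} on it, giving \NExpTime.

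Your lower-bound plan, however, has the reduction pointed the wrong way. You propose that coCSP$(\mathcal{F}_{Q_I})$ is FO-definable precisely when $I$ is \emph{not} tileable, i.e.\ tileable instances map to non-rewritable queries. That is a reduction from the \NExpTime-complete tiling problem to the \emph{complement} of FO-rewritability, which would establish co\NExpTime-hardness rather than \NExpTime-hardness; combined with the \NExpTime upper bound you just proved, it would imply \NExpTime${}={}$co\NExpTime, so no such reduction should be expected to exist. The paper's reductions (Lemmas~\ref{lem:lower} and~\ref{datalog-rewrite-lower}) have the opposite polarity: when a tiling \emph{exists}, the error concept $E$ can always be avoided on consistent instances, so the certain answers reduce to an FO-expressible consistency check on the grid encoding and the query \emph{is} FO- (hence UCQ-, hence datalog-) rewritable; when no tiling exists, the explicitly encoded $2^n\times 2^n$ grid forces $E$, and non-rewritability is witnessed by auxiliary gadgets attached to that grid.

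A second gap is your claim that the same construction simultaneously yields FO- and datalog-hardness. The natural witness for FO-non-rewritability is propagation of $E$ along arbitrarily long $S$-chains (infinitely many chain-shaped critical obstructions), but such reachability behaviour \emph{is} datalog-definable, so this gadget says nothing about datalog-rewritability; this is why the paper uses a second, modified ontology that embeds 3-colorability via $\mathcal{ALC}$ disjunction ($E\sqsubseteq P_1\sqcup P_2\sqcup P_3$ plus clash axioms), whose complement is neither FO- nor datalog-definable. So two closely related but distinct reductions are needed (both reusing the grid ontology from Theorem~\ref{ContNEXP}), and in both the correct equivalence is ``$I$ tileable iff the query is rewritable.'' (Minor point: $\mathcal{ALC}$ has no role hierarchies; the exponential grid is coordinatized with binary counters realized by concept names.)
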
   
Modulo a minor difference in the treatment of instances that are not
consistent (see Footnote~\ref{footnote:containment}), it follows
from a result in \cite{KR12-csp} that FO-rewritability is undecidable for
($\mathcal{ALCF}$,AQ). In the full version of this paper, we show how to bridge the
difference and how to modify the proof so that the result also applies
to datalog-rewritability.
\begin{theorem}
  FO-rewritability and datalog-rewritability are undecidable
  for ($\mathcal{ALCF}$,AQ) and ($\mathcal{ALCF}$,BAQ).
\end{theorem}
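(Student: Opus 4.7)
The plan is to start from the result of \cite{KR12-csp} already invoked in the text, which establishes undecidability of FO-rewritability for $(\mathcal{ALCF},\text{AQ})$ under the variant of the semantics of Footnote~\ref{footnote:containment}, where rewritability is required only over instances that are $\Omc$-consistent. Three adjustments are needed: (i)~bridging the consistency gap so that rewritability is considered over \emph{all} $\Sbf$-instances; (ii)~lifting the statement from FO-rewritability to datalog-rewritability; and (iii)~transferring the statement from AQ to BAQ.

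For (i), I would augment the reduction so that the output query $(\Sbf,\Omc,A(x))$ is $\Omc$-consistent on every $\Sbf$-instance. A standard device is to introduce a fresh unary predicate $N$ (a ``clash flag'') together with ontology axioms whose effect is: whenever the data would have triggered inconsistency in the original reduction, $N$ is forced to hold at every active-domain element, and the original ontology axioms are relativized so that they are inactive on such elements; otherwise $N$ stays empty and the ontology behaves exactly as before. The answer predicate is redefined so that, on ``flagged'' inputs, the query returns the active domain (trivially FO- and datalog-rewritable behaviour), whereas on unflagged inputs it agrees with the original certain answers. One checks that this modification preserves both FO-rewritability and datalog-rewritability: on flagged inputs the rewriting is trivial, and on unflagged inputs the rewriting coincides with a rewriting in the original semantics.

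For (ii), note that Proposition~\ref{prof:FOtoDLog} does not apply, because $\mathcal{ALCF}$ is not equality-free (functionality uses equality). Instead, the plan is to show that the queries produced by the reduction have the property that FO-rewritability and datalog-rewritability coincide on them. Concretely, I would arrange the encoding so that on YES-instances of the source undecidable problem, the query is rewritable already as a UCQ (hence both FO- and datalog-rewritable), while on NO-instances a standard unboundedness/pumping argument on the Turing machine encoding shows that no datalog program of any arity can define the query, since arbitrarily long ``computation chains'' must be inspected and certain answers cannot be derived by a fixed-arity fixed-point. This is where the main obstacle lies: one has to reinspect the precise encoding of nondeterministic Turing-machine computations in the $\mathcal{ALCF}$-ontology used in \cite{KR12-csp} and verify, or lightly refine, the reduction so that failure of FO-rewritability on a NO-instance is witnessed by failure of the stronger datalog-rewritability. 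In outline, this follows from the fact that functional roles in $\mathcal{ALCF}$ allow the ontology to force an unbounded ``linear'' structure on which any putative datalog rewriting would have to evaluate a reachability-like property that cannot be finitely axiomatized once the computation does not terminate in the prescribed way.

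For (iii), the passage from AQ to BAQ is routine: given the reduction output $(\Sbf,\Omc,A(x))$, introduce a fresh concept name $B$ with the inclusion $A\sqsubseteq B$ and a matching axiom that ensures $B$ is populated exactly when $A$ is; then $(\Sbf,\Omc',\exists x.B(x))$ is FO- (resp.\ datalog-) rewritable iff $(\Sbf,\Omc,A(x))$ is. The principal obstacle throughout is step~(ii): ensuring that the \cite{KR12-csp} construction, combined with the consistency repair of step~(i), has the stronger separating behaviour needed for datalog-rewritability, rather than only for FO-rewritability.
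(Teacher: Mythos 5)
The central step of your plan, (ii), does not work as argued. Your witness for non-datalog-rewritability is an ``unboundedness/pumping argument'' based on arbitrarily long computation chains that ``must be inspected'' and a ``reachability-like property that cannot be finitely axiomatized.'' But unbounded chains and reachability are exactly what datalog \emph{can} express: transitive-closure-style propagation along chains of any length is the paradigmatic datalog-definable, non-FO-definable query, and indeed the paper's FO-undecidability gadget ($\exists S.E \sqsubseteq E$, $E \sqcap F \sqsubseteq A$, giving arbitrarily long $S$-chains and hence no finite obstruction set) is trivially datalog-rewritable. To defeat datalog one needs a genuinely different obstruction, and this is what the paper's proof supplies: it appends to a tiling-based $\mathcal{ALCF}$ reduction a gadget ($\exists S.E \sqsubseteq E$, $\exists H.A\sqsubseteq A$, $E \sqsubseteq P_1\sqcup P_2\sqcup P_3$, $P_i\sqcap P_j\sqsubseteq A$, $P_i \sqcap \exists H.P_i \sqsubseteq A$) that makes the query compute non-$3$-colorability of an $H$-encoded graph exactly when the tiling instance has a solution; non-$3$-colorability is a coCSP known not to be datalog-definable, while in the no-tiling case the query degenerates to a consistency check expressible as a UCQ (hence in datalog). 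Your proposal contains no analogue of such an embedded non-datalog-definable problem, so the NO-instance (in your polarity) side of the equivalence is unsupported.

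Two further points. For (i), your relativization via a ``clash flag'' $N$ cannot make the query consistent on \emph{all} $\Sbf$-instances: functionality assertions in $\mathcal{ALCF}$ are global and cannot be switched off by a concept guard, so any instance containing $R(a,b),R(a,c)$ with $b\neq c$ for a functional data-schema role remains inconsistent no matter how the axioms are relativized. The paper does not attempt this; instead it keeps inconsistent instances and arranges the construction so that on them the certain answers are trivial and the inconsistency condition itself is FO- (indeed essentially UCQ-, hence datalog-) definable, which suffices for the rewritability analysis. For (iii), the generic AQ-to-BAQ transfer is not routine: certainty of $\exists x.B(x)$ does not reduce to some element being a certain answer to $A(x)$ (disjunctive reasoning can make the Boolean query certain without any certain individual answer), so rewritability of $A(x)$ and of $\exists x.B(x)$ need not coincide under your $A\sqsubseteq B$ construction. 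The paper avoids this by verifying the AQ and BAQ variants directly within one and the same reduction.
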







\section{Conclusion}

Another query language frequently used in OBDA with description logics
is conjunctive queries. The results in this paper imply that there is
a dichotomy between \PTime and \coNP for (\ALC,CQ) if and only if the
Feder-Vardi conjecture holds. We leave it open whether there is a
natural characterization of (\ALC,CQ) in terms of disjunctive
datalog. 

We mention two natural lines of future research. First, it would
be interesting to understand the data complexity and query containment
problem for (GF,UCQ) and (GNFO,UCQ). In particular, we would like to
know whether Theorems~\ref{dich-ucq} and~\ref{thm:containment1} extend
to (GF,UCQ) and (GNFO,UCQ).  As explained in
Section~\ref{sect:obdammsnp}, resolving this question for
Theorem~\ref{dich-ucq} is equivalent to clarifying the computational
status of GMSNP and MMSNP$_2$.

Another interesting topic for future work is to analyze
FO-rewritability and datalog-rewritability of ontology-mediated
queries based on UCQs (instead of AQs) as a decision problem. It
follows from our results that this is equivalent to deciding
FO-definability and datalog-definability of MMSNP formulas
(or even GMSNP formulas).


 \medskip
 \noindent {\bf Acknowledgements}.  We thank Benoit Larose
 and Liber Barto for discussions on datalog-definability of
 CSPs, and Florent Madeleine and Manuel Bodirsky for 
 discussions on MMSNP.

\smallskip

 Meghyn Bienvenu was supported by the ANR project PAGODA (ANR-12-JS02-007-01).
 Balder ten Cate was supported by NSF Grants IIS-0905276 and IIS-1217869.
 Carsten Lutz was supported by the DFG SFB/TR 8 ``Spatial Cognition''.





\appendix

\newsavebox{\stupid}
\savebox{\stupid}{\large\bf\ref{sect:alcUCQ}}
\section{Proofs for Section \usebox{\stupid}}


\subsection{Proofs for Section~\ref{subs:obdaDL}}

\medskip We remark that the direction ``from (\ALC,AQ) to \MDD'' of
Theorem~\ref{thm:ALCtoMDD} is actually a consequence of
Theorem~\ref{thm:UNFO}, which makes a strictly more general statement.
We still provide it here (and in the main paper) as a warmup for the
proof of Theorem~\ref{thm:UNFO}. As an extra bit of notation, we say
that an assignment $\pi$ of elements of an instance \Dmf to the
variables of a CQ $q$ is a \emph{match of $q$ in \Dmf} if \Dmf satisfies
$q$ under $\pi$.

\medskip
\noindent
{\bf Theorem~\ref{thm:ALCtoMDD}.}
\emph{
(\ALC,UCQ) and \MDD have the same expressive power.
}

\smallskip
\noindent
\begin{proof} (continued) We establish here the correctness of the
  translation from (\ALC,UCQ) to \MDD. Let $m$ be the arity of
  $(\Sbf,\Omc,q)$. We have to show the following.
  \\[2mm]
  {\bf Claim.} For all instances \Dmf over \Sbf and all $\abf \in
  \mn{adom}(\Dmf)^m$, we have $\abf \in \mn{cert}_{q,\Omc}(\Dmf)$ iff
  $\abf \in q_\Pi(\Dmf)$.
  \\[2mm]
  ``if''. Assume that $\abf \notin \mn{cert}_{q,\Omc}(\Dmf)$. Then
  there is a $(\mn{dom}',\Dmf') \in \mn{Mod}(\Omc)$ such that $\Dmf
  \subseteq \Dmf'$ and $\abf \notin q(\Dmf')$.  For each $b \in
  \mn{adom}(\Dmf)$, let $\mu(b)$ be the unique type realized at $b$ in
  $\Dmf'$, that is, 
\begin{align*}
\mu(b)= & \, \{q' \in \mn{cl}(\Omc,q)\mid q' \text{ is Boolean and } \Dmf' \models q'\} \, \cup \\
& \, \{C\in\mn{cl}(\Omc,q)\mid C \text{  is unary and } \Dmf'\models C[b]\}.
\end{align*}
  Let $\Dmf''$ be the instance that consists of the atoms in
  $\Dmf$ and the atom $P_{\mu(b)}(b)$ for each $b \in
  \mn{adom}(\Dmf)$. It can be verified that $\Dmf''$ is a model of
  $\Pi$. 
  In particular, it follows from the construction of $\Dmf''$ and the fact that
  $\abf \notin q(\Dmf')$ 
  that whenever a diagram $\delta(\xbf)$ has a match $\pi$
  in $\Dmf''$ and $\delta(\xbf)$ implies $q(\xbf')$, then $\pi(\xbf')
  \neq \abf$.  Since $\Dmf''$ is a model of $\Pi$ and 
  $\mn{goal}(\abf) \notin \Dmf''$, 
  we have $\abf\not\in
  q_\Pi(\Dmf)$.


\medskip

``only if''. Assume that $\abf\not\in q_\Pi(\Dmf)$, and let $\Dmf'\in
\mn{Mod}(\Pi)$ be such that $\Dmf\subseteq\Dmf'$ and $\Dmf'$ does not
contain $\mn{goal}(\abf)$. We assume w.l.o.g. that $\mn{adom}(\Dmf)=
\mn{adom}(\Dmf')$. 
Note that the first two rules of $\Pi$ ensure that
for each $a \in \mn{adom}(\Dmf)$, there is a unique type $\mu(a)$ such
that $P_{\mu(a)}(a) \in \Dmf'$.  
The second rule further ensures that for each $a \in \mn{adom}(\Dmf)$, there is
a model $(\mn{dom}_a,\Dmf_a)$ of \Omc in which $\mu(a)$ is realized at
$a$.
We may assume that these models have disjoint domains.  Let 
$(\mn{dom}'',\Dmf'')$ be the
relational structure obtained by first taking
the union of $(\mn{dom}_a,\Dmf_a)_{a \in \mn{adom}(\Dmf)}$, and then
adding all facts from $\Dmf$. To prove that $\abf \notin
\mn{cert}_{q,\Omc}(\Dmf)$, it suffices to show that
\begin{itemize}
\setlength{\itemindent}{1em}
\item[(i)]$(\mn{dom}'',\Dmf'')$ is a model
    of \Omc, and
\item[(ii)] $\abf\not\in q(\Dmf'')$.
\end{itemize}
For Point~(i), let $\mu(d)$ be the unique type realized by $d$ in
$(\mn{dom}_a,\Dmf_a)$, for all $d\in\mn{dom}_a$.  It is
not difficult to show by induction on the structural complexity of~$C$ that for
all concepts $C \in \mn{cl}(\Omc,q) \cap \mn{sub}(\Omc)$ and all $d
\in \mn{dom}''$, we have 
\begin{equation}
(\mn{dom}'',\Dmf'') \models C(d) \quad \text{ iff } C
\in \mu(d)\label{thm1-eq}
\end{equation}(refer to the proof of Theorem \ref{thm:ALCMDD1} for details).  
Since $\mn{cl}(\Omc,q)$ by definition includes $C$ and
$D$ whenever $C \sqsubseteq D$ is in \Omc, this implies Point~(i) as
desired.

It thus remains to establish Point~(ii). Assume to the contrary that
there is a disjunct $q'(\xbf')$ of $q$ such that $\abf \in
q'(\Dmf'')$, that is, there is a match $\pi$ of $q'(\xbf')$ in
$\Dmf''$ such that $\pi(\xbf')=\abf$. 
We define a diagram
$\delta(\xbf)$ based on the restriction of the original model $\Dmf'$
of $\Pi$ 
, as follows:
$\delta(\xbf)$ contains (a) all atoms $A(x)$ such that $\pi(x) \in
\mn{adom}(\Dmf')$ and $A(\pi(x)) \in \Dmf'$ (where $A$ can be either a concept name or
of the form $P_\tau$), (b) all atoms $R(x,y)$ such that $\pi(x),\pi(y)
\in \mn{adom}(\Dmf')$ and $R(\pi(x),\pi(y)) \in \Dmf'$, and (c)
all atoms $P_{\mu(d)}(z_d)$ (with $z_d$ a fresh variable) such that 
$P_{\mu(d)}(d) \in \Dmf'$ and there is some $\pi(w) \in \mn{dom}_d$.
Atoms of type (c) are used to handle the case in which 
a Boolean subquery $q''$ of $q'$ is mapped inside $\Dmc_d$, 
but the element $d$ does not itself belong to the image of $\pi$. 
We remark that the mapping $\pi$ can be straightforwardly extended to 
a match for $\delta(\xbf)$ in $\Dmf'$ by setting $\pi(z_d)=d$. 
Since $\delta(\xbf)$ is satisfied in $\Dmf'$ under $\pi$ and $\pi(\xbf')=\abf$,  
by the last rule of $\Pi$, we can obtain the desired contradiction by showing that
$\delta(\xbf)$ implies $q'(\xbf')$.

Thus, let $(\mn{dom},\Bmf) \in \mn{Mod}(\Omc)$ be a 
type-coherent structure, and
let $\tau$ be a match of $\delta(\xbf)$ in \Bmf. 
Consider the following CQs: 
\begin{itemize}
\item $q_0$ is the restriction of $q'$
to those variables that $\pi$ maps to elements of $\Dmf$; 
\item for each
 $a \in \mn{adom}(\Dmf)$ 
 such that some element of $\mn{dom}_a$ is 
in the range of $\pi$,  the CQ $q_a$ is obtained by  first 
taking the restriction of $q'$ to those
variables that $\pi$ maps to elements of $\mn{dom}_a$
and then
identifying all variables that $\pi$ maps to the same element
(preserving the names of free variables).
\end{itemize}
Clearly, each $q_a$ has at
most one free variable, which, if it exists, is mapped to $a$ by $\pi$.  

We start by showing that $q_0$ is satisfied in \Bmf under $\tau$.
For role atoms in $q_0$, this is immediate since all such atoms also belong to $\delta(\xbf)$. 
Thus, consider some concept atom $A(x) \in q_0$. 
Since $A(x) \in q'$ and $\pi$ is a match for 
$q'$ in $\Dmc''$, we have $A(\pi(x)) \in \Dmc''$. 
Then using the fact that $A\in \mn{cl}(\Omc,q) \cap \mn{sub}(\Omc)$ and Equation (1) above,
we obtain $A \in \mu(\pi(x))$. 
We know that $P_\mu(\pi(x))(\pi(x)) \in \Dmc'$, so by construction of $\delta(\xbf)$, 
we must have $P_\mu(\pi(x))(x) \in \delta(\xbf)$, hence 
$P_\mu(\pi(x))(\tau(x)) \in \Bmf$. 
Using the type-coherence of $\Bmf$ and the fact that $A \in \mu(\pi(x))$, 
we obtain $A(\tau(x)) \in \Bmf$, as desired. 

Now consider a query $q_a$. By construction, the length of $q_a$ cannot exceed the length of $q$,
and so $q_a \in \mn{cl}(\Omc,q)$. 
Since $q_a$ has a match in $\Dmf_a$ (such that, if $q_a$ has a free variable, it is
mapped to $a$) and $\Dmf_a$ realizes the type $\mu(a)$ at $a$, we must have $q_a
\in \mu(a)$. 
By construction of $\delta(\xbf)$, there is an atom $P_{\mu(a)}(x)\in \delta(\xbf)$. 
Since $\tau$ is a match for $\delta(\xbf)$ in $\Bmf$, we must have 
$P_{\mu(a)}(\tau(x)) \in \Bmf$.
Then, using the fact that  \Bmf is type-coherent, we can find a match $\tau_a$ of $q_a$ in \Bmf (such
that, if $q_a$ has a free variable, $\tau_a$ maps it to $\tau(x)$). It is not
hard to see that the matches $\tau$ and $\tau_a$ 
can be assembled into a
match $\tau'$ of $q'$ in \Bmf which coincides with $\tau$ on~$\xbf'$. 
\end{proof}

\medskip

\begin{trivlist}
\item \textbf{Theorem~\ref{thm:ALCMDD1}}
{\it ($\mathcal{ALC}$,AQ) has the same expressive power as 
unary connected simple \MDD.}\end{trivlist}

  \begin{proof} (continued)
    We establish here the correctness of the translation from 
    (\ALC,AQ) to \MDD. That is, 
    we show that, for every instance $\Dmf$ and elements
    $a\in \mn{adom}(\Dmf)$, we
    have $a\in \mn{cert}_{q,\Omc}(\Dmf)$ if and only if
    $a\in q_\Pi(\Dmf)$.
  
   \medskip

   ``if''. Assume that  $a\not\in \mn{cert}_{q,\Omc}(\Dmf)$. 
      Then there is $(\mn{dom},\Dmf')\in\mn{Mod}(\Omc)$ with 
      $\Dmf\subseteq \Dmf'$ such that $a\not\in q(\Dmf')$. 
     For
    each $b \in \mn{adom}(\Dmf)$, let $\mu(b)$ be the unique type
    realized at $b$ in $\Dmf'$. Let $\Dmf''$ be the instance that
    consists of the atoms in $\Dmf$ and an atom $P_{\mu(b)}(b)$ for each
    $b \in \mn{adom}(\Dmf)$. It can be checked that $\Dmf''$ is a model
    of $\Pi$. 
    Since $\mn{goal}(a) \notin \Dmf''$, we obtain
    $a\not\in q_\Pi(\Dmf)$. 

    \medskip

    ``only if''. Assume that $a\not\in q_\Pi(\Dmf)$ and let $\Dmf'$ be a model of
    $\Pi$ with $\Dmf\subseteq\Dmf'$ that does not contain
    $\mn{goal}(a)$. For each $b \in \mn{adom}(\Dmf)$,
    let $\mu(b)$ be a type such that $P_{\mu(b)}(b) \in \Dmf'$
    (in fact, the rules in $\Pi$ enforce that there is exactly one such $\mu(b)$). 
    Note that $A\not\in\mu(a)$. Also note that each type $\mu(b)$ must be realizable
    in some model of $\Omc$ (else, there would be a rule forbidding $P_\mu(b)$ atoms).
    Thus, for each $b \in \mn{adom}(\Dmf)$,
    we can find a model $(\mn{dom}_b,\Dmf_b)$ of \Omc 
    in which the type $\mu(b)$ is realized at
    $b$. We may assume that these
    models have disjoint domains.
    Let $(\mn{dom}'',\Dmf'')$ be obtained by first taking the
    union of $(\mn{dom}_b,\Dmf_b)_{b \in \mn{adom}(\Dmf)}$, and then
    adding all facts in $\Dmf$. By construction, $\Dmf \subseteq \Dmf''$ and $a\not\in q(\Dmf'')$.
    It remains to show that $(\mn{dom}'',\Dmf'')$ is a model
    of \Omc.

    Let $\mu(d)$ be the unique type realized by $d$ in
    $(\mn{dom}_a,\Dmf_a)$, for all $d\in\mn{dom}_a$. 
    We show the following by induction on the
    structural complexity of $C$:
    \begin{description}
    \item[($\ast$)]
      For every concept $C \in \mn{sub}(\Omc)$ and every $d \in
      \mn{dom}''$, we have $(\mn{dom}'',\Dmf'') \models C(d)$ iff $C \in
      \mu(d)$.
    \end{description}
    Note that it follows from ($\ast$) that $(\mn{dom}'',\Dmf'')$ is a model
    of \Omc.

    For the base case, first suppose that $A \in \mu(d)$, with $A$ a
    concept name and $d \in \mn{dom}_a$. 
    Then $A(d) \in \Dmf_a \subseteq \Dmf''$, 
    so $(\mn{dom}'',\Dmf'') \models A(d)$. Next suppose that       
    $(\mn{dom}'',\Dmf'') \models A(d)$. Then $A(d) \in \Dmf''$,
    so either $A(d) \in \Dmf_a$, or $d=a$ and $A(d) \in \Dmf$. In the former case,
    we immediately obtain $A \in \mu(d)$. In the latter case, note that if $A \not \in \mu(d)$,
    then $\Pi$ would contain the rule $\bot \leftarrow P_{\mu(d)}(x) \wedge A(x)$, 
    and this would yield a contradiction since $\{A(d), P_{\mu(d)}(d)\}  \subseteq \Dmf'$.
        
  The inductive step for the Boolean operators is trivial, so we
   consider only the case of the $\exists R$ constructor (the
 argument for the $\forall R$ constructor is similar). 
 Thus, let $C = \exists R. D$ and $d\in\mn{dom}_a$, and suppose that 
  $C \in \mu(d)$. Then $(\mn{dom}_a,\Dmf_a) \models
     \exists R. D(d)$, so there exists $e \in \mn{dom}_a$ such that 
     $R(d,e) \in \Dmf_a$ and $(\mn{dom}_a,\Dmf_a) \models
     D(e)$. It follows that $D \in \mu(e)$, and hence by the induction hypothesis,
     we must have $(\mn{dom}'',\Dmf'') \models D(e)$.
   Since $\Dmf_a \subseteq\Dmf''$, we have $R(d,e) \in \Dmf''$,
   which yields $(\mn{dom}'',\Dmf'')\models C(d)$.

    Conversely, suppose $(\mn{dom}'',\Dmf'')$ satisfies
    $\exists R. D(d)$, that is, there is an element $e$ such that
    $(\mn{dom}'',\Dmf'')$ satisfies $R(d,e)$ and $D(e)$.
    If $e\in\mn{dom}_a$, the claim ($\ast$) follows immediately
    from the induction hypothesis. Otherwise, we must
   have that $e\in\mn{adom}(\Dmf)$ and,
    by induction hypothesis, $D\in\mu(e)$. It follows
    that $\exists R.D \in \mu(d)$, because 
    otherwise $P_{\mu(d)}(x) \land R(x,y) \wedge P_{\mu(e)}(y)$ would be a
    non-realizable diagram,
   and $\Pi$ would derive an inconsistency.
  \end{proof}

\medskip

\noindent
{\bf Theorem~\ref{thm:otherDLUCQ}.}
{\it
\begin{enumerate}

\item ($\mathcal{ALCHIU}$,UCQ) has the same expressive power as \MDD and as
($\mathcal{ALC}$,UCQ).

\item ($\mathcal{S}$,UCQ) and ($\mathcal{ALCF}$,UCQ) are strictly more expressive than
($\mathcal{ALC}$,UCQ).

\end{enumerate}}

\medskip

To complete the proof of Theorem \ref{thm:otherDLUCQ}, we need to show that the queries from 
($\mathcal{S}$,UCQ) and ($\mathcal{ALCF}$,UCQ) indicated in the proof sketch cannot be expressed in 
($\mathcal{ALC}$,UCQ), or equivalently, \MDD.  
We start by providing a means of identifying queries which cannot be expressed in \MDD, using the notion of 
colored instances, defined as follows:

\begin{definition}\label{color-def}
Let $\Sbf$ be a schema and $\Cmc$ be a set of unary predicates
(colors) $\{C_{1},\ldots,C_{n}\}$ disjoint from $\Sbf$. 
A \emph{$\Cmc$-colored $\Sbf$-structure} is an $\Sbf\cup
\Cmc$-structure $(\dom,\Dmf)$ such that
\begin{itemize}
\item For every $d \in \dom$, $C_{i}(d) \in \Dmf$ for some $i$;
\item If $C_{i}(d) \in \Dmf$, then $C_{j}(d) \not \in \Dmf$ for every $j \neq i$.
\end{itemize}
$\Dmf$ is called a \emph{$\Cmc$-coloring} of an $\Sbf$-structure
$\Dmf'$ if $\Dmf'$ is the $\Sbf$-reduct of $\Dmf'$. 

Now for each $k>0$, fix $\Cmc_{k}$ with $|\Cmc_{k}|=k$ and $\Cmc_{k} \cap \Sbf = \emptyset$.
Then a \emph{$k$-coloring} of $\Dmf$ is simply a 
$\Cmc_{k}$-coloring of $\Dmf$. 
\end{definition}

We will also utilize the notion of forbidden pattern problems from 
\cite{DBLP:journals/siamcomp/MadelaineS07,DBLP:journals/ejc/KunN08,DBLP:journals/siamdm/BodirskyCF12}, whose definition we recall here.

\begin{definition}\label{forb-def}
Given a set $\Fmc$ of $\Cmc$-colored $\Sbf$-structures (called \emph{forbidden patterns}), 
we define $\forb(\Fmc)$ as the set of all $\Sbf$-structures $\Dmf$
such that there exists a $\Cmc$-coloring $\Dmf'$ of $\Dmf$
for which $\Fmf \not \rightarrow  \Dmf'$ 
for every $\Fmf \in \Fmc$. The \emph{forbidden patterns problem defined by $\Fmc$} 
is to decide whether a given $\Sbf$-structure belongs to $\forb(\Fmc)$. 
\end{definition}

Analogously to coMMSNP, we can define a query language coFPP
consisting of all those Boolean queries $q_{\Fmc,\Sbf}$ defined by
$$
q_{\Fmc,\Sbf}(\Dmf) = 1 \quad \text{iff} \quad (\mn{adom}(\Dmf),\Dmf) \not\in \forb(\Fmc)
$$
with $\Fmc$ a set of $\Cmc$-colored $\Sbf$-structures.
It follows directly from results in \cite{DBLP:journals/siamcomp/MadelaineS07} 
that coMMSNP and coFPP have the same expressive power. 
Combining this result with Proposition \ref{prop:mmsnpmdd} (from Section~\ref{sect:obdammsnp}),
we obtain the following:


\begin{proposition}\label{fpp-mddlog}
coFPP and Boolean \MDD have the same expressive power. 
\end{proposition}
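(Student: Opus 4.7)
The plan is to chain two equivalences that are already essentially in hand. First, by Proposition~\ref{prop:mmsnpmdd}, coMMSNP and \MDD have the same expressive power; inspecting the translations in that proof, one sees that Boolean coMMSNP sentences (those without free FO-variables) are translated to \MDD programs whose goal predicate is $0$-ary, and vice versa. Hence the restriction to the Boolean fragments yields that Boolean coMMSNP and Boolean \MDD have the same expressive power.

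Second, by the results of \cite{DBLP:journals/siamcomp/MadelaineS07} cited just before the proposition, coMMSNP and coFPP have the same expressive power, and both are inherently Boolean query languages here (coFPP is defined only for Boolean queries since $\mn{Forb}(\Fmc)$ is just a class of structures). Combining this with the restriction from the previous paragraph, we obtain coFPP $\equiv$ Boolean coMMSNP $\equiv$ Boolean \MDD, which is exactly the claim.

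The only step that requires a brief check is the restriction of the coMMSNP / \MDD equivalence to the Boolean case. In the direction from MMSNP to \MDD given in the proof of Proposition~\ref{prop:mmsnpmdd}, if the MMSNP formula $\Phi$ is a sentence ($k = 0$), then the added goal rules take the form $\mn{goal} \leftarrow \vartheta$, so the resulting \MDD program has a $0$-ary goal predicate. Conversely, if the \MDD program $\Pi$ has $k = 0$, then the produced MMSNP formula has no free FO-variables, i.e.\ is a sentence. So the Boolean-to-Boolean correspondence is immediate, and no extra argument is needed beyond citing \cite{DBLP:journals/siamcomp/MadelaineS07} and Proposition~\ref{prop:mmsnpmdd}.

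I do not expect a genuine obstacle: the proposition is a corollary obtained by composing two pre-existing equivalences, and the main task in a written-out proof is simply to record that both translations preserve ``Booleanness''. Therefore a two-line proof citing \cite{DBLP:journals/siamcomp/MadelaineS07} and Proposition~\ref{prop:mmsnpmdd} suffices.
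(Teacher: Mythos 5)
Your proposal is correct and follows exactly the paper's route: the paper obtains the proposition by combining the coMMSNP/coFPP equivalence from \cite{DBLP:journals/siamcomp/MadelaineS07} with Proposition~\ref{prop:mmsnpmdd}, restricted to the Boolean case. Your extra remark that both translations in Proposition~\ref{prop:mmsnpmdd} preserve Booleanness (sentences correspond to $0$-ary goal predicates) is a harmless, correct elaboration of the same argument.
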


We use Proposition \ref{fpp-mddlog} in the proof of the following lemma, 
whose purpose is to establish a sufficient condition for non-expressibility in \MDD. 

\begin{lemma}\label{lem:crit}
A Boolean query $Q$ over schema $\Sbf$ does not belong to \MDD if for every $m,n >0$, there exist $\Sbf$-instances $\Dmf_0$ and $\Dmf_1$ with $Q(\Dmf_0)=0$ and $Q(\Dmf_1)=1$ such that for every $m$-coloring $\Bmf_0$ of $(\adom(\Dmf_{0}),\Dmf_0)$, there exists an $m$-coloring $\Bmf_1$ of $(\adom(\Dmf_1),\Dmf_{1})$ such that 
from every substructure of $\Bmf_1$ having at most $n$ elements there is a homomorphism to $\Bmf_0$.
\end{lemma}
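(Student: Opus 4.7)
The plan is to prove the lemma by contrapositive, using Proposition~\ref{fpp-mddlog} to replace the assumption $Q \in \MDD$ by the existence of a finite set of colored forbidden patterns, and then exploiting the local homomorphism property granted by the hypothesis to derive a contradiction.

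More concretely, suppose for contradiction that $Q$ is definable in \MDD. By Proposition~\ref{fpp-mddlog}, there exists a finite set $\Fmc$ of $\Cmc$-colored $\Sbf$-structures such that, for every $\Sbf$-instance $\Dmf$, $Q(\Dmf)=1$ iff $(\adom(\Dmf),\Dmf)\notin\forb(\Fmc)$. Now set $m := |\Cmc|$ and let $n$ be the maximum size of the active domain of any pattern in $\Fmc$; both are well-defined finite numbers. Apply the hypothesis of the lemma to these values of $m$ and $n$ to obtain instances $\Dmf_0$ with $Q(\Dmf_0)=0$ and $\Dmf_1$ with $Q(\Dmf_1)=1$ satisfying the stated coloring property.

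Since $Q(\Dmf_0)=0$, $(\adom(\Dmf_0),\Dmf_0)\in\forb(\Fmc)$, so there is an $m$-coloring $\Bmf_0$ of $(\adom(\Dmf_0),\Dmf_0)$ such that $\Fmf\not\to\Bmf_0$ for every $\Fmf\in\Fmc$ (we can identify $\Cmc_m$ with $\Cmc$). By the hypothesis there exists an $m$-coloring $\Bmf_1$ of $(\adom(\Dmf_1),\Dmf_1)$ such that every induced substructure of $\Bmf_1$ on at most $n$ elements admits a homomorphism to $\Bmf_0$. The key claim is then that no $\Fmf\in\Fmc$ maps homomorphically to $\Bmf_1$: given any putative homomorphism $h:\Fmf\to\Bmf_1$, its image is contained in the induced substructure $\Bmf_1'$ of $\Bmf_1$ on at most $n$ elements (namely $h(\adom(\Fmf))$), and composing $h$ with the homomorphism $\Bmf_1'\to\Bmf_0$ guaranteed by the hypothesis would yield $\Fmf\to\Bmf_0$, contradicting the choice of $\Bmf_0$. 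Hence $(\adom(\Dmf_1),\Dmf_1)\in\forb(\Fmc)$, giving $Q(\Dmf_1)=0$ and contradicting $Q(\Dmf_1)=1$.

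The main subtleties are minor bookkeeping issues rather than genuine obstacles: one must verify that the induced substructure on $h(\adom(\Fmf))$ of $\Bmf_1$ is still an $m$-colored $\Sbf$-structure (which is immediate since colorings restrict to substructures) and that composition of homomorphisms preserves the colored structure. The crucial choice is picking $n$ to be at least the size of the largest active domain in $\Fmc$, so that the image of every forbidden pattern fits inside a substructure covered by the local homomorphism property. Once these points are in place, the argument reduces to a one-line composition of homomorphisms.
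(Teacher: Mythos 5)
Your proposal is correct and follows essentially the same route as the paper's proof: both invoke Proposition~\ref{fpp-mddlog} to obtain a finite set of colored forbidden patterns, instantiate the hypothesis with $m=|\Cmc|$ and $n$ the maximal pattern size, and derive the contradiction by composing a homomorphism from a pattern into $\Bmf_1$ with the guaranteed homomorphism from the relevant small substructure of $\Bmf_1$ into $\Bmf_0$. The only difference is cosmetic bookkeeping (you conclude $\Dmf_1\in\forb(\Fmc)$ and contradict $Q(\Dmf_1)=1$, while the paper uses $Q(\Dmf_1)=1$ to extract the pattern homomorphism first), which does not change the argument.
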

\begin{proof}
Assume for a contradiction that the conditions of the lemma hold for every $n,m>0$ but that $Q$ is equivalent to some query in \MDD. Then, by Proposition~\ref{fpp-mddlog}, there is a set $\Fmc$ of $\Cmc$-colored 
$\Sbf$-structures such that for all $\Sbf$-instances $\Dmf$, we have $Q(\Dmf)=1$ if and only if $(\adom(\Dmf),\Dmf) \not \in \forb(\Fmc)$. Let $m_{0} = |\Cmc|$, and let $n_{0}$ be the maximal number of elements in the domain of
some $\Fmf\in \mathcal{F}$. 
We can assume w.l.o.g. that $\Cmc=\Cmc_{m_{0}}$.

Take $\Sbf$-instances $\Dmf_0$ and $\Dmf_1$ satisfying the conditions of the lemma
for $m_{0},n_{0}$. 
As $Q(\Dmf_0)=0$, 
there exists a $\Cmc$-coloring $\Bmf_{0}$ of $(\adom(\Dmf_{0}), \Dmf_{0})$
such that $\Fmf \not \rightarrow \Bmf_{0}$ for every $\Fmf \in \Fmc$. 
It follows that there exists a $\Cmc$-coloring $\Bmf_{1}$ of $(\adom(\Dmf_{1}), \Dmf_{1})$
such that from every substructure 
of $\Bmf_1$
with at most $n_{0}$ elements,
there exists a homomorphism 
to $\Bmf_0$.
Since $Q(\Dmf_1)=1$, we know that there must exist 
some $\Fmf \in \Fmc$ such that $\Fmf \rightarrow \Bmf_{1}$. 
As $\Fmf$ contains at most $n_{0}$ elements, we can compose this homomorphism with the previous homomorphism to obtain a homomorphism of $\Fmf$ into $\Bmf_0$, 
contradicting the fact that $(\adom(\Dmf_{0}), \Dmf_{0})\in \forb(\Fmc)$.
\end{proof}


Using the preceding lemma, we can now prove that the queries mentioned in the proof sketch cannot be expressed in \MDD.

\begin{lemma}
There exist queries in ($\mathcal{S}$,UCQ) which do not belong to \MDD.
\end{lemma}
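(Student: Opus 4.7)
The plan is to apply Lemma~\ref{lem:crit} to the ontology-mediated query $Q=(\Sbf_{1},\Omc_{1},q_{1})$ from the proof sketch, where $\Sbf_{1}=\{R,S\}$ consists of two binary relations, the ontology $\Omc_{1}\in\mathcal{S}$ declares both $R$ and $S$ transitive, and $q_{1}=\exists xy(R(x,y)\wedge S(x,y))$. By monotonicity of UCQ semantics, $Q(\Dmf)=1$ iff there exist $a,b\in\adom(\Dmf)$ with $R^{*}(a,b)$ and $S^{*}(a,b)$, where $R^{*},S^{*}$ denote the transitive closures of the relations interpreted in $\Dmf$.

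Given $m,n>0$, I will set $L_{0}=m^{n}+n$ and take $L_{1}$ to be any common multiple of $1,2,\ldots,L_{0}$ (in particular $L_{1}>n$). Let $\Dmf_{0}$ be the ``bidirectional path'' with vertices $v_{0},\ldots,v_{L_{0}-1}$ and facts $R(v_{i},v_{i+1})$, $S(v_{i+1},v_{i})$ for $0\le i<L_{0}-1$; and $\Dmf_{1}$ the analogous cycle with vertices $u_{0},\ldots,u_{L_{1}-1}$ and facts $R(u_{i},u_{i+1\bmod L_{1}})$, $S(u_{i+1\bmod L_{1}},u_{i})$ for all $i$. Then $Q(\Dmf_{0})=0$, because in $\Dmf_{0}$ the closure $R^{*}$ is forward- and $S^{*}$ is backward-reachability along the path, sharing no common pair; and $Q(\Dmf_{1})=1$, because on the cycle both $R^{*}$ and $S^{*}$ become total on $\adom(\Dmf_{1})^{2}$, so e.g.\ $(u_{0},u_{1})$ witnesses $q_{1}$.

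For the coloring condition, fix an arbitrary $m$-coloring $\Bmf_{0}$ of $\Dmf_{0}$ and view it as a word $\sigma=\sigma_{0}\cdots\sigma_{L_{0}-1}$ over an $m$-letter alphabet. Since $L_{0}-n+1>m^{n}$, pigeonhole supplies indices $0\le i<j\le L_{0}-n$ with $\sigma_{i+t}=\sigma_{j+t}$ for all $t\in\{0,\ldots,n-1\}$. Set $p=j-i$, so $p\mid L_{1}$, and define $\Bmf_{1}$ by coloring $u_{\ell}$ with $\sigma_{i+(\ell\bmod p)}$; the colors around the cycle thus repeat the block $\sigma_{i}\cdots\sigma_{j-1}$ with period $p$.

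It remains to show that every induced substructure $\mathcal{H}$ of $\Bmf_{1}$ with at most $n$ elements admits a homomorphism to $\Bmf_{0}$. Since $L_{1}>n$, $\mathcal{H}$ decomposes as a disjoint union of ``arcs'' (maximal consecutive subpaths of the cycle, possibly single vertices), each of length at most $n$, and each arc is structurally isomorphic to a subpath of $\Dmf_{0}$. The key combinatorial claim is that any length-$n'\le n$ contiguous window of the periodic color sequence on $\Dmf_{1}$ coincides with some contiguous factor $\sigma_{i+\ell}\cdots\sigma_{i+\ell+n'-1}$ of $\sigma$: the pigeonhole equality makes $\sigma$ $p$-periodic on the interval $[i,i+p+n-1]$, so for any $\ell\in[0,p)$ and $t\in[0,n')$ the ``wrapped'' entry $\sigma_{i+((\ell+t)\bmod p)}$ agrees with the unrolled $\sigma_{i+\ell+t}$. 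Consequently each arc embeds color-preservingly as a subpath of $\Bmf_{0}$, while isolated-vertex components are handled by sending them to any $\Bmf_{0}$-vertex carrying the matching color (which exists because every color used in $\Bmf_{1}$ already occurs in $\sigma$). Combining the partial maps component-wise yields the required $\mathcal{H}\to\Bmf_{0}$. Lemma~\ref{lem:crit} then gives $Q\notin\MDD$, and Theorem~\ref{thm:ALCtoMDD} upgrades this to $Q\notin(\mathcal{ALC},\text{UCQ})$. I expect the main obstacle to be the combinatorial pumping argument sketched above; both parameter choices---$L_{0}=m^{n}+n$ to force a length-$n$ repeat, and $L_{1}$ highly divisible to accommodate any period $p\le L_{0}$---are tailored to make it succeed.
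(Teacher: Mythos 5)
Your proof is correct, and it follows the paper's overall strategy (same query $Q=(\{R,S\},\Omc_1,\exists xy(R(x,y)\wedge S(x,y)))$, same criterion Lemma~\ref{lem:crit}), but the witness instances and the combinatorial core are genuinely different. The paper takes $\Dmf_1$ to be a single $R$-path and a single $S$-path sharing the endpoints $e,f$, and $\Dmf_0$ to be a family of $k'$ $R$-paths $e^i\leadsto f^i$ together with $S$-paths $e^i\leadsto f^j$ only for $j<i$; the pigeonhole is then applied to find two identically colored $R$-path copies, and the coloring of $\Dmf_1$ is assembled from one $R$-path copy and one ``crossing'' $S$-path, with the homomorphism built by a case split on which vertex of the path is missing from the $\le n$-element substructure. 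You instead use a bidirectional path versus a bidirectional cycle whose length is a common multiple of $1,\dots,L_0$, apply the pigeonhole to the color word $\sigma$ of $\Bmf_0$ to extract a repeated length-$n$ factor at distance $p$, and color the cycle $p$-periodically; small substructures then decompose into arcs, each of which embeds color-preservingly into the path via the local $p$-periodicity of $\sigma$. Your route buys a more symmetric, linear-size $\Dmf_0$ (the paper's $\Dmf_0$ has quadratically many $S$-paths) and a uniform arc-embedding argument in place of the paper's case analysis, at the price of the divisibility trick for $L_1$ and the wrap-around periodicity claim; on that last point, when $p<n$ the equality $\sigma_{i+((\ell+t)\bmod p)}=\sigma_{i+\ell+t}$ needs the periodicity $\sigma_x=\sigma_{x+p}$ to be iterated several times, which does go through because every intermediate index $i+r+kp$ stays below $i+n-1$ (and the largest unrolled index is $i+p+n-2\le L_0-2$), but you should spell this out rather than treat it as a one-step consequence. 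Both arguments establish $Q\notin\MDD$, which is all the stated lemma requires.
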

\begin{proof}
Consider $Q=(\Sbf,\Omc,q)$ where $\Sbf=\{R,S\}$, $\Omc$ asserts transitivity of $R$ and $S$, and $q=\exists xy(R(x,y)\wedge S(x,y))$. 

We apply Lemma~\ref{lem:crit}.
Assume that $m,n>0$ are given. Let $k= n-1$ and $k'= m^{k+2} + 1$.
Define $\Dmf_1$ and $\Dmf_0$ as follows:
\begin{itemize}

\item $\Dmf_1$ has elements $e, f$ and $a_{1},\ldots, a_{k}$ and $b_{1},\ldots,b_{k}$
  and the atoms $R(e,a_{1}), R(a_{k},f)$ and $R(a_{i},a_{i+1})$ for $1\leq i < k$, and
  $S(e,a_{1}), S(a_{k},f)$ and $S(b_{i},b_{i+1})$ for $1\leq i < k$.
  
\smallskip

\item $\Dmf_0$ has elements $e^{1}, \ldots, e^{k'}$ and $f^{1},\ldots,f^{k'}$
as well as $a_{1}^{j},\ldots,a_{k}^{j}$ for $1\leq j \leq k'$ and $b_{1}^{i,j},\dots,b_{k}^{i,j}$ for
$1\leq j<i \leq k'$. The atoms of  $\Dmf_0$ consist of:
\begin{itemize}
\item  $R(e^{i},a_{1}^{i}), R(a_{k}^{i},f^{i})$, and $R(a_{j}^{i},a_{j+1}^{i})$ for $1\leq i \leq k'$
and $1\leq j < k$;
\item $S(e_{i},b_{1}^{i,j})$ and $S(b_{k}^{i,j},f_{j})$ for $1\leq j<i \leq k'$,  
and 
\newline
$S(b_{l}^{i,j},b_{l+1}^{i,j})$ for $1\leq l < k$ and $1\leq j < i \leq k'$.
\end{itemize}
\end{itemize}
It is readily checked that $Q(\Dmf_0)=0$ and $Q(\Dmf_1)=1$, as required.
Let $\Bmf_0$ be an $m$-coloring of $(\adom(\Dmf_0),\Dmf_{0})$.
Since 
$k'= m^{k+2} + 1$, we can find $i,i'$ with $i>i'$ such that
the colorings of $e^{i}, a_{1}^{i},\ldots,a_{k}^{i},f^{i}$ and 
$e^{i'},a_{1}^{i'},\ldots,a_{k}^{i'},f^{i'}$ coincide. 
Define an $m$-coloring of $(\adom(\Dmf_1),\Dmf_{1})$ by taking the
coloring of $e^{i}, a_{1}^{i},\ldots,a_{k}^{i},f^{i}$ for $e,a_{1},\ldots,a_{k},f$ 
and the coloring of $b_{1}^{i,i'},\ldots,b_{k}^{i,i'}$ for $b_{1},\ldots,b_{k}$.
Denote by $\Bmf_{1}$ the resulting colored structure.

Consider a subset $C$ of $\adom(\Bmf_1)$
having at most $n$ elements, and let $\Bmf_1'$
be 
the restriction of $\Bmf_1$ to the elements in $C$. 
We define a function $h$ from $C$ to $\adom(\Bmf_0)$ as follows:
\begin{itemize}
\item If $e\not\in C$, then let $h$ be the restriction of the following mapping to $C$:
$h(a_{l})=a_{l}^{i'}$, $h(b_{l})= b_{l}^{i,i'}$ and $h(f)=f^{i'}$;
\item If $f\not\in C$, then let $h$ be the restriction of the following mapping to $C$:
$h(a_{l})=a_{l}^{i}$, $h(b_{l})= b_{l}^{i,i'}$ and $h(e)=e^{i}$; 
\item Otherwise there exists $a_{i_{0}}
\not\in C$. Then
let $h$ be the restriction of the following mapping to $C$:
$h(e) = e^{i}$, $h(a_{l})=a_{l}^{i}$ for all $l<i_{0}$,
$h(a_{l})=a_{l}^{i'}$ for all $l>i_{0}$,
$h(b_{l})= b_{l}^{i,i'}$ for all $1 \leq l \leq k$, and $h(f)=f^{i'}$.
\end{itemize}
It is easily verified that $h$ is a homomorphism from 
$\Bmf_1'$ to $\Bmf_0$.
\end{proof}

\begin{lemma}
There exist queries in ($\mathcal{ALCF}$,UCQ) which do not belong to \MDD. 
\end{lemma}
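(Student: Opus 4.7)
My plan is to apply Lemma~\ref{lem:crit} to a simple Boolean query in $(\mathcal{ALCF},\text{UCQ})$. Let $\Sbf_{2} = \{R, A\}$ with $R$ binary and $A$ unary, let $\Omc_{2}$ be the $\mathcal{ALCF}$-ontology asserting that $R$ is functional, and put $Q = (\Sbf_{2}, \Omc_{2}, q)$ with the Boolean UCQ $q = \exists x\, A(x)$. Under the standard name assumption, distinct constants denote distinct elements, so for every $\Sbf_{2}$-instance $\Dmf$ we have $Q(\Dmf) = 1$ iff either $A(d) \in \Dmf$ for some $d$, or $\Dmf$ contains a \emph{functionality violation}, meaning there exist $a, b, c$ with $b \neq c$ and $R(a, b), R(a, c) \in \Dmf$; in the latter case $\Omc_{2} \cup \Dmf$ has no models and the certain answer is trivially $1$. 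Clearly $Q \in (\mathcal{ALCF},\text{UCQ})$, so to prove the lemma it suffices to show $Q \notin \MDD$.

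I will apply Lemma~\ref{lem:crit} using the same pair of instances for every $m, n > 0$: take $\Dmf_{0} = \{R(a_{0}, a_{0})\}$, a single reflexive point with $Q(\Dmf_{0}) = 0$ (no $A$-fact, and $a_{0}$ has a unique $R$-successor), and $\Dmf_{1} = \{R(a, b), R(a, c)\}$ on three pairwise distinct elements with $Q(\Dmf_{1}) = 1$ (the functionality of $R$ is violated at $a$). Given an arbitrary $m$-coloring $\Bmf_{0}$ of $(\mn{adom}(\Dmf_{0}), \Dmf_{0})$, let $C$ be the color assigned to $a_{0}$, and take $\Bmf_{1}$ to be the $m$-coloring of $(\mn{adom}(\Dmf_{1}), \Dmf_{1})$ that gives color $C$ to each of $a, b, c$.

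The constant map $h$ sending $a, b, c$ all to $a_{0}$ is then a color-preserving homomorphism from $\Bmf_{1}$ to $\Bmf_{0}$, because the two $R$-facts $R(a,b)$ and $R(a,c)$ of $\Bmf_{1}$ both map to $R(a_{0}, a_{0}) \in \Bmf_{0}$ and all involved elements share color $C$. Restricting $h$ to any substructure of $\Bmf_{1}$ still yields a color-preserving homomorphism to $\Bmf_{0}$, so the hypothesis of Lemma~\ref{lem:crit} is met uniformly in $m$ and $n$. We conclude that $Q \notin \MDD$, as desired.

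The argument is essentially immediate once Lemma~\ref{lem:crit} is available, and no serious technical obstacle arises. Unlike the transitive-roles construction in the preceding lemma, here the fixed three-element structure $\Dmf_{1}$ already collapses homomorphically onto the single reflexive point of $\Dmf_{0}$, so the parameters $m, n$ impose no growth on the witnessing instances. The conceptual point is that $\mathcal{ALCF}$ derives its additional power here from the fact that functionality makes the data \emph{inconsistent} in the presence of two distinct $R$-successors, a phenomenon that the monotone, equality-free formalism of \MDD cannot capture.
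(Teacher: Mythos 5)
Your proposal is correct and follows essentially the same route as the paper's own proof: apply Lemma~\ref{lem:crit} to $(\{R,A\},\{R \text{ functional}\},\exists x\,A(x))$ with a consistent one-edge instance ($Q$-value $0$) and an inconsistent two-successor instance ($Q$-value $1$), and transfer any $m$-coloring of the former to the latter so that a collapsing map is a color-preserving homomorphism. The only (immaterial) difference is cosmetic: you use a reflexive loop $\{R(a_0,a_0)\}$ and collapse all three elements to $a_0$, whereas the paper uses $\{S(a,b)\}$ and collapses $b,c$ to $b$.
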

\begin{proof}
Consider $Q=(\Sbf,\Omc,\exists x. A(x))$ where $\Sbf=\{S,A\}$ and $\Omc$ states that $S$ is functional. Set $\Dmf_1=\{S(a,b), S(a,c)\}$ and $\Dmf_0=\{S(a,b)\}$.  Note that $q_{Q}(\Dmf_1)=1$ (since no model of $\Omc$ contains $\Dmf_1$) and $q_{Q}(\Dmf_0)=0$. 
Let $\Bmf_0$ be any $m$-coloring of 
$(\adom(\Dmf_0),\Dmf_{0})$. We define an $m$-coloring $\Bmc_{1}$ of $\Dmf_1$ by assigning $a,b$ the same colors as in $\Bmf_{0}$ and giving $c$ the same color as $b$. 
Then the mapping sending $a$ to itself and $b,c$ to $b$ defines a homomorphism from $\Bmf_1$ to $\Bmf_0$
(and hence also defines a homomorphism from any substructure of $\Bmf_1$ to $\Bmf_0$).
It follows by Lemma \ref{lem:crit} that $Q$ is not definable in \MDD. 
\end{proof}

\medskip

\noindent
%
%
%
%
%
{\bf Theorem~\ref{thm:ALCMDD3}}
\emph{($\mathcal{ALCU}$,AQ) and ($\mathcal{SHIU}$,AQ) both have the
same expressive power as unary simple \MDD.}

\medskip

\noindent
\begin{proof}
We first show
\begin{itemize}
\item ($\mathcal{ALCU}$,AQ) is at least as expressive as unary simple \MDD;
\item unary simple \MDD is at least as expressive as ($\mathcal{ALCIU}$,AQ).
\end{itemize}
For Point~1, let $\Pi$ be a unary simple \MDD program. The rewriting of
each rule of $\Pi$ into an equivalent $\mathcal{ALCU}$-concept
inclusion is similar to the proof of Theorem~\ref{thm:ALCMDD1} except that now
one also has to concider non-connected bodies.
They can be translated using the universal role. For example,  
$$
P_{1}(x) \vee P_{2}(y) \leftarrow A(x) \wedge B(y)
$$
is rewritten into $A \sqcap \exists
U.(B \sqcap \neg P_{2}) \sqsubseteq P_{1}$.

Now consider Point~2. The translation from ($\mathcal{ALCIU}$,AQ) to unary simple \MDD queries is 
a modified version of the translation given in the proof of Theorem~\ref{thm:ALCMDD1} for the translation from
($\mathcal{ALC}$,AQ) to \emph{connected} unary simple \MDD queries.

Assume that $(\Sbf,\Omc,q)$ with $q=A(x)$ is given. 
As in Theorem~\ref{thm:ALCMDD1}, 
we take types to be subsets of $\mn{sub}(\Omc)$. 
%
The \MDD program $\Pi$ consists of the following rules:
  $$
  \begin{array}{r@{\;}c@{\;}l@{\!\!\!\!\!\!\!\!\!\!}l}
\displaystyle\bigvee_{\tau\subseteq \mn{sub}(\Omc)} \!\!\!\!\!\!
P_\tau(x) &\leftarrow& \mn{adom}(x)
 \\[-3.5mm]
\bot &\leftarrow&\delta(\textbf{x}) & \text{ for all non-realizable
 diagrams $\delta(\textbf{x})$} \\[0.5mm]
 && & \text{ of
  the form $P_{\tau_{1}}(x_{1}) \wedge P_{\tau_{2}}(x_{2})$,}\\[0.5mm]
 && & \text{$P_{\tau}(x) \wedge A(x)$, or }\\[0.5mm]  
 && & \text{$P_{\tau_1}(x_1)\land S(x,y)\land P_{\tau_2}(x_2)$} \\[0.5mm]
\mn{goal}(x) &\leftarrow& P_\tau(x)  & \quad \text{ for all }P_\tau \text{ with } A \in P_\tau
  \end{array}
$$
%
Note that the only difference with the rules in the proof of Theorem~\ref{thm:ALCMDD1}
is the presence of rules of the form
$$
\bot \leftarrow P_{\tau_{1}}(x_{1}) \wedge P_{\tau_{2}}(x_{2})
$$
which are not connected. $\Pi$ is still unary and simple. Equivalence of $(\Sbf,\Omc,q)$
and $q_\Pi$ can now be proved similarly to Theorem~\ref{thm:ALCMDD1}.

\medskip

It remains to be shown that ($\mathcal{ALCIU}$,AQ) and ($\mathcal{SHIU}$,AQ)
are equally expressive. But this is again folkore \cite{DBLP:phd/de/Motik2006,DBLP:conf/dlog/Simancik12}:
it is known that for every $\mathcal{SHIU}$-ontology $\Omc$, there exists 
an $\mathcal{ALCIU}$-ontology $\Omc'$ (possibly using additional concept names)
such that (i) $\Omc'\models \Omc$ and (ii) for every $\Amf\in \mn{Mod}(\Omc)$, there
exists a model $\Amf'\in \mn{Mod}(\Omc')$ with the same domain and interpreting the concept names 
of $\Omc$ in the same way as $\Amf$ and interpreting the role names as relations containing their 
interpretation in $\Amf$.
It follows that ($\mathcal{ALCIU}$,AQ) and ($\mathcal{SHIU}$,AQ) are equally expressive.
\end{proof}

We briefly discuss \emph{Boolean atomic queries} (BAQs),
i.e., queries of the form $\exists x.A(x)$, where $A$ is a unary relation symbol.
BAQs behave similarly to AQs and one can show modified versions of  
Theorems~\ref{thm:ALCMDD1} to Theorem~\ref{thm:ALCMDD3} above in which AQs are replaced by 
BAQs and unary goal predicates by $0$-ary goal-predicate, respectively.

\begin{theorem}
Theorems~\ref{thm:ALCMDD1} to Theorem~\ref{thm:ALCMDD3} hold if AQs are replaced by 
BAQs and unary goal predicates by $0$-ary goal-predicate, respectively.
\end{theorem}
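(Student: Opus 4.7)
My plan is to show that the proofs of Theorems~\ref{thm:ALCMDD1}--\ref{thm:ALCMDD3} adapt essentially verbatim to the BAQ setting, by dropping the distinguished free answer variable throughout and making the goal predicate $0$-ary. Since a BAQ $\exists x.A(x)$ is semantically the existential closure of an AQ $A(x)$, and a $0$-ary goal predicate is the existential closure of a unary goal predicate, this correspondence is the only structural change needed in the arguments.

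For the analog of Theorem~\ref{thm:ALCMDD1}, I would take a query $Q=(\Sbf,\Omc,\exists x.A(x))$ in ($\mathcal{ALC}$,BAQ) and construct exactly the \MDD program from the proof of Theorem~\ref{thm:ALCMDD1}, except that its goal rules are now $\mn{goal}\leftarrow P_\tau(x)$ for each type $\tau$ containing $A$ (with a $0$-ary \mn{goal}). Connectedness and simplicity are preserved, and correctness follows by the same type-coherent model construction: a witness that $q_{\Pi}(\Dmf)=0$ yields a pointwise extension of \Dmf in $\mn{Mod}(\Omc)$ in which no element satisfies $A$, and vice versa. For the converse direction, I would translate a Boolean connected simple \MDD program into ($\mathcal{ALC}$,BAQ) exactly as in Theorem~\ref{thm:ALCMDD1}, except that each goal rule $\mn{goal}\leftarrow \varphi(\xbf)$ is now translated into an \ALC-concept inclusion $C_\varphi \sqsubseteq B$, where $B$ is a fresh concept name and $C_\varphi$ is the \ALC-concept obtained by choosing an arbitrary variable of $\varphi$ as the root (available because $\varphi$ is connected and simple, hence has a tree shape with at most one EDB atom per binary edge). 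The BAQ is then $\exists x.B(x)$. Correctness uses the fact that, because the body of the goal rule is connected, existential quantification over any single variable (the root) captures existential quantification over all of them.

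For the analog of Theorem~\ref{thm:ALCMDD2}, the proof sketch reduces to two observations already present in the original proof: (i)~the argument just described for the BAQ-analog of Theorem~\ref{thm:ALCMDD1} in fact shows that Boolean connected simple \MDD is at least as expressive as ($\mathcal{ALCI}$,BAQ), which gives ($\mathcal{ALC}$,BAQ)~$\equiv$~($\mathcal{ALCI}$,BAQ); and (ii)~the folklore elimination of transitive roles and (complex) role inclusions in $\mathcal{SROIQ}$ preserves certain answers to atomic queries, and in particular to their existential closures, so ($\mathcal{ALCI}$,BAQ)~$\equiv$~($\mathcal{SHI}$,BAQ).  For the analog of Theorem~\ref{thm:ALCMDD3}, the BAQ-version of the ($\mathcal{ALCU}$,AQ)-to-unary-simple-\MDD translation goes through as above, and the universal role is used exactly as in the original proof to handle disconnected goal-rule bodies: a body atom $A(y)$ with $y$ not the root is translated via $\exists U.A$, so the goal rule $\mn{goal}\leftarrow \varphi$ becomes $C_\varphi \sqsubseteq B$ with the query $\exists x.B(x)$.

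I do not expect any serious obstacles; the only minor subtlety is that the ``root choice'' needed to translate a $0$-ary goal rule into an \ALC concept inclusion is no longer dictated by the position of the answer variable. This is resolved by connectedness (in the analog of Theorem~\ref{thm:ALCMDD1} and Theorem~\ref{thm:ALCMDD2}) or by use of the universal role (in the analog of Theorem~\ref{thm:ALCMDD3}), so any variable of the rule body can serve as the root without changing the existentially quantified semantics of the resulting BAQ.
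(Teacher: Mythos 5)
Your converse translations (rolling up the body of a goal rule at an arbitrarily chosen root, the universal role for disconnected bodies, and the $\mathcal{ALCI}$/$\mathcal{SHI}$ step) follow the paper's proof and are fine. The genuine gap is in the forward direction, at the precise point where you assert that a witness for $q_\Pi(\Dmf)=0$ ``yields a pointwise extension of $\Dmf$ in $\mn{Mod}(\Omc)$ in which no element satisfies $A$.'' For the program of Theorem~\ref{thm:ALCMDD1} modified only by making $\mn{goal}$ $0$-ary, this is false: the structures $(\mn{dom}_b,\Dmf_b)$ attached to the ABox individuals are merely required to realize types not containing $A$, and they may contain anonymous elements satisfying $A$ --- harmless for the AQ $A(x)$, but fatal for the BAQ $\exists x.A(x)$. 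Concretely, take $\Sbf=\{R\}$, $\Omc=\{\top\sqsubseteq\exists R.A\}$, $q=\exists x.A(x)$, $\Dmf=\{R(a,b)\}$. Every non-empty finite model of $\Omc$ contains an $A$-element, so the certain answer is ``yes''; yet the type $\tau_0$ with $\exists R.A\in\tau_0$, $A\notin\tau_0$ is realizable, and so is the diagram $P_{\tau_0}(x_1)\wedge R(x_1,x_2)\wedge P_{\tau_0}(x_2)$, so assigning $\tau_0$ to both $a$ and $b$ gives a model of your program in which $\mn{goal}$ is not derived, i.e.\ $q_\Pi(\Dmf)=0$. So the passage from AQ to BAQ is not the purely syntactic substitution of a $0$-ary goal predicate that you describe (and that a literal reading of the paper's one-paragraph sketch also suggests).

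The statement itself is true, and the missing ingredient is exactly the device the paper uses in its CSP-template construction (Theorem~\ref{ALCUtoCSP}): in the Boolean case, realizability must be relativized to $Q$-\emph{countermodels}, i.e.\ type-coherent models of $\Omc$ in which no element whatsoever satisfies $A$. With $\bot$-rules for every diagram of the three restricted shapes that is not realizable in such an $A$-free model --- in particular $\bot\leftarrow P_\tau(x)$ whenever $\tau$ has no $A$-free realization, not merely when $A\in\tau$ --- your assembly argument does go through: each attached part is $A$-free by choice, and $A$-facts on ABox individuals are excluded via the $P_\tau(x)\wedge A(x)$ diagrams; in my example all types become forbidden, forcing $q_\Pi(\Dmf)=1$ as required. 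The same adjustment is needed in your BAQ-analog of Theorem~\ref{thm:ALCMDD3}. (A small side remark: simplicity means at most one EDB atom per \emph{rule}, not one per edge of a tree-shaped body, but this only makes the rolling-up in your converse direction easier.)
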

\begin{proof}
We show the required modifications to the proof of Theorem~\ref{thm:ALCMDD1}. The remaining
results are proved by similar modifications and left to the reader. For the translation from
($\mathcal{ALC}$,BAQ) to Boolean connected simple \MDD, the only difference
to the program constructed in the proof of Theorem~\ref{thm:ALCMDD1} is that
rules of the form $\mn{goal}(x) \leftarrow P_\tau(x)$ are replaced by rules of the form 
$\mn{goal} \leftarrow P_\tau(x)$.
Conversely, for the translation from Boolean connected simple \MDD to
($\mathcal{ALC}$,BAQ), we regard $\mn{goal}$ as a concept name and take the 
BAQ $\exists x.\mn{goal}(x)$.
The rewriting of goal rules must also be accordingly modified.
For example, $\mn{goal}\leftarrow R(x,y)$ is rewritten into $\exists R.\top \sqsubseteq \mn{goal}$.
\end{proof}

\subsection{Proofs for Section~\ref{sect:FOOntologies}}

\begin{trivlist}
\item \textbf{Theorem~\ref{thm:UNFO}}
(UNFO,UCQ) has the same expressive power as \MDD.
\end{trivlist}

  \begin{proof} (continued)
    We establish here the correctness of the translation from
    (UNFO,UCQ) to \MDD. That is,
    we show that, for every instance $\Dmf$ and elements
    $\abf 
     \in \mn{adom}(\Dmf)$, we
    have $\abf\in \mn{cert}_{q,\Omc}(\Dmf)$ if and only if
    $\abf\in q_\Pi(\Dmf)$. The ``if'' direction proceeds exactly as in
    the proof of Theorem~\ref{thm:ALCtoMDD}, so here we focus on the ``only if''
    direction.
 
%

    \medskip

    ``only if''. Assume that $\abf\not\in q_\Pi(\Dmf)$ and let $\Dmf'$ be a model of
    $\Pi$ with $\Dmf\subseteq\Dmf'$ that does not contain
    $\mn{goal}(\abf)$. For each $a \in \mn{adom}(\Dmf)$, let
    $\mu(a)$ be the unique type such that $P_{\mu(a)}(a) \in \Dmf'$, and
    let $(\mn{dom}_a,\Dmf_a)$ be a model of \Omc in which $\mu(a)$ is realized at
    $a$. Note that such a model must exist because otherwise the diagram
    $P_{\mu(a)}(x)$ would be non-realizable and $\Pi$ would include a
    rule $\bot\leftarrow P_{\mu(a)}(x)$. We may assume that these
    models have disjoint domains.
    Let $(\mn{dom}'',\Dmf'')$ be obtained by first taking the
    union of $(\mn{dom}_a,\Dmf_a)_{a \in \mn{adom}(\Dmf)}$, and then
    adding to it all facts of $\Dmf$. We show that
\begin{enumerate}
\setlength{\itemindent}{1em}
\item[(i)] $(\mn{dom}'',\Dmf'')$ is a model
    of \Omc, and
\item[(ii)] $\abf\not\in q(\Dmf'')$.
\end{enumerate}

    We start with the first claim.
    Let $\mu(d)$ be the unique type realized by $d$ in
    $(\mn{dom}_a,\Dmf_a)$, for all $d\in\mn{dom}_a$.
     We show the following by induction on the structure
    of $\vp$:
    \begin{description}
    \item[($\ast$)]
      For all 
      $\vp \in \mn{cl}_k(\Omc)$ and $d \in
      \mn{dom}''$, we have that
      $\varphi \in \mu(d)$ iff $ (\mn{dom}'',\Dmf'') \models \vp[d] $. 
      \end{description}
    %
    Note that $\vp$ may be either a sentence or a formula with exactly one free variable,
    and in the former case, we interpret $\vp[d]$ as $\vp$.
    Since all types $\mu(d)$ must include the sentence $\Omc$, 
    $(\ast)$ implies (i).

    The base case  ($\vp=\top$) and the inductive step for formulas of the
    form $\neg\psi(x)$ are omitted since they are straightforward.
    Thus, let $\vp$  be a formula from $\mn{cl}_k(\Omc)$ of the form
    $\exists \ybf \bigwedge_i \psi_{i}(x,\ybf)$, and let $d\in\mn{dom}_a$.
    We may assume that $\vp$ is connected, meaning that
    the graph whose nodes are the subformulas $\psi_i$ and containing
    an edge between $\psi_i$ and $\psi_j$ if they share a 
    variable,
    is connected. This is because, if $\vp$ is not
    connected, then the claim follows immediately from the analogous
   claims for each of the connected components of $\vp$.
    We present the proof for the case where $\vp$ has answer variable~$x$
    (the argument for sentences is similar).


     First suppose that $\vp \in \mu(d)$, which means $(\mn{dom}_a,\Dmf_a) \models
     \vp[d]$. It follows that there is an assignment $\pi$ of elements of
    $\mn{dom}_a$ to the variables $x,\mathbf{y}$ such that
    $\pi(x)=d$ and for every $i$,
    $(\mn{dom}_a,\Dmf_a) \models \psi_{i}(\pi(x,\mathbf{y}))$.
     If $\psi_{i}$ is an atomic formula, then using the fact that
     $\Dmf_a\subseteq\Dmf''$, we obtain
     $(\mn{dom}'',\Dmf'') \models \psi_{i}(\pi(x,\mathbf{y}))$.
     If $\psi_i$ is not atomic, then it must have at most
     one free variable $u$. We thus have that 
     $(\mn{dom}_a,\Dmf_a) \models \psi_{i}[\pi(u)]$,
     so $\psi_i \in \mu(\pi(u))$. 
     Applying the induction
     hypothesis, we obtain $(\mn{dom}'',\Dmf'') \models \psi_{i}[\pi(u)]$. 
     It follows that $\pi$ is a satisfying assignment for $\vp$ in $(\mn{dom}'',\Dmf'')$,
     hence $(\mn{dom}'',\Dmf'')\models\vp[d]$.

    Conversely, suppose $(\mn{dom}'',\Dmf'')
    \models \vp[d]$, that is, $(\mn{dom}'',\Dmf'')$ satisfies $\bigwedge_i
    \psi_{i}(x,\mathbf{y})$ for some assignment $\pi$ of elements of
    $\mn{dom}''$ to the variables $x,\mathbf{y}$ such that
    $\pi(x)=d$.
%
%
    First assume that the image of
    $\pi$ is entirely contained in $\mn{dom}_a$.
    Using the induction hypothesis to treat the non-atomic $\psi_i$ as
    before, we then get that
    $(\mn{dom}_a,\Dmf_a)
    \models \vp[d]$, hence $\varphi \in \mu(d)$ as required.
   
    Next suppose that the image of
    $\pi$ is not wholly contained in $\mn{dom}_a$, and
    let $I$ be the set consisting of the elements of
    $\mn{adom}(\Dmf)$ that are in the range of $\pi$.
    By the connectedness assumption and the fact that
    $d \in \mn{dom}_a$, the set $I$ contains $a$.
    In what follows, we will define a
    number of formulas by syntactic operations on $\varphi$. It will
    follow from the definition of $\mn{cl}_k(\Omc)$ that each of these
    formulas again belongs to $\mn{cl}_k(\Omc)$, and hence, is
    subject to the induction hypothesis.
Let $\varphi'$ be obtained from $\varphi$ by
    identifying all variables $z,z'$ such that $\pi(z)=\pi(z') \in I$.
    We assume that the free variable $x$ retains its name, and
    use $\psi_i'$ to denote the conjunct of $\varphi'$ which corresponds to $\psi_i$.
    For each $b \in I$, let $z_b \in \ybf \cup \{x\}$ be the unique
    variable in $\vp'$ with $\pi(z_b)=b$.
    Let $\vp'_b$ be the restriction of $\vp'$ to those  $\psi'_i$
    which contain only variables $z$
    with $\pi(z)\in \mn{dom}_b$, with free variable $z_b$.
    We have
   $(\mn{dom}'',\Dmf'')
    \models \vp'_b[b]$ via the restriction of $\pi$ to the
    variables in $\vp'_b$, thus, 
    by the earlier argument (since all witnessing elements are contained
    in $\mn{dom}_b$), we have $\vp'_b \in \mu(b)$. Let $\vp'_0$ be $\vp'$, but
    with free variable $z_a$ instead of $x$.
    Note that $(\mn{dom}'',\Dmf'')
    \models \vp'_0[a]$.
   
    Consider the diagram $\delta$ obtained by taking the restriction of
    $\Dmf'$ to $I$, and then replacing each $b \in I$ with $z_b$.
    Since $\delta$ is made true by $\Dmf'$, and $\Dmf'$ is a model of $\Pi$,
    we have that $\delta$ is a realizable diagram. 
    Moreover, using the fact that $P_{\mu(b)}(z_b) \in \delta$ and $\vp'_b \in \mu(b)$ for every $b \in I$,
    one can show that the diagram $\delta$ implies the query $\vp'_0$.
    This together with the realizability of $\delta$ yields $\vp'_0 \in
    \mu(a)$, hence $(\mn{dom}_a,\Dmf_a)
    \models \vp'_0[a]$.
    Let $\pi'$ be a satisfying assignment of $\vp'_0$ in
    $\Dmf_a$ such that $\pi'(z_a)=a$. 
    We use $\pi'$ to construct a satisfying assignment $\pi''$ of
    $\varphi'$ mapping $x$ to $d$,
    such that the range of $\pi''$ lies entirely inside
    $\mn{dom}_a$. The assignment $\pi''$ is defined as follows: 
    for all $u$ with $\pi(u)$ in $\mn{dom}_a$, set
    $\pi''(u)=\pi(u)$; for all other $u$, set $\pi''(u)=\pi'(u)$. 
    To see that $\pi''$ is indeed a satisfying assignment of
    $\vp'$,  note that each conjunct of $\vp'$ contains, besides $z_a$, 
    either only variables $u$ with $\pi(u)\in\mn{dom}_a$, 
    or only variables $u$ with $\pi(u)\not\in\mn{dom}_a$. 
    The former conjuncts are satisfied because $\pi$ is a match, and the 
   latter conjuncts are satisfied because $\pi'$ is a match.
    Moreover,
    $\pi''(x)=d$. 
    Therefore, $(\mn{dom}_a,\Dmf_a) \models \vp[d]$ and
    hence  $\vp\in\mu(d)$ as required.

    \medskip
    Finally, we can show (ii)  
    in a similar way. We suppose, for the sake of contradiction, that
    $\abf\in q(\Dmf'')$ under some assignment $\pi$ to the
    existentially quantified variables in $q$. Let
    $\bbf$ be the elements of $\mn{adom}(\Dmf)$ belonging to the
     range of $\pi$ (here again we focus on the case in which
     $q$ is connected and contains at least one free variable). 
     Then, in the same way as above, we can decompose
    $q$ into unary subqueries $q_b$ that are satisfied in the different
    subinstances $\Dmf_b$ with $b\in\bbf$, and conclude
    that $q_b\in\mu(b)$ for each $b\in\bbf$. We can then show that the
    diagram obtained
    by taking all facts in $\Dmf'$ over
     elements in $\bbf$ and replacing each $b \in \bbf$ by $z_b$
     implies the query $q$. This yields the desired contradiction since $\Dmf'$ is a model of $\Pi$.
  \end{proof}

\medskip

\noindent\textbf{Proposition
 \ref{prop:gfucqmdd}}. 
 \emph{   The Boolean query 
\begin{itemize}
\item[($\dagger$)] there are $a_1,\dots,a_n,b$, for some $n \geq 2$, such that
  $A(a_1)$, $B(a_n)$, and $P(a_i,b,a_{i+1})$ for all $1 \leq i < n$
\end{itemize}
is definable in (GF,UCQ)
  and not in \MDD.}
   
\medskip  

\noindent\begin{proof}
Let $\Sbf$ consist of unary predicates $A,B$ and a ternary predicate $P$,
and let $Q$ be the $\Sbf$-query defined by $(\dagger)$. 
A (GF,UCQ) query expressing $Q$ was given in the body of the paper. 
It thus remains to show that $Q$ cannot be expressed in
 \MDD. We make use of the characterization of \MDD queries 
 in terms of $k$-colorings provided by Lemma \ref{lem:crit}. 
 
 Assume that $m,n$ are given. Let $k= m^{n}+2n$.
 Define $\Sbf$-instances $\Dmf_{1}$ and $\Dmf_{0}$ as follows:
 \begin{itemize}
 
 \item $\Dmf_{1}$ has elements $d_1,\dots,d_{k},e$ and the
   atoms $A(d_1)$, $B(d_{k})$, and $P(d_i,e,d_{i+1})$ for $1
   \leq i< k$.
 
 \smallskip
 
 \item $\Dmf_{0}$ has elements $d_1,\dots,d_{k}$, and $e_{1},\dots,e_{k}$ and
   the following atoms: $A(d_1)$, $B(d_{k})$, and
   $P(d_i,e_j,d_{i+1})$ whenever $1 \leq i < k$, $1 \leq j < k$, and
   $j \not= i$.
 
 \end{itemize}
 It is readily checked that $Q(\Dmf_{1})=1$ and $Q(\Dmf_{0})=0$, as required.
 Let $\Bmf_{0}$ be an $m$-coloring of $\Dmf_{0}$. 
 Define an $m$-coloring $\Bmf_{1}$ of $\Dmf_{1}$ by giving all elements of 
  $\{d_{1},\ldots,d_{k}\}$ exactly
 the same color as in $\Bmf_{0}$. Choose $i$ with $n<i<k-n$ in such a way that
 for every sequence $d_{l},\ldots,d_{l+n}$ with $l>1$ and $l+n<k$ 
 there exists a sequence $d_{l'},\ldots,d_{l'+n}$ with $l'>1$ and $l'+n<k$
 such that the coloring of $d_{l},\ldots,d_{l+n}$ coincides with the coloring
 of $d_{l'},\ldots,d_{l'+n}$ and $i\not\in \{l',l'+n\}$. Such an $i$ exists since
 $k\geq m^{n}+2n$. Now give $e$ the color of $e_{i}$. One can now easily
 construct, for every structure corresponding to an $n$-element subset of $\Bmf_{1}$, a homomorphism to~$\Bmf_{0}$.
 \end{proof}

\begin{trivlist}
\item\textbf{Theorem~\ref{GFUCQfrontier}}
  (GF,UCQ) and (GNFO,UCQ) have the same expressive power as \FGDD.
\end{trivlist}

\begin{proof}
  We start by describing the translation from \FGDD to (GNFO,UCQ). Let
  $\Pi$ be a \FGDD query. 
   It is easily verified that if we write out the implication symbol in a
   frontier-guarded \DD rule using conjunction and
   negation, the resulting formula belongs to GNFO.
  Thus, we can take $\Omc$ to be the set of
  all non-goal rules of $\Pi$, viewed as a GNFO sentence, and let $q$ 
  be the UCQ that consists of all bodies of rules whose conclusion contains
  the IDB relation $\mn{goal}$. It is easy to check that the ontology-mediated
  query ($\Sbf,\Omc,q)$, where $\Sbf$ is the schema consisting
  of all EDB relations, is equivalent to the \FGDD query $q_\Pi$.

   Next, we explain how to translate (GNFO, UCQ) to \FGDD. Since every
   sentence of GF is equivalent to a sentence of GNFO \cite{GNFO}, this
   also yields a translation of (GF,UCQ) to \FGDD.
%
%
   Recall that we used a specific normal form for UNFO sentences. For
   GNFO, we can use an analogous normal form. Specifically, we can
   assume that $\Omc$ is generated by the following grammar:
   \[ \varphi(\textbf{x}) ::= \top ~\mid~
   \alpha(\textbf{x})\land\neg\varphi(\textbf{x}) ~\mid~
   \exists\textbf{y}(\psi_1(\xbf,\textbf{y})\land\cdots\land\psi_n(\xbf,\textbf{y}))\]
   where each $\psi_i$ is either a relational atom or a formula
   generated by the same grammar whose free variables are among
   $x,\ybf$.  The ``guard'' $\alpha$ is an atomic formula, possibly an
   equality, containing \emph{all variables} in $\textbf{x}$.

   Let $\mn{sub}(\Omc)$ be the set of all subformulas of $\Omc$.
  Let $k$ be the maximum of the
  number of variables in $\Omc$ and the number of variables in $q$.
  For $\ell\geq 0$, we
  denote by $\mn{cl}^\ell_k(\Omc)$ the set of all formulas
  $\chi(\textbf{x})$ with $\textbf{x}=(x_1,
  \ldots, x_\ell)$ of the
  form
  \[ \exists \textbf{y}(\psi_1(\textbf{x},
  \textbf{y})\land\cdots\land\psi_n(\xbf,\textbf{y}))\] with
  $\textbf{y}=(y_1, \ldots, y_m)$, $m+\ell\leq k$, and such that each
  $\psi_i$ is either an atomic formula that uses a symbol from $q$ or is of the form
  $\chi(\textbf{z})$ for some $\chi(\textbf{z}') \in \mn{sub}(\Omc)$.  

  A \emph{guarded $\ell$-type} $\tau$ 
  is a subset of $\mn{cl}^\ell_k(\Omc)$ that contains at least one
  atomic relation (possibly equality) containing all variables $x_1,
  \ldots, x_\ell$, and also contains the sentence $\Omc$ itself. We
  denote the set of all guarded $\ell$-types by
  $\mn{type}_\ell(\Omc)$.  Note that, by definition, there are no
  guarded $\ell$-types for $\ell$ greater than the maximal arity of a
  relation from $\Sbf$.

   We now proceed the same way as we did in the case of UNFO (but
   using guarded $\ell$-types instead of unary types).
  We introduce a fresh $\ell$-ary relation symbol $P_\tau$ for
  each guarded $\ell$-type $\tau$, 
   and we denote
  by $\Sbf'$ the schema that extends $\Sbf$ with these additional
  relations. Diagrams, realizability, and implying a query are defined in
   the same way as before. The \DD program is also constructed
   in essentially the same manner, 
  except that the first rule of the program is replaced by the
  following:
\[\!\!\!\!\!\!\mathop{\bigvee_{\tau \text{ a guarded
      $\ell$-type}}}_{\text{with $R(\textbf{x})\in \tau$}} \!\!\!\!\!\!\!\!\!\!\!\!\! P_\tau(\xbf) \leftarrow R(\xbf)
~~ \text{ for each relation $R$ of arity $\ell\geq 0$.}\]

We establish the correctness of the translation. That is,
we show that, for every instance $\Dmf$ and elements $\abf =
a_1,\dots,a_n \in \mn{adom}(\Dmf)$, we have $\abf\in
\mn{cert}_{q,\Omc}(\Dmf)$ if and only if $\abf\in q_\Pi(\Dmf)$.
  
   \medskip

   ``if''. Assume that  $\abf\not\in \mn{cert}_{q,\Omc}(\Dmf)$. 
      Then there is $(\mn{dom},\Dmf')\in\mn{Mod}(\Omc)$ with 
      $\Dmf\subseteq \Dmf'$ such that $\abf\not\in q(\Dmf')$. 
     For every fact $R(\bbf)$ of $\Dmf$, let $\mu(\bbf)$ be the unique
     guarded $\ell$-type (with $\ell=|\bbf|$)
    realized at $a$ in $\Dmf'$. Let $\Dmf''$ be the instance that
    consists of the atoms in $\Dmf$ and the atom $P_{\mu(\abf)}(\bbf)$ for each
    fact $R(\bbf)$ in $\Dmf$. It can be checked that $\Dmf''$ is a model
    of $\Pi$. Since $\mn{goal}(\abf) \notin \Dmf''$,
    $\abf\not\in q_\Pi(\Dmf)$. 

    \medskip

    ``only if''. Assume that $\abf\not\in q_\Pi(\Dmf)$ and let $\Dmf'$ be a model of
    $\Pi$ with $\Dmf\subseteq\Dmf'$ that does not contain
    $\mn{goal}(\abf)$. We say that a tuple $\bbf$ is ``live'' in
    $\Dmf$ if $\Dmf$ contains $R(\bbf)$ for some relation symbol  $R$.
    For each live tuple $\bbf$ of $\Dmf$, let
    $\mu(\bbf)$ be the unique guarded $\ell$-type  
    (with $\ell=|\bbf|$) such that $P_{\mu(\bbf)}(\bbf) \in \Dmf'$, and
    let $(\mn{dom}_{\bbf},\Dmf_{\bbf})$ be a model of \Omc in which $\mu(\bbf)$ is realized at
    $\bbf$ (such a model must exist because otherwise the diagram
    $P_{\mu(\bbf)}(\xbf)$ would be non-realizable and $\Pi$ would include a
    rule $\bot\leftarrow P_{\mu(\bbf)}(\xbf)$). We may assume that for 
   distinct live tuples $\bbf$ and $\cbf$, $\mn{dom}_{\bbf}$ and
   $\mn{dom}_{\cbf}$ overlap only (possibly) on $\{\bbf\}\cap\{\cbf\}$.
    Let $(\mn{dom}'',\Dmf'')$ be obtained by first taking the
    union of $(\mn{dom}_\bbf,\Dmf_\bbf)$ for all live tuples $\bbf$ of $\Dmf$, and then
    adding to it all facts of $\Dmf$. We show that
\begin{enumerate}
\setlength{\itemindent}{1em}
\item[(i)] $(\mn{dom}'',\Dmf'')$ is a model
    of \Omc and
\item[(ii)] $\abf\not\in q(\Dmf'')$.
\end{enumerate}

For all live tuples $\dbf$ of $\Dmf_\bbf$, let $\mu(\dbf)$ be the
unique guarded $\ell$-type realized by $\dbf$ in
$(\mn{dom}_\bbf,\Dmf_\bbf)$, for all $d\in\mn{dom}_a$. Note that a
tuple $\dbf$ may be live in $\Dmf_\bbf$ for several different choices
of $\bbf$, but
%
%
then the guarded $\ell$-type realized by $\dbf$ in each such
$(\mn{dom}_\bbf,\Dmf_\bbf)$ is the same: otherwise, there must be some
atom $R(\ybf)$ that belongs to $\mu(\bbf)$, but not to $\mu(\bbf')$,
and then the diagram $P_{\mu(\bbf')}(\xbf) \wedge R(\ybf)$ is
non-realizable and thus ruled out by $\Pi$.

    Claim (i) is proved by establishing the following, by induction on the
    length of $\vp$:
    \begin{description}
    \item[($\ast$)]
      For all formulas $\vp(\xbf) \in \mn{cl}_k^\ell(\Omc)$ and for
      each live $\ell$-tuple $\dbf$ of $\Dmf''$, we have $(\mn{dom}'',\Dmf'') \models \vp[\dbf]$ iff $\vp \in
      \mu(\dbf)$.
    \end{description}
    We omit the proofs of ($\ast$) and of (ii), as they proceed similarly to the
    proofs of Theorem~\ref{thm:ALCtoMDD} and \ref{thm:UNFO}.
\end{proof}

\savebox{\stupid}{\large\bf\ref{sect:obdammsnp}}
\section{Proofs for Section~\usebox{\stupid}}

In Section~\ref{subsec:b1}, we start by establishing a central
technical result about MMSNP extended with constant symbols which
allows us to lift key results from MMSNP sentences to 
coMMSNP queries (with free variables). 
Then in Section~\ref{subsec:b2}, we provide the proofs for the results
stated in Section~\ref{sect:obdammsnp} of the main paper. 

\subsection{MMSNP with Constant Symbols}
\label{subsec:b1}

For readability, throughout this subsection, we will adopt a more 
convenient notation for schemas and structures involving constant 
symbols. If $\Sbf$ is a schema and 
$\cbf$ a (possibly empty) set of constant symbols,  then we will use 
$\Sbf_\cbf$ as a shorthand for $\Sbf \cup \cbf$. 
A $\Sbf _\cbf$-structure 
$\Bmf$ will be given by a pair 
$(\dom(\Bmf), \cdot^{\Bmf})$, where 
$\dom(\Bmf)$ 
is a finite, non-empty set and  $\cdot^{\Bmf}$ is a function assigning to each $n$-ary predicate in $\Sbf$ an $n$-ary relation $P^{\Bmf}$ over $\dom(\Bmf)$ 
and to each constant symbol $c \in \cbf$ an element $c^{\Bmf} \in \dom(\Bmf)$. 
We use $\adom(\Bmf)$ to denote the active domain of $\Bmf$, and we call 
$\Bmf$ an \emph{active domain structure} if $\dom(\Bmf)=\adom(\Bmf)$. 


\medskip

Our objective is to establish the following theorem, which lifts the containment and dichotomy results for MMSNP sentences \cite{FederVardi} to coMMSNP queries:

\begin{theorem}\label{lift-commsnp}
coMMSNP has a dichotomy between \PTime and \coNP iff the Feder-Vardi conjecture holds. Containment of coMMSNP queries is decidable. 
\end{theorem}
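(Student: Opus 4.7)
The plan is to reduce both questions about coMMSNP queries to the corresponding questions about MMSNP sentences, which are already resolved in \cite{FederVardi,kun-derand}: the Feder-Vardi conjecture is equivalent to a \PTime/\NP dichotomy for MMSNP sentences, and containment between MMSNP sentences is decidable. The first (and trivial) step is to view a coMMSNP query $q_{\Phi,\Sbf}$ with $k$ free FO-variables $y_1,\dots,y_k$ as an MMSNP sentence with $k$ constant symbols $c_1,\dots,c_k$: evaluating the query on $(\Dmf,\abf)$ amounts to evaluating the sentence on the pointed structure $(\Dmf,a_1,\dots,a_k)$, and containment of two coMMSNP queries (with the same data schema and arity) corresponds to containment of the associated pointed MMSNP sentences.

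The second step is to eliminate the constants. Given $\Phi$ with constants $c_1,\dots,c_k$, I would introduce fresh unary predicates $P_1,\dots,P_k$ and build a standard MMSNP sentence $\Phi^+$ over the expanded schema by replacing, in every clause of $\Phi$, each occurrence of $c_i$ by a fresh universally quantified FO-variable $z_i$, and inserting $P_i(z_i)$ as a new body atom guarding $z_i$. For an instance $\Dmf^+$ in which each $P_i$ is interpreted as the singleton $\{a_i\}$, the inserted guards restrict the new universal quantifiers to the intended witnesses, so $(\Dmf,\abf)\models\Phi$ iff $\Dmf^+\models\Phi^+$; the expansion $\Dmf\mapsto\Dmf^+$ and its inverse are clearly polynomial-time.

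The hard part will be dealing with inputs in which the $P_i$ are not singletons, since the singleton constraint cannot be expressed inside MMSNP itself (equality appears only in rule bodies, and non-emptiness of a unary predicate cannot be forced by monadic $\forall$-clauses). For the dichotomy, this causes no real issue: evaluating $q_{\Phi,\Sbf}$ reduces, by the construction above, to evaluating $\Phi^+$ on the subclass of inputs where each $P_i$ is a singleton, and a preprocessing step rejecting non-singleton inputs shows that any \PTime algorithm for $\Phi^+$ yields one for $q_{\Phi,\Sbf}$ while any \coNP-hardness for $q_{\Phi,\Sbf}$ transfers to $\Phi^+$; conversely, MMSNP sentences are the case $k=0$ of coMMSNP queries, so the two dichotomies are equivalent. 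For containment, one would show that containment of MMSNP sentences restricted to structures with designated singleton predicates remains decidable, by adapting the colored-forbidden-patterns normal form of \cite{FederVardi,DBLP:journals/siamcomp/MadelaineS07} to the pointed setting, or by internalising the singleton constraint through an additional monadic second-order guess of a witness element together with a suitable gadget enforcing uniqueness; lifting the Feder-Vardi machinery to cope with distinguished elements is the chief technical difficulty.
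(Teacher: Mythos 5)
There is a genuine gap, and it sits exactly where the paper's appendix invests all of its technical work. Your first step (reading a coMMSNP query with $k$ free variables as an MMSNP sentence with $k$ constant symbols, evaluated on pointed structures) matches the paper's notion of MMSNP$_c$, and your second step (trading each constant $c_i$ for a fresh unary predicate $P_i$ intended to be a singleton) is also the paper's general strategy. But your argument only gives a \emph{one-directional} reduction: from the query (equivalently, the singleton-restricted problem) to the constant-free sentence $\Phi^+$. That suffices to conclude ``$\Phi^+$ in \PTime $\Rightarrow$ query in \PTime'', but it does not let you conclude anything when $\Phi^+$ is \NP-hard, because the hardness of $\Phi^+$ may live entirely on inputs where some $P_i$ is empty or non-singleton, which correspond to no query instance. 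So the case analysis ``assume the Feder--Vardi conjecture, apply the MMSNP dichotomy to $\Phi^+$, conclude the query is \PTime or \coNP-hard'' does not go through; your remarks that hardness of the query transfers to $\Phi^+$ and that MMSNP sentences are the $k=0$ case only establish the easy direction of the equivalence. What is needed, and what the paper actually proves, is a \emph{polynomial equivalence}: for each pointed problem, a constant-free MMSNP problem whose evaluation on \emph{arbitrary} structures reduces back to the pointed one. The paper obtains this via the collapse/anti-collapse constructions on pointed structures (with the homomorphism-transfer properties (1a), (1b), (2b)) and uniform colorings in the forbidden-patterns presentation (Lemmas~\ref{eval-activedomain} and~\ref{eval-fppc}), plus a separate normalization handling the discrepancy between active-domain instances and general structures, which your sketch also does not address.

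The same gap recurs for containment: your reduction of coMMSNP query containment to containment of constant-equipped sentences is fine, but the decidability of the latter is precisely the ``chief technical difficulty'' you defer, not a known result you can cite. Feder and Vardi's containment decidability is for constant-free MMSNP over all structures, and transferring it to the pointed setting is nontrivial; the paper does it by showing (Lemmas~\ref{eval-fppc} and~\ref{fppc-cont}) that $\mn{Forb}(\Fmc_1)\subseteq\mn{Forb}(\Fmc_2)$ for pointed forbidden-pattern problems holds iff the corresponding containment holds for their constant-free (uniformly colored, anti-collapsed) counterparts, in both directions. Your alternative suggestion of guessing a witness element with a monadic SO variable cannot enforce uniqueness or non-emptiness of $P_i$ inside MMSNP (heads contain only existential monadic atoms), so internalising the singleton constraint is exactly the obstacle the collapse/anti-collapse machinery is built to overcome. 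As it stands, the proposal reproduces the paper's framing but omits the argument that makes the theorem true.
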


We prove Theorem \ref{lift-commsnp} in several steps. We consider the language
\emph{MMSNP with constant symbols} (abbreviated MMSNP$_{c}$), consisting 
of all sentences which can be obtained from MMSNP formulas by replacing 
each free variable by a constant symbol. 
The evaluation problem for MMSNP$_{c}$ 
consists in deciding whether an MMSNP$_{c}$ sentence with schema $\Sbf$ and 
constant symbols $\cbf$ holds in a given $\Sbf_\cbf$-structure 
$\Bmf$. 
The containment problem for MMSNP$_{c}$ is to decide for two MMSNP$_{c}$ sentences $\Psi_{1}, \Psi_{2}$ with relations $\Sbf$ and constants symbols $\cbf$, whether $\Bmf \models \Psi_{1}$  implies $\Bmf \models \Psi_{2} $ for all  $\Sbf_\cbf$-structures $\Bmf$.  
We use $\Psi_{1} \subseteq \Psi_{2}$ to denote containment. 

MMSNP$_{c}$ will serve as a bridge between coMMSNP queries (with free variables)
and MMSNP sentences. More precisely, we will first show that evaluation of 
coMMSNP queries is polynomially equivalent to evaluation of MMSNP$_{c}$ sentences,
and show a polynomial reduction from coMMSNP query containment to 
containment of MMSNP$_{c}$ sentences. Afterwards, we will move from MMSNP$_{c}$
sentences to MMSNP sentences, again showing polynomial equivalence of the evaluation
problems and a polynomial reduction for containment. 

To link coMMSNP queries and MMSNP$_{c}$, it will actually prove 
more convenient to suppose that MMSNP$_{c}$ sentences are interpreted over 
active domain structures, whereas to relate MMSNP$_{c}$
with plain MMSNP, we will wish to work over arbitrary structures. 
Thus, as a preliminary step, we relate the two variants of the 
MMSNP$_{c}$ evaluation and containment problems. 


\begin{lemma}\label{eval-activedomain}
The evaluation problem for MMSNP$_{c}$ restricted to active domain structures is polynomially 
equivalent to the evaluation problem for MMSNP$_{c}$ (over general structures).
\end{lemma}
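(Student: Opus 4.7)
The plan is to exhibit polynomial reductions in both directions. The direction from active-domain evaluation to general evaluation is handled by the identity map: every active-domain structure is trivially a general structure, so any decision procedure for the general problem automatically solves the active-domain restriction.

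For the reverse direction, I would use a schema-padding construction. Given an input $(\Psi, \Bmf)$ with $\Psi = \exists X_1 \cdots X_n \, \forall x_1 \cdots x_m\, \varphi$ over schema $\Sbf_{\cbf}$, I introduce a fresh unary predicate $P \notin \Sbf$ and a fresh monadic SO variable $Y$. I then form the structure $\Bmf'$ over $(\Sbf \cup \{P\})_{\cbf}$ by keeping $\dom(\Bmf') = \dom(\Bmf)$, $c^{\Bmf'} = c^{\Bmf}$ for every $c \in \cbf$, and $R^{\Bmf'} = R^{\Bmf}$ for every $R \in \Sbf$, while setting $P^{\Bmf'} = \dom(\Bmf)$. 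Every element of $\Bmf'$ now appears in some $P$-fact, so $\Bmf'$ is an active-domain structure. Finally, I form the sentence $\Psi' = \exists X_1 \cdots X_n \, \exists Y \, \forall x_1 \cdots x_m \, \forall z \, (\varphi \wedge (P(z) \to Y(z)))$, which is a valid MMSNP$_{c}$ sentence over the matching schema $(\Sbf \cup \{P\})_{\cbf}$.

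Correctness is straightforward: because $P^{\Bmf'} = \dom(\Bmf')$, the added conjunct simply forces $Y = \dom(\Bmf')$, which is satisfiable and places no constraint on the remaining SO variables; and because $\varphi$ mentions neither $P$ nor $Y$, while $\Bmf$ and $\Bmf'$ agree on all symbols in $\Sbf_{\cbf}$, an interpretation of $X_1, \ldots, X_n$ witnesses $\Bmf \models \Psi$ iff (together with $Y = \dom(\Bmf')$) it witnesses $\Bmf' \models \Psi'$. Thus $\Bmf \models \Psi$ iff $\Bmf' \models \Psi'$, and the construction is clearly linear-time. There is no real technical obstacle; the only point requiring care is verifying that $\Psi'$ stays syntactically within MMSNP$_{c}$ and that its schema matches that of $\Bmf'$, which is exactly what the auxiliary $P$ and $Y$ are designed to achieve.
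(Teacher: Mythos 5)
There is a genuine gap, and it lies in how you read ``polynomially equivalent.'' This lemma is used to prove Theorem~\ref{lift-commsnp}, i.e.\ to transfer the \PTime/\coNP \emph{dichotomy}, so it must be established sentence-by-sentence: for every MMSNP$_c$ sentence $\Phi$ interpreted over active domain structures one needs a single sentence $\Phi'$ whose evaluation over arbitrary structures is polynomially equivalent (both reductions) to the evaluation of $\Phi$ over active domain structures, and conversely. An equivalence of the two \emph{combined} problems, with the sentence as part of the input, says nothing about the data complexity of individual sentences and cannot support the dichotomy transfer. Under the required reading, your first direction -- ``the identity map'' -- only gives $\text{active-eval}(\Phi)\leq_p\text{general-eval}(\Phi)$; it does not give the converse reduction, and there is no reason it should hold for the same $\Phi$: domain elements outside the active domain, and constants interpreted by such elements, genuinely affect the truth of an MMSNP$_c$ sentence (rule bodies may consist only of second-order atoms and equalities), and you cannot make them active by adding facts since MMSNP is monotone in the input relations. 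Consequently, a hardness result for $\text{general-eval}(\Phi)$ does not transfer back, and the ``only if'' side of the dichotomy argument breaks. This is exactly where the paper's nontrivial construction sits: it builds $\Phi'$ from $\Phi$ by adding a disjunct $Y(c)$ (fresh monadic variable $Y$, fresh constant $c$) to every rule head together with rules forcing $Y$ to be empty on the active domain, so that $\Phi'$ is trivially true on any structure with $\dom\neq\adom$ and coincides with $\Phi$ on active domain structures; this yields a genuine two-way, per-sentence equivalence. Your proposal contains no substitute for this step.

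Your second direction is in better shape: padding the structure with a fresh unary $P$ interpreted as the whole domain, and extending $\Psi$ with the conjunct $P(z)\rightarrow Y(z)$ for a fresh $Y$, is correct, and since $\Psi'$ depends only on $\Psi$ it can even be read per sentence -- arguably more simply than the paper's construction, which relativizes all rule bodies to a fresh relation $\mn{Elem}$. But to get the needed equivalence you must also argue the missing direction, namely that on an \emph{arbitrary} active domain structure over the extended schema, $\Psi'$ holds iff its $\Sbf\cup\cbf$-reduct (a general structure) satisfies $\Psi$; this is easy (set $Y$ to be the whole domain), but it is not in your write-up, which only checks the structures produced by your own reduction. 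So: the second half is salvageable with a small addition, while the first half needs the paper's idea (or an equivalent one) rather than the identity map.
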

\begin{proof}
Let $\Phi=\exists X_1 \cdots \exists X_{\ell} \forall x_1 \cdots \forall x_{m} \vp$ 
be an MMSNP$_{c}$ sentence over schema $\Sbf$ and constants $\cbf$,
which is interpreted over active domain structures. 
Pick a fresh second-order variable $Y$ 
and a fresh constant $c$ not appearing in $\cbf$. 
Let $\vp'$ be the formula obtained from $\vp$ 
by replacing every conjunct $\psi_1 \rightarrow \psi_2$ of $\varphi$ by 
$\psi_1 \rightarrow (\psi_2 \vee Y(c))$. 
Let $\chi$ be the conjunction of all formulas of the form $R(x_{1}, \ldots, x_{k}) \rightarrow \neg Y(x_{i})$,
where $R$ is a $k$-ary relation in $\Sbf$, and $x_{i}$ is one of the variables among $x_{1}, \ldots, x_{k}$.
Define a new MMSNP$_{c}$ sentence 
$$\Phi' = \exists X_1 \cdots \exists X_{\ell} \exists Y \forall x_1 \cdots \forall x_{m} (\vp' \wedge \chi)$$
We claim that the evaluation problem for $\Phi$ over active domain structures is 
polynomially equivalent to the evaluation problem for $\Phi'$ over general structures.
The first reduction is trivial since for every $\Sbf_\cbf$-structure $\Amf$ 
such that $\dom(\Amf)=\adom(\Amf)$,
we have $\Amf \models \Phi$ if and only if $\Amf \models \Phi'$. To see why, notice that 
$\chi$ ensures that $Y$ is false everywhere on the active domain, so the additional disjuncts 
have no effect. For the second reduction, we remark that $\Bmf \models \Phi'$
for a general $\Sbf_\cbf$-structure $\Bmf$ if and only if $\dom(\Bmf)\neq \adom(\Bmf)$ (since
we can trivially satisfy $\Phi'$ by sending $c$ to an element outside the active domain and including that
element in $Y$)
or $\dom(\Bmf)=\adom(\Bmf)$ and $\Bmf \models \Phi$. 

It remains to be shown that every evaluation problem for MMSNP$_{c}$ over general structures
is polynomially equivalent to an evaluation problem for MMSNP$_{c}$ over active domain structures.
Let $\Phi$ be an MMSNP$_{c}$ sentence with schema $\Sbf$ and constant symbols $\cbf$,
and select a fresh monadic second order variable $Y$, a fresh input relation $\mn{Elem}$, and   
and a fresh constant symbol  $c$. We define 
$\Phi'$ as the sentence over $\Sbf \cup \{\mn{Elem}\}\cup\cbf \cup \{c\}$  
obtained from $\Phi$ by: 
\begin{itemize}
\item replacing every conjunct $\psi_1 \rightarrow \psi_2$ 
by $\psi_1 \wedge \bigwedge_{t \in T}\mn{Elem}(t) \rightarrow \psi_2 \vee Y(c)$, where 
$T$ is the set of terms appearing in  $\psi_1 \rightarrow \psi_2$,
\item adding a new conjunct $\mn{Elem}(x) \rightarrow \neg Y(x)$, and
\item adding $Y$ to the initial sequence of existentially quantified monadic second-order variables.
\end{itemize}
We claim that the evaluation problem for $\Phi$ over general structures is
polynomially equivalent to the evaluation problem for $\Phi'$ over active domain structures. 
For the first reduction, we have that for every $\Sbf_\cbf$-structure $\Bmf$,  
$\Bmf \models \Phi$ if and only if $\Bmf' \models \Phi'$, 
where $\Bmf'$ extends $\Bmf$ by setting $\mn{Elem}^{\Bmf'}= \dom(\Bmf)$ and letting $c^{\Bmf'}$
be any element in $\dom(\Bmf)$. 
For the other reduction, we have that for every $\Sbf \cup \{\mn{Elem}\}\cup\cbf \cup \{c\}$-structure
$\Bmf$ with $\dom(\Bmf)=\adom(\Bmf)$, 
$\Bmf \models \Phi'$ if and only if either $\mn{Elem}^\Bmf \neq \dom(\Bmf)$ or $\Bmf'  \models \Phi$,
where $\Bmf'$ is obtained by taking the $\Sbf \cup\cbf$-reduct of $\Bmf$.  
\end{proof}

\begin{lemma}\label{ad-cont}
Containment of MMSNP$_{c}$ over active domain structures is
polynomially reducible to containment of MMSNP$_{c}$ (over arbitrary structures). 
\end{lemma}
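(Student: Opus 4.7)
The strategy is a relativization construction that emulates active-domain semantics within general semantics. Given two MMSNP$_c$ sentences $\Psi_1, \Psi_2$ over schema $\Sbf$ and constants $\cbf = c_1, \ldots, c_m$, for each $i \in \{1,2\}$ I would introduce two fresh monadic second-order variables $A_i$ and $Z_i$ and define $\hat\Psi_i$ as the MMSNP$_c$ sentence obtained from $\Psi_i$ by (i)~adding $\exists A_i \exists Z_i$ to the leading existential block, (ii)~adding for each input relation $R \in \Sbf$ of arity $k$ and each position $j \in \{1,\dots,k\}$ the conjuncts $R(x_1,\dots,x_k) \to A_i(x_j)$ and $R(x_1,\dots,x_k) \wedge Z_i(x_j) \to \bot$, and (iii)~replacing each original conjunct $\psi \to \beta$ by $\psi \wedge \bigwedge_{t \in \mathrm{terms}(\psi \cup \beta)} A_i(t) \to \beta \vee Z_i(c_1) \vee \cdots \vee Z_i(c_m)$. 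The conjuncts in~(ii) force $A_i \supseteq \adom(\Bmf)$ and $Z_i \cap \adom(\Bmf) = \emptyset$; clause~(iii) relativizes the original rules to $A_i$ while adding an ``escape hatch'' through $Z_i$ that fires exactly when some constant is interpreted outside the active domain.

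I would then prove: $\Psi_1 \subseteq \Psi_2$ on active-domain structures iff $\hat\Psi_1 \subseteq \hat\Psi_2$ on arbitrary structures. The analysis proceeds by cases on the interpretation of constants in $\Bmf$. If $c_j^\Bmf \notin \adom(\Bmf)$ for some $j$, then setting $Z_i = \{c_j^\Bmf\}$ (with arbitrary choices for the other witnesses) makes the escape disjunct $Z_i(c_j)$ true and $\hat\Psi_i$ trivially satisfied, so both $\hat\Psi_1$ and $\hat\Psi_2$ hold on $\Bmf$. Otherwise every $c_j^\Bmf$ lies in $\adom(\Bmf)$, the disjuncts $Z_i(c_j)$ are all forced false by~(ii), and the existentially chosen $A_i$ is optimally taken to be its minimum $\adom(\Bmf)$; the relativized rules then hold iff the active-domain substructure $\Bmf|_{\adom(\Bmf)}$ satisfies $\Psi_i$. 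Since this substructure is a valid active-domain $\Sbf_\cbf$-structure precisely because every $c_j^\Bmf$ lies in $\adom(\Bmf)$, and since as $\Bmf$ ranges over all such structures its active-domain substructure ranges over all active-domain $\Sbf_\cbf$-structures, the general containment $\hat\Psi_1 \subseteq \hat\Psi_2$ is equivalent to the active-domain containment $\Psi_1 \subseteq \Psi_2$.

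The main subtlety is verifying that $A_i = \adom(\Bmf)$ is indeed optimal for the existential: this follows from elementary monotonicity, since enlarging $A_i$ makes the relativized universal quantifier range over more tuples, so any assignment satisfying the rules for a larger $A_i$ also satisfies them for $A_i = \adom(\Bmf)$. Care is also needed with corner cases, such as $\cbf = \emptyset$ (where the escape disjunct is empty and only the second case ever applies, which gives exactly the desired equivalence vacuously). The construction is clearly polynomial in the combined size of $\Psi_1, \Psi_2, \Sbf$ and $\cbf$, as required.
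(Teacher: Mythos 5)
Your argument is sound and establishes the lemma, but it takes a genuinely different route from the paper. The paper simply reuses the construction from Lemma~\ref{eval-activedomain}: one fresh monadic variable $Y$ forced disjoint from the active domain, one \emph{fresh} constant $c$, and the single escape disjunct $Y(c)$ added to every rule head; rule bodies are not relativized, so on structures where the escape is unavailable the transformed sentence evaluates the original one over the full domain, and the equivalence with active-domain containment is left as ``readily verified''. You instead attach the escape disjuncts $Z_i(c_1)\vee\cdots\vee Z_i(c_m)$ to the \emph{given} constants and, in addition, relativize every rule to a fresh monadic $A_i$ forced to contain the active domain. This buys two things: in the good case (all $c_j^{\Bmf}$ active) your monotonicity-in-$A_i$ observation reduces satisfaction of $\hat\Psi_i$ exactly to satisfaction of $\Psi_i$ in the active-domain substructure, so you never need the tacit fact that full-domain and active-domain satisfaction agree on such structures; and trivialization fires precisely when some given constant of $\cbf$ is interpreted outside $\adom(\Bmf)$, so structures with isolated constants are handled explicitly rather than implicitly. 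The cost is only one extra variable per sentence and longer rules, still clearly polynomial. One caveat: the corner case you dismiss as ``vacuous'' deserves a line. If $\cbf=\emptyset$ and $\Bmf$ has a nonempty domain but no facts, then $\adom(\Bmf)=\emptyset$, the active-domain restriction is not a structure, and $\hat\Psi_i$ is satisfied by taking $A_i=\emptyset$ unless $\Psi_i$ contains a rule with no terms at all (empty body and empty head); for such a degenerate $\Psi_2$ the stated equivalence can fail on fact-free structures. A one-line normalization (replace any termless rule by the equivalent $x=x\rightarrow\bot$) repairs this, so the issue is cosmetic, but it should be stated rather than waved away.
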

\begin{proof}
Consider MMSNP$_{c}$ sentences $\Phi_{1}, \Phi_{2}$ with schema $\Sbf$ and 
constants $\cbf$. We apply the construction from the first part of the proof of Lemma \ref{eval-activedomain}
to obtain MMNSP$_{c}$ sentences $\Phi_{1}'$ and $\Phi_{2}'$ with the property that 
$\Bmf \models \Phi_{i}'$ for a general $\Sbf_\cbf$-structure $\Bmf$ if and only if 
$\dom(\Bmf)\neq \adom(\Bmf)$ 
or $\dom(\Bmf)=\adom(\Bmf)$ and $\Bmf \models \Phi_{i}$ (for $i \in \{1,2\}$). 
It is readily verified that $\Phi_{1} \subseteq \Phi_{2}$ for the class of active domain 
structures  if and only if 
$\Phi_{1}' \subseteq \Phi_{2}'$.
\end{proof}

By the preceding lemmas, we can choose to work with active domain structures. 
It is then straightforward to relate the evaluation and containment problems for coMMSNP
queries with the corresponding problems for
MMSNP$_{c}$ sentences.

\begin{lemma}\label{meghynlem2}
The evaluation problem for coMMSNP is polynomially equivalent to the evaluation problem for MMSNP$_{c}$. 
Containment of coMMSNP queries is polynomially reducible to containment of MMSNP$_{c}$ sentences.
\end{lemma}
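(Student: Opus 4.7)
(sketch)
The plan is to exploit a direct syntactic correspondence between free first-order variables in an MMSNP formula and constant symbols in an MMSNP$_c$ sentence, and then reduce the containment statement to the active-domain variant of MMSNP$_c$-containment already handled by Lemma~\ref{ad-cont}.

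For the evaluation problem, given a coMMSNP query $q_{\Phi,\Sbf}$ defined by an MMSNP formula $\Phi(y_1,\dots,y_k)$ together with an $\Sbf$-instance $\Dmf$ and a tuple $\abf=(a_1,\dots,a_k)\in\adom(\Dmf)^k$, I would form the MMSNP$_c$ sentence $\Phi^c$ obtained by replacing each free variable $y_i$ by a fresh constant symbol $c_i$, and expand $(\adom(\Dmf),\Dmf)$ into an $\Sbf_\cbf$-structure $\Bmf$ by setting $c_i^{\Bmf}=a_i$. By the semantics of coMMSNP queries over the active domain, $\abf\in q_{\Phi,\Sbf}(\Dmf)$ iff $\Bmf\not\models\Phi^c$, which gives a polynomial reduction to the complement of the evaluation problem for MMSNP$_c$ over active-domain structures, and hence (by Lemma~\ref{eval-activedomain}) to the complement of MMSNP$_c$ over arbitrary structures. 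The reverse direction is equally immediate: given an MMSNP$_c$ sentence $\Psi$ with constants $c_1,\dots,c_k$ interpreted over an active-domain structure $\Bmf$, view $\Psi$ as $\Phi^c$ for the MMSNP formula $\Phi(y_1,\dots,y_k)$ obtained by replacing each $c_i$ by the free variable $y_i$, take $\Dmf$ to be the $\Sbf$-reduct of $\Bmf$ and $\abf=(c_1^\Bmf,\dots,c_k^\Bmf)$, and observe that $\Bmf\models\Psi$ iff $\abf\notin q_{\Phi,\Sbf}(\Dmf)$; Lemma~\ref{eval-activedomain} then transfers this to MMSNP$_c$ over arbitrary structures. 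Since evaluation problems are equivalent under polynomial-time complementation in the sense used here, the two problems are polynomially equivalent.

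For containment, given coMMSNP queries $q_{\Phi_1,\Sbf}$ and $q_{\Phi_2,\Sbf}$ with the same sequence of free variables $y_1,\dots,y_k$, the inclusion $q_{\Phi_1,\Sbf}\subseteq q_{\Phi_2,\Sbf}$ unfolds, by contraposition on the definition of coMMSNP, to the statement that for every $\Sbf$-instance $\Dmf$ and every tuple $\abf\in\adom(\Dmf)^k$, $(\adom(\Dmf),\Dmf)\models\Phi_2[\abf]$ implies $(\adom(\Dmf),\Dmf)\models\Phi_1[\abf]$. Replacing the $y_i$ by fresh constants $c_i$, this is exactly the MMSNP$_c$-containment $\Phi_2^c\subseteq\Phi_1^c$ interpreted over active-domain $\Sbf_\cbf$-structures. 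Invoking Lemma~\ref{ad-cont} converts this into an MMSNP$_c$-containment problem over arbitrary structures, yielding the desired polynomial reduction.

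The only subtlety is the quantification over the active domain in the semantics of coMMSNP, which is why the reductions first land in the active-domain variant and then use Lemmas~\ref{eval-activedomain} and~\ref{ad-cont} to reach general structures. No new construction on formulas is required beyond the constants-for-variables substitution, so I expect the main care to be in keeping the polarities straight (queries are defined by the complement of MMSNP formulas, so a containment inclusion of queries corresponds to the reversed inclusion of the associated MMSNP$_c$ sentences).
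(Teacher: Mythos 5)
Your proposal is correct and matches the paper's intended argument: the paper itself gives no detailed proof, noting only that after Lemmas~\ref{eval-activedomain} and~\ref{ad-cont} one may work over active-domain structures, where the constants-for-free-variables substitution makes the correspondence immediate. You also handle the polarity flip (coMMSNP queries being complements of MMSNP formulas, so query containment corresponds to the reversed containment of the associated MMSNP$_c$ sentences) exactly as intended.
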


The next step, and the core technical contribution of this subsection, is 
to relate the evaluation and containment of MMSNP$_{c}$ sentences 
to the analogous problems for MMSNP sentences. To simplify the 
technical constructions, it will prove convenient to work with forbidden pattern problems \cite{DBLP:journals/siamcomp/MadelaineS07,DBLP:journals/ejc/KunN08,DBLP:journals/siamdm/BodirskyCF12}.

We extend forbidden patterns problems to handle constant symbols, by
simply substituting $\Sbf\cup\cbf$-structures for 
$\Sbf$-structures in Definitions \ref{color-def} and \ref{forb-def}.
We denote by FPP$_{c}$ the class of forbidden patterns problems thus defined, 
and use FPP to refer to the restriction to structures without constant symbols.
Note that both FPP$_{c}$ and FPP define 
problems over structures, not instances 
(although this distinction is irrelevant in the absence of constant symbols).

It was shown in \cite{DBLP:journals/siamcomp/MadelaineS07} that MMSNP sentences and 
FPP have the same expressive power. This result can be 
straightforwardly extended to handle constant symbols:

\begin{lemma}
MMSNP$_{c}$ and FPP$_{c}$ have the same expressive power (over structures with constant symbols). 
\end{lemma}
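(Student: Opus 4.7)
The proof plan is to adapt the known Madelaine-Stewart equivalence between MMSNP and FPP \cite{DBLP:journals/siamcomp/MadelaineS07} to the setting with constant symbols. Constants behave uniformly in both formalisms: in MMSNP$_c$ they appear as terms in atoms, while in FPP$_c$ they are preserved by the homomorphisms that determine when a forbidden pattern embeds into the input. Hence, both directions of the standard translation can be carried over with only routine bookkeeping modifications, provided one is careful about how constants interact with equality and with one another.

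For the FPP$_c$ to MMSNP$_c$ direction, given a set $\mathcal{F}$ of $\mathcal{C}$-colored $\Sbf_\cbf$-structures, I would construct the MMSNP$_c$ sentence
$$\Phi_{\mathcal{F}} \;=\; \exists (X_C)_{C \in \mathcal{C}}\, \forall \bar{x}\, (\chi_{\mn{part}} \wedge \chi_{\mn{forb}}),$$
where $\chi_{\mn{part}}$ asserts that the SO-variables $X_C$ form a partition of the domain, and $\chi_{\mn{forb}}$ is a conjunction over all $\Fmf \in \mathcal{F}$ of an implication of the form $\mn{diag}(\Fmf) \to \bot$. Here $\mn{diag}(\Fmf)$ is read off from the atomic diagram of $\Fmf$, including the color atoms $X_C(a)$ for each colored element $a$, by replacing each element that interprets some constant symbol $c$ by the term $c$, and each remaining element by a fresh universally quantified FO-variable. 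One then checks that satisfying assignments of the $X_C$ in $\Bmf$ correspond precisely to $\mathcal{C}$-colorings witnessing $\Bmf \in \mn{Forb}(\mathcal{F})$.

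For the MMSNP$_c$ to FPP$_c$ direction, given $\Phi = \exists X_1 \cdots \exists X_\ell \forall \bar{x}\, \varphi$, I would first apply the standard normal-form transformation that pushes all $X_j$-atoms into rule bodies. Concretely, introduce complementary colors $\bar{X}_i$ together with clauses forcing each $\{X_i, \bar{X}_i\}$ to partition the domain, and rewrite each conjunct $\alpha_1 \wedge \cdots \wedge \alpha_n \to X_j(\bar{y})$ of $\varphi$ as $\alpha_1 \wedge \cdots \wedge \alpha_n \wedge \bar{X}_j(\bar{y}) \to \bot$. The colors of the resulting FPP$_c$ instance are the atoms of the Boolean algebra generated by $X_1, \ldots, X_\ell$ (so one ensures at most one color per element), and each normal-form conjunct with head $\bot$ yields a single forbidden pattern: its universe consists of the FO-variables and constant symbols appearing in the conjunct (modulo any equalities in the body), its relational and constant content is read off from the relational $\alpha_i$, and its coloring is determined by the $X_j$- and $\bar{X}_j$-atoms.

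The main obstacle I anticipate is the bookkeeping around equalities and repeated constants. In MMSNP$_c$ the body may contain equalities $x = c$ or $c = c'$, and the input structure may interpret two distinct constants as the same element; each such possibility forces an identification within the forbidden pattern and among its designated constants. The standard way to handle this is to preprocess each conjunct by splitting it according to every possible equality type on its terms (FO-variables and constants alike), producing one forbidden pattern per equality type, and symmetrically to include, in the FPP$_c$-to-MMSNP$_c$ direction, additional conjuncts handling coincident constant interpretations. Once this case analysis is in place, correctness in both directions reduces to the routine verification that FPP$_c$-colorings of $\Bmf$ correspond bijectively to satisfying assignments of the existentially quantified monadic SO-variables in $\Phi$, and that homomorphic images of forbidden patterns correspond to witnesses violating the universally quantified clauses.
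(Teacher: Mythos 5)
Your proposal takes essentially the same route as the paper, which states this lemma without proof and merely asserts that the Madelaine--Stewart equivalence between MMSNP and FPP extends straightforwardly to constant symbols; your sketch is a correct elaboration of that extension, treating constants as designated terms on the MMSNP$_c$ side and as distinguished, homomorphism-preserved elements on the FPP$_c$ side. One minor caveat: since the paper's $\mathcal{C}$-colored structures assign exactly one color to every element, a normal-form clause whose variables carry only partial color information yields one forbidden pattern per completion to a full color (and per equality type), rather than the single pattern you mention, but this is the same routine case-splitting you already invoke for equalities and coincident constants.
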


By the previous lemma and the fact that FPP is a subset of FPP$_{c}$, 
to show polynomial equivalence of MMSNP$_{c}$ and MMSNP 
it suffices to show that every problem in FPP$_{c}$ 
is polynomially equivalent to some problem in FPP.  
To formulate the reductions, we will require some additional
notation and terminology, which we introduce next.

Let $\Sbf$ be a schema, $\cbf=\{c_{1}, \ldots,c_{n}\}$ be a set of constant symbols, and 
$P=\{P_{1},\ldots,P_{n}\}$ be a set of unary predicates which do not appear in $\Sbf$. 
We will 
abbreviate $\Sbf \cup P$ to $\Sbf_{P}$.

We define operations which allow us to transform
$\Sbf_{P}$-structures into $\Sbf_{\cbf}$-structures, and vice-versa. 
With every $\Sbf_{P}$-structure $\Bmf$ with $P_{i}^{\Bmf}\not=\emptyset$ 
for all $1\leq i\leq n$, we associate the $\Sbf_{\cbf}$-structure 
$\Bmf^{c}$, called the \emph{collapse} of $\Bmf$, by factorizing through the $P_{i}^{\Bmf}$.
Specifically, let $\sim$ be the smallest equivalence relation such that whenever 
$d,d'\in P_{i}^{\Bmf}$ for some $i$, then $d \sim d'$. 
Then $\dom(\Bmf^{c})$ is $\{ [d] \mid d\in \Delta^{\Bmf}\}$, 
where $[d]$ denotes the equivalence class of $d$ w.r.t.~$\sim$. 
For convenience, when $[d]=\{d\}$, we will use $d$ in place of $[d]$. 
Set $c_{i}^{\Bmf^{c}}= [d]$, for some $d\in P_{i}^{\Bmf}$, and define $R^{\Bmf^{c}}$
as follows: $([d],[e]) \in R^{\Bmf^{c}}$ if and only if there exist $d' \in [d]$ and 
$e' \in [e]$ such that $(d',e') \in R^\Bmf$. 
Note that the mapping
$g: d\mapsto [d]$ defines an $\Sbf$-homomorphism from $\Bmf$ to $\Bmf^{c}$, 
which we call the \emph{canonical homomorphism}.

For a $\Sbf_{\cbf}$-structure $\Amf$, we define the $\Sbf_{P}$-structure $\hat{\Amf}$
which interprets the predicates in $\Sbf$ in the same way as $\Amf$ and interprets the predicates in $P$ as follows: $P_{i}^{\hat{\Amf}}=\{c_{i}^{\Amf}\}$. 
With every $\Sbf_{\cbf}$-structure $\Bmf$, one can associate a finite set of finite
$\Sbf_{P}$-structures, $\Bmf^{ac}$, called its \emph{anti-collapse}, 
such that the following
two properties hold:
\begin{enumerate}
\item for all $\Sbf_{P}$-structures $\Amf$:\\  $\Bmf \rightarrow \Amf^{c}$ (and $\Amf^{c}$ is defined)
if and only if there exists $\Bmf'\in \Bmf^{ac}$ such that $\Bmf' \rightarrow \Amf$.
\item for all $\Sbf_{\cbf}$-structures $\Amf$:\\ $\Bmf \rightarrow \Amf$ iff there exists
$\Bmf' \in \Bmf^{ac}$ such that $\Bmf' \rightarrow \hat{\Amf}$.
\end{enumerate}

To employ the anti-collapse $\Bmf^{ac}$ for the reduction of FPP$_{c}$ to FPP, 
we require some properties from the construction of $\Bmf^{ac}$ (cf. pages 43-45 of \cite{DBLP:journals/tods/AlexeCKT11}). The domain $\Delta^{\Bmf'}$ of each 
$\Bmf'\in \Bmf^{ac}$ consists of $\Delta^{\Bmf}\setminus \{c_{1}^{\Bmf},\ldots,c_{n}^{\Bmf}\}$ (the \emph{unnamed} individuals in $\Bmf$)
together with the union $\bigcup_{1\leq i \leq n}D_{i}$ of fresh non-empty 
(but possibly not mutually disjoint)
sets $D_{1},\ldots,D_{n}$ with $P_{i}^{\Bmf'}= D_{i}$. Moreover, in
Point~1 and Point~2 we have the following more detailed statement:
\begin{description}
\item[(1a)] if $h: \Bmf \rightarrow \Amf^{c}$ (and $\Amf^{c}$ is defined),
and $g: \Amf \rightarrow \Amf^{c}$ is the canonical homomorphism, 
then $h': \Bmf' \rightarrow \Amf$
can be chosen in such a way that $h'(d)\in g^{-1}(h(d))$ for all unnamed individuals $d$ in $\Bmf$
and $h'(d)\in g^{-1}(c_{i}^{\Amf^{c}})$ for all $d\in D_{i}$.
\item[(1b)] if $h: \Bmf' \rightarrow \Amf$, then $h': \Bmf \rightarrow \Amf^{c}$ can be defined
such that $h'(c_{i}^{\Bmf})= c_{i}^{\Amf^{c}}$ and $h'(d)=g(h(d))$ if $d$ is not named.
\item[(2b)] if $h:\Bmf' \rightarrow \hat{\Amf}$, then $h': \Bmf\rightarrow \Amf$
can be constructed in such a way that $h'(d)=h(d)$ for all unnamed $d$. 
\end{description}

In what follows, we will be interested in 
colorings of $\Sbf_{P}$-structures which 
respects the intuitive meaning of the predicates $P_{i}$. 
A $\Cmc$-coloring $\Bmf[\Cmc]$ of a $\Sbf_{P}$-structure $\Bmf$
is said to be a \emph{uniform $\Cmc$-coloring} of $\Bmf$ if for every $1\leq i\leq n$, 
$d,d'\in P_{i}^{\Bmf}$ implies that $d$ and $d'$ have the same color in $\Bmf[\Cmc]$. 
Given a set $\Gmc$ of $\Cmc$-colored $\Sbf_{P}$-structures,
we define $\mn{Forb}^{un}(\mathcal{G})$ as the set of $\Sbf_{P}$-structures $\Amf$
such that there exists a uniform $\Cmc$-coloring $\Amf[\Cmc]$ of $\Amf$
such that there exists no $\Gmf \in \mathcal{G}$ with $\Gmf \rightarrow \Amf[\Cmc]$. 



%
%

We are now ready to present the reduction from FPP$_{c}$ to FPP. 
Suppose that we are given a FPP$_{c}$ problem defined by the set 
$\mathcal{F}$ of $\Cmc$-colored $\Sbf_{\cbf}$-structures 
(where $\Cmc=\{T_{1},\ldots,T_{k}\}$). We construct a set 
$\Gmc$ which contains all uniform $\Cmc$-colored 
$\Sbf_{P}$-structures $\Gmf$ such that
\begin{itemize}
\item There exists $\Fmf\in \mathcal{F}$ and a member $\Fmf'$ of the anti-collapse of the $\Sbf_{\cbf}$-reduct of 
$\Fmf$ such that $\Gmf$ is the $\Cmc$-coloring of $\Fmf'$ defined as follows:\\[1mm]
($\dagger$) $d\in T_{j}^{\Gmf}$ iff $d$ is unnamed in $\Fmf$ and $d\in T_{j}^{\Fmf}$
or there exists $1\leq i \leq n$ such that $d\in D_{i}$ and $c_{i}^{\Fmf}\in T_{j}^{\Fmf}$.\\[1mm]
(Note that we require that in the resulting structure $T_{j}^{\Gmf}\cap T_{j'}^{\Gmf}=\emptyset$
for $j\not=j'$, otherwise $\Gmf$ is not in $\Gmc$).
\end{itemize}
It is easy to see that this construction guarantees that 
every $\Gmf \in \Gmc$ is such that $P_{i}^{\Gmf}\neq \emptyset$ for every $1 \leq i \leq n$.

We let $\Gmc_{u} = \Gmc \cup \Umc$, where $\Umc$ is the set of all $\Sbf_{P} \cup \Cmc$-structures
of the form $\{P_{i}(d), P_{i}(e), T_{j}(d), T_{\ell}(e)\}$ with $1 \leq i \leq n$ and
$1 \leq j < \ell \leq k$.  

Notice that $\mn{Forb}^{un}(\mathcal{G})= 
\forb(\mathcal{G}_{u})$.
 


\begin{lemma} \label{eval-fppc}
FPP$_{\cbf}$ is polynomially equivalent to FPP. Specifically:
\begin{itemize}
\item For all $\Sbf_{P}$-structures $\Amf$, 
$\Amf\in \forb(\mathcal{G}_{u})$ iff $\Amf^{c}$ is undefined or $\Amf^{c} \in \mn{Forb}(\Fmc)$;
\item For all $\Sbf_{\cbf}$-structures $\Amf$,
$\Amf \in \mn{Forb}(\Fmc)$ iff $\hat{\Amf}\in \forb(\mathcal{G}_{u})$.
\end{itemize}
\end{lemma}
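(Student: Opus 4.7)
The plan is to prove each of the two bullets separately, transferring between $\Sbf_{P}$-colorings and $\Sbf_{\cbf}$-colorings via the canonical homomorphism $g\colon \Amf \to \Amf^{c}$ and the three technical properties (1a), (1b), (2b) of the anti-collapse recalled just before the lemma. The central observation that makes both bullets go through is a bijective correspondence: uniform $\Cmc$-colorings of an $\Sbf_{P}$-structure $\Amf$ (with $\Amf^{c}$ defined) are in one-to-one correspondence with $\Cmc$-colorings of $\Amf^{c}$, with an element $e\in\adom(\Amf)$ receiving the color of $g(e)$. Uniformity is ensured precisely because $g$ collapses each $P_{i}^{\Amf}$ to a single point. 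A symmetric, easier correspondence holds for $\hat{\Amf}$ since each $P_{i}^{\hat{\Amf}}$ is a singleton, so uniformity is automatic.

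For the first bullet, I would handle the degenerate case first: if $\Amf^{c}$ is undefined then some $P_{i}^{\Amf}=\emptyset$; since every $\Gmf\in \Gmc$ satisfies $P_{i}^{\Gmf}\neq\emptyset$ by construction of the anti-collapse, no $\Gmf\in\Gmc$ admits a homomorphism into any coloring of $\Amf$, and the extra patterns in $\Umc$ are also excluded under any uniform coloring, so $\Amf\in\forb(\Gmc_{u})$. When $\Amf^{c}$ is defined, I would prove the contrapositive of both directions. For $\Amf\notin\forb(\Gmc_{u})\Rightarrow \Amf^{c}\notin\forb(\Fmc)$: pick a uniform coloring $\Amf[\Cmc]$ and a $\Gmf\in\Gmc$ (excluding $\Umc$ via uniformity) with $h\colon \Gmf\to \Amf[\Cmc]$. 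Push the coloring of $\Amf$ down to $\Amf^{c}$ along $g$; then using property (1b) on the underlying $\Fmf$ obtain $h'\colon \Fmf\to \Amf^{c}$, and verify color-preservation by matching definition $(\dagger)$ with the pushed-down coloring. Conversely, given $\Fmf \to \Amf^{c}[\Cmc]$, lift the coloring of $\Amf^{c}$ to a uniform coloring of $\Amf$ via $g$, apply property (1a) to get $\Fmf'\in (\Fmf|_{\Sbf_{\cbf}})^{ac}$ with a matching $h'\colon \Fmf' \to \Amf$, color $\Fmf'$ by $(\dagger)$ to obtain a $\Gmf\in\Gmc$, and check that the location constraints ``$h'(d)\in g^{-1}(h(d))$'' and ``$h'(d)\in g^{-1}(c_{i}^{\Amf^{c}})=P_{i}^{\Amf}$'' make $h'$ both $P_{i}$-respecting and color-preserving.

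For the second bullet, I would use that $\adom(\hat{\Amf})=\adom(\Amf)$ and that every coloring of $\Amf$ is automatically a uniform coloring of $\hat{\Amf}$, so the two sides quantify over the same colorings. Then the equivalence between ``some $\Fmf\in\Fmc$ maps color-preservingly into $\Amf$ with coloring $c$'' and ``some $\Gmf\in\Gmc$ maps color-preservingly into $\hat{\Amf}$ with coloring $c$'' reduces via property (2b) (which keeps unnamed elements put and forces $D_{i}$-elements to $c_{i}^{\Amf}$) to the tautology that the color of $c_{i}^{\Fmf}$ in $\Fmf$ is, by $(\dagger)$, exactly the color assigned to $D_{i}$ in $\Gmf$ and hence the color of $c_{i}^{\Amf}$ under $c$.

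The main obstacle I anticipate is the bookkeeping of color-preservation: one must verify, term by term, that the coloring $(\dagger)$ assigned to each $\Fmf'$ in the anti-collapse is exactly the one that makes the homomorphisms supplied by (1a), (1b) and (2b) color-preserving once the coloring of $\Amf$ has been pulled back or pushed along $g$. Everything else (size of $\Gmc_{u}$, non-emptiness of $P_{i}^{\Gmf}$, the role of $\Umc$ in forcing uniformity) is either immediate from the construction or a direct unpacking of definitions, so the polynomiality of the reduction follows from the fact that $|\Gmc_{u}|$ is polynomial in $|\Fmc|$ and the size of each structure in the anti-collapse is bounded by a polynomial in the size of the originating pattern.
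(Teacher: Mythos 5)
Your treatment of the first bullet is essentially the paper's proof: the degenerate case ($\Amf^{c}$ undefined) is handled the same way via non-emptiness of the $P_{i}^{\Gmf}$ and the fact that $\forb(\Gmc_{u})=\mn{Forb}^{un}(\Gmc)$, and the two implications are exactly the paper's two uses of (1a) and (1b) together with the $(\dagger)$-coloring of the anti-collapse member, differing only in that you argue by contraposition where the paper argues by contradiction. Your opening observation that uniform $\Cmc$-colorings of $\Amf$ are precisely the colorings that factor through the canonical homomorphism $g$, and hence correspond bijectively to colorings of $\Amf^{c}$, is a clean repackaging of work the paper does inside the color-preservation check (its chain argument through the $\sim$-closure of the $P_{i}$'s, which uses uniformity, is exactly the proof of that factorization); done once up front, it makes the step ``$h'(d)\in g^{-1}(h(d))$ implies $h'(d)$ has the color of $h(d)$'' immediate.

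For the second bullet you diverge from the paper, and your sketch has a small one-directional gap. The paper simply notes $(\hat{\Amf})^{c}=\Amf$ and invokes the first bullet; you instead argue directly, citing only property (2b). But (2b) gives only one half of the per-coloring equivalence: from a color-preserving match of some $\Gmf\in\Gmc$ into $\hat{\Amf}[\Cmc]$ it produces an $\Fmf$-match in $\Amf[\Cmc]$. The converse half --- that an $\Fmf$-match in $\Amf[\Cmc]$ yields a $\Gmf$-match in $\hat{\Amf}[\Cmc]$ --- requires a lifting statement, which (2b) does not provide; the paper lists no detailed ``(2a)''. This is easily repaired: either apply (1a) to $\hat{\Amf}$ (legitimate, since $(\hat{\Amf})^{c}=\Amf$ is defined) and redo the $(\dagger)$-color check you already carried out for the first bullet, or, more economically, adopt the paper's route and deduce the second bullet as the instance of the first bullet with $\Amf$ replaced by $\hat{\Amf}$. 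With that patch the proposal is complete and matches the paper's argument.
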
 
\begin{proof}
First let $\Amf$ be a $\Sbf_{P}$-structure such that 
$\Amf\in \forb(\mathcal{G}_{u})$.  Since $\forb(\mathcal{G}_{u})=\mn{Forb}^{un}(\mathcal{G})$, we 
have $\Amf \in\mn{Forb}^{un}(\mathcal{G})$, and so there exists a uniform $\Cmc$-colored
expansion $\Amf[\Cmc]$ of $\Amf$ such that 
there exists no $\Gmf\in \mathcal{G}$ with 
$\Gmf\rightarrow \Amf[\Cmc]$.
Assume the collapse $\Amf^{c}$ is defined (i.e., $P_{i}^{\Amf}\not=\emptyset$ for $1\leq i \leq n$).
We want to show $\Amf^{c} \in \mn{Forb}(\Fmc)$.
By uniformity of $\Amf[\Cmc]$, we obtain a $\Cmc$-colored $\Sbf_{\cbf}$-structure $\Amf^{c}[\Cmc]$ 
extending $\Amf^{c}$ by setting $d\in T_{j}^{\Amf^{c}[\Cmc]}$ iff $d$ is unnamed and 
$d\in T_{j}^{\Amf[\Cmc]}$ or $d=c_{i}^{\Amf^{c}}$ and
$P_{i}^{\Amf[\Cmc]}\subseteq T_{j}^{\Amf[\Cmc]}$. Assume for a 
contradiction that $h:\Fmf \rightarrow \Amf^{c}[\Cmc]$ for $\Fmf \in \Fmc$.
Then $h$ is a homomorphism from the $\Sbf_{\cbf}$-reduct $\Fmf^{r}$ of $\Fmf$ to the 
$\Sbf_{\cbf}$-reduct $\Amf^{c}$ of
$\Amf^{c}[\Cmc]$. By (1a), we find $\Fmf' \in (\Fmf^{r})^{ac}$ and $h': \Fmf'\rightarrow \Amf$
such that $h'(d)\in g^{-1}(h(d))$ for all unnamed individuals $d$ in $\Fmf^{r}$
and $h'(d)\in g^{-1}(c_{i}^{\Amf^{c}})$ for all $d\in D_{i}$.
Let $\Fmf'[\Cmc]$ be the $\Cmc$-coloring of $\Fmf'$ defined with ($\dagger$). 
To see that $\Fmf'[\Cmc]$ is well-defined, note that $d \in D_i \cap D_j$ implies that 
$P_i^{\Fmf'}\cap P_j^{\Fmf'}\neq \emptyset$, which yields 
$P_i^{\Amf}\cap P_j^{\Amf}\neq \emptyset$, hence $c_i^{\Amf^c} = c_j^{\Amf^c}$.
It follows that $c_i^{\Amf^c}$ and $c_j^{\Amf^c}$ have the same colour in $\Amc^c[\Cmc]$,
and thus also in $\Fmf$, which ensures that each element in $\Fmf'$ is assigned a unique 
colour by ($\dagger$). 
Now to obtain the desired contradiction, we show that $h'$
is a $\Sbf_{P}\cup \Cmc$-homomorphism from $\Fmf'[\Cmc]$ to $\Amf[\Cmc]$. Let $d\in \dom(\Fmf')$ 
and $d\in T_{j}^{\Fmf'[\Cmc]}$. If $d$ is unnamed in $\Fmf$, then
$d\in T_{j}^{\Fmf'[\Cmc]}$ implies that $d\in T_{j}^{\Fmf}$. 
Hence $h(d)\in T_{j}^{\Amf^{c}[\Cmc]}$ and
$h'(d) \in g^{-1}(h(d))\subseteq T_{j}^{\Amf[\Cmc]}$. 
If $d\in D_{i}$, then $d\in T_{j}^{\Fmf'[\Cmc]}$ implies $c_{i}^{\Fmf}\in T_{j}^{\Fmf}$, 
hence $c_{i}^{\Amf^{c}}\in T_{j}^{\Amf^{c}[\Cmc]}$ 
and $P_{i}^{\Amf[\Cmc]}\subseteq T_{j}^{\Amf[\Cmc]}$. 
From $h'(d)\in g^{-1}(c_{i}^{\Amf^{c}})$, we know that there exists a sequence $A_{\ell_{1}}, \ldots, A_{\ell_{p}}$ of predicates from 
$\{P_{1}, \ldots, P_{n}\}$ such that $h'(d) \in A_{\ell_{1}}^{\Amf[\Cmc]}$, $A_{\ell_{p}}= P_{i}$, and $A_{\ell_{k}}^{\Amf[\Cmc]}\cap A_{\ell_{k+1}}^{\Amf[\Cmc]}\neq \emptyset$ for every $1 \leq k \leq \ell_{p}$. By uniformity of $\Amf[\Cmc]$ and $P_{i}^{\Amf[\Cmc]}\subseteq T_{j}^{\Amf[\Cmc]}$, we obtain $A_{\ell_{1}}^{\Amf[\Cmc]} \subseteq T_{j}^{\Amf[\Cmc]}$, hence $h'(d) \in  T_{j}^{\Amf[\Cmc]}$. 

Conversely, if $\Amf^{c}$ is undefined, 
then $\Amf\in \mn{Forb}(\Gmc_{u})$ since
$P_{i}^{\Gmf}\not=\emptyset$ for all $\Gmf\in \Gmc$ and $1\leq i \leq n$,
and so any uniform $\Cmc$-coloring of $\Amf$ will avoid $\Gmf_{u}$.
Assume now that $\Amf^{c}\in \mn{Forb}(\Fmc)$. 
There exists a $\Cmc$-colored expansion 
$\Amf^{c}[\Cmc]$ of $\Amf^{c}$ such that there exists no 
$\Fmf\in \mathcal{F}$ with $\Fmf\rightarrow \Amf^{c}[\Cmc]$.
We define a (uniform) $\Cmc$-colored expansion 
$\Amf[\Cmc]$ of $\Amf$ in the obvious way; 
let $g: \Amf \rightarrow \Amf^{c}$
be the canonical mapping and set 
$T_{j}^{\Amf[\Cmc]}= g^{-1}(T_{j}^{\Amf^{c}[\Cmc]})$, for $1\leq j \leq k$.
Assume for a contradiction that $\Gmf\rightarrow \Amf[\Cmc]$ for $\Gmf\in \Gmc$. 
Then $\Gmf$ is obtained from some $\Fmf\in \mathcal{F}$ and some 
member $\Fmf'$ of the anti-collapse of the $\Sbf_{\cbf}$-reduct of $\Fmf$ as described in ($\dagger$).
Assume $h: \Gmf \rightarrow \Amf[\Cmc]$. Then $h:\Fmf' \rightarrow \Amf$ and so,
by (1b) there exists $h': \Fmf^{r} \rightarrow \Amf^{c}$ that can be defined
such that $h'(c_{i}^{\Fmf^r})= c_{i}^{\Amf^{c}}$ and $h'(d)=g(h(d))$ if $d$ is not named,
where $\Fmf^{r}$ is the $\Sbf_{\cbf}$-reduct of $\Fmf$.
We derive a contradiction by showing that $h'$ a homomorphism from $\Fmf$ to $\Amf^{c}[\Cmc]$.
First suppose that $d \in T_{j}^{\Fmf}$, and $d$ is unnamed in $\Fmf$.
Then $d \in T_{j}^{\Gmf}$, hence $h(d) \in T_{j}^{\Amf[\Cmc]}$. 
It follows from the definition of $T_{j}^{\Amf[\Cmc]}$ that $h'(d) =g(h(d))\in T_{j}^{\Amf^{c}[\Cmc]}$.
Next consider the case where $c_{i}^{\mathfrak{F}} \in T_{j}^{\Fmf}$.
Then there must exist $e$ such that $e \in T_{j}^{\Gmf}$ and $e \in P_{i}^{\Gmf}$.
It follows that $h(e) \in T_{j}^{\Amf[\Cmc]}$ and 
$h(e) \in P_{i}^{\Amf[\Cmc]}$. The definition of $T_{j}^{\Amf[\Cmc]}$ together with 
$g(h(e)) = c_{i}^{\Amf^{c}[\Cmc]}$ yields 
$h'(c_{i}^{\Fmf})= c_{i}^{\Amf^{c}[\Cmc]} \in T_{j}^{\Amf^{c}[\Cmc]}$.

The second statement follows easily from the first, 
since for every $\Sbf_{\cbf}$-structure $\Amf$, we have $\Amf = (\hat{\Amf})^{c}$. 
\end{proof}

\begin{lemma}\label{fppc-cont}
Containment of FPP$_{c}$ is polynomially reducible to
containment of FPP.  
\end{lemma}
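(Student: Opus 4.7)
The plan is to observe that the reduction constructed in the proof of Lemma \ref{eval-fppc} already transfers to the containment setting, essentially for free. Given two FPP$_c$ problems defined by sets $\Fmc_1$ and $\Fmc_2$ of $\Cmc$-colored $\Sbf_\cbf$-structures (which we may assume share the same constants $\cbf$ and colors $\Cmc$ — otherwise take the union and rename), I would apply the construction $\Fmc \mapsto \Gmc_u$ from the previous lemma to each in turn, obtaining FPP instances $\Gmc_{1,u}$ and $\Gmc_{2,u}$ over the common schema $\Sbf_P$. Since both constructions are polynomial, it suffices to prove the equivalence
\[
\forb(\Fmc_1) \subseteq \forb(\Fmc_2) \iff \forb(\Gmc_{1,u}) \subseteq \forb(\Gmc_{2,u}).
\]

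For the right-to-left direction, let $\Amf$ be an $\Sbf_\cbf$-structure with $\Amf \in \forb(\Fmc_1)$. By the second bullet of Lemma \ref{eval-fppc}, $\hat{\Amf} \in \forb(\Gmc_{1,u})$, and hence $\hat{\Amf} \in \forb(\Gmc_{2,u})$ by assumption. Since $(\hat{\Amf})^c$ is defined (each $P_i^{\hat{\Amf}} = \{c_i^{\Amf}\}$ is nonempty) and coincides with $\Amf$, the first bullet of Lemma \ref{eval-fppc} applied to $\Gmc_{2,u}$ yields $\Amf \in \forb(\Fmc_2)$.

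For the left-to-right direction, let $\Bmf$ be an $\Sbf_P$-structure with $\Bmf \in \forb(\Gmc_{1,u})$. By the first bullet of Lemma \ref{eval-fppc}, either $\Bmf^c$ is undefined or $\Bmf^c \in \forb(\Fmc_1)$. In the first case, the same bullet applied to $\Gmc_{2,u}$ delivers $\Bmf \in \forb(\Gmc_{2,u})$ immediately. In the second case, $\Bmf^c \in \forb(\Fmc_1) \subseteq \forb(\Fmc_2)$ by hypothesis, and a second application of the first bullet (for $\Gmc_{2,u}$) again gives $\Bmf \in \forb(\Gmc_{2,u})$.

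I do not anticipate any genuine obstacle: the only care needed is bookkeeping to ensure that the two applications of the construction of Lemma \ref{eval-fppc} produce forbidden-pattern instances over a common schema $\Sbf_P$ and a common set of colours, which is a purely syntactic matter handled by renaming. The heavy lifting — the two homomorphism-transfer properties between $\Amf$ and $\Amf^c$, $\Amf$ and $\hat{\Amf}$ — has already been done in the proof of Lemma \ref{eval-fppc}, so the present lemma follows as a direct corollary of that construction.
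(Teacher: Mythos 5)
Your proof is correct and follows essentially the same route as the paper: apply the construction of Lemma~\ref{eval-fppc} to each of $\Fmc_1,\Fmc_2$ and verify both inclusions using the two homomorphism-transfer statements of that lemma (your use of $(\hat{\Amf})^c=\Amf$ together with the first bullet is just a rephrasing of the second bullet). No gap to report.
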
 
\begin{proof}
Consider $\mn{Forb}(\mathcal{F}_{1})$ and $\mn{Forb}(\mathcal{F}_{2})$, both over $\Sbf_{c}$. 
Let $\mathcal{G}_{u,1}$ and $\mathcal{G}_{u,2}$ be the corresponding FPPs over schema $\Sbf_{P}$,
which satisfy statements in  Lemma \ref{eval-fppc}. We claim that $\mn{Forb}(\mathcal{F}_{1}) \subseteq \mn{Forb}(\mathcal{F}_{2})$ iff
$\mn{Forb}(\mathcal{G}_{u,1}) \subseteq \mn{Forb}(\mathcal{G}_{u,2})$. 

For the first direction, suppose that $\mn{Forb}(\mathcal{F}_{1}) \subseteq \mn{Forb}(\mathcal{F}_{2})$. Let $\Amf$ be a $\Sigma_{P}$-structure such that 
$\Amf \in \mn{Forb}(\mathcal{G}_{u,1})$. 
If $\Amf^{c}$ is undefined, then we immediately obtain 
$\Amf \in \mn{Forb}(\mathcal{G}_{u,2})$. Otherwise, we have 
$\Amf^{c} \in \mn{Forb}(\mathcal{F}_{1})$, and hence  
$\Amf^{c} \in \mn{Forb}(\mathcal{F}_{2})$ 
and $\Amf \in \mn{Forb}(\mathcal{G}_{u,2})$. 

For the second direction, suppose that 
$\mn{Forb}(\mathcal{G}_{u,1}) \subseteq \mn{Forb}(\mathcal{G}_{u,2})$, and 
let $\Bmf$ be a $\Sbf_{c}$-structure such that 
$\Bmf \in \mn{Forb}(\mathcal{F}_{1})$. Then applying the previous lemma, 
we have $\hat{\Bmf} \in \mn{Forb}(\mathcal{G}_{u,1})$, hence 
$\hat{\Bmf} \in \mn{Forb}(\mathcal{G}_{u,2})$. Again applying the lemma, 
we obtain $\Bmf \in \mn{Forb}(\mathcal{F}_{2})$. \end{proof}

By combining in a straightforward manner Lemmas \ref{eval-activedomain}
to \ref{fppc-cont}, we obtain Theorem \ref{lift-commsnp}. 


%

\subsection{Proofs for Section~\ref{sect:obdammsnp}}
\label{subsec:b2}

\medskip
\noindent
{\bf Theorem~\ref{dich-ucq}. }
\emph{  (\ALC,UCQ) has a dichotomy between \PTime and \coNP iff the
  Feder-Vardi conjecture holds. The same is true for
  ($\mathcal{ALCHIU}$,UCQ) and (UNFO,UCQ).}
\medskip

\noindent
\begin{proof}
Easily obtained by combining  Proposition  \ref{prop:mmsnpmdd} and Theorems \ref{thm:ALCtoMDD}, \ref{thm:otherDLUCQ}, \ref{thm:UNFO}, and \ref{lift-commsnp}. 
\end{proof}

\medskip
\noindent
{\bf Theorem~\ref{thm:containment1}. }
\emph{Query containment is decidable for 
the OBDA languages (\ALC,UCQ), ($\mathcal{ALCHIU}$,UCQ), and (UNFO,UCQ).}

\medskip

\noindent
\begin{proof}
Here again we straightforwardly combine Proposition \ref{prop:mmsnpmdd} and Theorems \ref{thm:ALCtoMDD}, \ref{thm:otherDLUCQ}, \ref{thm:UNFO}, and \ref{lift-commsnp}
\end{proof}

\medskip
\noindent
{\bf Theorem~\ref{lem:gmsnpfgdd}.}
  \emph{coGMSNP has the same expressive power as frontier-guarded \DD and
  is  strictly more expressive than coMMSNP.}
\medskip

\noindent
\begin{proof}
  The proof of the first part follows the lines of the proof of
  Proposition~\ref{prop:mmsnpmdd} and is omitted. It thus remains to show
  that coGMSNP is strictly more expressive than coMMSNP.  Note first
  that it is at least as expressive: we can convert any MMSNP formula
  into an equivalent one satisfying conditions~(i) and~(ii) from the
  proof of Proposition~\ref{prop:mmsnpmdd}, and clearly every such MMSNP
  formula is also a GMSNP formula. To see that coGMSNP is indeed
  strictly more expressive than coMMSNP, note that by
  Proposition~\ref{prop:gfucqmdd}, there is a (GF,UCQ) query $q$ that
  is not expressible in \MDD. 
  By Proposition~\ref{prop:mmsnpmdd},
   $q$ is not expressible in coMMSNP; by
  Theorem~\ref{GFUCQfrontier} and the first part of
  Theorem~\ref{lem:gmsnpfgdd}, $q$ is expressible in coGMSNP.
\end{proof}

\medskip
\noindent
{\bf Proposition
\ref{prop:gmmm2}}
 \emph{ GMSNP and MMSNP$_2$ have the same expressive power.}\\
 
\noindent\begin{proof}
For simplicity, we prove the result for sentences (no free variables) and without
equality in the body of implications.

We start by proving that every MMSNP$_{2}$ sentence is equivalent to a GMSNP sentence.
Assume $\Phi=\exists X_1 \cdots \exists X_n \forall x_1 \cdots \forall x_m \vp$
is a MMSNP$_{2}$ sentence. Introduce for each $X_{i}$ a monadic SO-variable $X_{i}^{1}$ and, for every $R\in \Sbf$ of
arity $n$, an $n$-ary SO-variable $X_{i}^{R}$. Now replace in $\vp$ every $X_{i}(x)$ by $X_{i}^{1}(x)$ and every
$X_{i}(R(\xbf))$ by $X_{i}^{R}(\xbf)$.
The resulting formula is a GMSNP sentence that is equivalent to $\Phi$.

\medskip

Conversely, assume we are given a GMSNP sentence $\Phi=\exists X_1 \cdots \exists X_n \forall x_1 \cdots \forall x_m \vp$. 
It is straightforward to show that $\Phi$ is equivalent to a GMSNP sentence in which 
\begin{itemize}
\item each $X_{i}(\xbf)$ in the head of an implication is guarded by an input relation: for every $X_{i}(\xbf)$ in 
the head of an implication
$\psi$ there exists an $R\in \Sbf$ such that $R(\ybf)$ is in the body of $\psi$ and $\xbf \subseteq \ybf$.
(If this is not the case, one can introduce additional conjuncts $R(\ybf)$ in the body of implications).
%
%
\item $\vp$ is closed under identifying individual variables: if $\psi'$ is the result of identifying variables
in an implication $\psi$ of $\vp$, then $\psi$ is a conjunct of $\vp$ (module renaming of individual variables).
\item the individual variables used in distinct implications of $\vp$ are disjoint. 
\end{itemize}
It follows that we may also assume that distinct occurrences of SO-variables $X_{i}$ in $\vp$
determine distinct atoms $X_{i}(\xbf_{i})$. From now we assume that $\Phi$ satisfies these conditions. 

For the translation, we take for every atom $A=X_i(\xbf)$ in the head of an implication $\psi$ in $\vp$, 
a fresh second-order domain and fact variable $X_{A}$. Moreover, we fix a guard $R_{A}(\ybf_A)$ with $R_{A}\in \Sbf$ 
for $A$ from the body of the (unique) implication in which $A$ occurs.
Consider now an implication $\psi$ in $\vp$ of the form
\begin{eqnarray*}
&   & R_{1}(\xbf_{1}) \wedge \cdots \wedge R_{k}(\xbf_{k}) \wedge X_{k+1}(\xbf_{k+1}) \wedge \cdots \wedge 
X_{n}(\xbf_{n})\\
& \rightarrow & X_{n+1}(\xbf_{n+1})\vee \cdots \vee X_{m}(\xbf_{m})
\end{eqnarray*}
First replace all atoms $A_{j}=X_{j}(\xbf_{j})$, $n+1\leq j \leq m$, by $X_{A_{j}}(R_{A_j}(\ybf_{A_j}))$, where 
$R_{A_j}(\ybf_{A_j})$
is the guard 
 for $A_{j}$ selected above.
Next consider every possible choice 
$$
A_{k+1}=X_{k+1}(\zbf_{k+1}),\ldots,A_{n}=X_{n}(\zbf_{n})
$$ 
of atoms in the heads of implications in $\vp$ such that the componentwise mappings $\rho_{l}:\xbf_{l} \rightarrow \zbf_{l}$, $k+1\leq l \leq n$, are bijections between the sets of variables in $\xbf_{l}$ and $\zbf_{l}$ and replace every 
$X_{l}(\xbf_{l})$, $k+1\leq l \leq n$, by 
$$
X_{A_{l}}(R_{A_{l}}(\ybf'_l))
$$
where $\ybf'_l$ is obtained from the guard $R_{A_{l}}(\ybf_{A_l})$ associated with $A_{l}$ 
above by replacing each $\rho_{l}(x)$ by $x$ and each individual variable that is not in the
range of $\rho_{l}$ by some fresh individual variable.
Let $\psi'$ be the conjunction over all implications derived from $\psi$ in this manner, let
$\vp'$ be the conjunction of all of the  $\psi'$, and let 
$\Phi'$ be the resulting MMSNP$_{2}$ sentence when existential quantification 
over non-monadic variables is replaced 
by existential quantification over all $X_{A}$ such that $A$ an atom in a head of an implication of $\vp$.
Note that $\Phi'$ contains all individual variables in $\Phi$, but may also contain additional individual
variables not in $\Phi$.

\medskip

We show that $\Phi$ and $\Phi'$ are equivalent. Assume first that $(\adom(\Dmf),\Dmf)\models \Phi'$. Take an assignment $\pi$ for the second-order domain and fact
variables of $\Phi'$ such that $(\adom(\Dmf),\Dmf)\models_{\pi} \forall x_1 \cdots \forall x_m \vp'$.
For every non-monadic second-order variable $X$ of $\Phi$, define $\pi(X)$ as the union of all 
$$
\{ \rho(\xbf) \mid R_{A}(\rho(\ybf_A)) \in \pi(X_{A}), \mbox{ $\rho$ injective variable assignment}\},
$$
such that $A=X(\xbf)$ appears in the head of some implication in $\vp$ and $R_{A}(\ybf_A)$ is the guard selected
for $A$. We show that $(\adom(\Dmf),\Dmf)\models_{\pi} \Phi$. 
Assume for a contradiction that this is not the case. Take an implication $\psi$ in $\vp$ of the form
\begin{eqnarray*}
&   & R_{1}(\xbf_{1}) \wedge \cdots \wedge R_{k}(\xbf_{k}) \wedge X_{k+1}(\xbf_{k+1}) \wedge \cdots \wedge 
X_{n}(\xbf_{n})\\
& \rightarrow & X_{n+1}(\xbf_{n+1})\vee \cdots \vee X_{m}(\xbf_{m})
\end{eqnarray*}
and let $\rho$ be an individual variable assignment such that $(\adom(\Dmf),\Dmf)\not\models_{\pi,\rho} \psi$.
We may assume that $\rho$ is injective. The following holds:
\begin{enumerate}
\item for every $1\leq i \leq k$, we have $R_i(\pi(\xbf_i)) \in \Dmf$.
\item for every $k+1\leq i \leq n$, there exists $A_{i}=X_i(\zbf_{i})$ in the head of some implication of $\varphi$
with $R_{A_{i}}(\zbf_{i}')$ the guard selected for $A_{i}$, and an injective variable assignment $\rho_{i}$ such that
$R_{A_{i}}(\rho_{i}(\zbf_{i}'))\in \pi(X_{A_{i}})$ and $\rho_{i}(\zbf_i)=\rho(\xbf_{i})\in \pi(X_{i})$.
\item for no $n+1\leq i \leq m$ does there exist $A_{i}=X_i(\zbf_{i})$ in the head of some implication of $\varphi$
with $R_{A_{i}}(\zbf_{i}')$ the guard selected for $A_{i}$, and an injective variable assignment $\rho'$ such that
$R_{A_{i}}(\rho'(\zbf_{i}'))\in \pi(X_{A_{i}})$ and $\rho'(\zbf_i)=\rho(\xbf_{i})\in \pi(X_{i})$.   
\end{enumerate}
Consider the following sequences of atoms
\begin{align*}
&A_{k+1}=X_{k+1}(\zbf_{k+1}),\ldots,A_{n}=X_{n}(\zbf_{n})\\
& A_{n+1}=X_{n+1}(\xbf_{n+1}), \ldots, A_m=X_{m}(\xbf_{m})
\end{align*}
It follows from construction of $\Phi'$ that the formula $\varphi'$ contains the implication 
\begin{eqnarray*}
\zeta=&   & R_{1}(\xbf_{1}) \wedge \cdots \wedge R_{k}(\xbf_{k}) \wedge 
\\
&   & X_{A_{k+1}}(R_{A_{k+1}}(\ybf'_{k+1}))\wedge \cdots \wedge 
X_{A_{n}}(R_{A_{n}}(\ybf'_{n}))\\
& \rightarrow & X_{A_{n+1}}(R_{A_n+1}(\ybf_{A_n+1}))\vee \cdots \vee 
X_{A_{m}}(R_{A_m}(\ybf_{A_m}))
\end{eqnarray*}
where the $\ybf'_i$ are defined in the same way as earlier. 
Let $\mu$ be an individual variable assignment satisfying:
\begin{itemize}
\item $\mu(x)=\rho(x)$ for $x$ in the image of $\rho$
\item $\mu(u)=\rho_i(z)$ if $u$ is the fresh variable introduced to 
replace $z \in \zbf'_i$ 
\end{itemize}
Note that such an assignment must exist 
since every variable in $\Phi'$ is in the image of exactly 
one assignment among $\rho$ and the $\rho_i$.
It follows from the properties of $\mu$ and 
points 1 and 2 above that the body of the implication $\zeta$
is satisfied under assignments $\pi, \mu$. 
From point 3, we can derive that none of the head atoms is 
satisfied under $\pi, \mu$. 
It follows that the implication $\zeta$ is refuted, so 
$(\adom(\Dmf),\Dmf)\not\models \Phi'$, and we have the desired contradiction.

\medskip
For the other direction, assume that $(\adom(\Dmf),\Dmf)\models \Phi$. Take an assignment $\pi$ for the SO-variables
of $\Phi$ such that $(\adom(\Dmf),\Dmf)\models_{\pi} \forall x_1 \cdots \forall x_{m}\,\, \vp$.
Now define, for $A=X(\xbf)$ in the head of an implication of $\vp$ with selected guard $R_{A}(\ybf_A)$:
\begin{align*}
\pi(X_{A})&=\{ R_{A}(\rho(\ybf_A)) \in \Dmf \mid  \rho(\xbf) \in \pi(X),\\
&\hspace*{3.1cm} \mbox{$\rho$ variable assignment} \}
\end{align*}
It can be verified that $(\adom(\Dmf),\Dmf)\models \Phi'$.
\end{proof}

\savebox{\stupid}{\large\bf\ref{sect:OBDAtoCSP}}
\section{Proofs for Section \usebox{\stupid}}
\noindent
{\bf Theorem~\ref{ALCUtoCSP}}  
\emph{In each case, the following query languages are equally expressive:}
\begin{itemize}
\item $(\mathcal{ALCU}$,AQ), ($\mathcal{SHIU}$,AQ), unary simple \MDD, and 
generalized coCSP with one constant symbol;
\item ($\mathcal{ALC}$,AQ), ($\mathcal{SHI}$,AQ), unary connected simple \MDD, and
generalized coCSPs with one constant symbol such that all templates are identical except 
for the interpretation of the constant symbol;
\item ($\mathcal{ALCU}$,BAQ), ($\mathcal{SHIU}$,BAQ), Boolean simple \MDD, and generalized coCSP;
\item ($\mathcal{ALC}$,BAQ), ($\mathcal{SHI}$,BAQ), Boolean connected simple \MDD, and coCSP.

\end{itemize}
\emph{Moreover, given the ontology-mediated query or monadic datalog program,
the correponding CSP template is of at most exponential size and can be 
constructed in time polynomial in the size of the template.}

\medskip

\noindent\begin{proof} 
Recall that the equivalences between the OBDA languages and fragments of 
monadic disjunctive datalog have been proved already. Moreover, Point~1 has been
proved in the paper. It thus remains to be proved that the following query languages
are equally expressive:

(a) ($\mathcal{ALC}$,AQ) and
generalized coCSPs with one constant symbol such that all templates are identical except 
for the interpretation of the constant symbol;

(b) ($\mathcal{ALC}$,BAQ) and coCSP;

(c) ($\mathcal{ALCU}$,BAQ) and generalized coCSP.
  
We use the notation from the proof of Point~1. In particular, $\Bmf_{T}$ denotes the canonical 
$\Sbf$-structure with domain $T$.
For (a), assume $\Sbf$, $\Omc$, and $A(x)$ are given, where $\Omc$ is an $\mathcal{ALC}$-ontology.
Let $T$ be the set of all types $\tau$ that are realizable for 
$\Omc$ and define 
$$
\mathcal{F}= \{ (\Bmf_{T},\tau) \mid \tau\in T,A\not\in \tau\}.
$$
One can show that for every $\Sbf$-instance $\Dmf$ and $d\in \adom(\Dmf)$:
$(\Dmf,d) \rightarrow (\Bmf_{T},\tau)$ for some $(\Bmf_{T},\tau) \in \mathcal{F}$
iff $d\not\in q_{\Sbf,\Omc,A(x)}(\Dmf)$. Thus, the query defined by
$(\Sbf,\Omc,A(x))$ is equivalent to the query defined by $\mathcal{F}$.

Conversely, assume that $\mathcal{F}$ is a finite set of $\Sbf\cup\{c\}$-structures which coincide
except for the interpretation of the constant symbol~$c$, and let $\Bmf$ be the $\Sbf$-reduct of 
these structures. 
Take for every $d$ in the domain $\dom(\Bmf)$ of $\Bmf$
a fresh concept name $A_{d}$, let $A$ be another fresh concept name, and set
\begin{eqnarray*}
\Omc & = & \{ A_{d} \sqsubseteq \neg A_{d'} \mid d\not=d'\}\cup\\
     &   & \{ A_{d} \sqcap \exists R.A_{d'} \sqsubseteq \bot \mid R(d,d')\not\in \Bmf,R\in \Sbf\}\cup\\
     &   & \{ A_{d} \sqcap B \sqsubseteq \bot \mid B(d)\not\in \Bmf,B\in \Sbf\}\cup\\
     &   & \{ \top \sqsubseteq \bigsqcup_{d\in \dom(\Bmf)}A_{d}\}\cup\\
     &   & \{ \bigsqcap_{(\Bmf,b)\in \mathcal{F}}\neg A_{b} \sqsubseteq A\}
\end{eqnarray*}
One can show that for every $\Sbf$-instance $\Dmf$ and $d\in \adom(\Dmf)$, 
$(\Dmf,d)\rightarrow (\Bmf,b)$ for some $(\Bmf,b)\in \mathcal{F}$  iff $d\not\in q_{\Sbf,\Omc,A(x)}(\Dmf)$.
Thus $(\Sbf,\Omc,A(x))$ expresses the same query as $\mathcal{F}$.

\medskip

For (b) assume that a query $(\Sbf,\Omc, \exists x. A(x)) \in
\text{(\ALC,BAQ)}$ is given.  We assume w.l.o.g.\ that $\Omc
\not\models \top \sqsubseteq \exists U.A$ because otherwise we have
$q_Q(\Dmf)=1$ for all $\Sbf$-instances $\Dmf$, and so $q_Q$ is trivial.
Let $T$ be the set of all types $\tau\subseteq \mn{sub}(\Omc)$ that are realized in a model $\Amf$ of 
$\Omc$ with $\Amf\not\models \exists x.A(x)$.
Since $\Omc
\not\models \top \sqsubseteq \exists U.A$, the set $T$ is
non-empty.
One can show that for every $\Sbf$-instance $\Dmf$:
$\Dmf \rightarrow \Bmf_{T}$
iff $Q_{\Sbf,\Omc,\exists x.A(x)}(\Dmf)=0$. Thus, the query defined by
$(\Sbf,\Omc,\exists x.A(x))$ is equivalent to the query defined by $\Bmf_{T}$.

Conversely, for a CSP template $\Bmc$ over schema \Sbf, we construct
an ontology-mediated query $(\Sbf,\Omc,q)$ as follows. 
Take for every $d$ in the domain $\dom(\Bmc)$ of $\Bmc$
a fresh concept name $A_{d}$, let $A$ be another fresh concept name, and set
$q=\exists x.A(x)$ and 
%
\begin{eqnarray*}
\Omc & = & \{ A_{d} \sqcap A_{d'}\sqsubseteq A \mid d\not=d'\}\cup\\
     &   & \{ A_{d} \sqcap \exists R.A_{d'} \sqsubseteq A \mid R(d,d')\not\in \Bmf,R\in \Sbf\}\cup\\
     &   & \{ A_{d} \sqcap B \sqsubseteq A \mid B(d)\not\in \Bmf,B\in \Sbf\}\cup\\
     &   & \{ \top \sqsubseteq \midsqcup_{d\in \dom(\Bmf)}A_{d}\}
\end{eqnarray*}
The query $(\Sbf,\Omc,\exists x.A(x))$ is equivalent to the query 
defined by the template $\Bmf$.

\medskip

The proof of Point~(c) is similar and left to the reader.

\end{proof}

\medskip

\noindent
{\bf Theorem~\ref{ContNEXP}}
\emph{Query containment in ($\mathcal{SHIU}$,AQ$\cup$BQ) is in \NExpTime.
It is \NExpTime-hard already for ($\mathcal{ALC},AQ)$ and for
$(\mathcal{ALC}$,BAQ).}

\medskip

\noindent
\begin{proof}
We provide the proof of the lower bound. 
The proof is by reduction of a \NExpTime-hard $2^{n}\times 2^{n}$-tiling problem.
An instance of this tiling problem is given by a natural number $n>0$ and a 
triple $(\Tmf,H,V)$ with $\Tmf$ a non-empty, finite set of \emph{tile
types} including an \emph{initial tile} $T_\mn{init}$ to be placed
on the lower left corner, $H \subseteq \Tmf \times \Tmf$ a
\emph{horizontal matching relation}, and $V \subseteq \Tmf \times
\Tmf$ a \emph{vertical matching relation}. A \emph{solution} for the 
$2^{n}\times 2^{n}$-tiling problem for
$(\Tmf,H,V)$ is a map $f:\{0,\dots,2^{n}-1\} \times \{0,\dots,2^{n}-1\}
\rightarrow \Tmf$ such that $f(0,0)=T_\mn{init}$, $(f(i,j),f(i+1,j)) \in H$ for all 
$i < 2^{n}-1$, and
$(f(i,j),f(i,j+1)) \in V$ for all $j < 2^{n}-1$. It is \NExpTime-complete to decide
whether an instance of the $2^{n}\times 2^{n}$-tiling problem has a solution.

For the reduction, let $n>0$ and $(\Tmf,H,V)$ be an instance of the $2^{n}\times 2^{n}$-tiling
problem with $\Tmf=\{T_1,\dots,T_p\}$. 
We construct a schema $\Sbf$, two $\ALC$-ontologies $\Omc_{1}$ and $\Omc_{2}$, and a query $E(x)$ 
with $E$ a unary relation symbol 
such that $(\Tmf,H,V)$ has a solution if
and only if $q_{\Sbf,\Omc_{1},E(x)} \subseteq q_{\Sbf,\Omc_{2},E(x)}$ if and only if
$q_{\Sbf,\Omc_{1},\exists x.E(x)} \subseteq q_{\Sbf,\Omc_{2},\exists x.E(x)}$.

We first define an ontology $\Gmc$ (for grid) which encodes the $2^{n}\times 2^{n}$-grid.
To define $\Gmc$, we use role names $x$ and $y$ to represent 
the $2^{n}\times 2^{n}$-grid and two binary counters $X$ and $Y$ for counting from $0$ to $2^{n}-1$.
The counters use concept names $X_0,\dots,X_{n-1},\overline{X}_{0},\ldots,\overline{X}_{n-1}$
and $Y_0,\dots,Y_{n-1},\overline{Y}_{0},\ldots,\overline{Y}_{n-1}$ as their bits, respectively. 

$\Gmc$ contains the inclusions
$$
\overline{X}_{i} \sqsubseteq \neg X_{i}, \quad \overline{Y}_{i} \sqsubseteq \neg Y_{i},
$$
for $i=0,\ldots,n-1$. Counters are relevant only if the concept
$$
{\sf Def}= (\bigsqcap_{0=1..n-1}(X_{i} \sqcup \overline{X}_{i})) \sqcap 
(\bigsqcap_{0=1..n-1}(Y_{i} \sqcup \overline{Y}_{i}))
$$
is true. $\Gmc$ contains the following well-known inclusions stating that the value
of the counter $X$ is incremented when going to
$x$-successors (and ${\sf Def}$ is true) and the value of the counter $Y$ is incremented when 
going to $y$-successors (and ${\sf Def}$ is true): for $k= 0,\ldots,n-1$,
$$
{\sf Def} \sqcap \bigsqcap_{j=0..k-1} X_j 
\sqsubseteq P_{k}
$$
where
$$
P_{k}= (X_k \rightarrow \forall x.({\sf Def} \rightarrow \overline{X}_k )) \sqcap 
        (\overline{X}_k \rightarrow \forall x.({\sf Def} \rightarrow X_k )  )  
$$
and 
$$
{\sf Def} \sqcap \bigsqcup_{j=0..k-1} \overline{X}_j 
\sqsubseteq Q_{k}
$$
where
$$
Q_{k}=
(X_k \rightarrow \forall x. ({\sf Def} \rightarrow X_k )) \sqcap 
        (\overline{X}_k \rightarrow \forall x. ({\sf Def} \rightarrow \overline{X}_k )) 
$$
and similarly for $Y$ and $y$. $\Gmc$ also states that
the value of the counter $X$ does not change when going to $y$-successors 
and the value of the counter $Y$ does not change when going to  $x$-successors:
for $i=0,\ldots,n-1$,
$$
{\sf Def} \sqcap X_{i} \sqsubseteq \forall y.({\sf Def}\rightarrow X_{i}), \quad 
{\sf Def} \sqcap \overline{X}_{i} \sqsubseteq \forall y.({\sf Def} \rightarrow \overline{X}_{i})
$$
and similarly for $Y$ and $x$.
In addition, $\Gmc$ states that when the counter $X$ is $2^{n}-1$, there is no $x$-successor
(with ${\sf Def}$) and if the counter $Y$ is $2^{n}-1$, there is no $y$-successor (with ${\sf Def}$):
$$
{\sf Def} \sqcap X_{0}\sqcap \cdots \sqcap X_{n-1} \sqsubseteq \forall x.({\sf Def} \rightarrow \bot)
$$
and
$$
{\sf Def} \sqcap Y_{0}\sqcap \cdots \sqcap Y_{n-1} \sqsubseteq \forall y.({\sf Def} \rightarrow \bot)
$$
This finishes the definition of $\Gmc$. Define the schema
\begin{eqnarray*}
\Sbf_{\Gmc} & = & \{x,y,X_{0},\ldots,X_{n-1},\overline{X}_{0},\ldots,\overline{X}_{n-1}\}\cup\\
     &   & \{Y_{0},\ldots,Y_{n-1},\overline{Y}_{0},\ldots,\overline{Y}_{n-1}\}.
\end{eqnarray*}
We set $\Omc_{2}=\Gmc \cup \{E \sqsubseteq E\}$ (the latter inclusion merely serves to
ensure $E$ is part of the schema of $\Omc_{2}$).

\medskip

We now extend $\Gmc$ to another ontology $\Gmc^{t}$. In addition to the inclusions in
$\Gmc$, $\Gmc^{t}$ states that $T_{\mn{init}}$ holds at $(0,0)$:
$$
\neg X_{0}\sqcap \cdots \sqcap \neg X_{n-1} \sqcap \neg Y_{0} \sqcap \cdots \sqcap \neg Y_{n-1} 
\sqsubseteq T_{\mn{init}}
$$
and that the tiling is complete on ${\sf Def}$:
$$
{\sf Def} \sqsubseteq \bigsqcup_{i=1..p}T_{i},
$$
Next, $\Gmc^{t}$ states that if a tiling condition is violated, then a concept name $E$ is true.
For all $i\not=j$:
$$
T_{i} \sqcap T_{j} \sqsubseteq E,
$$
for all $(i,j)\not\in H$: 
$$
T_{i} \sqcap \exists x.T_{j} \sqsubseteq E,
$$
and for all $(i,j)\not\in V$:
$$
T_{i} \sqcap \exists y.T_{j} \sqsubseteq E.
$$
Finally, $E$ is propagated along $x$ and $y$:
$$
\exists x.E \sqsubseteq E, \quad \exists y.E \sqsubseteq E
$$ 
We set $\Omc_{1}=\Gmc^{t}$ and show:

\medskip
\noindent
\emph{Claim.} The following conditions are equivalent:
\begin{enumerate}
\item the $2^{n}\times 2^{n}$-tiling problem for $(\Tmf,H,V)$ has no solution;
\item $q_{\Sbf_{\Gmc},\Omc_{1},E(x)}$ is not contained in $q_{\Sbf_{\Gmc},\Omc_{2},E(x)}$;
\item $q_{\Sbf_{\Gmc},\Omc_{1},\exists x.E(x)}$ is not contained in $q_{\Sbf_{\Gmc},\Omc_{2},\exists x.E(x)}$.
\end{enumerate}
Assume first that $(\Tmf,H,V)$ admits no $2^{n}\times 2^{n}$-tiling. 
Define a $\Sbf_{\Gmc}$-instance $\Dmf_{\Gmc}$ as follows.
We regard the pairs
$(i,j)$ with $i\leq 2^{n}-1$ and $j\leq 2^{n}-1$ as constants and let 
\begin{itemize}
\item $x((i,j),(i+1,j))\in \Dmf_{\Gmc}$
for $i<2^{n}-1$ and 
\item $y((i,j),(i,j+1))\in \Dmf_{\Gmc}$ for $j<2^{n}-1$. 
\end{itemize}
We also set
\begin{itemize}
\item $X_{k}(i,j)\in \Dmf_{\Gmc}$ if the $k$th bit of $i$ is $1$,
\item $\overline{X}_{k}(i,j)\in \Dmf_{\Gmc}$ if the $k$th bit of $i$ is $0$,
\item $Y_{k}(i,j)\in \Dmf_{\Gmc}$ if the $k$th bit of $j$ is $1$, and 
\item $\overline{Y}_{k}(i,j)\in \Dmf_{\Gmc}$ if the $k$th bit of $j$ is $0$. 
\end{itemize}
Then
\begin{itemize}
\item $q_{\Sbf_{\Gmc},\Omc_{2},E(x)}(\Dmf_{\Gmc})=\emptyset$ and
\item $q_{\Sbf_{\Gmc},\Omc_{2},\exists x.E(x)}(\Dmf_{\Gmc})=0$
\end{itemize}
since $\Dmf_{\Gmc}$ counts correctly, and hence is satisfiable w.r.t. $\Omc_{2}$.
However, since $(\Tmf,H,V)$ admits no $2^{n}\times 2^{n}$-tiling, it follows that 
\begin{itemize}
\item $(0,0)\in q_{\Sbf_{\Gmc},\Omc_{1},E(x)}(\Dmf_{\Gmc})$;
\item $q_{\Sbf_{\Gmc},\Omc_{1},\exists x.E(x)}(\Dmf_{\Gmc})=1$.
\end{itemize}
We have proved Points~2 and 3.

\medskip
 
Conversely, assume that $(\Tmf,H,V)$ admits a $2^{n}\times 2^{n}$-tiling 
given by $f:\{0,\dots,2^{n}-1\} \times \{0,\dots,2^{n}-1\}\rightarrow \Tmf$.
We show that $q_{\Sbf_{\Gmc},\Omc_{1},\exists x.E(x)}(\Dmf)=0$ for all $\Sbf_{\Gmc}$-instances
$\Dmf$ which are satisfiable w.r.t.~$\Omc_{2}$. Then Points~2 and 3 are refuted, as required.

Assume $\Dmf$ is satisfiable w.r.t.~$\Omc_{2}$. We define a model $(\dom,\Dmf')$ of $\Omc_{1}$ with 
$\Dmf'\supseteq \Dmf$ as follows: the domain of $\Dmf'$ coincides with $\adom(\Dmf)$.
Symbols from $\Sbf_{\Gmc}$ are defined in $\Dmf'$ in exactly the same way
as in $\Dmf$. To define the facts involving tile types $T_{k}$ associate with every $d\in \adom(\Dmf)$ 
such that ${\sf Def}$ applies to $d$, the uniquely determined pair $v(d)=(i,j)$ given to the values of the counters 
$X$ and $Y$ by ${\sf Def}$. Then set $T_{k}(d)\in \Dmf'$ iff $f(v(d)) = T_{k}$.
Note that $\Dmf'$ contains no facts involving $E$.
It is readily checked that the resulting structure is a model of $\Omc_{1}$.
\end{proof}

\medskip
\noindent
{\bf Proposition~\ref{prof:FOtoDLog}.} 
\emph{If $Q = (\Sbf,\Omc,q)$ is an ontology-mediated
  query with $\Omc$ formulated in equality-free FO and $q$ a UCQ, then
  $q_Q$ is preserved by homomorphisms. Consequently, it follows from
  \cite{DBLP:journals/jacm/Rossman08} that if $q_Q$ is FO-rewritable,
  then $q_Q$ is rewritable into a UCQ (thus into datalog).}

\medskip
\noindent
\begin{proof}
  Let $h:\Dmf_1\to\Dmf_2$ be a homomorphism, and $\abf$ a tuple from
  $\adom(\Dmf_1)$ such that $\abf\in q_Q(\Dmf_1)$. Furthermore,
  suppose for the sake of contradiction that $h(\abf)\not\in
  q_Q(\Dmf_2)$. Then there is a finite relational structure
  $(\mn{dom}_2,\Dmf'_2)\models\Omc$ such that $\Dmf_2\subseteq
  \Dmf'_2$ and $h(\abf)\not\in q(\Dmf'_2)$. Let $(\mn{dom}_1,\Dmf'_1)$
  be the inverse image of $(\mn{dom}_2,\Dmf'_2)$ under $h$. More
  precisely, $\mn{dom}_1 = \adom(\Dmf_1)\cup
  (\mn{dom}_2\setminus\adom(\Dmf_2))$, and $\Dmf'_1$ contains all
  facts whose $\widehat{h}$-image is a fact of $\Dmf'_2$ where
  $\widehat{h}$ is the map that extends $h$ by sending every element
  of $\adom(\Dmf'_2)\setminus\adom(\Dmf_2)$ to itself.  Clearly,
  $\Dmf_1\subseteq\Dmf'_1$. Furthermore, $\abf\not\in q(\Dmf'_1)$
  because $\widehat{h}:\Dmf'_1\to\Dmf'_2$ is a homomorphism and $q$ is
  preserved by homomorphisms. To obtain a contradiction against
  $\abf\in q_Q(\Dmf_1)$, it therefore only remains to show that
  $(\mn{dom}_1,\Dmf'_1)\models\Omc$.  It is known that equality-free
  first-order sentences are preserved by passing from a structure to
  its quotient under an equivalence relation that is a congruence. By
  construction, the kernel of the map $\widehat{h}$ is a congruence
  relation on the structure $(\mn{dom}_1,\Dmf'_1)$ and its quotient is
  isomorphic to $(\mn{dom}_2,\Dmf'_2)$.
\end{proof}

The following lemma reduces the problem of deciding FO-rewritability from
generalized CSP with constants to generalized CSP without constants.
\begin{lemma}\label{lem:constants}
Let $\mathcal{F}$ be a finite set of $\Sbf\cup \cbf$-structures. The following
conditions are equivalent:
\begin{enumerate}
\item coCSP($\mathcal{F}$) is FO-definable;
\item coCSP($\mathcal{F}^{c})$ is FO-definable;
\end{enumerate}
\end{lemma}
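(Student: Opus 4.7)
The proof splits into two directions.

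Direction $(2) \Rightarrow (1)$ proceeds by a syntactic translation. Given an FO-sentence $\psi$ over $\Sbf \cup \{P_1, \ldots, P_m\}$ defining coCSP$(\mathcal{F}^c)$, replace every atom $P_i(x)$ with the equality $x = c_i$ to obtain an FO-formula $\psi'(x_1, \ldots, x_m)$ that we read as the $m$-ary query whose free variables play the role of the constants $c_i$. Correctness follows from the observation that, for any $\Sbf$-instance $\Dmf$ and tuple $\dbf \in \adom(\Dmf)^m$, the associated $\Sbf \cup \{P_1, \ldots, P_m\}$-structure $\widehat{(\Dmf,\dbf)}$ (interpreting each $P_i$ as the singleton $\{d_i\}$) satisfies $\widehat{(\Dmf,\dbf)} \models \psi$ iff $(\Dmf,\dbf) \models \psi'$, and moreover $(\Dmf,\dbf) \to \Bmf$ iff $\widehat{(\Dmf,\dbf)} \to \Bmf^c$ (both amount to a $\Sbf$-homomorphism sending $d_i$ to $c_i^\Bmf$). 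Hence $\dbf \in {}$coCSP$(\mathcal{F})(\Dmf)$ iff $\widehat{(\Dmf,\dbf)} \in {}$coCSP$(\mathcal{F}^c)$, and $\psi'$ defines coCSP$(\mathcal{F})$.

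The interesting direction is $(1) \Rightarrow (2)$. The plan is to invoke the finite-obstruction-set characterization of FO-definable (generalized) CSPs with constants, which follows from Atserias's theorem combined with Rossman's homomorphism preservation theorem for the constant case. This yields, from the FO-definability of coCSP$(\mathcal{F})$, a finite family $\mathcal{O}$ of $\Sbf \cup \cbf$-structures such that $(\Dmf,\dbf) \in {}$coCSP$(\mathcal{F})$ iff some $\Omf \in \mathcal{O}$ admits a homomorphism $\Omf \to (\Dmf,\dbf)$. To conclude FO-definability of coCSP$(\mathcal{F}^c)$ it suffices to exhibit a finite obstruction set $\mathcal{O}'$ of $\Sbf \cup \{P_1, \ldots, P_m\}$-structures, which will directly yield a UCQ $\bigvee_{\Omf' \in \mathcal{O}'} q_{\Omf'}$ defining coCSP$(\mathcal{F}^c)$.

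The natural candidate is $\mathcal{O}' := \bigcup_{\Omf \in \mathcal{O}} \Omf^{ac}$, using the anti-collapse construction from Appendix~B.1. Property~1 of the anti-collapse states that $\Omf \to \Amf^c$ (with $\Amf^c$ defined) iff some $\Omf' \in \Omf^{ac}$ satisfies $\Omf' \to \Amf$. Combined with the elementary equivalence $\Amf \to \Bmf^c$ iff $\Amf^c \to \Bmf$, valid whenever $\Amf^c$ is defined, this gives, for any target $\Amf$ in which every $P_i^\Amf$ is nonempty, the desired characterization $\Amf \in {}$coCSP$(\mathcal{F}^c)$ iff some $\Omf' \in \mathcal{O}'$ has $\Omf' \to \Amf$.

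The hard part will be handling targets $\Amf$ in which some $P_i^\Amf$ is empty: every member of $\mathcal{O}'$ has all $P_i$ nonempty by construction of the anti-collapse, so no candidate obstruction maps into such $\Amf$. The plan is to exploit the fact that when $P_i^\Amf = \emptyset$ the constraint on $P_i$ in any homomorphism $\Amf \to \Bmf^c$ becomes vacuous, so that $\Amf$ and its completion $\Amf^+$ obtained by adding one fresh isolated element to each empty $P_i$ belong to coCSP$(\mathcal{F}^c)$ together, and to augment $\mathcal{O}'$ with additional obstructions derived from $\mathcal{O}$ by permitting certain elements of members of $\Omf^{ac}$ to play the role of such isolated fillers. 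Verifying soundness, completeness, and finiteness of this augmentation will be the technically delicate step, resting on a careful analysis of how homomorphisms out of anti-collapse structures interact with targets having partially empty $P_i$ interpretations.
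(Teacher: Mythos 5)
Your direction (2) $\Rightarrow$ (1) is correct and is essentially the same syntactic substitution the paper uses, and your overall strategy for (1) $\Rightarrow$ (2) --- passing through the finite-obstruction-set characterization of FO-definability (Atserias, together with Rossman's homomorphism preservation theorem for the case with constants) --- also matches the paper. The divergence is at the transfer step: the paper closes the argument by citing Proposition A.2(1) of Alexe, ten Cate, Kolaitis and Tan, which states that a finite obstruction set for coCSP($\Fmc$) yields one for coCSP($\Fmc^{c}$), whereas you attempt to build the obstruction set for coCSP($\Fmc^{c}$) by hand, as the union of the anti-collapses of the obstructions for coCSP($\Fmc$).

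This is where the genuine gap lies. Your candidate set is only shown to work for inputs $\Amf$ in which every $P_i^{\Amf}$ is nonempty; as you yourself note, every anti-collapse structure interprets all $P_i$ nonemptily, so none of your obstructions can map into an input with some $P_i^{\Amf}=\emptyset$, and such inputs do in general belong to coCSP($\Fmc^{c}$) (for instance when the $\Sbf$-reduct of $\Amf$ already fails to map into the $\Sbf$-reduct of every template). For exactly these inputs you offer only a plan --- adding isolated filler elements and augmenting the obstruction set --- and you explicitly leave the soundness, completeness and finiteness of that augmentation unverified, so the proof of (1) $\Rightarrow$ (2) is incomplete as submitted. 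The missing step can be carried out along your lines: partition the inputs according to the (FO-expressible) set $S$ of indices $i$ with $P_i^{\Amf}=\emptyset$, and for each $S$ take as obstructions the structures obtained from your anti-collapse obstructions by deleting those elements that participate in no $\Sbf$-fact and carry no label other than some $P_i$ with $i\in S$; your filler observation (that adding one fresh isolated $P_i$-element per empty $P_i$ does not change membership in coCSP($\Fmc^{c}$)) then shows these are exactly the obstructions needed on that class, and the final FO-definition is a finite case distinction over $S$. Alternatively, one can simply invoke the cited transfer result, as the paper does; but one of these completions has to be written out for the lemma to be proved.
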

\begin{proof}
If coCSP($\mathcal{F}^{c}$) is defined by a first-order sentence $\varphi$,
then replacing every subformula of the form $P_i(x)$ in $\varphi$ by
$x=c_i$ yields a first-order sentence defining coCSP($\mathcal{F}$).

For the converse, we make use 
a characterization of FO-definability of generalized coCSPs with constants
using finite obstruction sets.
Let $\mathcal{F}$ be a finite set of $\Sbf\cup \cbf$-structures.
A set $\mathcal{D}$ of $\Sbf\cup\cbf$-structures
is an \emph{obstruction set for CSP($\mathcal{F}$)} if for all $\Sbf \cup \cbf$-structures $\Dmf$
the following conditions are equivalent:
\begin{itemize}
\item there exists $\Bmf\in \mathcal{F}$ such that $\Dmf \rightarrow \Bmf$;
\item there does not exist $\Amf\in \mathcal{D}$ such that $\Amf\rightarrow \Dmf$.
\end{itemize}
It is known that, for any finite set of structures $\mathcal{F}$,
coCSP($\mathcal{F}$) is FO-definable if and only if $\mathcal{F}$ has
a finite obstruction set. This was shown in \cite{Atserias2005} for
structures without constant symbols, and follows easily from results
in \cite{DBLP:journals/jacm/Rossman08} even for the case of structures
with constants. Finally, it was shown in 
Proposition A.2 (1) in \cite{DBLP:journals/tods/AlexeCKT11} that
if coCSP($\mathcal{F}$)  has a finite obstruction set, then so does
coCSP($\mathcal{F}^c$).
%
%
%
\end{proof}
The following lemma reduces the problem of deciding FO-definability from
generalized CSP without constants to CSP without constants.
\begin{lemma}\label{lem:redgen}
Let $\mathcal{F}$ be a finite set of $\Sbf\cup \cbf$-structures. 
\begin{itemize}
\item If coCSP($\Bmf$) is FO-definable
for all $\Bmf\in \mathcal{F}$, then coCSP($\mathcal{F}$) is FO-definable.
\item
Conversely, if all $\Bmf\in \mathcal{F}$ are mutually homomorphically incomparable,
and coCSP($\mathcal{F}$) is FO-definable, then each coCSP($\Bmf$), $\Bmf\in \mathcal{F}$,
is FO-definable.
\end{itemize}
\end{lemma}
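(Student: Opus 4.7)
The forward direction is immediate: by definition $\mathrm{coCSP}(\mathcal{F}) = \bigcap_{\Bmf\in\mathcal{F}}\mathrm{coCSP}(\Bmf)$, and the class of FO-definable queries is closed under finite conjunction. So if each $\mathrm{coCSP}(\Bmf)$ is defined by a sentence $\varphi_{\Bmf}$, then $\bigwedge_{\Bmf\in\mathcal{F}}\varphi_{\Bmf}$ defines $\mathrm{coCSP}(\mathcal{F})$. All of the work is in the converse.

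For the converse I would invoke the characterization of FO-definable generalized coCSPs by \emph{finite obstruction sets} that was already used in the proof of Lemma~\ref{lem:constants} (Atserias~\cite{Atserias2005}, Rossman~\cite{DBLP:journals/jacm/Rossman08}): $\mathrm{coCSP}(\mathcal{G})$ is FO-definable iff $\mathcal{G}$ has a finite obstruction set. Write $\mathcal{F}=\{\Bmf_{1},\ldots,\Bmf_{k}\}$, let $\mathcal{D}$ be a finite obstruction set for $\mathcal{F}$, and fix an arbitrary $\Bmf_{i}$; the task reduces to producing a finite obstruction set $\mathcal{D}_{i}$ for $\{\Bmf_{i}\}$. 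The central observation---the sole place where the homomorphic incomparability of $\mathcal{F}$ enters---is the equivalence
\[
\Dmf \to \Bmf_{i} \;\iff\; \Dmf \uplus \Bmf_{i} \to \Bmf \text{ for some } \Bmf \in \mathcal{F}.
\]
The direction ``$\Rightarrow$'' extends $\Dmf \to \Bmf_{i}$ by the identity on the $\Bmf_{i}$-summand. For ``$\Leftarrow$'', any $\Dmf \uplus \Bmf_{i} \to \Bmf_{j}$ restricts to $\Bmf_{i} \to \Bmf_{j}$, which by incomparability (together with $\Bmf_{i} \to \Bmf_{i}$) forces $j=i$; restricting the same map to $\Dmf$ then gives $\Dmf \to \Bmf_{i}$. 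Negating, $\Dmf \not\to \Bmf_{i}$ iff some $\Amf \in \mathcal{D}$ maps into $\Dmf \uplus \Bmf_{i}$.

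Any homomorphism into a disjoint union sends each connected component to a single side. Decomposing $\Amf = \Amf^{1}\uplus \cdots \uplus \Amf^{r(\Amf)}$ into its connected components, we therefore have $\Amf \to \Dmf \uplus \Bmf_{i}$ iff there is a partition $\{1,\ldots,r(\Amf)\} = I_{1}\uplus I_{2}$ with $\bigcup_{j\in I_{1}}\Amf^{j}\to \Bmf_{i}$ and $\bigcup_{j\in I_{2}}\Amf^{j}\to \Dmf$. The natural definition
\[
\mathcal{D}_{i} \;=\; \Bigl\{\,\textstyle\bigcup_{j\in I_{2}}\Amf^{j} \;\Big|\; \Amf\in\mathcal{D},\; I_{1}\uplus I_{2}\text{ partitions the components of }\Amf,\; \bigcup_{j\in I_{1}}\Amf^{j}\to \Bmf_{i}\,\Bigr\}
\]
is finite, and a routine check (``$\Rightarrow$'' takes $(I_{1},I_{2})$ to be the partition induced by a witnessing homomorphism; ``$\Leftarrow$'' combines the two component homomorphisms back into a map $\Amf\to\Dmf\uplus\Bmf_{i}$ and invokes the key observation) shows that $\mathcal{D}_{i}$ is a finite obstruction set for $\{\Bmf_{i}\}$. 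The main obstacle is the treatment of constant symbols, since disjoint unions are not well-defined in the naive sense on structures with designated constants and incomparability does not obviously transfer to the expansions $\Bmf^{c}$; I would handle this by first invoking Lemma~\ref{lem:constants} (and its straightforward generalized-CSP analogue) to reduce to the constant-free setting, where the argument above runs cleanly, rather than doing the bookkeeping of forcing each component carrying a constant to the corresponding summand directly.
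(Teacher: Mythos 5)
Your proof is correct, but the converse direction takes a genuinely different route from the paper's. Both arguments hinge on the same Atserias/Rossman characterization (FO-definability of a coCSP is equivalent to having a finite obstruction set) and both use homomorphic incomparability at exactly one point, namely to ensure that a structure containing a disjoint copy of (a piece of) $\Bmf_i$ can only map to $\Bmf_i$ itself. The paper, however, argues by contraposition via \emph{critical} (minimal) obstructions: if some coCSP($\Bmf_0$) is not FO-definable, it has infinitely many critical obstructions, and forming the disjoint union of each with a substructure $\Bmf_0'$ of $\Bmf_0$ that maps to no other template yields infinitely many critical obstructions for CSP($\mathcal{F}$), so coCSP($\mathcal{F}$) is not FO-definable. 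You instead transfer finite obstruction sets directly in the positive direction: padding the instance so that $\Dmf\to\Bmf_i$ iff $\Dmf\uplus\Bmf_i\to\Bmf$ for some $\Bmf\in\mathcal{F}$, and splitting each obstruction $\Amf$ for $\mathcal{F}$ along its connected components into a part absorbed by $\Bmf_i$ and a residual part, which then serves as an obstruction for $\Bmf_i$. Your version is explicit and avoids the notion of critical obstruction (and the auxiliary equivalence between finite obstruction sets and finitely many critical obstructions); the paper's version avoids the component bookkeeping. One loose end in your plan: passing to the constant-free expansions via Lemma~\ref{lem:constants} does not by itself dispense with the incomparability worry you raise, since you still need the set $\{\Bmf^c\mid\Bmf\in\mathcal{F}\}$ to be pairwise incomparable to run the constant-free argument; fortunately this is immediate, because a homomorphism $\Bmf_i^c\to\Bmf_j^c$ preserves the singleton predicates $P_k$ and is therefore a constant-preserving homomorphism $\Bmf_i\to\Bmf_j$. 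With that one-line observation added, your route goes through, and it arguably treats the constants more carefully than the paper's own proof, which forms disjoint unions of structures with constants without comment.
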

\begin{proof}
For Point~1 choose for every $\mathfrak{B}\in \mathcal{F}$ a FO-sentence $\varphi_{\mathfrak{B}}$ 
such that $(\dom,\Dmf)\models \varphi_{\Bmf}$ iff $\Dmf\not\rightarrow\Bmf$ for all $\Sbf$-instances $\Dmf$.
Let $\varphi$ be the conjunction over all $\varphi_{\Bmf}$ with $\Bmf\in \mathcal{F}$. Then 
$(\dom,\Dmf)\models \varphi$ iff $\Dmf\not\rightarrow\Bmf$ for any $\Bmf\in \mathcal{F}$ holds for
all $\Sbf$-instances $\Dmf$, as required.

To prove the other direction we require the notion of a \emph{critical obstruction}: a $\Sbf$-structure
$\Amf$ is called a critical obstruction for CSP($\mathcal{G}$) iff $\Amf\not\rightarrow \Bmf$ for any
$\Bmf\in \mathcal{G}$ but for any proper substructure $\Amf'$ of $\Amf$ there exists a $\Bmf\in \mathcal{F}$
such that $\Amf'\rightarrow \Bmf$. It is readily checked that coCSP($\mathcal{G}$) has a finite
obstruction set iff there only exist finitely many critical obstructions for CSP($\mathcal{G}$).

For Point~2 assume that all $\Bmf\in \mathcal{F}$ are mutually homomorphically incomparable
and that coCSP($\mathcal{F}$) is FO-definable. Assume for a proof by contradiction 
that coCSP($\Bmf_{0}$) is not FO-definable for some $\Bmf_{0}\in \mathcal{F}$. Then the set $\mathcal{C}$
of critical obstructions for CSP($\Bmf_{0}$) is infinite. Let $\Bmf_{0}'$ be a substructure of $\Bmf_{0}'$ such
that no proper substructure of $\Bmf_{0}$ can be homomorphically mapped to any 
$\Bmf\in \mathcal{F}\setminus\{\Bmf_{0}\}$.
It is readily checked that the set $\Cmc'$ of disjoint unions $\Amf \cup \Bmf_{0}'$, $\Amf\in \Cmc$,
are critical obstructions for CSP($\mathcal{F}$). Thus coCSP($\mathcal{F}$) is not FO-definable
and we have derived a contradiction.
\end{proof}

Next, we move on the datalog-definability.

\begin{lemma}
Let $\mathcal{F}$ be a finite set of $\Sbf\cup\cbf$-structures. 
\begin{enumerate}
\item If coCSP($\Bmf^{c}$) is datalog-definable
for all $\Bmf\in \mathcal{F}$, then coCSP($\mathcal{F}$) is datalog-definable.
\item
Conversely, if all $\Bmf\in \mathcal{F}$ are mutually homomorphically incomparable,
and coCSP($\mathcal{F}$) is datalog-definable, then each coCSP($\Bmf^{c}$), $\Bmf\in \mathcal{F}$,
is datalog-definable.
\end{enumerate}
\end{lemma}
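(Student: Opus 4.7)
The proof mirrors the strategy used for FO-definability in Lemmas~\ref{lem:constants} and~\ref{lem:redgen}, but the datalog case requires different technical tools since datalog-definability lacks the clean obstruction-set characterization available for FO-definability. The argument factors through two reductions: \emph{(a)} for any finite $\Sbf\cup\cbf$-structure $\Bmf$, coCSP($\Bmf$) is datalog-definable iff coCSP($\Bmf^c$) is datalog-definable; and \emph{(b)} for pairwise homomorphically incomparable structures $\Bmf_1,\ldots,\Bmf_n$, coCSP($\{\Bmf_1,\ldots,\Bmf_n\}$) is datalog-definable iff each coCSP($\Bmf_i$) is datalog-definable. Combining the forward direction of (a) with the forward direction of (b) yields Point~1; combining the reverse direction of (b) with the reverse direction of (a) yields Point~2.

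For (a), the direction ``coCSP($\Bmf^c$) datalog-definable $\Rightarrow$ coCSP($\Bmf$) datalog-definable'' is obtained by taking a program $\Pi$ for coCSP($\Bmf^c$) and adding the ground rules $P_i(c_i)\leftarrow$ for every constant $c_i\in\cbf$; this effectively turns an input $\mathcal{D}$ over $\Sbf\cup\cbf$ into $\hat{\mathcal{D}}$, and correctness follows from the elementary observation that $\mathcal{D}\to\Bmf$ iff $\hat{\mathcal{D}}\to\Bmf^c$. For the reverse direction, the syntactic idea is to replace each occurrence of a constant $c_i$ in a rule of $\Pi$ by a fresh variable $y_i$ guarded by the atom $P_i(y_i)$. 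However, a na\"ive substitution is unsound when $P_i^{\mathcal{D}}$ contains multiple elements, since coCSP($\Bmf^c$) requires all of them to be identified and mapped to $c_i^{\Bmf}$. To handle this, we additionally compute in datalog the congruence $\sim$ generated by common $P_i$-membership, and simulate $\Pi$ on the $\sim$-closures of the input relations; both operations are expressible since datalog computes least fixpoints.

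For (b), the forward direction exploits closure of datalog under intersection: taking the union of the individual programs (with disjointly renamed IDB predicates) together with a new goal rule requiring all component goals to fire defines coCSP($\mathcal{F}$). The reverse direction uses an isolation argument parallel to Lemma~\ref{lem:redgen}(2). To extract a program for coCSP($\{\Bmf_0\}$) from one for coCSP($\mathcal{F}$), we augment $\Pi$ with ground facts encoding $\Bmf_0$: we introduce fresh constant symbols $d_b$ for each non-designated element $b$ of $\Bmf_0$, we use the input constants $c_j$ for the designated elements, and we add a ground rule $R(t_1,\ldots,t_k)\leftarrow$ for each fact of $\Bmf_0$. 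The resulting program runs $\Pi$ effectively on the amalgamated structure $\mathcal{D}\cup\Bmf_0$ (with constants identified). Since $\Bmf_0\not\to\Bmf_i$ for every $i\neq 0$ by incomparability, no homomorphism from the augmented structure to $\Bmf_i$ exists for $i\neq 0$, while for $i=0$ existence reduces to $\mathcal{D}\to\Bmf_0$.

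The main obstacle is the reverse direction of (a). The FO case of Lemma~\ref{lem:constants} admits a trivial syntactic substitution of $P_i(x)$ by $x=c_i$, but datalog lacks equality, and the ``collapse'' semantics of coCSP($\Bmf^c$) forces explicit congruence-closure machinery inside the datalog program. The soundness and completeness proof for this transformation is significantly more intricate than its FO counterpart, requiring us to verify that propagating $\sim$-equivalent atoms through the rule-derivation process produces exactly the set of derived goal facts that would be obtained by running $\Pi$ on the proper collapse, with the additional care of neutralizing the case in which some $P_i^{\mathcal{D}}$ is empty (handled by observing that the answer to coCSP($\Bmf^c$) is invariant under adding a fresh singleton to any empty $P_i$).
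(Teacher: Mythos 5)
Your forward direction is fine and is essentially the paper's own argument (closure of datalog under conjunction, then eliminating the $P_i$-atoms in favour of the constants/answer positions). The gap is in your two reverse reductions, and it is the same gap in both: a datalog program cannot enlarge the domain of its input, so your "augment the input" steps are semantic observations about the query, not constructions of a program. In (a)-reverse, the invariance of coCSP($\Bmf^c$) under adding a fresh $P_i$-singleton does not tell you how a program running on the \emph{given} instance should behave, and the empty-$P_i$ case genuinely changes the answer, so it cannot be absorbed silently: take $\Sbf=\{A\}$ with $A$ unary, let $\Bmf$ have the single element $b$ with $A^{\Bmf}=\emptyset$ and $c_1$ denoting $b$. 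Then coCSP($\Bmf$) is defined by $\mn{goal}(x)\leftarrow A(x)$, but on the input $\{A(a)\}$ (where $P_1$ is empty) the correct answer of coCSP($\Bmf^c$) is "yes'' (there is no homomorphism, since $A^{\Bmf^c}=\emptyset$), while your guarded/congruence-closed simulation, whose goal rule requires a $P_1$-witness, answers "no''; even after conceptually adding the fresh witness $e$, the designated class of $e$ carries no $\Sbf$-facts, hence lies outside $\adom$ of the $\Sbf$-reduct of the collapse and is not even a legal answer tuple for the $m$-ary query you are simulating, so the correctness of $\Pi$ gives you nothing there. The same domain-extension problem undermines (b)-reverse: "add a ground rule $R(t_1,\ldots,t_k)\leftarrow$ with fresh constants $d_b$'' is not a datalog program over $\Sbf$ in the sense of the paper, and making "run $\Pi$ on $\Dmf\cup\Bmf_0$'' into an actual program requires a genuine partial-evaluation/specialization construction (IDB predicates annotated by which argument positions are pinned to elements of $\Bmf_0$, and similarly for the virtual $P_i$-witnesses in (a)), together with a correctness proof; none of this is supplied, and it is where all the real work of your route would lie.

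It is also worth correcting your opening premise: datalog-definability \emph{does} have an obstruction-set characterization, and the paper's proof of Point~2 rests on it rather than on program transformations. A $k$-variable datalog program for coCSP($\Fmc$) yields, via unfoldings, an obstruction set of treewidth $k$; homomorphic incomparability plus a disjoint-union argument transfers bounded-treewidth obstructions from $\Fmc$ to each individual $\Bmf\in\Fmc$; Proposition A.2(1) of Alexe et al.\ shows that the passage from $\Bmf$ to $\Bmf^c$ preserves having obstruction sets of bounded treewidth; and the Feder--Vardi theorem (bounded-treewidth obstruction set iff datalog-definable, for constant-free templates) then gives datalog-definability of coCSP($\Bmf^c$). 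This semantic detour is exactly what avoids the empty-$P_i$, inactive-designated-element, and domain-extension issues on which your syntactic plan founders; if you want to keep a purely syntactic proof you must spell out and verify the specialization constructions, and as written Point~2 of your proposal has a genuine gap.
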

\begin{proof}
  (1) If each coCSP($\Bmf^c$) is
  datalog-definable, then, since datalog is closed under conjunction,
  we also have that coCSP($\mathcal{F}^c$) is datalog-definable.
  Let $\Pi$ be a datalog program that defines coCSP($\mathcal{F}^c$).
  A datalog program $\Pi'$ defining coCSP($\mathcal{F}$) may be
  obtained from $\Pi$ by replacing every $P_{i}(x)$ with $x=c_{i}$.
  
  For (2), we make use of a characterization of
  datalog-definability in terms of \emph{obstruction sets of bounded
    treewidth}. Recall from the proof of Lemma~\ref{lem:constants} the
  notion of an obstruction set for a set of structures. Suppose that
  coCSP($\mathcal{F}$) is definable by a datalog program whose
  rules contain at most $k$ variables. Then $\mathcal{F}$ has an
  obstruction set of treewidth $k$, namely, the set of all
  canonical structures of non-recursive datalog programs obtained 
  by unfolding the given datalog program finitely many times (a
  standard argument). 

We claim that, in fact, each
  $\Bmf\in\mathcal{F}$ has an obstruction set of treewidth $k$.
  We prove this claim by contraposition: if some $\Bmf\in\mathcal{F}$ does not have an
  obstruction set of treewidth at most $k$, there is a structure $\Amf$
  such that $\Amf\not\to\Bmf$, while, at the same time,
   $\Bmf'\to\Amf$ implies $\Bmf'\to\Bmf$ for all structures
   $\Bmf'$ of treewidth at most $k$. Now, take $\Amf'$ to be 
  the disjoint union of $\Amf$ and $\Bmf$. Then we have that
  $\Amf\not\to\mathcal{F}$ (here, we are using also the fact that
  $\mathcal{F}$ consists of homomorphically incomparable
  structures). At the same time, $\Bmf'\to\Amf$ implies $\Bmf'\to\Bmf$ for all structures
   $\Bmf'$ of treewidth at most $k$. Therefore, coCSP($\mathcal{F}$)
   has no obstruction set of bounded treewidth, a contradiction.

   So far, we have shown that, for each $\Bmf\in\mathcal{F}$,
   coCSP($\Bmf$) has an obstruction set of bounded tree width.  By
   Proposition A.2 (1) in \cite{DBLP:journals/tods/AlexeCKT11}, we
   have that, for all structures $\Amf$ with constant symbols, if
   coCSP($\Amf$) has an obstruction set of bounded treewidth, then 
   coCSP($\Amf^c$) has an obstruction set of bounded treewidth too
  (although it is not explicitly stated, it can easily be verified
  that the relevant construction
  used
  there preserves bounded treewidth).
Thus, we obtain that, for each $\Bmf\in\mathcal{F}$,
  coCSP($\Bmf^c$) has an obstruction set of bounded width.
    It was shown
  in  \cite{FederVardi} that, for any structure $\Amf$ without
  constant
  symbols, coCSP($\Amf$) is datalog-definable if and only if 
  $\Amf$ has an obstruction  set of bounded tree-width. 
  Therefore we have that, for each $\Bmf\in\mathcal{F}$,
  coCSP($\Bmf^c)$ is datalog-definable. 
\end{proof}

The above lemmas, together, establish Proposition~\ref{prop:eliminating-constants}.

We now proceed with the proof of Theorem~\ref{thm:definability}.


We now give the lower bound proofs for Theorem~\ref{thm:definability}.

\begin{lemma}\label{lem:lower}
It is \NExpTime-hard to decide FO-rewritability of queries in 
($\mathcal{ALC}$,AQ) and of queries in ($\mathcal{ALC}$,BAQ). 
\end{lemma}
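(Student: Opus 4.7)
The plan is to adapt the tiling-based \NExpTime-hardness reduction used in the proof of Theorem~\ref{ContNEXP} to the FO-rewritability problem. We reduce from the $2^n \times 2^n$-tiling problem: given an instance $(\Tmf,H,V)$, we construct an \ALC-ontology $\Omc$ (over a suitable schema $\Sbf$) and a concept name $A$ such that the query $(\Sbf,\Omc,A(x))$ is FO-rewritable if and only if $(\Tmf,H,V)$ has no solution (and likewise for the Boolean version $\exists x.A(x)$, which will work with essentially the same construction).

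First, I would reuse the ontology $\Omc_1 = \Gmc^t$ from the proof of Theorem~\ref{ContNEXP}, which encodes the $2^n \times 2^n$-grid via binary counters on roles $x,y$, forces every $\mn{Def}$ element to carry a tile type, and propagates an error marker $E$ backwards along $x$ and $y$ whenever a tile constraint is violated. On top of this, I would add a reachability-style gadget of the kind witnessed in Example~\ref{ex:second}: a fresh role $s$ together with inclusions $E \sqsubseteq A$ and $\exists s.A \sqsubseteq A$, so that $A$ is forced at a point $d$ precisely when some $E$-element is $s$-reachable from $d$, necessarily through a chain of data facts. Intuitively, then, $A(x)$ becomes a certain answer at $x$ exactly when some grid-shaped extension of the input instance necessarily makes $E$ fire somewhere $s$-reachable from $x$.

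Second, I would verify the two directions. If $(\Tmf,H,V)$ admits no tiling, then by the argument in the proof of Theorem~\ref{ContNEXP}, any extension of the input that correctly populates the grid is forced to fire $E$, and the only way to avoid $E$ is to avoid a properly-counting grid; this reduces the question of certain-answers for $A$ to a bounded-locality condition on the input which can be spelled out by an explicit UCQ, hence $(\Sbf,\Omc,A(x))$ is FO-rewritable. Conversely, if $(\Tmf,H,V)$ admits a solution, then for each $k$ we can build an $\Sbf$-instance $\Dmf_k$ consisting of a valid tiling together with a fresh $s$-chain of length $k$ whose head is attached to a tile-violating configuration; on this instance, $A(a)$ is a certain answer precisely because of the chain, while bounded-locality variants fail to entail $A(a)$. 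Passing through the translation of Theorem~\ref{ALCUtoCSP} to generalized coCSP with one constant symbol and the characterization of FO-definable (generalized) CSPs via finite obstruction sets (Proposition~\ref{prop:eliminating-constants} together with \cite{Atserias2005,DBLP:journals/jacm/Rossman08}), the existence of such unboundedly large critical obstructions rules out FO-definability of the corresponding CSP, and hence FO-rewritability of $(\Sbf,\Omc,A(x))$.

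The main obstacle will be engineering the reachability gadget so that it interacts cleanly with the grid-encoding part of $\Omc$: the $s$-chain must be reachable from $A$ exactly when the counting structure in the input is genuinely populated, and forced $E$-events coming from tile violations must not leak through the gadget so as to create extra certain answers in the ``no tiling'' case that would destroy FO-rewritability. A careful choice of which symbols are visible to the chain propagation (so that the chain cannot be short-circuited by the counting machinery) should avoid this interference. Finally, the same construction yields hardness for both (\ALC,AQ) and (\ALC,BAQ): the chain argument refutes FO-rewritability of $A(x)$, and by placing the reachability witness uniformly across $\Dmf_k$ it also refutes FO-rewritability of $\exists x.A(x)$, matching the \NExpTime upper bound of Theorem~\ref{FONExp}.
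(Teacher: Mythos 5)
Your construction is essentially the paper's gadget (reuse $\Gmc^{t}$ and let an error concept $E$ trigger $A$, which is then propagated along a fresh role $s$ through the data), but the equivalence you claim is inverted, and the arguments you give for both directions fail. You claim ``no tiling $\Rightarrow$ FO-rewritable'' because the certain answers then allegedly reduce to a bounded-locality condition; in fact the opposite happens. When $(\Tmf,H,V)$ has \emph{no} solution, every model of $\Gmc^{t}$ extending a correctly counting grid instance is forced to make $E$ true at the origin, so attaching an $s$-chain of length $m$ to the origin makes $A$ certain at the head of the chain for every $m$, while deleting any single chain edge destroys this. These are exactly the unboundedly large critical obstructions that (via Proposition~\ref{prof:FOtoDLog} and finite obstruction sets) refute FO-rewritability; there is no UCQ rewriting in this case. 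Conversely, your argument for ``tiling exists $\Rightarrow$ not FO-rewritable'' builds instances $\Dmf_k$ containing ``a tile-violating configuration'', but the data schema in this reduction is $\Sbf_{\Gmc}$ plus the chain symbols: the tile concepts $T_i$ and the error concept $E$ are \emph{not} data predicates, so no instance can assert a violation. That is the whole point of the construction: when a solution exists, the model builder can always tile any consistent instance according to the solution and keep $E$ (hence $A$) empty, so the certain answers degenerate to ``the instance is inconsistent with $\Gmc$'', which is FO-definable, and the query \emph{is} FO-rewritable. (If you instead put the tile predicates into the data schema to make your $\Dmf_k$ expressible, the chain-propagation phenomenon occurs for every tiling instance, the answer to FO-rewritability becomes constant, and the reduction no longer depends on $(\Tmf,H,V)$ at all.)

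There is also a complexity-theoretic problem even if your reversed equivalence could be engineered: a reduction with ``solution exists iff \emph{not} FO-rewritable'' is a reduction from the \NExpTime-complete tiling problem to the complement of FO-rewritability, hence it would only show co-\NExpTime-hardness of FO-rewritability, not the \NExpTime-hardness claimed in Lemma~\ref{lem:lower} and needed to match the upper bound of Theorem~\ref{FONExp}. The correct orientation, which the paper proves, is: $(\Tmf,H,V)$ has a solution iff $q_{\Sbf,\Omc,A(x)}$ (and $q_{\Sbf,\Omc,\exists x.A(x)}$) is FO-rewritable, with the ``no tiling'' side witnessed by the chain instances $\Dmf_m$ described above and the ``tiling'' side witnessed by the explicit rewriting $\varphi_{\Gmc}$ expressing inconsistency with the grid ontology. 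Your use of obstruction sets and the Rossman/Atserias characterization is fine, but it must be applied in the no-tiling case, not the case where a solution exists.
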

\begin{proof}
We prove the lower bound and employ for the reduction the same tiling problem
as in the lower bound proof of Theorem~\ref{ContNEXP}.
We also employ the ontologies constructed in the proof of Theorem~\ref{ContNEXP}.

For the reduction, let $n>0$ and $(\Tmf,H,V)$ be an instance of the $2^{n}\times 2^{n}$-tiling
problem with $\Tmf=\{T_1,\dots,T_p\}$. 
We construct a schema $\Sbf$, an $\ALC$-ontology $\Omc$ and a query $A(x)$ such that
$(\Tmf,H,V)$ has a solution if and only if $q_{\Sbf,\Omc,A(x)}$ is FO-rewritable
if and only if $q_{\Sbf,\Omc,\exists x.A(x)}$ is FO-rewritable.

We consider the ontology $\Gmc$, its extension $\Gmc^{t}$, and the schema $\Sbf_{\Gmc}$
from the proof of Theorem~\ref{ContNEXP}. To define $\Omc$, we take a fresh role name $S$
and two concept names $A$ and $F$ and set
$$
\Omc = \Gmc^{t} \cup \{\exists S.E \sqsubseteq E, E \sqcap F\sqsubseteq A\}
$$
and $\Sbf= \Sbf_\Gmc \cup \{S,F\}$.

\medskip
\noindent
\emph{Claim.} The following conditions are equivalent:
\begin{itemize}
\item $(\Tmf,H,V)$ admits no $2^{n}\times 2^{n}$-tiling;
\item $q_{\Sbf,\Omc,A(x)}$ is not FO-rewritable;
\item $q_{\Sbf,\Omc,\exists x.A(x)}$ is not FO-rewritable.
\end{itemize}

\medskip

Assume that $(\Tmf,H,V)$ admits no $2^{n}\times 2^{n}$-tiling.
$q_{\Sbf,\Omc,A(x)}$ is not FO-rewritable iff
there does not exist a finite set $\mathcal{D}$ of $\Sbf\cup\{c\}$-structures (an obstruction set)
such that the following
conditions are equivalent for every $\Sbf$-instance $\Dmf$ and $d\in \adom(\Dmf)$:
\begin{enumerate}
\item $d\in q_{\Sbf,\Omc,A(x)}(\Dmf)$.
\item there exists $\Amf\in \mathcal{D}$ such that $(\Amf,a)\rightarrow (\Dmf,d)$.
\end{enumerate}
We show that no finite obstruction set exists. To this end, we define $\Sbf$-instances 
$\Dmf_{m}$ as the union of $\Dmf_{\Gmc}$ and the facts
$$
F(a_{0}),S(a_{0},a_{1}),\ldots,S(a_{m},(0,0)).
$$
It is readily checked that
\begin{itemize}
\item $a_{0} \in q_{\Sbf,\Omc,A(x)}(\Dmf_{m})$ for all $m>0$;
\item $a_{0} \not\in q_{\Sbf,\Omc,A(x)}(\Dmf_{m}')$, where $\Dmf'_{m}$ results from $\Dmf_{m}$ by removing some fact $(a_{k},a_{k+1})$ from $\Dmf_{m}$.
\end{itemize}
It follows immediately that no finite obstruction set exists. The argument for
$q_{\Sbf,\Omc,\exists x.A(x)}$ is similar.

\medskip

Conversely, assume that $(\Tmf,H,V)$ has a $2^{n}\times 2^{n}$-tiling 
given by $f:\{0,\dots,2^{n}-1\} \times \{0,\dots,2^{n}-1\}\rightarrow \Tmf$.
We have to show that there exists an FO-formula $\varphi(x)$ over $\Sbf$ such
that for all $\Sbf$-instances $\Dmf$ and $d\in \adom(\Dmf)$, 
$(\adom(\Dmf),\Dmf)\models \varphi[d]$ iff $d\in q_{\Sbf,\Omc,A(x)}(\Dmf)$. 

Note that one can easily construct a first-order sentence $\varphi_{\Gmc}$ over $\Sbf_{\Gmc}$ such that,
for all $\Sbf_{\Gmc}$-instances $\Dmf$, the following are equivalent:
\begin{itemize}
\item $\Dmf$ is not satisfiable w.r.t.~$\Gmc$;
\item $(\adom,\Dmf)\models \varphi_{\Gmc}$.
\end{itemize}
We fix such a sentence $\varphi_{\Gmc}$ and show that the following are equivalent for
every $\Sbf$-instance $\Dmf$:
\begin{itemize}
\item $(\adom(\Dmf),\Dmf)\models \varphi_{\Gmc}$;
\item $d\in q_{\Sbf,\Omc,A(x)}(\Dmf)$.
\end{itemize}
The direction from Point~1 to Point~2 is trivial.
Conversely, assume that $(\adom(\Dmf),\Dmf)\not\models \varphi_{\Gmc}$.
Then $\Dmf$ is satisfiable w.r.t.~$\Gmc$. We define a model $(\dom,\Dmf')$ of $\Omc$ with 
$\Dmf'\supseteq \Dmf$ as follows. The domain of $\Dmf'$ coincides with $\adom(\Dmf)$.
Symbols from $\Sbf$ are defined in $\Dmf'$ in exactly the same way
as in $\Dmf$. To define the facts involving tile types $T_{k}$, associate with every 
$d\in \adom(\Dmf)$ such that ${\sf Def}$ applies to $d$, the uniquely determined pair $v(d)=(i,j)$ 
given to the values of the counters 
$X$ and $Y$ by ${\sf Def}$. Then set $T_{k}(d)\in \Dmf'$ iff $f(v(d)) = T_{k}$.
Note that $\Dmf'$ contains no facts involving $E$ or $A$.
It is readily checked that the resulting structure is a model of $\Omc$, as required.
\end{proof}

\begin{lemma}\label{datalog-rewrite-lower}
It is \NExpTime-hard to decide datalog-rewritability of queries in 
($\mathcal{ALC}$,AQ) and of queries in ($\mathcal{ALC}$,BAQ). 
\end{lemma}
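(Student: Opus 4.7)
The proof will be a \NExpTime-hardness reduction from the $2^n \times 2^n$-tiling problem, paralleling Lemma~\ref{lem:lower} but requiring a genuine modification. The obstacle with reusing the construction of Lemma~\ref{lem:lower} verbatim is that the critical obstructions there -- data instances consisting of the tiling grid $\Dmf_\Gmc$ together with a chain of $S$-edges of length $m$ -- all have treewidth bounded by a polynomial in $2^n$, which is a constant for a fixed input. Such an obstruction set lies within the scope of datalog, so the query produced by the Lemma~\ref{lem:lower} reduction would actually be datalog-rewritable even when no tiling exists. To defeat datalog-rewritability one must design the reduction so that the critical obstructions have treewidth growing with $m$.

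My plan is to augment the ontology $\Omc$ of Lemma~\ref{lem:lower} with a mesh-like propagation mechanism for the error concept $E$, in place of the single rule $\exists S.E \sqsubseteq E$. Specifically, I introduce fresh role names $S_1, S_2$ and auxiliary concept names, and use a small set of $\ALC$ inclusions (together with the grid-labeling primitives already available in $\Gmc^t$) to force $E$ to propagate cooperatively along pairs of $(S_1,S_2)$-successors: $E$ is derivable at a node $x$ only when coherent witnesses are present along both directions, arranged so that in a data instance shaped as an $m\times m$ mesh of $(S_1,S_2)$-edges whose ``far corner'' is $(0,0) \in \Dmf_\Gmc$ and whose ``near corner'' is a designated $F$-labeled element $a_0$, the derivation of $A(a_0)$ requires interplay between the grid topology and the tiling inconsistency at $(0,0)$, and can be carried out precisely when no tiling exists.

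The forward direction (a tiling exists $\Rightarrow$ datalog-rewritability) follows the same template as in Lemma~\ref{lem:lower}: a tiling $f$ can always be used to extend a $\Gmc$-consistent instance to a model in which no $E$ is derived anywhere, so $q_{\Sbf,\Omc,A(x)}$ collapses to the FO-sentence $\varphi_\Gmc$ detecting $\Gmc$-inconsistency, and Proposition~\ref{prof:FOtoDLog} upgrades FO-rewritability to datalog-rewritability. For the backward direction, I construct the family $\{\Dmf_m\}_{m \geq 1}$ of mesh instances described above, observe that $a_0 \in q_{\Sbf,\Omc,A(x)}(\Dmf_m)$ for every $m$, and show that $(\Dmf_m, a_0)$ is a \emph{critical} obstruction: any substructure of $\Dmf_m$ whose treewidth is strictly less than $m$ fails to entail $A(a_0)$. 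This yields obstructions of unbounded treewidth, which rules out datalog-rewritability via the standard characterization of datalog-definable coCSPs as those with bounded-treewidth obstruction sets (combined with Theorem~\ref{ALCUtoCSP}). The corresponding BAQ statement is handled by replacing $A(x)$ with $\exists x.A(x)$, exactly as in Lemma~\ref{lem:lower}.

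The main obstacle will be twofold: first, engineering the mesh-propagation rules strictly within the expressive bounds of $\ALC$ so that $E$ is genuinely compelled to traverse the full grid rather than any of its embedded paths; and second, the combinatorial argument establishing the treewidth lower bound for the critical obstructions, which is analogous in flavor to the classical lower bounds showing that grid-like CSPs escape bounded width. Once these are in place, the reduction itself and the equivalence with tileability transfer smoothly from the proof of Lemma~\ref{lem:lower}.
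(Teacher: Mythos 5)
The decisive gap is in your backward direction (``no tiling $\Rightarrow$ not datalog-rewritable''), and it is twofold. First, the gadget you propose is not constructed, and as described it would most likely fail: monotone, non-disjunctive ``cooperative propagation'' rules of the form ``$E$ is derivable at $x$ only when coherent witnesses are present along both $S_1$- and $S_2$-directions'' are exactly what datalog computes, so Horn-style axioms of this shape yield a datalog-rewritable certain-answer query no matter how large the treewidth of the instances on which the propagation runs. Second, and more fundamentally, your criterion for refuting datalog-rewritability is not the right one. The Feder--Vardi characterization (via Theorem~\ref{ALCUtoCSP} and the constant-elimination lemmas) requires, for every $k$, a query-true instance into which \emph{no} query-true structure of treewidth at most $k$ maps homomorphically; bounded-treewidth preimages need not be substructures (think of tree-shaped unfoldings of your mesh, which have treewidth $1$ but map onto it), so showing that ``every substructure of $\Dmf_m$ of treewidth $<m$ fails to entail $A(a_0)$'' does not rule out a bounded-treewidth obstruction set. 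You are conflating the treewidth of instances/obstructions with the width of the CSP: grid-shaped \emph{instances} are perfectly compatible with datalog (corner-to-corner propagation in a mesh is just reachability), and proving that a concrete template has unbounded width is a genuinely hard combinatorial statement (pebble games, ``ability to count'') that your sketch leaves entirely open.

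The paper sidesteps both problems by making the hard core a problem already known to have unbounded width. It keeps $\Gmc^{t}$ and $\exists S.E \sqsubseteq E$, but adds disjunctive coloring axioms: $\exists H.A \sqsubseteq A$, $E \sqsubseteq P_1 \sqcup P_2 \sqcup P_3$, $P_i \sqcap P_j \sqsubseteq A$ ($i\neq j$), and $P_i \sqcap \exists H.P_i \sqsubseteq A$. If no tiling exists, then on instances consisting of $\Dmf_{\Gmc}$ together with an encoding of a connected graph $G$ (all-pairs $S$-facts, $H$-facts for the edges of $G$, one vertex identified with $(0,0)$), the query returns $(0,0)$ (resp.\ evaluates to true) iff $G$ is not $3$-colorable; since non-$3$-colorability is not datalog-definable, the query cannot be datalog-rewritable. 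If a tiling exists, datalog-rewritability follows exactly as in Lemma~\ref{lem:lower}: the query collapses to detecting $\Gmc$-inconsistency, an FO property, and FO-rewritability yields datalog-rewritability via Proposition~\ref{prof:FOtoDLog}; this half of your plan is sound and mirrors the paper. To repair your argument you would have to replace the deterministic mesh gadget by an embedding of some provably unbounded-width CSP, which is in essence the paper's construction.
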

\begin{proof}
The proof is based on a modification of the proof of
Lemma~\ref{lem:lower}.
For the reduction, let $n>0$ and $(\Tmf,H,V)$ be an instance of the $2^{n}\times 2^{n}$-tiling
problem with $\Tmf=\{T_1,\dots,T_p\}$. 
We construct a schema $\Sbf$, an $\ALC$-ontology $\Omc'$ and a query $A(x)$ such that
$(\Tmf,H,V)$ has a solution if and only if $q_{\Sbf,\Omc',A(x)}$ is datalog-rewritable
if and only if $q_{\Sbf,\Omc',\exists x.A(x)}$ is datalog-rewritable.

We consider the ontology $\Gmc$, its extension $\Gmc^{t}$, and the schema $\Sbf_{\Gmc}$
from the proof of Theorem~\ref{ContNEXP}. To define $\Omc'$ we take fresh role names $S$ and $H$
and fresh concept names $P_{1},P_{2},P_{3}$ and encode the 3-colorability problem as follows:
\begin{eqnarray*}
\Omc' &  = & \Gmc^{t} \cup \{\exists S.E \sqsubseteq E,\exists H.A \sqsubseteq A\}\cup \\
       &  &      \{E \sqsubseteq P_{1}\sqcup P_{2} \sqcup P_{3}\}\cup\\
       &  &      \{P_{i} \sqcap P_{j}\sqsubseteq A\mid 1\leq i< j\leq 3\}\cup\\
       &  &      \{P_{i} \sqcap \exists H.P_{i}\sqsubseteq A\mid 1\leq i \leq 3\}
\end{eqnarray*}
and $\Sbf= \Sbf_\Gmc \cup \{S,H\}$.

\medskip
\noindent
\emph{Claim.} The following conditions are equivalent:
\begin{itemize}
\item $(\Tmf,H,V)$ admits no $2^{n}\times 2^{n}$-tiling;
\item $q_{\Sbf,\Omc',A(x)}$ is not datalog-rewritable;
\item $q_{\Sbf,\Omc',\exists x.A(x)}$ is not datalog-rewritable.
\end{itemize}

\medskip

Assume that $(\Tmf,H,V)$ admits no $2^{n}\times 2^{n}$-tiling.
For any connected undirected graph $G$, 
we identify some $v$ in $G$ with $(0,0)$ and define 
a $\Sbf$-instance $\Dmf$ as the union of $\Dmf_{\Gmc}$ and the facts
$S(d,d')$ for all $d,d'$ in $G$ and $H(d,d')$ for every edge $\{d,d'\}$ in $G$. 
It is readily checked that
\begin{itemize}
\item $(0,0)\in q_{\Sbf,\Omc',A(x)}(\Dmf)$ iff $G$ is not $3$-colorable;
\item $q_{\Sbf,\Omc',\exists x.A(x)}(\Dmf)=1$ iff $G$ is not $3$-colorable.
\end{itemize}
It follows immediately that neither $q_{\Sbf,\Omc',A(x)}$ nor 
$q_{\Sbf,\Omc',\exists x.A(x)}$ are datalog-rewritable.

\medskip

Conversely, if $(\Tmf,H,V)$ admits a $2^{n}\times 2^{n}$-tiling 
then one can show datalog-rewritability using exactly the same argument as in the proof
of Lemma~\ref{lem:lower}.
\end{proof}

We now prove the undecidability results for $\mathcal{ALCF}$.
In \cite{KR12-cont,KR12-csp}, alternative definitions of query containment and FO-rewritability
are employed which consider only instances that are satisfiable w.r.t.~the ontologies
involved. We say that $(\Sbf,\Omc_{1},q_{1})$ \emph{is contained in} $(\Sbf,\Omc_{2},q_{2})$
\emph{w.r.t.~consistent instances} if $q_{(\Sbf,\Omc_{1},q_{1})}(\Dmf) \subseteq
q_{(\Sbf,\Omc_{2},q_{2})}(\Dmf)$ for all $\Sbf$-instance $\Dmf$ such that $\Dmf$ is satisfiable
w.r.t.~$\Omc_{1}$. Similarly, a query $(\Sbf,\Omc,q)$ is \emph{FO-rewritable}
w.r.t~\emph{consistent instances} if there exists an FO-query $q'$ such that
$q'(\Dmf)=q_{(\Sbf,\Omc,q)}(\Dmf)$ for all $\Sbf$-instance $\Dmf$ that are
satisfiable w.r.t.~$\Omc$.
Undecidability of query containment w.r.t.~consistent instances and of FO-rewritability 
w.r.t.~consistent instances were proven
respectively in \cite{KR12-cont} and \cite{KR12-csp}. Here we show how the proofs can be modified
to work for query containment, FO-rewritability, and datalog rewritability as defined in this paper.

\begin{theorem}\label{thm:undec0}
Query containment, FO-rewritability, and datalog-rewritability are all 
undecidable for queries in ($\mathcal{ALCF}$,AQ) and queries in ($\mathcal{ALCF}$,BAQ).
\end{theorem}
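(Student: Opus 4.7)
The approach reduces from the undecidable problems established in \cite{KR12-cont} and \cite{KR12-csp}, which concern query containment and FO-rewritability restricted to instances consistent with the ontology. The gap to bridge is that in this paper's definitions, inconsistent instances contribute every tuple as a certain answer by vacuous quantification, so they cannot simply be ignored as they are in the consistent-instance definitions of \cite{KR12-cont,KR12-csp}.

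First, for query containment, let $(\Sbf,\Omc_1,q_1)$ and $(\Sbf,\Omc_2,q_2)$ be the pair of ontology-mediated queries produced by the reduction of \cite{KR12-cont}, for which consistent-instance containment is undecidable. The plan is to transform this pair into $(\Sbf',\Omc_1',q_1')$ and $(\Sbf',\Omc_2',q_2')$ such that both new ontologies are satisfiable by every $\Sbf'$-instance, and such that consistent-instance containment of the old pair coincides with containment of the new pair in the sense of this paper. To make each $\Omc_i$ always satisfiable, introduce a fresh ``escape'' concept name $G$ and rewrite every inclusion whose violation would cause inconsistency (e.g.~every $C \sqsubseteq \bot$) as $C \sqsubseteq G$. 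The queries are then augmented by adding axioms that propagate $G$ to the query predicate, for instance $G \sqsubseteq A$ when $q_i = A(x)$, together with propagation axioms along the functional-role skeleton already present in the encoding. On any instance where $G$ is forced to be non-empty, both modified queries return the full active domain, so containment trivially holds there; on instances where $G$ can be kept empty, the modified ontologies behave as the originals, and the reduction's correctness carries over.

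Second, for FO-rewritability and datalog-rewritability, the same safety-valve modification is applied to the single ontology-mediated query $(\Sbf,\Omc,q)$ produced by the reduction of \cite{KR12-csp}. Because the modified ontology is satisfiable by every instance, the certain answers under the two definitions coincide, and FO-rewritability in the all-instances sense becomes equivalent to consistent-instance FO-rewritability. For datalog-rewritability, it suffices to revisit the KR12-csp reduction: by Proposition~\ref{prof:FOtoDLog}, the queries in question are preserved under homomorphisms, so datalog-rewritability has the obstruction-set characterisation (bounded treewidth) analogous to that of \cite{FederVardi}. One can inspect the explicit obstruction sets produced by the KR12-csp encoding and verify that they are unbounded in treewidth precisely when the encoded Turing computation does not halt, so that the very same reduction (once safety-valved) establishes undecidability of datalog-rewritability as well.

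The principal technical obstacle is the careful orchestration of the safety-valve modification. The transformed ontology must simultaneously (i)~be satisfiable by every instance so as to erase the definitional gap, (ii)~preserve the original encoding on intended instances where $G$ is forced to be empty, and (iii)~stay within $\mathcal{ALCF}$ with only AQs or BAQs as queries, which rules out convenient devices such as the universal role. Propagating $G$ across the active domain (needed so the modified query returns everything on instances where $G$ fires) has to be arranged using only functional roles and concept inclusions already present in the original reductions of \cite{KR12-cont,KR12-csp}, which requires a delicate re-examination of their encodings and is the step most likely to require genuinely new work rather than routine bookkeeping.
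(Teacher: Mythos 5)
Your overall strategy -- take the consistent-instance undecidability results of \cite{KR12-cont,KR12-csp} and bridge the definitional gap concerning inconsistent instances -- is the right instinct, but the concrete bridge you propose does not work, and the paper does something different. First, you cannot make an $\mathcal{ALCF}$ ontology ``satisfiable by every instance'' by replacing each $C \sqsubseteq \bot$ with $C \sqsubseteq G$: the functionality assertions themselves cause inconsistency (an instance containing two distinct fillers of a functional role at the same constant has no model), and these assertions are exactly what the \cite{KR12-cont,KR12-csp} encodings rely on. Hence your key claim that ``the certain answers under the two definitions coincide'' after the safety-valve modification fails, and with it the transfer of FO-rewritability. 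Relatedly, forcing the query predicate to hold on the \emph{entire} active domain whenever $G$ fires is not merely delicate but impossible in general for disconnected instances using only $\mathcal{ALCF}$ and an AQ (no universal role), and the paper never needs it. The actual proof redoes the reduction from scratch (from an undecidable finite rectangle tiling problem, adapting Lemma~30 of \cite{KR10}) and exploits two facts instead: (i) consistency of an instance with the constructed ontology is itself FO-definable over the data schema, so in the ``no solution'' case the query has an explicit FO (and datalog) rewriting of the form ``instance is inconsistent''; and (ii) for containment, the two ontologies are arranged to have exactly the same inconsistent instances (the extra axioms mention only a fresh symbol $E$ on the right-hand side), so inconsistent instances can never witness non-containment and the problem collapses to the consistent case.

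Second, your argument for datalog-rewritability has an independent gap. Proposition~\ref{prof:FOtoDLog} requires the ontology to be formulated in \emph{equality-free} FO; functionality assertions need equality, and indeed $(\mathcal{ALCF},\mathrm{AQ})$ queries are not preserved under homomorphisms (the paper's own example $\{S(a,b),S(a,c)\}$ versus $\{S(a,b)\}$ shows this), so no bounded-treewidth obstruction-set characterisation is available along the lines you sketch. Your further claim that the obstruction sets of the \cite{KR12-csp} encoding have unbounded treewidth exactly when the encoded computation fails to halt is asserted without argument and would need substantial new work. The paper instead builds a dedicated gadget into the reduction: when a tiling exists, a fresh (non-functional) role $H$ and concepts $P_1,P_2,P_3$ let the query express non-3-colourability, which is not datalog-definable, while in the no-tiling case the consistency-test rewriting from (i) above is already expressible in datalog. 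To repair your proposal you would need to (a) replace the ``make everything consistent'' step by an argument that inconsistency is FO-definable for the specific ontologies and is neutral for containment, and (b) supply a genuine mechanism (such as the 3-colourability gadget) for the datalog case.
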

\begin{proof}
The proof is by reduction of the following finite rectangle tiling problem.
An instance of the \emph{finite rectangle tiling problem} is given by a
triple $\mathfrak{P}= (\Tmf,H,V)$ with 
\begin{itemize}
\item $\Tmf=\{T_{1},\ldots,T_{p}\}$ a non-empty, finite set of \emph{tile types} including an \emph{initial tile} $T_\mn{init}$ 
to be placed on the lower left corner, a \emph{final tile} $T_\mn{final}$ to be
placed on the upper right corner, and sets $\Umf\subseteq \Tmf$ and $\Rmf\subseteq \Tmf$ of tile types to be placed on the upper and right borders respectively, satisfying
$\Umf\cap \Rmf=\{T_{\mn{final}}\}$;
\item $H \subseteq \Tmf \times \Tmf$ a
\emph{horizontal matching relation}; and 
\item $V \subseteq \Tmf \times
\Tmf$ a \emph{vertical matching relation}. 
\end{itemize}
A \emph{tiling} for
$(\Tmf,H,V)$ is a map $f:\{0,\dots,n\} \times \{0,\dots,m\}
\rightarrow \Tmf$ such that $n,m \geq 0$, 
\begin{itemize}
\item $f(0,0)=T_\mn{init}$,
\item $f(n,m)=T_\mn{final}$, 
\item $f(n,j)\in \Rmf$ for all $0 \leq j\leq m$;
\item $f(j,i)\not\in \Rmf$ for all $j<n$ and $0 \leq i\leq m$;
\item $f(i,m)\in \Umf$ for all $0 \leq i\leq n$;
\item $f(i,j)\not\in\Umf$ for all $0 \leq i\leq n$ and $1 \leq j<m$. 
\item $(f(i,j),f(i+1,j)) \in H$ for all $0 \leq i < n$, and
\item $(f(i,j),f(i,j+1)) \in v$ for all $0 \leq i < m$.
\end{itemize}
Thus, we can assume that $H$, $V$, $\Umf$, and $\Rmf$ are such that: 
\begin{itemize}
\item if $(T_{i},T_{j})\in H$, then $T_{i}\in \Umf$ if and only if $T_{j}\in \Umf$;
\item if $T_{i}\in \Umf$, then there exists no $T_{j}$ with $(T_{i},T_{j})\in V$ or
$(T_{j},T_{i})\in V$; 
\item if $(T_{i},T_{j})\in V$, then $T_{i}\in \Rmf$ if and only if $T_{j}\in \Rmf$;
\item if $T_{i}\in \Rmf$, then there exists no $T_{j}$ with $(T_{i},T_{j})\in H$ or $(T_{j},T_{i})\in H$.
\end{itemize}
It is undecidable whether
an instance $\mathfrak{P}$ of the finite rectangle tiling problem has a tiling. 

Fix a particular $\mathfrak{P}=(\Tmf,H,V)$. 
For the data schema, we use $\Sbf = \{ T_1,\dots,T_p,x,y, x^{-}, y^{-}\}$,
where $T_1,\dots,T_p$ are treated as concept names, and $x$, $y$,
$x^{-}$, and $y^{-}$ are  role names. We use $x$ and $y$ to specify horizontal and vertical adjacency of
points in the rectangle, and the role names $x^{-}$ and $y^{-}$ to
simulate the inverses of $x$ and $y$ (note that since $x^{-}$ and $y^{-}$
are regular role names, they need not be interpreted 
as the inverses of $x$ and $y$). 
We construct an $\ALCF$-ontology $\Omc_{\mathfrak{P}}$ which asserts 
functionality of $x,y,x^{-}, y^{-}$ and contains inclusions
using additional concept names 
$U,R,Y, I_x, I_y, C, Z_{c,1}$, $Z_{c,2}$, $Z_{x,1}$, $Z_{x,2}$, $Z_{y,1}$.
The concept names $U$ and $R$ are used to mark the upper and
right border of the rectangle, $Y$ is used to mark points in the rectangle,
and the remaining concept names are used for
technical purposes explained below. 
In the following, for $e \in \{c, x, y\}$, we let
$\Bmc_{e}$ range over all Boolean combinations of the concept names
$Z_{e,1}$ and $Z_{e,2}$, i.e., over all concepts $L_1 \sqcap L_2$
where $L_i$ is a literal over $Z_{e,i}$, for $i \in \{1,2\}$.
The ontology $\Omc_{\mathfrak{P}}$ 
contains the following concept inclusions, 
where $(T_i,T_j) \in H$ and $(T_i,T_\ell) \in V$:
$$
\begin{array}{rcl}
 T_\mn{final} & \sqsubseteq & Y \sqcap U \sqcap R \\
 \exists x . (U \sqcap Y \sqcap T_j) \sqcap I_x \sqcap T_i & \sqsubseteq & U \sqcap Y  \\
 \exists y . (R \sqcap Y \sqcap T_{\ell}) \sqcap I_y \sqcap T_{i} & \sqsubseteq & R \sqcap Y \\
 \exists x . (T_{j} \sqcap Y \sqcap \exists y . Y) \hspace*{1.5cm}& & \\
\sqcap \,  \exists y . (T_{\ell} \sqcap Y \sqcap \exists x . Y) \hspace*{.5cm} & & \\
 \sqcap I_x \sqcap I_y \sqcap C \sqcap T_i & \sqsubseteq & Y  \\  
  \exists x . \exists y . \Bmc_c \sqcap \exists y . \exists x . \Bmc_c 
 &\sqsubseteq& C  \\ 
    \Bmc_x \sqcap \exists x. \exists x^{-}. \Bmc_{x} & \sqsubseteq & I_x \\
  \Bmc_y \sqcap \exists y. \exists y^{-}. \Bmc_{y} & \sqsubseteq & I_y 
\end{array}
$$
$$
\begin{array}{rcl}
 T_{i} &  \sqsubseteq & \forall y. \bot\\
 T_{j} &  \sqsubseteq & \forall x. \bot\\
 U &  \sqsubseteq & \forall x. U\\
 R &  \sqsubseteq & \forall y. R\\
 \midsqcup_{1 \leq s < t \leq p} T_s \sqcap T_t & \sqsubseteq & \bot\\
\end{array}
$$
where $T_{i}\in \Umf$ and $T_{j}\in \Rmf$. 

The first four inclusions propagate the concept $Y$ downwards 
and leftwards starting from a point marked with the 
final tile $T_\mn{final}$. Note that 
these inclusions enforce the horizontal 
and vertical matching conditions. 
The concept inclusion with right-hand side $C$ 
serves to enforce confluence,
i.e., $C$ is entailed at a constant $a$ if there is a constant
$b$ that is both an $x$-$y$-successor and a
$y$-$x$-successor of $a$.  
This is so because, intuitively, $\Bmc_{c}$ is
universally quantified: if confluence fails, 
then we can interpret $Z_{c,1}$ and $Z_{c,2}$ so that
neither of the two conjuncts on the left-hand side of
the inclusion for $C$ is satisfied. 
In a similar manner, the inclusion for $I_x$ (resp. $I_y$)
is used to ensure that $x^{-}$ (resp. $y^{-}$) 
act as the inverse of $x$ (resp. $y$)
at all points in the rectangle.

The following property can be obtained by a minor modification of 
Lemma 30 in \cite{KR10}:

\begin{lemma}\label{undec-tiling-lemma}
  $\mathfrak{P}$ admits a tiling if and only if there is a $\Sbf$-instance $\Dmf$
  which is consistent with $\Omc_{\mathfrak{P}}$ and such that 
  $q_{\Sbf, \Omc_{\mathfrak{P}}, T_\mn{init}(x) \wedge Y(x)}(\Dmf)\neq \emptyset$.
\end{lemma}

Let $\varphi_{\mathfrak{P}}$ be the first-order 
translation of the conjunction of all $T_{i} \sqsubseteq \forall y. \bot$, $T_{i}\in \Umf$,
$T_{j} \sqsubseteq \forall x. \bot$, $T_{j}\in \Rmf$,
and of $\midsqcup_{1 \leq s < t \leq p} T_s \sqcap T_t  \sqsubseteq  \bot$.
The following is readily checked:

\medskip

\noindent
\emph{Claim.}
For all $\Sbf$-instances $\Dmf$, $(\adom(\Dmf),\Dmf)\models \varphi_{\mathfrak{P}}$
iff $\Dmf$ is satisfiable w.r.t.~$\Omc_{\mathfrak{P}}$. 

\medskip

We now prove undecidability of query containment.
Let $E$ be a fresh concept name and let
$$
\Omc_{2}= \Omc_{\mathfrak{P}}\cup\{E\sqsubseteq E\}, \quad
\Omc_{1}= \Omc_{\mathfrak{P}}\cup \{ Y \sqcap T_\mn{init} \sqsubseteq E\}
$$ 
Now one can prove that the following conditions are equivalent:

\begin{itemize}
\item $\mathfrak{P}$ admits a tiling;
\item $(\Sbf,\Omc_{1},E(x))$ is not contained in $(\Sbf,\Omc_{2},E(x))$;
\item $(\Sbf,\Omc_{1},\exists x.E(x))$ is not contained in $(\Sbf,\Omc_{2},\exists x.E(x))$
\end{itemize}
Assume first that $\mathfrak{P}$ admits a tiling. 
Then by Lemma \ref{undec-tiling-lemma}, there is a $\Sbf$-instance $\Dmf$
which is consistent with $\Omc_{\mathfrak{P}}$ and such that 
$q_{\Sbf, \Omc_{\mathfrak{P}}, T_\mn{init}(x) \wedge Y(x)}(\Dmc)\neq \emptyset$.
It follows immediately that 
$q_{\Sbf, \Omc_{1}, E(x)}(\Dmf) \neq \emptyset$
and
$q_{\Sbf,\Omc_{1},\exists x.E(x)}(\Dmf)=1$. 
On the other hand, since $\Dmf$ is consistent with $\Omc_{2}$, 
and $E$ appears only trivially in $\Omc_{2}$,
we have $q_{\Sbf,\Omc_{2},E(x)}(\Dmf)=\emptyset$ and 
$q_{\Sbf,\Omc_{2},\exists x.E(x)}(\Dmf)=0$. 

Next suppose that $\mathfrak{P}$ does not admit a tiling, and 
let $\Dmf$ be an $\Sbf$-instance which is consistent with $\Omc_{1}$. 
By Lemma \ref{undec-tiling-lemma}, 
$q_{\Sbf, \Omc_{\mathfrak{P}}, T_\mn{init}(x) \wedge Y(x)}(\Dmc)=\emptyset$,
and hence $q_{\Sbf, \Omc_{1}, \exists x. E(x)}(\Dmf)=0$.
The desired containments trivially follow.  

\medskip

To prove undecidability of FO-rewritability, we expand $\Omc_{1}$ to a new ontology
$\Omc_{3}$.
To define $\Omc_{3}$ we take a fresh role name $S$
and two concept names $A$ and $F$ and set
$$
\Omc_{3} = \Omc_{1} \cup \{\exists S.E \sqsubseteq E, E \sqcap F\sqsubseteq A\}
$$
and $\Sbf_{3}= \Sbf \cup \{S,F\}$.

\medskip
\noindent
\emph{Claim.} The following conditions are equivalent:
\begin{itemize}
\item $\mathfrak{P}$ admits a tiling;
\item $q_{\Sbf_{3},\Omc_{3},A(x)}$ is not FO-rewritable;
\item $q_{\Sbf_{3},\Omc_{3},\exists x.A(x)}$ is not FO-rewritable.
\end{itemize}
Assume first that $\mathfrak{P}$ admits a tiling. By Lemma \ref{undec-tiling-lemma}, 
we can find an $\Sbf$-instance $\Dmf_{\mathfrak{P}}$
which is consistent with $\Omc_{\mathfrak{P}}$ and $b \in \adom(\Dmf_{\mathfrak{P}})$ such that 
$b \in q_{\Sbf, \Omc_{\mathfrak{P}}, T_\mn{init}(x) \wedge Y(x)}(\Dmf_{\mathfrak{P}})$, 
and hence $b \in q_{\Sbf, \Omc_{1}, E(x)}(\Dmf_{\mathfrak{P}})$.
We can use essentially the same argument as in Lemma \ref{lem:lower}
to show that $q_{\Sbf, \Omc_{1}, E(x)}$ and $q_{\Sbf, \Omc_{1}, E(x)}$ are not FO-rewritable. 
Specifically, we
construct $\Sbf$-instances 
$\Dmf_{m}$ by taking the union of $\Dmf_{\mathfrak{P}}$ and the facts 
$$
F(a_{0}),S(a_{0},a_{1}),\ldots,S(a_{m},b).
$$
It is readily checked that
\begin{itemize}
\item $a_{0} \in q_{\Sbf_{3},\Omc_{3},A(x)}(\Dmf_{m})$ for all $m>0$;
\item $a_{0} \not\in q_{\Sbf_{3},\Omc_{3},A(x)}(\Dmf_{m}')$, where $\Dmf'_{m}$ results from $\Dmf_{m}$ by removing some fact $(a_{k},a_{k+1})$ from $\Dmf_{m}$.
\end{itemize}
It follows that no finite obstruction set exists, and hence that 
$q_{\Sbf, \Omc_{1}, A(x)}$ is not FO-rewritable. 
We can proceed similarly for $q_{\Sbf, \Omc_{1}, \exists x. A(x)}$.

Assume now that $\mathfrak{P}$ does not admit a tiling.
Then for every $\Sbf$-instance $\Dmf$, $\Dmf$ is satisfiable w.r.t.~$\Omc_{\mathfrak{P}}$ if and only if
$q_{\Sbf,\Omc_{3},\exists x.A(x)}(\Dmf)= 0$. Thus, the query defined by $\neg\varphi_{\mathfrak{P}}$
is equivalent to $q_{\Sbf,\Omc_{3},\exists x.A(x)}$,
 and the query defined by $(x=x) \wedge \neg\varphi_{\mathfrak{P}}$
is equivalent to $q_{\Sbf,\Omc_{3},A(x)}$.

\medskip

To prove undecidability of datalog-rewritability, we expand $\Omc_{1}$ to a new ontology
$\Omc_{4}$.
To define $\Omc_{4}$, we take fresh role names $S$ and $H$
and fresh concept names $P_{1},P_{2},P_{3}$ and encode the 3-colorability problem as follows:
\begin{eqnarray*}
\Omc_{4} &  = & \Gmc_{1} \cup \{\exists S.E \sqsubseteq E,\exists H.A \sqsubseteq A\}\cup \\
       &  &      \{E \sqsubseteq P_{1}\sqcup P_{2} \sqcup P_{3}\}\cup\\
       &  &      \{P_{i} \sqcap P_{j}\sqsubseteq A\mid 1\leq i< j\leq 3\}\cup\\
       &  &      \{P_{i} \sqcap \exists H.P_{i}\sqsubseteq A\mid 1\leq i \leq 3\}
\end{eqnarray*}
We use the schema $\Sbf_{4}= \Sbf \cup \{S,H\}$.

\medskip
\noindent
\emph{Claim.} The following conditions are equivalent:
\begin{itemize}
\item $\mathfrak{P}$ admits a tiling;
\item $q_{\Sbf_{4},\Omc_{4},A(x)}$ is not datalog-rewritable;
\item $q_{\Sbf_{4},\Omc_{4},\exists x.A(x)}$ is not datalog-rewritable.
\end{itemize}

\noindent First suppose that $\mathfrak{P}$ admits a tiling. 
We have seen previously that this implies the existence of 
an $\Sbf$-instance $\Dmf_{\mathfrak{P}}$
which is consistent with $\Omc_{\mathfrak{P}}$ and contains 
$b \in \adom(\Dmf_{\mathfrak{P}})$ such that
$b \in q_{\Sbf, \Omc_{1}, E(x)}(\Dmf_{\mathfrak{P}})$.
We proceed similarly to Lemma \ref{datalog-rewrite-lower}.
Given a connected undirected graph $G$, 
we define an $\Sbf$-instance $\Dmf$ as the union of $\Dmf_{\mathfrak{P}}$ and the facts
$S(d,d')$ for all $d,d'$ in $G$ and $H(d,d')$ for every edge $\{d,d'\}$ in $G$. 
It is readily checked that
\begin{itemize}
\item $b \in q_{\Sbf_{4},\Omc_{4},A(x)}$ iff $G$ is not $3$-colorable;
\item $q_{\Sbf_{4},\Omc_{4},\exists x. A(x)}(\Dmf)=1$ iff $G$ is not $3$-colorable.
\end{itemize}
It follows directly that neither $q_{\Sbf,\Omc',A(x)}$ nor 
$q_{\Sbf,\Omc',\exists x.A(x)}$ are datalog-rewritable.

Next suppose that $\mathfrak{P}$ does not admit a tiling.
Then for every $\Sbf$-instance $\Dmf$, we have that
$\Dmf$ is satisfiable w.r.t.~$\Omc_{\mathfrak{P}}$ if and only if
$q_{\Sbf,\Omc_{4},\exists x.A(x)}(\Dmf)= 0$. We can then simply reuse 
the FO-rewritings $\neg\varphi_{\mathfrak{P}}$
and $(x=x) \wedge \neg\varphi_{\mathfrak{P}}$ from above, 
since these can be equivalently expressed as datalog queries. 
\end{proof}




 

\end{document}